\providecommand{\U}[1]{\protect\rule{.1in}{.1in}}
\theoremstyle{plain}
\newcommand{\ii}{{\rm i}}
\newcommand{\ee}{{\rm e}}
\newtheorem{theorem}{Theorem}[section]
\newtheorem{assumption}{Assumption}[section]
\newtheorem{corollary}{Corollary}[section]
\newtheorem{lemma}{Lemma}[section]
\newenvironment{proof}[1][Proof]{\noindent\textbf{#1.} }{\ \rule{0.5em}{0.5em}}
\numberwithin{equation}{section}
\numberwithin{table}{section}
\numberwithin{figure}{section}
\numberwithin{equation}{section}
\numberwithin{assumption}{section}
\definecolor{lightgray}{gray}{0.9}
\begin{document}

\title{Specification tests for regression models with measurement errors\footnote{The authors contributed equally to this work and are listed in alphabetical order.}}
\author{Xiaojun Song\thanks{Guanghua School of Management, Peking University. Email: \texttt{sxj@gsm.pku.edu.cn}. This work was supported by the National Natural Science Foundation of China [Grant Numbers 72373007 and 72333001]. The author also gratefully acknowledges the research support from the Center for Statistical Science of Peking University.}\\
	\and Jichao Yuan\thanks{Guanghua School of Management, Peking University. Email: \texttt{2201111030@stu.pku.edu.cn}.}
	}

\maketitle

\begin{abstract}
In this paper, we propose new specification tests for regression models with measurement errors in the explanatory variables. Inspired by the integrated conditional moment (ICM) approach, we use a deconvoluted residual-marked empirical process and construct ICM-type test statistics based on it. The issue of measurement errors is addressed by applying a deconvolution kernel estimator in constructing the residuals. We demonstrate that employing an orthogonal projection onto the tangent space of nuisance parameters not only eliminates the parameter estimation effect but also facilitates the simulation of critical values via a computationally simple 
multiplier bootstrap procedure. 
It is the first time a multiplier bootstrap has been proposed in the literature of specification testing with measurement errors. We also develop specification tests and the multiplier bootstrap procedure when the measurement error distribution is unknown. The finite-sample performance of the proposed tests for both known and unknown measurement error distributions is evaluated through Monte Carlo simulations, which demonstrate their efficacy.  
\end{abstract}

\newpage

\doublespacing

\section{Introduction}
There already exists a substantial body of literature on specification testing for regression models, see, e.g., \cite{bierens1982consistent}, \cite{hardle1993comparing}, \cite{JOHNXUZHENG1996263}, \cite{stute1997nonparametric}, \cite{stute1998bootstrap}, \cite{stinchcombe1998consistent}, \cite{stute2002model}, \cite{zhu2003testing}, \cite{zhu2005testing}, \cite{Escanciano_2006}, \cite{hall2007testing}, \cite{song2008model}, \cite{xu2015nonparametric}, \cite{sant2019specification}, \cite{otsu2021specification}, and \cite{TAN2025106113}. The list is undoubtedly not exhaustive. 
However, among these, relatively few studies have adequately developed specification tests in the presence of measurement errors. Measurement error is a common issue in data across many disciplines, including economics, finance, and medicine, see, e.g., \cite{fuller2009measurement}, \cite{alexander2009deconvolution}, and \cite{hu2017measurement}. Unfortunately, specification tests often have incorrect size and low power in the presence of measurement errors.


In an earlier influential work, \cite{otsu2021specification} proposed a specification test for regression models with measurement errors in the explanatory variables based on nonparametric local smoothing estimators. Their test belongs to the class of local smoothing tests, and so does the minimum distance test proposed by \cite{song2008model}. These tests are typically constructed from the distance between the fitted nonparametric regression and the parametric fit under the null hypothesis, and have been shown to be more powerful against high-frequency alternatives. Complementary to such approaches, global smoothing tests, developed in \cite{hall2007testing}, exhibit higher power against low-frequency alternatives. As emphasized in \cite{otsu2021specification}, local smoothing and global smoothing tests serve as complements rather than substitutes. A similar complementarity in power properties between the two types of tests, in the absence of measurement errors, was also established in \cite{fan2000consistent}.

Within the class of global smoothing tests, the integrated conditional moment (ICM)-type tests proposed by \cite{bierens1982consistent} and \cite{bierens1997asymptotic} are particularly important and are the main focus of this paper. These tests typically compare the integrated regression function with the integrated parametric regression function under the null hypothesis. Specifically, a residual-based empirical process is employed to construct the test statistics, and, in general, the asymptotic null distributions depend on the data-generating process, thereby motivating the use of bootstrap techniques to obtain the critical values. The ICM-type tests have been widely applied to specification testing when measurement errors are absent, as shown in \cite{su1991lack}, \cite{stute1997nonparametric}, and \cite{Escanciano_2006}, among others. However, in the presence of measurement errors, the ICM-type tests face challenges arising from the unobservability of the true residuals, the stringent requirements for the estimators, and the computational complexity of obtaining critical values, which motivate the present study to address these issues. In this paper, we propose ICM-type tests based on a deconvoluted residual-marked empirical process (i.e., the residuals are constructed using the deconvolution kernel estimator). We employ an orthogonal projection onto the tangent space of nuisance parameters to eliminate the parameter estimation effect. As a desirable byproduct, the projection facilitates the simulation of critical values via a computationally straightforward multiplier bootstrap. 


It is important to highlight our contribution regarding the simple multiplier bootstrap procedure for obtaining critical values. Conventional bootstrap methods face serious challenges in the presence of measurement errors because the true regressors are unobservable; consequently, we cannot resample them. Infeasibility of obtaining the critical values through the multiplier bootstrap procedure is due to the parameter estimation effect, 
initially discussed in \cite{durbin1973distribution}. We introduce a novel projection to address the parameter estimation effect by imposing an orthogonality condition on the weight function of the residual-marked empirical process. The introduction of the projection also facilitates a computationally attractive multiplier bootstrap procedure to implement our tests, whose asymptotic validity can be easily justified. Notably, compared with other bootstrap methods, the multiplier bootstrap is simpler to justify theoretically, easier to implement, and computationally more efficient. To our knowledge, this is the first time a multiplier bootstrap has been proposed in the literature of specification testing in the presence of measurement errors. Furthermore, we establish a comprehensive theoretical framework and a multiplier bootstrap-based procedure for obtaining critical values when the measurement error distribution is unknown. In addition, our tests are more robust to bandwidth choices than the local smoothing approach. 


The rest of this paper is organized as follows. In Section \ref{sec.Test}, we describe the testing framework and introduce our projection method. Section \ref{sec.Asy} discusses the asymptotic properties of the test statistics. Section \ref{sec.unknown ME} addresses the case of unknown measurement error distribution. In Section \ref{sec.boot}, we provide a multiplier bootstrap procedure and its implementation. Monte Carlo simulations are conducted in Section \ref{sec.Simulation}. We present our conclusions in Section \ref{sec.Conclusion}. Additional simulation results and the mathematical proofs are included in the online supplementary appendix.

\section{The testing framework}\label{sec.Test}

\subsection{The testing problem}

Let $(Y,X)^\top$ be a random vector in a two-dimensional Euclidean space, where $Y$ is an observed real-valued response variable and $X$ is an unobservable error-free explanatory variable. Consider the following regression model:
\begin{align}\label{mod.regression model}
Y=m(X)+U, \quad \text{with} \quad\mathbb{E}[U|X]=0\,\,\,\text{almost surely }(a.s.),
\end{align}
where $m(X)=\mathbb{E}[Y|X]$ is the regression function of $Y$ given $X$ and $U$ is the unpredictable part of $Y$. While the true explanatory variable $X$ is not directly observed, we instead observe a noisy variable $W$ through 
\begin{align}\label{ME}
W=X+\epsilon, \quad \text{with} \quad \mathbb E[\epsilon]=0,
\end{align}
where the measurement error $\epsilon\in\mathbb{R}$ is assumed to be independent of $X$ and $Y$. In addition, we assume that the density function of $\epsilon$ is known and denoted by $f_\epsilon$. The case of unknown $f_\epsilon$ but with repeated measurements available is discussed in Section \ref{sec.unknown ME}. Suppose researchers consider using the following parametric regression model:
\begin{align}\label{mod.parametric regression model}
Y=g(X;\theta)+e(\theta),
\end{align}
where $g(X;\theta)$ is the parametric specification for the regression function $m(X)$ and $e(\theta)$ is the parametric disturbance of the model. In both econometrics and statistics, it is crucial to assess the adequacy of the putative parametric model $g(X;\theta)$. That is, our null hypothesis of interest is
\begin{align}\label{hyp.null1}
H_0: \mathbb{P}[m(X)=g(X;\theta_0)]=1 \,\,\,\text{for some}\,\,\,\theta_0\in\Theta\subset\mathbb{R}^d.
\end{align}
The alternative hypothesis is the negation of $H_0$, i.e.,
\begin{align}\label{hyp.alternative1}
H_1: \mathbb{P}[m(X)=g(X;\theta)]<1 \,\,\,\text{for any}\,\,\,\theta\in\Theta\subset\mathbb{R}^d.
\end{align}
Clearly, testing the null hypothesis in \eqref{hyp.null1} is equivalent to testing
\begin{align}\label{hyp.null2}
H_0: \mathbb{E}[e(\theta_0)|X]=0\,\,\,a.s.\,\,\text{for some}\,\,\,\theta_0\in\Theta\subset\mathbb{R}^d.
\end{align}
Note that \eqref{hyp.null2} is a standard conditional moment restriction. It is well known that \eqref{hyp.null2} can be equivalently expressed as a continuum number of unconditional moment restrictions using the exponential weight function, see, for example, \cite{bierens1982consistent}, \cite{bierens1990consistent}, and \cite{bierens1997asymptotic}. Specifically, we can rewrite \eqref{hyp.null2} as follows:
\begin{align}\label{null eq1}
S(\xi,\theta_0)=\mathbb{E}\left[e(\theta_0)\ee^{\ii X\xi}\right]=0\,\,\,\forall\,\xi\in\Pi\subset\mathbb{R}\,\,\,\text{for some}\,\,\,\theta_0\in\Theta\subset\mathbb{R}^d,
\end{align}
where $e(\theta)=Y-g(X;\theta)$ is the parametric error, $\Pi$ is a properly chosen compact set with nonempty interior, and $\ii=\sqrt{-1}$ denotes the imaginary unit. 

To provide evidence of model misspecification, we can then compare a suitable estimator for $S(\xi,\theta_0)$ in \eqref{null eq1} with the zero function. Unfortunately, the classical residual-marked empirical process is clearly infeasible, as the true regressor $X_i$ is unobservable due to the measurement error $\epsilon$ in \eqref{ME}. 
Motivated by \cite{dong2022nonparametric}, however, $S(\xi,\theta_0)$ can still be consistently estimated by noting that
\begin{align*}
S(\xi,\theta_0)=\iint\left(y-g(x;\theta_0)\right)f_{Y,X}(y,x)\ee^{\ii x\xi}\,dy\,dx, 
\end{align*}
where $f_{Y,X}(y,x)$ denotes the unknown joint density function of $\left(Y,X\right)^\top$. Given a random sample $\{(Y_i,W_i)^\top\}_{i=1}^n$ of size $n\geq 1$, and motivated by the deconvolution methods developed for density estimation [see, e.g., \cite{carroll1988optimal} and \cite{stefanski1990deconvolving}], we replace $f_{Y,X}(y,x)$ by a nonparametric estimator and employ a consistent estimator $\hat{\theta}_n$ for $\theta_0$. Then, $S(\xi,\theta_0)$ is estimated by
\begin{align*}
S_{n}(\xi,\hat{\theta}_n)=\iint\left(y-g(x;\hat{\theta}_n)\right)\hat{f}_{Y,X}(y,x)\ee^{\ii x\xi}\,dy\,dx,
\end{align*}
where $\hat f_{Y,X}(y,x)$ is the deconvolution kernel density estimator for $f_{Y,X}(y,x)$, i.e.,
\begin{align*}
\hat{f}_{Y,X}(y,x)=\frac{1}{n}\sum_{i=1}^nK_b\left(\frac{y-Y_i}{b}\right)\mathcal{K}_b\left(\frac{x-W_i}{b}\right),
\end{align*}
with $K_b(a)=K(a)/b$, where $K(\cdot)$ is a symmetric kernel function and $b=b_n\in\mathbb{R}^+$ is a sequence of bandwidth parameters shrinking to zero at an appropriate rate specified later. In addition, $\mathcal{K}_b(a)$ is the univariate deconvolution kernel given by
\begin{align}\label{conv}
\mathcal{K}_b(a)=\frac{1}{2\pi b}\int \ee^{-\ii t a}\frac{K^{\text{ft}}(t)}{f_\epsilon^{\text{ft}}(t/b)}\,dt, 
\end{align}
in which $K^{\text{ft}}(\cdot)$ and $f_\epsilon^{\text{ft}}(\cdot)$ are the Fourier transforms of the kernel function $K(\cdot)$ and the measurement error density $f_\epsilon(\cdot)$, respectively. The above deconvolution kernel density estimator has been widely used to construct density and nonparametric regression estimators in the presence of measurement errors; see, e.g., \cite{alexander2009deconvolution}. Such an approach is particularly important when kernel smoothers are employed, as shown in \cite{otsu2021specification}. We note that, following this approach, $S_{n}(\xi,\hat{\theta}_n)$ can be further rewritten as
\begin{align}
S_{n}(\xi,\hat{\theta}_n)=&\frac{1}{n}\sum_{i=1}^n\int\left[\int\left(y-g(x;\hat{\theta}_n)\right)K_b\left(\frac{y-Y_i}{b}\right)\,dy\right]\mathcal{K}_b\left(\frac{x-W_i}{b}\right)\ee^{\ii x\xi}\,dx\notag\\
=&\frac{1}{n}\sum_{i=1}^n\int\left(Y_i-g(x;\hat{\theta}_n)\right)\mathcal{K}_b\left(\frac{x-W_i}{b}\right)\ee^{\ii x\xi}\,dx. \label{interm}
\end{align}
It can be shown that under the null hypothesis and mild regularity conditions on the smoothness of $g(x;\theta)$ with respect to $\theta$, the smoothness of $f_\epsilon$ and kernel function $K(\cdot)$, the $\sqrt n$-consistency assumption on the estimator $\hat\theta_n$ as well as the bandwidth $b$, 
\begin{align}
\sqrt{n}S_{n}(\xi,\hat{\theta}_n)=&\frac{1}{\sqrt{n}}\sum_{i=1}^n\int\left(Y_i-g(x;\theta_0)\right)\mathcal{K}_b\left(\frac{x-W_i}{b}\right)\ee^{\ii x\xi}\,dx\notag\\
&-\sqrt{n}\left(\hat{\theta}-\theta_0\right)^\top G(\xi,\theta_0)+o_p(1), \label{no_proj_decom}
\end{align}
uniformly in $\xi\in\Pi$, where, 
\begin{align}\label{classical_parametric effect}
G(\xi,\theta_0)=\mathbb{E}\left[\dot{g}(X;\theta_0)\ee^{\ii X\xi}\right]:=\int\dot{g}(x;\theta_0)\ee^{\ii x\xi}f_X(x)\,dx, 
\end{align}
with $f_X(x)$ denoting the density function of $X$. Here and below, a dot denotes differentiation with respect to the variable $\theta$, i.e., $\dot{g}(x;\theta)=\partial g(x;\theta)/\partial\theta$. 

However, the uniform asymptotic decomposition obtained in \eqref{no_proj_decom} requires the $\sqrt n$-consistency property of $\hat\theta_n$, which may not be satisfied in the measurement error context. It is known that $\hat\theta_n$ may have a lower convergence rate, for example, $\hat\theta_n-\theta_0=O_p(n^{\delta-1/2})$ for some $0\leq\delta<1/4$ in parametric models with measurement errors, see \cite{taupin1998estimation}. Even $\hat\theta_n$ is $\sqrt n$-consistent, due to the presence of $\sqrt{n}(\hat{\theta}_n-\theta_0)^\top G(\xi,\theta_0)$ (commonly known as the parameter estimation effect) in the decomposition of $\sqrt{n}S_n(\xi,\hat{\theta}_n)$, the asymptotic null distribution of $\sqrt{n}S_n(\xi,\hat{\theta}_n)$ depends on $\hat\theta_n$ and the computationally attractive multiplier bootstrap is also infeasible. A typical way to deal with the parameter estimation effect is to impose the asymptotically linear representation assumption of $\sqrt n(\hat\theta_n-\theta_0)$ as follows:
\begin{align}\label{param expansion}
\sqrt{n}\left(\hat{\theta}_n-\theta_0\right)=\frac{1}{\sqrt n}\sum_{i=1}^nl(Y_i,W_i;\theta_0)+o_p(1),
\end{align}
for some function $l(Y_i,W_i;\theta_0)$ with zero mean and finite variance. However, such an asymptotically linear representation is 
even more difficult to obtain in the context of measurement errors.

\subsection{A projection approach}
In this section, we propose a class of orthogonal projection-based test statistics with attractive theoretical and empirical properties. This approach has been used in the literature to address parameter estimation effects in various testing problems; for example, specification analysis of linear quantile models in \cite{escanciano2014specification}, specification tests for the propensity score in \cite{sant2019specification}, model checking in partially linear spatial autoregressive models in \cite{yang2024model}, and testing for the nonparametric component in 
partially linear quantile regression models in \cite{song2025unified}. It is worth emphasizing that the projection approach proposed here is particularly suitable for regression models with measurement errors, as it does not require assuming an asymptotically linear representation of $\sqrt n(\hat\theta_n-\theta_0)$, or even the $\sqrt n$-consistency of $\hat\theta_n$ to $\theta_0$. Specifically, we introduce a projection-based weight to eliminate the parameter estimation effect in \eqref{no_proj_decom}, thereby relaxing the requirement on the estimator $\hat\theta_n$ and preserving the parametric convergence advantage of our ICM-type tests relative to local-smoothing-based methods. To be precise, $n^{\delta}(\hat\theta_n-\theta_0)$ for some $\delta>1/4$ would suffice. Consequently, a computationally attractive multiplier bootstrap procedure is facilitated, which is particularly advantageous in the presence of measurement errors.

To motivate our approach, we consider a projection-based transformation of the weight function $\ee^{\ii x\xi}$ given by
\begin{align}\label{projection distribution}
\mathcal{P}(x;\xi,\theta)=\ee^{\ii x\xi}-\dot g^\top(x;\theta)\Delta^{-1}(\theta)G(\xi,\theta),  
\end{align}
with
\begin{align*}
G(\xi,\theta )=\mathbb{E}[\dot g(X;\theta )\ee^{\ii X\xi}]\quad\text{and}\quad\Delta(\theta) =\mathbb{E}[\dot g(X;\theta)\dot g^\top(X;\theta)].
\end{align*}
A similar modification of the weight function can also be found in \cite{escanciano2014specification} and \cite{sant2019specification}, among others. The intuition behind \eqref{projection distribution} is that $\Delta^{-1} \left( \theta \right)G\left(\xi,\theta \right) $ represents the vector of linear projection coefficients of regressing $\ee^{\ii X\xi} $ on $\dot g(X;\theta)$. Thus, it follows that $\dot g(X;\theta )^\top\Delta^{-1} \left( \theta \right) G\left(\xi,\theta
\right) $ is the best linear predictor of $\ee^{\ii X\xi} $ given $\dot g(X;\theta )$ and 
\begin{align*}
\mathbb{E}\left[\dot g(X;\theta )\mathcal{P}(X;\xi,\theta)\right] & =\mathbb{E}\left[\dot g(X;\theta )\left(\ee^{\ii X\xi}-\dot g^\top (X,\theta
)\Delta^{-1} \left( \theta \right) G\left(\xi,\theta \right) %
\right)\right] \\
& =G(\xi,\theta )-\Delta \left( \theta \right) \Delta^{-1} \left( \theta
\right) G\left(\xi,\theta \right) =0.
\end{align*}
Based on the properties mentioned above, our tests are then based on continuous functionals of the following feasible projection-based deconvoluted residual-marked empirical process,
\begin{align}\label{Test stat}
S_{n}^{pro}(\xi,\hat\theta_n)=\frac{1}{n}\sum_{i=1}^n\int\left(Y_i-g(x;\hat{\theta}_n)\right)\mathcal{K}_b\left(\frac{x-W_i}{b}\right)\mathcal{P}_n(x;\xi,\hat{\theta}_n)\,dx,
\end{align}
where $\hat{\theta}_n$ is a suitably consistent estimator for $\theta_{0}$ under the null hypothesis with required convergence rates specified later (not necessarily $\sqrt n$-consistent), and $\mathcal{P}_n(x;\xi,\theta)$ is the sample analog of projection $\mathcal{P}(x;\xi,\theta)$ in (\ref{projection distribution}); namely,
\begin{align*}
\mathcal{P}_n(x;\xi,\hat{\theta}_n)=\ee^{\ii x\xi}
-\dot g^\top(x;\hat\theta_n)\Delta_n^{-1}(\hat\theta_n)G_{n}(\xi,\hat\theta_n),
\end{align*}
where
\begin{equation*}
G_{n}(\xi,\theta)=\int\dot g(x;\theta)\hat f_X(x)\ee^{\ii x\xi}\,dx \quad\text{and} \quad \Delta_n(\theta)=\int\dot g(x;\theta)\dot g^\top(x;\theta )\hat f_X(x)\,dx,
\end{equation*}
with
\begin{equation}
\hat f_X(x)=\frac{1}{n}\sum_{i=1}^n\mathcal{K}_b\left(\frac{x-W_i}{b}\right)\label{decon density}
\end{equation}
the deconvolution kernel density estimator for $f_X(x)$, where $\mathcal{K}_b(\cdot)$ is defined in \eqref{conv}. 

With the assistance of \eqref{Test stat}, we can establish that under the null hypothesis of correct specification, the empirical process $S_{n}^{pro}(\cdot,\hat{\theta}_n)$ is expected to be close to the zero function. Under the alternative hypothesis of misspecification, $S_{n}^{pro}(\cdot,\hat{\theta}_n)$ tends to deviate from the zero function, and consequently $\sqrt nS_{n}^{pro}(\cdot,\hat{\theta}_n)$ diverges. It is therefore natural to construct the test statistic by measuring an appropriate distance between $S^{pro}_{n}(\cdot,\hat{\theta}_n)$ and the zero function, denoted by $\Gamma(S^{pro}_{n})$ for some continuous functional $\Gamma(\cdot)$. For example, we could consider the following two test statistics based on the popular Kolmogorov--Smirnov (KS) and Cram\'{e}r--von Mises (CvM) functionals, respectively,
\begin{align*}
KS_{n}=\sup_{\xi\in\Pi}\left\vert S^{pro}_{n}(\xi,\hat{\theta}_n)\right\vert
 \quad\text{and}\quad CvM_{n}=\int_\Pi\left\vert S^{pro}_{n}(\xi,\hat{\theta}_n)\right\vert^2\,d\xi.
\end{align*}
In constructing the $CvM_{n}$ statistic, we adopt the uniform integrating measure on $\Pi$, which is also recommended in \cite{dong2022nonparametric}. Previous studies have discussed the use of integrating measures that are absolutely continuous with respect to the Lebesgue measure on $\Pi$. However, due to the difficulty of selecting an appropriate measure and the estimation challenges that arise when the true regressor $X$ is unobserved, such measures have rarely been employed in models with measurement errors. The test statistics $KS_{n}$ and $CvM_{n}$ should be small if the null hypothesis \eqref{hyp.null1} is true, while \textquotedblleft
large\textquotedblright{}\ values of $KS_{n}$ and $CvM_{n}$ imply the rejection of $H_{0}$. These \textquotedblleft
large\textquotedblright{}\ values will be determined by a convenient multiplier bootstrap procedure described in Section \ref{sec.boot}, thanks to the projection $\mathcal{P}_n(x;\xi,\hat{\theta}_n)$ that eliminates the parameter estimation effect.

\section{Asymptotic theory}\label{sec.Asy}

In this section, we establish the asymptotic distributions of the test statistics $KS_n$ and $CvM_n$ under the null hypothesis $H_0$. Subsequently, we investigate the asymptotic power behavior of these statistics under a sequence of local alternative hypotheses $H_{1n}$ that converge to $H_0$ at the parametric rate $n^{-1/2}$, as well as under the fixed alternative. Denote $f^{ft}(t)=\int \ee^{\ii tx}f(x)dx$ for a generic function $f$ and let $g^{(p)}(x;\theta)$ denote the $p$-times derivative of function $g$ with respect to variable $x$. Throughout this paper, $|c|$ is used to denote the Euclidean norm of a vector $c$ and $|A|$ is used to denote the Frobenius norm of a matrix $A$.

\subsection{Asymptotic null distributions}

We first impose some regularity conditions to derive the asymptotic null distributions of the test statistics $KS_n$ and $CvM_n$.

\begin{assumption}\label{ass.D}

    \quad 

    \begin{enumerate}[label=(\roman*)]
	\item $\{(Y_i,W_i)^\top\}_{i=1}^n$ is an i.i.d sample of $(Y,W)^\top$ which satisfies \eqref{mod.regression model}, \eqref{ME}, and $\mathbb E|Y|^2<\infty$.
        
    \item The estimator $\hat\theta_n$ satisfies $\hat\theta_n-\theta_0=O_p(n^{\delta-1/2})$ for some $0\leq \delta<1/4$ in parametric regression models with measurement errors.

	\item The measurement error $\epsilon$ is independent of $(Y,X)^\top$.
    \end{enumerate}

\end{assumption}
Assumption \ref{ass.D}({\romannumeral1}) requires random sampling and the existence of the second moment of $Y$, both of which are frequently mentioned in the literature. We emphasize that Assumption \ref{ass.D}({\romannumeral2}) relaxes the requirement on the convergence rate of the estimator $\hat\theta_n$  used in our tests, making it more suitable for cases involving measurement errors. Most estimators cannot typically be expanded as \eqref{param expansion} detailed in Section \ref{sec.Test} in the presence of measurement errors. Even worse, they cannot achieve $\sqrt{n}$-rate convergence without imposing numerous restrictions [see \cite{taupin1998estimation}], making the multiplier bootstrap infeasible. Assumption \ref{ass.D}({\romannumeral3}) is common and essential, as mentioned in the classical literature on measurement error [see \cite{otsu2021specification}].

As shown in the classical measurement error literature, we categorize two separate cases by the decay rate of the tail of the characteristic function of the measurement errors: the ordinary smooth case and the supersmooth case. We first focus on the ordinary smooth case and impose the following assumptions.

\begin{assumption}\label{ass.O}

    \quad

    \begin{enumerate}[label=(\roman*)]
	\item The functions $f_X(x)$, $g(x;\theta_0)$, and $h(x;\theta_0)$ are $p$-times continuously differentiable with respect to $x$ with bounded and integrable derivatives, where $h(x;\theta)$ denotes any partial derivative of $g(x;\theta)$ with respect to $\theta$ of order up to three and $p$ is a positive integer satisfying $p>\alpha$. Furthermore, we also impose additional assumption about Lipschitz continuous properties of $f_X^{(p)}(x)$, $\left[g(x;\theta_0)f_X(x)\right]^{(p)}$, $h^{(p)}(x;\theta_0)$, and $\left[g(x;\theta_0)h(x;\theta_0)\right]^{(p)}$ for almost every $x$ as follows:
    \begin{align*}
        & \left\vert f_X^{(p)}(x+y) - f_X^{(p)}(x)\right\vert \leq L_{f_X^{(p)}}(x)\vert y\vert,\\
        & \left\vert \left[g(x+y;\theta_0)f_X(x+y)\right]^{(p)} - \left[g(x;\theta_0)f_X(x)\right]^{(p)}\right\vert \leq L_{[g f_X]^{(p)}}(x)\vert y\vert\\
        & \left\vert h^{(p)}(x+y;\tilde{\theta}) - h^{(p)}(x;\tilde{\theta})\right\vert \leq L_{h^{(p)}}(x)\vert y\vert,\\
        & \left\vert \left[g h\right]^{(p)}(x+y;\tilde{\theta}) - \left[g h\right]^{(p)}(x;\tilde{\theta})\right\vert \leq L_{[g h]^{(p)}}(x)\vert y\vert,
    \end{align*}
    for some bounded and integrable functions $L_{f_X^{(p)}}(x)$, $L_{[gf_X]^{(p)}}(x)$, $L_{h^{(p)}}(x)$, and $L_{[gh]^{(p)}}(x)$, where $\tilde{\theta}$ takes values between $\theta_0$ and $\hat{\theta}_n$.

	\item The characteristic function of measurement error $\epsilon$ is of the following form for all $t\in\mathbb{R}$, where $c_0^{os},c_1^{os},\cdots,c_{\alpha}^{os}$ are finite constants with $c_0^{os} = 1$ and $\alpha>0$,
    \begin{align*}
        f_\epsilon^{ft}(t) = \frac{1}{c_0^{os}+c_{1}^{os}t+\cdots+c_{\alpha}^{os}t^{\alpha}}.
    \end{align*}

        \item The kernel function $K$ is differentiable to order $p+1$ and satisfies the following equations for $l=1,2,\cdots,p-1$:
    \begin{align*}
        &\int K(u)du = 1, \qquad \int u^{p}K(u)du\neq 0, \qquad \int u^lK(u)du = 0.
    \end{align*}
    In addition, $K^{ft}$ is compactly supported on a compact set that is symmetric around zero.

    \item $nb^{2p}\to 0$ as $n\to\infty$.

    \item For $c_l^{os}(\xi) = (-\ii)^l\sum\limits_{j=l}^{\alpha}c_j^{os}\binom{j}{l}\xi^{j-l}$, we have
    \begin{align*}
        &\mathbb{E}\left[\sup\limits_{\xi\in\Pi}\left\vert r^{os}_{h, \infty}(Y,W;\xi,\theta_0)\right\vert^2\right]<\infty,
    \end{align*}
    where
    \begin{align*}
        &r^{os}_{h, \infty}(Y,W;\xi,\theta_0) = \sum_{l=0}^\alpha c_l^{os}(\xi) \left[Yh^{(l)}(W;\theta_0)\right] - \sum_{l=0}^\alpha c_l^{os}(\xi) \left[(gh)^{(l)}(W;\theta_0)\right].
    \end{align*}
    \end{enumerate}
\end{assumption}

Assumption \ref{ass.O}({\romannumeral1}) imposes both requirements of the Lipschitz continuity and restrictions on the smoothness of the structural functions, mainly adopted from \cite{dong2022nonparametric}. The Lipschitz continuity assumptions are necessary to analyze the moment properties of structural functions under ordinary smooth conditions, as detailed in the proofs of the lemmas. Our assumptions are slightly stronger, requiring the Lipschitz continuity of $h^{(p)}$ and $[gh]^{(p)}$, enhancements that are essential for ensuring convergence. These assumptions hold in commonly used scenarios, such as polynomial $g$, uniformly distributed $X$, normally distributed $U$, and Laplace distributed $\epsilon$. Assumption \ref{ass.O}({\romannumeral2}) is commonly known as the ordinary smooth assumption, which specifies the polynomial decay rate of $f_\epsilon^{ft}$, the characteristic function of measurement error. As shown in \cite{fan1995average} and \cite{dong2022nonparametric}, it can be generalized to $f_\epsilon^{ft}(t)=\exp(it\zeta)/(c_0^{os}+c_{1}^{os}t+\cdots+c_{\alpha}^{os}t^{\alpha})$, including typical distributions such as the Laplace distribution. The characteristic function of the measurement error is estimated through repeated sampling when it is unknown, as outlined in Section \ref{sec.unknown ME}. Assumption \ref{ass.O}({\romannumeral3}) concerns the order of the kernel function, which is essential for constructing the deconvolution kernel, ensuring its integration properties, and eliminating the asymptotic bias introduced by nonparametric estimators. Moreover, the construction of higher-order kernel functions is feasible as discussed in \cite{alexander2009deconvolution}. Assumption \ref{ass.O}({\romannumeral4}) contains the undersmoothing condition frequently used in the literature. It also guarantees the asymptotic negligibility of the bias from nonparametric estimators and establishes the integration properties of the deconvolution kernel. It is worth noting that, in the context of the specification testing problem considered in this paper, we do not impose a lower bound on the bandwidth to ensure the existence of the variance, as is commonly done in studies using kernel-based estimators. In particular, we verify in Appendix \ref{sec.AppendixC} that, under the Lipschitz continuity of the underlying functions and an undersmoothing bandwidth condition, the use of kernel smoothing does not introduce any additional non-negligible variance. Consequently, it suffices to assume the boundedness of the asymptotic variance to guarantee the existence of the limiting process, as stated in Assumption \ref{ass.O}({\romannumeral5}). Similar assumptions are made in \cite{fan1995average} and \cite{dong2022nonparametric}.

Using the assumptions above, we can characterize the limiting behavior of $S_{n}^{pro}(\cdot,\hat\theta_n)$ for the ordinary smooth case. Let \textquotedblleft $\Longrightarrow$ \textquotedblright{} denote weak convergence on $(l^{\infty}(\Pi),\mathcal{B}_{\infty})$ in the sense of Hoffmann--J{\o}rgensen, where $\mathcal{B}_{\infty}$ denotes the corresponding Borel $\sigma$-algebra, see, e.g., Definition 1.3.3 in \cite{van1996weak}. Subsequently, we derive the asymptotic null distributions of the statistics $KS_n$ and $CvM_n$, as stated in the following theorem and corollary. 

\begin{theorem}\label{theorem.ordinary smooth under H_0}
    Suppose that Assumptions \ref{ass.D} and \ref{ass.O} hold. Then, under the null hypothesis $H_0$ in \eqref{hyp.null1}, we have 
    \begin{align}\label{result of param effect ass O H0}
        &\sup_{\xi\in\Pi}\left\vert S_{n}^{pro}(\xi,\hat\theta_n)- S_{n}^{pro}(\xi,\theta_0)\right\vert=o_p\left(n^{-\frac{1}{2}}\right).
    \end{align}
    Furthermore, 
    \begin{align}\label{result of main term ass O H0}
        \sqrt n S_{n}^{pro}(\cdot,\hat\theta_n)\Longrightarrow S_{\infty}^{os}(\cdot,\theta_0),
     \end{align}
    where $S_{\infty}^{os}(\cdot,\theta_0)$ is a Gaussian process with mean zero and covariance structure
    \begin{align*}
        &Cov\left[S_{\infty}^{os}(\xi,\theta_0),S_{\infty}^{os}(\xi^\prime,\theta_0)\right] = \mathbb{E}\left[r^{os}_{\infty}(Y,W;\xi,\theta_0)\overline{r^{os}_{\infty}}(Y,W;\xi^\prime,\theta_0)\right],
    \end{align*}
    with 
    \begin{align}\label{theorem.ordinary smooth under H_0 ros}
        &r^{os}_{\infty}(Y,W;\xi,\theta_0) = c_0^{os}(\xi)Y\ee^{\ii W\xi} - \sum\limits_{l=0}^{\alpha} c_l^{os}(\xi) [\ee^{\ii W\xi} g^{(l)}(W;\theta_0)] \notag\\
        & - \left\{ \sum\limits_{l=0}^\alpha c_l^{os}(0) \left[Y\dot{g}^{(l)}(W;\theta_0)\right] - \sum\limits_{l=0}^\alpha c_l^{os}(0) \left[(g\dot{g})^{(l)}(W;\theta_0)\right] \right\}^\top\Delta^{-1}(\theta_0)G(\xi,\theta_0).
    \end{align}
\end{theorem}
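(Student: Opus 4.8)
The plan is to show that the projection weight absorbs the parameter estimation effect \emph{exactly at the sample level}, so that $\sqrt n\,S_n^{pro}(\cdot,\hat\theta_n)$ collapses to a $\Pi$-indexed sum of i.i.d.\ mean-zero terms, to which a functional central limit theorem applies. The workhorse is a \emph{deconvolution approximation lemma}: for a function $\phi$ with the smoothness of Assumption~\ref{ass.O}(i), a Fourier computation based on $1/f_\epsilon^{ft}(t)=\sum_{j=0}^{\alpha}c_j^{os}t^{j}$ (Assumption~\ref{ass.O}(ii)) and integration by parts against $K^{ft}$ rewrites $\int\phi(x)\ee^{\ii x\xi}\mathcal{K}_b\!\left(\tfrac{x-w}{b}\right)dx$ as a finite linear combination of the terms $\int[\phi(\cdot)\ee^{\ii\cdot\xi}]^{(j)}(w+bu)K(u)\,du$, $j\le\alpha$; the $p$-th order kernel of Assumption~\ref{ass.O}(iii) together with $C^p$ smoothness and a Leibniz rearrangement then yield
\[
\frac1n\sum_{i=1}^n\int\phi(x)\mathcal{K}_b\!\left(\tfrac{x-W_i}{b}\right)\ee^{\ii x\xi}\,dx=\frac1n\sum_{i=1}^n\ee^{\ii W_i\xi}\sum_{l=0}^{\alpha}c_l^{os}(\xi)\phi^{(l)}(W_i)+o_p\!\left(n^{-1/2}\right)
\]
uniformly in $\xi\in\Pi$, with an analogous statement uniform over $\theta$ near $\theta_0$ when $\phi=g(\cdot;\theta)$ or $\phi=h(\cdot;\theta)$. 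The $o_p(n^{-1/2})$ bundles a deterministic bias of order $b^p$ (negligible by $nb^{2p}\to0$, Assumption~\ref{ass.O}(iv)) and a stochastic smoothing remainder shown in Appendix~\ref{sec.AppendixC} to be negligible \emph{without any lower bound on $b$}, using the Lipschitz continuity of $f_X^{(p)}$, $[gf_X]^{(p)}$, $h^{(p)}$ and $[gh]^{(p)}$. I will also use the population counterpart $\mathbb{E}_\epsilon\!\left[\ee^{\ii(X+\epsilon)\xi}\sum_{l=0}^{\alpha}c_l^{os}(\xi)\psi^{(l)}(X+\epsilon)\right]=\psi(X)\ee^{\ii X\xi}$ (exact, since $\sum_j c_j^{os}t^{j}=1/f_\epsilon^{ft}(t)$), and in particular its $\xi=0$ instance $\mathbb{E}_\epsilon[\sum_l c_l^{os}(0)\psi^{(l)}(X+\epsilon)]=\psi(X)$.

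For \eqref{result of param effect ass O H0} the decisive point is that, whereas the non-projected process carries the parameter-estimation term $\sqrt n(\hat\theta_n-\theta_0)^\top G(\xi,\theta_0)$ of \eqref{no_proj_decom}, the projected process carries $\sqrt n(\hat\theta_n-\theta_0)^\top\mathbb{E}[\dot g(X;\theta_0)\mathcal{P}(X;\xi,\theta_0)]=0$; in fact the cancellation already holds \emph{exactly} at the sample level, because $\hat f_X(x)=n^{-1}\sum_i\mathcal{K}_b((x-W_i)/b)$ gives, for every $\theta$ and every $\xi$,
\[
\frac1n\sum_{i=1}^n\int\dot g(x;\theta)\mathcal{K}_b\!\left(\tfrac{x-W_i}{b}\right)\mathcal{P}_n(x;\xi,\theta)\,dx=G_n(\xi,\theta)-\Delta_n(\theta)\Delta_n^{-1}(\theta)G_n(\xi,\theta)=0.
\]
Decompose $S_n^{pro}(\xi,\hat\theta_n)-S_n^{pro}(\xi,\theta_0)=T_{1n}(\xi)+T_{2n}(\xi)$, where $T_{1n}$ isolates the change of the residual $g(x;\theta_0)-g(x;\hat\theta_n)$ with the weight frozen at $\hat\theta_n$, and $T_{2n}$ the change $\mathcal{P}_n(x;\xi,\hat\theta_n)-\mathcal{P}_n(x;\xi,\theta_0)$ with residual $Y_i-g(x;\theta_0)$. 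Expanding $g(x;\theta_0)-g(x;\hat\theta_n)=-\dot g^\top(x;\hat\theta_n)(\hat\theta_n-\theta_0)+r_n(x)$ in $\theta$ about $\hat\theta_n$, the linear term in $T_{1n}$ is annihilated identically by the display above, while $r_n$ is $C^p$ in $x$ with $\|r_n\|_{C^p}=O_p(|\hat\theta_n-\theta_0|^2)$; feeding $r_n$ into the deconvolution lemma (rather than crudely bounding $|\mathcal{K}_b|$, which would not suffice when $p$ is only slightly larger than $\alpha$) gives $T_{1n}(\xi)=O_p(|\hat\theta_n-\theta_0|^2)=O_p(n^{2\delta-1})=o_p(n^{-1/2})$ uniformly in $\xi$, since $\delta<1/4$ by Assumption~\ref{ass.D}(ii). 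For $T_{2n}$, expanding each of $\dot g(x;\theta)$, $\Delta_n^{-1}(\theta)$ and $G_n(\xi,\theta)$ about $\theta_0$ writes every resulting term as a raw ICM-type average against a fixed smooth weight (hence $O_p(n^{-1/2})$) times a factor $O_p(|\hat\theta_n-\theta_0|)=O_p(n^{\delta-1/2})$ (and a bounded factor $\Delta^{-1}(\theta_0)G(\xi,\theta_0)+o_p(1)$), giving $O_p(n^{\delta-1})=o_p(n^{-1/2})$; those averages are genuinely centered because, under $H_0$, $\mathbb{E}[U\mid X,\epsilon]=0$ and the exact identity above yield $\mathbb{E}[\sum_l c_l^{os}(0)\{Yh^{(l)}(W;\theta_0)-(gh)^{(l)}(W;\theta_0)\}]=\mathbb{E}[g(X;\theta_0)h(X;\theta_0)]-\mathbb{E}[g(X;\theta_0)h(X;\theta_0)]=0$ for $h\in\{\dot g,\ddot g\}$. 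Remainders of this second expansion are again routed through the deconvolution lemma. This establishes \eqref{result of param effect ass O H0}.

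It remains to expand $\sqrt n\,S_n^{pro}(\cdot,\theta_0)$. Writing $\mathcal{P}_n(x;\xi,\theta_0)=\ee^{\ii x\xi}-\dot g^\top(x;\theta_0)\Delta_n^{-1}(\theta_0)G_n(\xi,\theta_0)$ and applying the deconvolution lemma to the two pieces — with frequency $\xi$ in the $\ee^{\ii x\xi}$ piece and frequency $0$ in the $\dot g^\top(x;\theta_0)$ piece — and replacing $\Delta_n^{-1}(\theta_0)G_n(\xi,\theta_0)$ by $\Delta^{-1}(\theta_0)G(\xi,\theta_0)$ via uniform consistency of the deconvolution density estimator $\hat f_X$ and continuity, one obtains
\[
\sqrt n\,S_n^{pro}(\xi,\hat\theta_n)=\frac1{\sqrt n}\sum_{i=1}^n r^{os}_{\infty}(Y_i,W_i;\xi,\theta_0)+o_p(1)\qquad\text{uniformly in }\xi\in\Pi,
\]
with $r^{os}_{\infty}$ exactly as in \eqref{theorem.ordinary smooth under H_0 ros} (the $l=0$ term supplies $c_0^{os}(\xi)Y_i\ee^{\ii W_i\xi}$, the $g$-piece supplies $-\sum_l c_l^{os}(\xi)\ee^{\ii W_i\xi}g^{(l)}(W_i;\theta_0)$, and the projection piece supplies the bracketed term). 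Under $H_0$, $\epsilon\perp(Y,X)$, $\mathbb{E}[U\mid X,\epsilon]=0$ and the exact identity give $\mathbb{E}[r^{os}_{\infty}(Y,W;\xi,\theta_0)]=0$ for every $\xi$ (the $\ee^{\ii W\xi}$ terms each reduce to $\mathbb{E}[g(X;\theta_0)\ee^{\ii X\xi}]$ and cancel, using $c_0^{os}(\xi)f_\epsilon^{ft}(\xi)=1$; the bracketed term was shown to have mean zero). Finally, $\{r^{os}_{\infty}(\cdot,\cdot;\xi,\theta_0):\xi\in\Pi\}$ is a $\mathbb{C}$-valued Donsker class — a smooth finite-dimensional family over the compact $\Pi$ with square-integrable envelope by Assumption~\ref{ass.O}(v) and the boundedness of the $x$-derivatives of $g$ and $h$ — so a functional central limit theorem gives $n^{-1/2}\sum_i r^{os}_\infty(Y_i,W_i;\cdot,\theta_0)\Longrightarrow S_\infty^{os}(\cdot,\theta_0)$, a mean-zero Gaussian process with covariance $\mathbb{E}[r^{os}_\infty(Y,W;\xi,\theta_0)\overline{r^{os}_\infty}(Y,W;\xi',\theta_0)]$; combining with \eqref{result of param effect ass O H0} yields \eqref{result of main term ass O H0}.

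The main obstacle is the deconvolution approximation lemma at the $o_p(n^{-1/2})$ level, \emph{uniformly} over the continuum $\{\ee^{\ii\cdot\xi}:\xi\in\Pi\}$ and over $\theta$ in a shrinking neighborhood of $\theta_0$. Two things must be reconciled: the target involves only the low-order derivatives $\phi^{(l)}$, $l\le\alpha<p$, yet the bias must still be $O(b^p)$ (handled by the Fourier representation above plus the $p$-th order kernel cancellations, needing no more than $C^p$ smoothness), and the stochastic smoothing remainder must be $o_p(n^{-1/2})$ although no lower bound is placed on $b$ — precisely the subtlety flagged before Assumption~\ref{ass.O}(v) and the reason the Lipschitz hypotheses on the $p$-th derivatives are imposed. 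A secondary difficulty, specific to \eqref{result of param effect ass O H0}, is that $\mathcal{P}_n$ is itself data-dependent through $\hat f_X$, $G_n$ and $\Delta_n$; one therefore needs uniform-in-$(\theta,\xi)$ control of these estimated objects and their $\theta$-derivatives, after which the exact sample-orthogonality identity is what allows the argument to proceed under merely $\hat\theta_n-\theta_0=O_p(n^{\delta-1/2})$ rather than $\sqrt n$-consistency.
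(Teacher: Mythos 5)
Your proposal is correct and follows essentially the same route as the paper's proof: a deconvolution expansion lemma reducing each kernel-weighted average to $n^{-1}\sum_i \ee^{\ii W_i\xi}\sum_{l\le\alpha}c_l^{os}(\xi)\phi^{(l)}(W_i)+o_p(n^{-1/2})$ uniformly in $\xi$, cancellation of the parameter-estimation effect through the projection using only $\hat\theta_n-\theta_0=O_p(n^{\delta-1/2})$ with $\delta<1/4$, and a functional CLT for the resulting i.i.d.\ representation $r^{os}_{\infty}$. The only (cosmetic) difference is that you annihilate the term linear in $\hat\theta_n-\theta_0$ via the exact sample-level identity $n^{-1}\sum_i\int\dot g(x;\hat\theta_n)\mathcal{K}_b\bigl((x-W_i)/b\bigr)\mathcal{P}_n(x;\xi,\hat\theta_n)\,dx=0$, whereas the paper reaches the same cancellation by separately Taylor-expanding $S_n$, $M_n$, $\Delta_n^{-1}$ and $G_n$ about $\theta_0$ in its appendix lemmas, with the identical residual structure $O_p(|\hat\theta_n-\theta_0|^2)+o_p(|M_n(\theta_0)|+|\kappa_n(\theta_0)|)$.
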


Based on Theorem \ref{theorem.ordinary smooth under H_0} and the continuous mapping theorem, see, e.g., \cite{van1996weak}, we can further derive the asymptotic null distributions of $KS_n$ and $CvM_n$ for the ordinary smooth case. 

\begin{corollary}\label{Corollary.known ordinary}
Suppose that Assumptions \ref{ass.D} and \ref{ass.O} hold. Then, under the null hypothesis $H_0$ in \eqref{hyp.null1}, we have 
\begin{align*}
 \sqrt{n}KS_{n} \stackrel{d}\longrightarrow \sup\limits_{\xi\in\Pi}\left\vert S_{\infty}^{os}(\xi,\theta_0)\right\vert  \quad\text{ and }
    \quad nCvM_{n}\stackrel{d}\longrightarrow \int_{\Pi}\left\vert S_{\infty}^{os}(\xi,\theta_0)\right\vert^2\,d\xi.
\end{align*}
\end{corollary}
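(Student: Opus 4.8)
The plan is to deduce both limit laws directly from the functional weak convergence already established in Theorem \ref{theorem.ordinary smooth under H_0}, by invoking the continuous mapping theorem for weak convergence in the Hoffmann--J{\o}rgensen sense (e.g., Theorem 1.3.6 in \cite{van1996weak}). First I would recast the scaled statistics as images of the empirical process under fixed functionals. Since $\sqrt n\,KS_n=\sup_{\xi\in\Pi}|\sqrt n\,S_n^{pro}(\xi,\hat\theta_n)|$ and $n\,CvM_n=\int_\Pi|\sqrt n\,S_n^{pro}(\xi,\hat\theta_n)|^2\,d\xi$, it suffices to define, on the (complex-valued) space $l^\infty(\Pi)$ with $|\cdot|$ the modulus, the functionals $\Gamma_{KS}(z)=\sup_{\xi\in\Pi}|z(\xi)|$ and $\Gamma_{CvM}(z)=\int_\Pi|z(\xi)|^2\,d\xi$, so that $\sqrt n\,KS_n=\Gamma_{KS}\bigl(\sqrt n\,S_n^{pro}(\cdot,\hat\theta_n)\bigr)$ and $n\,CvM_n=\Gamma_{CvM}\bigl(\sqrt n\,S_n^{pro}(\cdot,\hat\theta_n)\bigr)$.

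The second step is to verify that both functionals are continuous with respect to the sup-norm topology on $l^\infty(\Pi)$. For $\Gamma_{KS}$ the reverse triangle inequality yields $|\Gamma_{KS}(z_1)-\Gamma_{KS}(z_2)|\le\|z_1-z_2\|_\infty$, so $\Gamma_{KS}$ is globally $1$-Lipschitz and hence continuous everywhere. For $\Gamma_{CvM}$ I would use the pointwise bound $\bigl||z_1(\xi)|^2-|z_2(\xi)|^2\bigr|\le\bigl(|z_1(\xi)|+|z_2(\xi)|\bigr)\,|z_1(\xi)-z_2(\xi)|$ together with the finiteness of the Lebesgue measure $|\Pi|$ of the compact set $\Pi$ to obtain $|\Gamma_{CvM}(z_1)-\Gamma_{CvM}(z_2)|\le|\Pi|\,(\|z_1\|_\infty+\|z_2\|_\infty)\,\|z_1-z_2\|_\infty$. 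This gives local Lipschitz continuity and therefore continuity at every point of $l^\infty(\Pi)$; in particular the continuity set of each functional is the whole space, so no exceptional-set argument is required.

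Finally I would assemble the pieces. Theorem \ref{theorem.ordinary smooth under H_0} gives $\sqrt n\,S_n^{pro}(\cdot,\hat\theta_n)\Longrightarrow S_\infty^{os}(\cdot,\theta_0)$ on $(l^\infty(\Pi),\mathcal B_\infty)$, where the limit is a tight, mean-zero Gaussian process; by tightness its trajectories lie almost surely in the space of bounded functions that are uniformly continuous with respect to the intrinsic semimetric, which makes them measurable and hence renders $\Gamma_{CvM}$ well defined on them. The continuous mapping theorem then transfers the weak convergence through $\Gamma_{KS}$ and $\Gamma_{CvM}$, and since the images are real-valued, Hoffmann--J{\o}rgensen weak convergence reduces to ordinary convergence in distribution, giving $\sqrt n\,KS_n\stackrel{d}\longrightarrow\sup_{\xi\in\Pi}|S_\infty^{os}(\xi,\theta_0)|$ and $n\,CvM_n\stackrel{d}\longrightarrow\int_\Pi|S_\infty^{os}(\xi,\theta_0)|^2\,d\xi$, as claimed.

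I expect the only genuine subtlety — the main obstacle, albeit a mild one — to lie in $\Gamma_{CvM}$ rather than $\Gamma_{KS}$: because the modulus-squared is only locally Lipschitz and the integral functional is a priori defined only on measurable functions, I would lean on the tightness of $S_\infty^{os}$ to place its sample paths in the space of bounded, uniformly continuous functions, where both the integral and its continuity are unambiguous. The $\Gamma_{KS}$ conclusion is essentially immediate from global Lipschitz continuity.
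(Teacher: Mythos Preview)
Your proposal is correct and follows exactly the approach the paper takes: the paper simply states that the corollary follows from Theorem \ref{theorem.ordinary smooth under H_0} and the continuous mapping theorem (citing \cite{van1996weak}), without spelling out the continuity verifications. Your argument supplies the details the paper omits, and the route is the same.
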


Theorem \ref{theorem.ordinary smooth under H_0} shows that $S_{n}^{pro}(\cdot,\hat\theta_n)$ converges weakly to a centered Gaussian process with its covariance structure depending on the data-generating process (DGP) and the parametric model $g(x;\theta_0)$ under the null. Subsequently, Corollary \ref{Corollary.known ordinary} shows that the  $KS_n$ and $CvM_n$ statistics converge to the sup norm and the squared norm of the aforementioned Gaussian process, respectively. We note that $S_{n}^{pro}(\cdot,\hat\theta_n)$ exhibits $\sqrt{n}$-rate convergence for the ordinary smooth case and is asymptotically independent of the chosen bandwidth $b$, improving the convergence results of local smoothing tests.

For the supersmooth case where the characteristic function of measurement error decays at an exponential rate, e.g., the distribution of $\epsilon$ is normal, regularity conditions underlying the derivation of the asymptotic distribution of our proposed test statistics under the null are given as follows.

\begin{assumption}\label{ass.S}

\quad

\begin{enumerate}[label=(\roman*)]
        \item The functions $f_X(x)$, $g(x;\theta_0)$, and $h(x;\theta_0)$ are infinitely differentiable with respect to $x$, where the function $h(x;\theta)$ is mentioned in Assumption \ref{ass.O}(i).

        \item The measurement error $\epsilon$ follows a Gaussian distribution with the characteristic function of the following form for all $t\in\mathbb{R}$ and some positive constant $\mu$,
        \begin{align*}
            f^{ft}_\epsilon(t) = \ee^{-\mu t^2}.
        \end{align*}

        \item The kernel function $K$ is infinitely differentiable and satisfies the following equations for all $l\in\mathbb{N}$,
        \begin{align*}
            \int K(u)du = 1, \qquad \int u^lK(u)du = 0.
        \end{align*}
        In addition, $K^{ft}$ has a compact support set that is symmetric around zero, and is bounded.

        \item $b\to0$ as $n\to\infty$.

        \item For $c_l^{ss}(\xi) = (-\ii)^l\sum\limits_{j\geq\frac{l}{2}}^{\infty}\frac{\mu^j}{j!}\binom{2j}{l}\xi^{2j-l}$, we have
        \begin{align*}
            &\mathbb{E}\left[\sup\limits_{\xi\in\Pi}\left\vert r^{ss}_{h, \infty}(Y,W;\xi,\theta_0)\right\vert^2\right]<\infty,
        \end{align*}
        where 
        \begin{align*}
            &r^{ss}_{h, \infty}(Y,W;\xi,\theta_0) = \left\{ \sum_{l=0}^\infty c_l^{ss}(\xi) \left[Yh^{(l)}(W;\theta_0)\right] - \sum_{l=0}^\infty c_l^{ss}(\xi) \left[(gh)^{(l)}(W;\theta_0)\right] \right\}e^{{\rm i}W\xi}.
        \end{align*}
    \end{enumerate}

\end{assumption}

Assumption \ref{ass.S}({\romannumeral1}) requires structural functions and the distribution of latent variables to be infinitely smooth. Although assumptions about conditional mean functions are restrictive, commonly used functions, such as polynomials, circular functions, exponentials, and sums or products of such functions, 
satisfy the requirement. Additionally, the construction of the class of infinitely differentiable density functions is also mentioned in \cite{alexander2009deconvolution}. Assumption \ref{ass.S}({\romannumeral2}) is a supersmooth condition. Assuming the function $f^{ft}_\epsilon$ has an exponential decay rate further enhances Assumption \ref{ass.O}({\romannumeral2}), with the Gaussian distribution being a typical example. The supersmoothness of measurement error and structural functions necessitate the use of an infinite-order kernel in Assumption \ref{ass.S}({\romannumeral3}) and integral properties to derive the asymptotic behavior of the statistic. Assumption \ref{ass.S}({\romannumeral4}) is simply a trivial bandwidth requirement, because the infinite-order smoothness condition removes the bias of the proposed empirical process under the null hypothesis, so that the undersmoothing condition required in Assumption \ref{ass.O}({\romannumeral4}) is no longer needed. Assumption \ref{ass.S}({\romannumeral5}) pertains to the boundedness of the asymptotic variance of our statistics, analogous to Assumption \ref{ass.O}({\romannumeral5}). 

Using these assumptions, we derive the asymptotic behavior of the empirical process $S_{n}^{pro}(\cdot,\hat\theta_n)$ for the supersmooth case in the following theorem.

\begin{theorem}\label{theorem.supersmooth under H_0}

Suppose that Assumptions \ref{ass.D} and \ref{ass.S} hold. Then, under the null hypothesis $H_0$ in \eqref{hyp.null1}, we have \eqref{result of param effect ass O H0} and 
\begin{align}\label{result of main term ass S H0}
\sqrt n S_{n}^{pro}(\cdot,\hat\theta_n)\Longrightarrow S_{\infty}^{ss}(\cdot,\theta_0),
\end{align}
where $S_{\infty}^{ss}(\cdot,\theta_0)$ is a Gaussian process with mean zero and covariance structure
\begin{align*}
    &Cov\left[S_{\infty}^{ss}(\xi,\theta_0),S_{\infty}^{ss}(\xi^\prime,\theta_0)\right] = \mathbb{E}\left[r^{ss}_{\infty}(Y,W;\xi,\theta_0)\overline{r^{ss}_{\infty}}(Y,W;\xi^\prime,\theta_0)\right],
\end{align*}
with 
\begin{align}\label{theorem.supersmooth under H_0 rss}
    &r^{ss}_{\infty}(Y,W;\xi,\theta_0) = c_0^{ss}(\xi)Y\ee^{\ii W\xi} - \sum_{l=0}^{\infty}c_l^{ss}(\xi)g^{(l)}(W;\theta_0)\ee^{\ii W\xi}\notag\\
    - & \left\{Y\sum_{l=0}^{\infty}c_l^{ss}(0)g^{(l)}(W;\theta_0)\ee^{\ii W\xi} - \sum_{l=0}^{\infty}c_l^{ss}(0)\left[g\dot{g}\right]^{(l)}(W;\theta_0)\ee^{\ii W\xi}\right\}^\top\Delta^{-1}(\theta_0)G(\xi,\theta_0).
\end{align}
\end{theorem}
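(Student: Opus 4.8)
The argument parallels that of Theorem~\ref{theorem.ordinary smooth under H_0} and has two parts: the parameter-estimation-effect bound \eqref{result of param effect ass O H0}, and the weak convergence $\sqrt{n}\,S_n^{pro}(\cdot,\theta_0)\Longrightarrow S_\infty^{ss}(\cdot,\theta_0)$; since \eqref{result of param effect ass O H0} yields $\sqrt{n}\,S_n^{pro}(\cdot,\hat\theta_n)=\sqrt{n}\,S_n^{pro}(\cdot,\theta_0)+o_p(1)$ uniformly in $\xi$, the two together give \eqref{result of main term ass S H0}. For \eqref{result of param effect ass O H0}, which is the same display as in the ordinary smooth case and is proved analogously but now under Assumption~\ref{ass.S}, I would Taylor-expand $S_n^{pro}(\xi,\theta)$ in $\theta$ about $\theta_0$. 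The first-order coefficient $\partial_\theta S_n^{pro}(\xi,\theta_0)$ splits into a term carrying $-\dot g(x;\theta_0)$ and a term carrying $\partial_\theta\mathcal P_n(x;\xi,\theta_0)$; by the orthogonality $\mathbb{E}[\dot g(X;\theta_0)\mathcal P(X;\xi,\theta_0)]=0$ built into the projection, and, under $H_0$, $\mathbb{E}[U\,\partial_\theta\mathcal P(X;\xi,\theta_0)]=\mathbb{E}[\mathbb{E}(U\mid X)\,\partial_\theta\mathcal P(X;\xi,\theta_0)]=0$, both terms have zero probability limit; combining this with a centered-average bound based on the moment condition in Assumption~\ref{ass.S}(v) and the asymptotic negligibility of the deconvolution bias (which holds under Assumption~\ref{ass.S}(iv) because the infinite-order kernel together with the infinite smoothness of the structural functions removes the bias) gives $\sup_{\xi\in\Pi}|\partial_\theta S_n^{pro}(\xi,\theta_0)|=O_p(n^{-1/2})$, while the second-order term is uniformly $O_p(1)$ by the smoothness in $\theta$ of $g$ (that is, of the functions $h(x;\theta)$). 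Since $\hat\theta_n-\theta_0=O_p(n^{\delta-1/2})$ with $\delta<1/4$, the quadratic remainder is $o_p(n^{-1/2})$ uniformly in $\xi$.

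For the weak convergence I would first replace $\mathcal P_n(\cdot;\xi,\theta_0)$ by $\mathcal P(\cdot;\xi,\theta_0)$ inside $\sqrt{n}\,S_n^{pro}(\cdot,\theta_0)$: because $G_n(\xi,\theta_0)$ and $\Delta_n(\theta_0)$ are uniformly $\sqrt{n}$-consistent for $G(\xi,\theta_0)$ and $\Delta(\theta_0)$ (again using Assumption~\ref{ass.S}(v) and the compact support of $K^{ft}$) and $\tfrac{1}{\sqrt{n}}\sum_{i=1}^n\int(Y_i-g(x;\theta_0))\dot g^\top(x;\theta_0)\mathcal K_b((x-W_i)/b)\,dx=O_p(1)$, the replacement error is $o_p(1)$ uniformly in $\xi$. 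It then remains to analyze $\tfrac{1}{\sqrt{n}}\sum_{i=1}^n\int(Y_i-g(x;\theta_0))\mathcal K_b((x-W_i)/b)\mathcal P(x;\xi,\theta_0)\,dx$, where the supersmooth-specific calculation enters. For a smooth $\psi$ one has the Fourier identity $\int\psi(x)\mathcal K_b((x-W)/b)\,dx=\tfrac{1}{2\pi}\int \ee^{\ii sW}K^{ft}(bs)\ee^{\mu s^2}\psi^{ft}(-s)\,ds$ using $f_\epsilon^{ft}(s)=\ee^{-\mu s^2}$; expanding $\ee^{\mu s^2}=\sum_{j\ge 0}\mu^js^{2j}/j!$, integrating term by term, and using $\tfrac{1}{2\pi}\int s^{2j}\ee^{\ii sW}\psi^{ft}(-s)\,ds=(-1)^j\psi^{(2j)}(W)$, the integral converges as $b\to 0$ to $\sum_{j\ge 0}(-\mu)^j\psi^{(2j)}(W)/j!$. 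Applying this to $\psi(x)=(Y_i-g(x;\theta_0))\ee^{\ii x\xi}$ and to $\psi(x)=(Y_i-g(x;\theta_0))\dot g(x;\theta_0)$, and regrouping the resulting double series via the Leibniz rule, the combinatorial factors collapse to the coefficients $c_l^{ss}(\xi)=(-\ii)^l\sum_{j\ge l/2}\tfrac{\mu^j}{j!}\binom{2j}{l}\xi^{2j-l}$ (respectively to $c_l^{ss}(0)$), so that $\sqrt{n}\,S_n^{pro}(\cdot,\theta_0)=\tfrac{1}{\sqrt{n}}\sum_{i=1}^n r_\infty^{ss}(Y_i,W_i;\cdot,\theta_0)+o_p(1)$ uniformly in $\xi$, with $r_\infty^{ss}$ as in \eqref{theorem.supersmooth under H_0 rss}; the summands are i.i.d., mean zero under $H_0$, and square-integrable with a square-integrable envelope over $\Pi$ by Assumption~\ref{ass.S}(v).

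Finally it suffices to invoke a functional central limit theorem: finite-dimensional convergence follows from the Cram\'er--Wold device and the multivariate CLT, with limiting covariance $\mathbb{E}[r_\infty^{ss}(Y,W;\xi,\theta_0)\,\overline{r_\infty^{ss}}(Y,W;\xi',\theta_0)]$, and asymptotic tightness follows because $\Pi$ is compact and $\xi\mapsto r_\infty^{ss}(Y,W;\xi,\theta_0)$ is smooth in $\xi$ (the $c_l^{ss}(\xi)$ being polynomials in $\xi$ and the defining series converging) with a square-integrable envelope, so the class $\{\,r_\infty^{ss}(\cdot\,;\xi,\theta_0):\xi\in\Pi\,\}$ is $P$-Donsker, e.g.\ by a standard bracketing or maximal-inequality argument; the limit is the Gaussian process $S_\infty^{ss}(\cdot,\theta_0)$. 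The main obstacle is the term-by-term step: because $\ee^{\mu s^2}$ diverges super-polynomially, the contributions near the edge $|s|\sim 1/b$ of the support of $K^{ft}(b\,\cdot)$, together with the interchange of the infinite summation with the integral and the expectation, must be controlled uniformly in $\xi$ and in an $L^2$ sense sharp enough to survive the factor $\sqrt{n}$; this is precisely where the infinite differentiability of $f_X$, $g$, and $h$ (which forces fast decay of the relevant Fourier transforms), the compact support of $K^{ft}$, and the variance-boundedness condition of Assumption~\ref{ass.S}(v) are used — the supersmooth counterpart of the delicate kernel estimates carried out in Appendix~\ref{sec.AppendixC} for the ordinary smooth case.
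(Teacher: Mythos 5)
Your proposal follows essentially the same route as the paper: the projection together with the rate condition $\delta<1/4$ eliminates the parameter estimation effect, $S_n^{pro}(\cdot,\theta_0)$ is linearized into the centered i.i.d.\ sum $n^{-1}\sum_i r^{ss}_{\infty}(Y_i,W_i;\cdot,\theta_0)$, and a functional CLT under the square-integrability in Assumption \ref{ass.S}(v) delivers the Gaussian limit. The only differences are in execution rather than substance: you obtain the coefficients $c_l^{ss}(\xi)$ by expanding $\ee^{\mu s^2}$ in the Fourier domain, whereas the paper Taylor-expands the integrand around $W_i$ and invokes the moment identity $\int\mathcal{K}_\epsilon(x)x^l\ee^{\ii bx\xi}\,dx=c_l^{ss}(\xi)\,l!\,b^{-l}$ of Lemma \ref{lemma.power of decon}, and you argue tightness via a generic Donsker/bracketing argument where the paper cites Theorem 3.9 of \cite{chen1998central}; you also correctly identify the interchange of the infinite summation with the integral near $|s|\sim b^{-1}$ as the delicate step, which is exactly what Lemmas \ref{lemma.convergence of main term known} and \ref{lemma.power of decon} control.
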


Based on the theorem above, we use the continuous mapping theorem to derive the asymptotic null distributions of $KS_n$ and $CvM_n$ for the supersmooth case.

\begin{corollary}\label{Corollary.known super}

Suppose that Assumptions \ref{ass.D} and \ref{ass.S} hold. Then, under the null hypothesis $H_0$ in \eqref{hyp.null1}, we have
\begin{align*}
\sqrt{n}KS_{n} \stackrel{d}\longrightarrow \sup\limits_{\xi\in\Pi}\left\vert S_{\infty}^{ss}(\xi,\theta_0)\right\vert \quad \text{ and } \quad nCvM_{n} \stackrel{d}\longrightarrow \int_{\Pi}\left\vert S_{\infty}^{ss}(\xi,\theta_0)\right\vert^2\,d\xi.
\end{align*}
\end{corollary}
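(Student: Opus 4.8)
The plan is to obtain Corollary \ref{Corollary.known super} from Theorem \ref{theorem.supersmooth under H_0} by the continuous mapping theorem for weak convergence in $\ell^\infty(\Pi)$, exactly in the way Corollary \ref{Corollary.known ordinary} follows from Theorem \ref{theorem.ordinary smooth under H_0}. First I would record the elementary identities $\sqrt n\,KS_n=\sup_{\xi\in\Pi}\vert\sqrt n\,S_n^{pro}(\xi,\hat\theta_n)\vert$ and $n\,CvM_n=\int_\Pi\vert\sqrt n\,S_n^{pro}(\xi,\hat\theta_n)\vert^2\,d\xi$, so that both statistics are fixed continuous functionals applied to the process $\sqrt n\,S_n^{pro}(\cdot,\hat\theta_n)$, whose weak limit $S_\infty^{ss}(\cdot,\theta_0)$ in $\ell^\infty(\Pi)$ is supplied by Theorem \ref{theorem.supersmooth under H_0}.

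Next I would verify that the two functionals are continuous on $\ell^\infty(\Pi)$ equipped with the uniform norm. For the Kolmogorov--Smirnov functional, $\Gamma_{KS}(z)=\sup_{\xi\in\Pi}\vert z(\xi)\vert$ is $1$-Lipschitz by the reverse triangle inequality. For the Cram\'er--von Mises functional $\Gamma_{CvM}(z)=\int_\Pi\vert z(\xi)\vert^2\,d\xi$, I would use the pointwise bound $\bigl\vert\,\vert z_1(\xi)\vert^2-\vert z_2(\xi)\vert^2\,\bigr\vert\le(\vert z_1(\xi)\vert+\vert z_2(\xi)\vert)\vert z_1(\xi)-z_2(\xi)\vert$ and integrate over $\Pi$ to get $\vert\Gamma_{CvM}(z_1)-\Gamma_{CvM}(z_2)\vert\le c_\Pi(\Vert z_1\Vert_\infty+\Vert z_2\Vert_\infty)\Vert z_1-z_2\Vert_\infty$, where $c_\Pi<\infty$ denotes the Lebesgue measure of the compact set $\Pi$; hence $\Gamma_{CvM}$ is locally Lipschitz and therefore continuous. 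Since the empirical process is complex-valued, I would run this argument on the pair of real and imaginary parts jointly in $\ell^\infty(\Pi)\times\ell^\infty(\Pi)$; the joint weak convergence is immediate from Theorem \ref{theorem.supersmooth under H_0}, and both $z\mapsto\sup_\xi\vert z(\xi)\vert$ and $z\mapsto\int_\Pi\vert z(\xi)\vert^2 d\xi$ are continuous in the two coordinates together.

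Then I would invoke the continuous mapping theorem in the Hoffmann--J{\o}rgensen framework (see, e.g., \cite{van1996weak}): if $Z_n\Longrightarrow Z$ in $\ell^\infty(\Pi)$ with $Z$ a tight Borel-measurable limit and $\Gamma$ continuous, then $\Gamma(Z_n)\stackrel{d}{\longrightarrow}\Gamma(Z)$. The tightness and Borel measurability of $S_\infty^{ss}(\cdot,\theta_0)$ are automatic consequences of the weak convergence asserted in Theorem \ref{theorem.supersmooth under H_0}. Applying this with $\Gamma_{KS}$ and with $\Gamma_{CvM}$ in turn yields $\sqrt n\,KS_n\stackrel{d}{\longrightarrow}\sup_{\xi\in\Pi}\vert S_\infty^{ss}(\xi,\theta_0)\vert$ and $n\,CvM_n\stackrel{d}{\longrightarrow}\int_\Pi\vert S_\infty^{ss}(\xi,\theta_0)\vert^2\,d\xi$, which is the claim.

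The argument is essentially routine once Theorem \ref{theorem.supersmooth under H_0} is available; the supersmooth regime adds nothing new at this stage, since all of the analytic work has already been absorbed into that theorem, just as in the ordinary smooth case. The only point needing a moment's care is the continuity of the CvM functional, which relies on $\Pi$ being compact so that the uniform integrating measure has finite total mass $c_\Pi$; on an unbounded index set the squared-$L^2$ functional would fail to be continuous with respect to the uniform norm, which is one of the reasons $\Pi$ is taken to be a compact set with nonempty interior.
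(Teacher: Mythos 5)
Your proposal is correct and follows exactly the paper's route: Corollary \ref{Corollary.known super} is obtained from the weak convergence in Theorem \ref{theorem.supersmooth under H_0} via the continuous mapping theorem, with the sup-norm and squared-$L^2$ functionals on $\ell^\infty(\Pi)$. The extra details you supply (Lipschitz continuity of the two functionals and the finiteness of the Lebesgue measure of the compact set $\Pi$) are consistent with, and merely elaborate on, the argument the paper leaves implicit.
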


The projection plays a crucial role in Corollaries \ref{Corollary.known ordinary} and \ref{Corollary.known super}. By eliminating the effect of the estimator $\hat\theta_n$, the proposed test statistics achieve a parametric rate of convergence, even if the convergence rate of $\hat\theta_n$ is typically slower than $\sqrt{n}$ in the presence of measurement errors.

\subsection{Asymptotic power}
In this section, we proceed to derive the power properties of the proposed tests against a sequence of local alternatives. In contrast to \cite{otsu2021specification}, where the power of the local smoothing specification test is analyzed under local alternatives converging to the null hypothesis at a nonparametric rate, our tests have nontrivial power against local alternatives converging at a parametric rate under the following form:
\begin{align}\label{hyp.local alt}
    H_{1n}:m(X) = g(X;\theta_0)+\frac{\Delta(X)}{\sqrt{n}} \quad a.s.
\end{align}
where $\Delta:\mathbb{R}\to\mathbb{R}$ is a bounded nonzero function satisfying the regularity conditions of the function $h$ mentioned in Assumptions \ref{ass.O} and \ref{ass.S}. The asymptotic local power properties of $S_{n}^{pro}(\cdot,\hat\theta_n)$ under $H_{1n}$ are established in the following theorem.

\begin{theorem}\label{theorem.known under H_1n}
Under the sequence of local alternatives $H_{1n}$ in \eqref{hyp.local alt}, we suppose that Assumption \ref{ass.D} holds. If Assumption \ref{ass.O} holds for the ordinary smooth case,
\begin{equation*}
\sqrt n S_{n}^{pro}(\cdot,\hat\theta_n)\Longrightarrow S_{\infty}^{os}(\cdot,\theta_0)+\mu_\Delta(\cdot,\theta_0),
\end{equation*}
and if Assumption \ref{ass.S} holds for the supersmooth case,
\begin{equation*}
    \sqrt n S_{n}^{pro}(\cdot,\hat\theta_n)\Longrightarrow S_{\infty}^{ss}(\cdot,\theta_0)+\mu_\Delta(\cdot,\theta_0),
\end{equation*}
where $S_{\infty}^{os}(\cdot,\theta_0)$ and $S_{\infty}^{ss}(\cdot,\theta_0)$ are the centered Gaussian processes defined in Theorems \ref{theorem.ordinary smooth under H_0} and \ref{theorem.supersmooth under H_0}, respectively, and $\mu_\Delta(\cdot,\theta_0)$ is a deterministic shift process given by
\begin{align*}
    \mu_\Delta(\xi,\theta_0) = \mathbb{E}\left\{\Delta(X)\left[\ee^{\ii X\xi}-\dot{g}^\top(X;\theta_0)\Delta^{-1}(\theta_0)G(\xi,\theta_0)\right]\right\}.
\end{align*}
\end{theorem}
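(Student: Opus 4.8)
The plan is to mimic the structure of the proofs of Theorems \ref{theorem.ordinary smooth under H_0} and \ref{theorem.supersmooth under H_0}, isolating the only new ingredient: the deterministic drift produced by the local perturbation $\Delta(X)/\sqrt n$. First I would decompose the process. Under $H_{1n}$, write $Y_i = g(X_i;\theta_0) + \Delta(X_i)/\sqrt n + U_i$. Substituting into the definition of $S_n^{pro}(\xi,\hat\theta_n)$ in \eqref{Test stat} and using linearity gives $S_n^{pro}(\xi,\hat\theta_n) = S_{n,0}^{pro}(\xi,\hat\theta_n) + n^{-1/2} R_n(\xi,\hat\theta_n)$, where $S_{n,0}^{pro}$ is exactly the process one would obtain if the data satisfied $H_0$ with response $g(X_i;\theta_0)+U_i$, and
\begin{align*}
R_n(\xi,\hat\theta_n) = \frac{1}{n}\sum_{i=1}^n \int \Delta(x)\,\mathcal K_b\!\left(\frac{x-W_i}{b}\right)\mathcal P_n(x;\xi,\hat\theta_n)\,dx - \frac{1}{\sqrt n}\cdot(\text{contribution already in }S_{n,0}),
\end{align*}
so that $\sqrt n S_n^{pro}(\cdot,\hat\theta_n) = \sqrt n S_{n,0}^{pro}(\cdot,\hat\theta_n) + R_n(\cdot,\hat\theta_n)$. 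The first term converges weakly to $S_\infty^{os}$ (resp.\ $S_\infty^{ss}$) by Theorems \ref{theorem.ordinary smooth under H_0} and \ref{theorem.supersmooth under H_0} — note these theorems' proofs only use the conditional moment structure $\mathbb E[U|X]=0$ and the stated regularity conditions, all of which still hold, with $\Delta$ satisfying the same smoothness as $h$ by hypothesis.

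The second step is to show $R_n(\cdot,\hat\theta_n) \Longrightarrow \mu_\Delta(\cdot,\theta_0)$ in $l^\infty(\Pi)$, in fact that it converges in probability (uniformly in $\xi$) to the deterministic limit. I would first replace $\hat\theta_n$ by $\theta_0$: since $\hat\theta_n-\theta_0 = o_p(1)$ and $R_n$ is already premultiplied by nothing divergent (it carries no extra $\sqrt n$), continuity of $\dot g$, $G$, $\Delta$ in $\theta$ together with the uniform control over $\xi\in\Pi$ — exactly the estimates already established for \eqref{result of param effect ass O H0} but now only needing $o_p(1)$ rather than $o_p(n^{-1/2})$ — gives $\sup_\xi |R_n(\xi,\hat\theta_n) - R_n(\xi,\theta_0)| = o_p(1)$. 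Then I would handle the deconvolution integral: by the same Fourier/deconvolution identity used to pass from the kernel form to the $r^{os}_\infty$ (resp.\ $r^{ss}_\infty$) representation in the earlier proofs, $\frac1n\sum_i \int \Delta(x)\mathcal K_b((x-W_i)/b)\ee^{\ii x\xi}dx$ is an average of i.i.d.\ terms whose expectation, by the deconvolution unbiasedness relation $\mathbb E[\mathcal K_b((x-W)/b)\,|\,X] \to$ the kernel-smoothed $f_X$-weight, converges to $\mathbb E[\Delta(X)\ee^{\ii X\xi}]$, and similarly the projection correction term converges to $\mathbb E[\Delta(X)\dot g^\top(X;\theta_0)]\Delta^{-1}(\theta_0)G(\xi,\theta_0)$; a law of large numbers uniform in $\xi$ (via stochastic equicontinuity, using that $\{\ee^{\ii\cdot\xi}:\xi\in\Pi\}$ is Donsker and $\Delta$ is bounded) upgrades pointwise to uniform convergence. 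Collecting the two pieces yields $R_n(\cdot,\theta_0)\to\mu_\Delta(\cdot,\theta_0)=\mathbb E\{\Delta(X)[\ee^{\ii X\xi}-\dot g^\top(X;\theta_0)\Delta^{-1}(\theta_0)G(\xi,\theta_0)]\}$ uniformly.

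The final step is to combine: $\sqrt n S_n^{pro}(\cdot,\hat\theta_n) = \sqrt n S_{n,0}^{pro}(\cdot,\hat\theta_n) + R_n(\cdot,\hat\theta_n)$, with the first summand converging weakly to the centered Gaussian process and the second converging in probability (uniformly) to the deterministic $\mu_\Delta$; by Slutsky's lemma on $l^\infty(\Pi)$ (e.g.\ Example 1.4.7 in \cite{van1996weak}) the sum converges weakly to $S_\infty^{os}(\cdot,\theta_0)+\mu_\Delta(\cdot,\theta_0)$ (resp.\ the supersmooth analog). The main obstacle I anticipate is the uniform-in-$\xi$ control of the deconvolution average $\frac1n\sum_i\int\Delta(x)\mathcal K_b((x-W_i)/b)\ee^{\ii x\xi}dx$: because $\mathcal K_b$ has a growing sup-norm (of order $b^{-\alpha}$ in the ordinary smooth case, exponential in the supersmooth case), one cannot treat the integral naively but must exploit the Fourier-transform representation \eqref{conv} together with the compact support of $K^{ft}$ and the Lipschitz/smoothness conditions in Assumption \ref{ass.O}(i) or \ref{ass.S}(i) — precisely the device already used in the $H_0$ proofs to obtain the $r_\infty$ representations — so the work reduces to re-running that argument with $\Delta$ in place of the residual mark, which is routine but must be done carefully to preserve uniformity over the compact set $\Pi$.
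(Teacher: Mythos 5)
Your proposal takes essentially the same route as the paper's proof: decompose the process under $H_{1n}$ into the null-hypothesis part (response $Y_{i0}=g(X_i;\theta_0)+U_i$) plus an $n^{-1/2}$ drift term, apply Theorems \ref{theorem.ordinary smooth under H_0} and \ref{theorem.supersmooth under H_0} to the former, show the projected drift converges uniformly in probability to $\mu_\Delta(\cdot,\theta_0)$ by the same deconvolution-unbiasedness/ULLN arguments (the paper invokes Lemma \ref{lemma.ULLN of main term} with $h(x,\theta)=\Delta(x)$, and $h(x,\theta)=\Delta(x)\dot g(x;\theta)$ with $\xi=0$ for the projection correction), and combine the two pieces. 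The only blemish is notational: the exact drift carries the mark $\Delta(X_i)$ evaluated at the true regressor rather than $\Delta(x)$ at the integration variable as written in your displayed formula for $R_n$, but both lead to the same limit and your subsequent analysis is the one the paper uses.
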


Theorem \ref{theorem.known under H_1n} shows that under $H_{1n}$, both for the ordinary smooth and the supersmooth measurement errors, the asymptotic behavior of the $S_{n}^{pro}(\cdot,\hat\theta_n)$ consists of the limiting Gaussian process under $H_0$ and a deterministic shift term $\mu_\Delta(\cdot,\theta_0)$. By similar arguments to those of the null hypothesis, we then apply the continuous mapping theorem to obtain the convergence in distribution of $KS_n$ and $CvM_n$, which will have nontrivial power against local alternatives as described above, provided that $\mu_\Delta(\xi,\theta_0)\neq 0$ for some $\xi$. 


By denoting $\theta^\ast = \operatorname*{plim}\hat{\theta}_n$ as the pseudo-true value and noting that $\theta^\ast = \theta_0$ under the null or the local alternatives, we can derive the asymptotic global power properties for our test statistics under the fixed alternative.

\begin{theorem}\label{theorem.known alternative}

Under the alternative hypothesis $H_{1}$ in \eqref{hyp.alternative1}, suppose that Assumption \ref{ass.D} holds. Under either Assumption \ref{ass.O} or Assumption \ref{ass.S},
\begin{align*}
    &\sup_{\xi\in\Pi}\left\vert S_{n}^{pro}(\xi,\hat{\theta}_n) - C(\xi,\theta^\ast)\right\vert =  o_p(1),
\end{align*}
where
\begin{align}
    &C(\xi,\theta^\ast) = \mathbb{E}\left\{\left[m(X)-g(X;\theta^\ast)\right]\left[\ee^{\ii X\xi}-\dot{g}^\top(X;\theta^\ast)\Delta^{-1}(\theta^\ast)G(\xi,\theta^\ast)\right]\right\}. \label{Drift}
\end{align}
\end{theorem}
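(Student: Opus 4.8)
The plan is to show that $S_n^{pro}(\xi,\hat\theta_n)$ converges uniformly in probability to the deterministic drift $C(\xi,\theta^\ast)$ by decomposing the empirical process into a ``population'' part evaluated at $\theta^\ast$ and remainder terms that vanish uniformly. First I would write
\begin{align*}
S_n^{pro}(\xi,\hat\theta_n) = \frac{1}{n}\sum_{i=1}^n\int\bigl(Y_i - g(x;\hat\theta_n)\bigr)\mathcal{K}_b\!\left(\tfrac{x-W_i}{b}\right)\mathcal{P}_n(x;\xi,\hat\theta_n)\,dx,
\end{align*}
and split it as $S_n^{pro}(\xi,\hat\theta_n) = [S_n^{pro}(\xi,\hat\theta_n) - S_n^{pro}(\xi,\theta^\ast)] + [S_n^{pro}(\xi,\theta^\ast) - \bar S^{pro}(\xi,\theta^\ast)] + \bar S^{pro}(\xi,\theta^\ast)$, where $\bar S^{pro}(\xi,\theta^\ast)$ is the expectation of the $\theta^\ast$-version under the true DGP. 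Under the fixed alternative, the first bracket is handled by the consistency $\hat\theta_n\to_p\theta^\ast$ together with the smoothness of $g(x;\theta)$ and $\dot g(x;\theta)$ in $\theta$ (Assumptions \ref{ass.O}(i) or \ref{ass.S}(i)) and uniform convergence of $G_n$, $\Delta_n$ to their population analogs; since we only need $o_p(1)$ rather than $o_p(n^{-1/2})$, no rate on $\hat\theta_n$ beyond consistency is required, which is why Assumption \ref{ass.D}(ii) is not invoked here.

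For the second bracket I would invoke a uniform law of large numbers for the deconvolution-kernel-based empirical process. The key identity, used repeatedly in the earlier lemmas, is that for a generic integrable $\phi$,
\begin{align*}
\mathbb{E}\left[\int \phi(x)\,\mathcal{K}_b\!\left(\tfrac{x-W}{b}\right)dx\right] = \int (\phi \ast \text{(kernel bias)})(x) f_X(x)\,dx \longrightarrow \int \phi(x) f_X(x)\,dx
\end{align*}
by the deconvolution property (the Fourier transform of $\mathcal{K}_b$ divided by $f_\epsilon^{ft}$ reconstructs $f_X$ in the limit), so that $\mathbb{E}[S_n^{pro}(\xi,\theta^\ast)] \to C(\xi,\theta^\ast)$; the variance is $O(1/n)$ times a bandwidth-dependent factor that, by the Lipschitz/undersmoothing arguments already established in Appendix \ref{sec.AppendixC} (Assumption \ref{ass.O}(iv) or the trivial condition \ref{ass.S}(iv)), stays bounded, and pointwise convergence is upgraded to uniform convergence over the compact $\Pi$ by a stochastic equicontinuity / bracketing argument exactly as in the proof of Theorems \ref{theorem.ordinary smooth under H_0} and \ref{theorem.supersmooth under H_0}. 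Combining, the second bracket is $o_p(1)$ uniformly in $\xi$, and $\bar S^{pro}(\xi,\theta^\ast) = C(\xi,\theta^\ast)$ by construction, since the deconvolution step integrates $(Y - g(X;\theta^\ast))$ against $\mathcal{P}(x;\xi,\theta^\ast)$ and, under the alternative, $\mathbb{E}[Y\mid X] = m(X)$ so the integrand's conditional mean is $[m(X) - g(X;\theta^\ast)]\mathcal{P}(X;\xi,\theta^\ast)$, which is precisely \eqref{Drift}.

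The main obstacle I expect is controlling the second bracket uniformly in $\xi$ in the supersmooth case, where $\mathcal{K}_b$ has heavy tails (its Fourier transform is divided by $e^{-\mu(t/b)^2}$, which blows up) and the naive variance bound diverges. The resolution, already in the toolkit of the earlier proofs, is to exploit the infinite-order smoothness of $f_X$, $g$, and $h$ (Assumption \ref{ass.S}(i)) so that the effective integrand is analytic and the ``$r^{ss}$''-type representation holds, replacing the dangerous $\mathcal{K}_b$-integral by a convergent series $\sum_l c_l^{ss}(\xi)[\cdots]^{(l)}(W;\theta^\ast)$ whose second moment is finite by Assumption \ref{ass.S}(v); the analogous ordinary-smooth reduction uses the finite sum with coefficients $c_l^{os}(\xi)$ and Assumption \ref{ass.O}(v). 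Once the process is rewritten in this series form, uniformity over $\xi$ follows from continuity of $\xi\mapsto c_l^{ss}(\xi)$ (resp. $c_l^{os}(\xi)$) on the compact set $\Pi$ and dominated convergence, and the remainder from the kernel bias is negligible by Assumptions \ref{ass.O}(iii)--(iv) (resp. \ref{ass.S}(iii)--(iv)). Finally, taking suprema over $\xi\in\Pi$ and combining the three brackets yields the claimed $\sup_{\xi\in\Pi}|S_n^{pro}(\xi,\hat\theta_n) - C(\xi,\theta^\ast)| = o_p(1)$.
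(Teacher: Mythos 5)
Your proposal is correct in substance and leans on exactly the same machinery as the paper, but it is organized differently. The paper's proof decomposes the \emph{response}: it writes $Y_i = Y_{i0} + [m(X_i)-g(X_i;\theta^\ast)]$ with $Y_{i0}=g(X_i;\theta^\ast)+U_i$, applies the already-proved null-hypothesis results to the $Y_{i0}$-part of $S_n^{pro}$ (making it $O_p(n^{-1/2})$, hence negligible), and then applies the ULLN-type argument of Theorem \ref{theorem.known under H_1n} and Lemma \ref{lemma.ULLN of main term} to the drift part, which converges uniformly to $C(\xi,\theta^\ast)$. You instead decompose by \emph{parameter and centering}: $S_n^{pro}(\xi,\hat\theta_n)-S_n^{pro}(\xi,\theta^\ast)$ controlled by consistency of $\hat\theta_n$ (correctly noting that no rate is needed for an $o_p(1)$ conclusion), plus a direct mean--variance/equicontinuity LLN at $\theta^\ast$ whose mean tends to $C(\xi,\theta^\ast)$ after the deconvolution-bias argument; the centered noise $U_i$ is absorbed into the LLN rather than dispatched via the null theorem. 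Both routes rest on the same lemmas (the uniform convergence of $G_n,\Delta_n$ and the deconvoluted-process LLN of Lemma \ref{lemma.convergence of main term known}), and both implicitly require the same unstated regularity of $m(x)-g(x;\theta^\ast)$ (playing the role of $h$ or $\Delta$) so that the bias/series expansions apply; the paper's version buys economy by recycling the null-case weak convergence for the noise part, while yours is slightly more self-contained but re-derives in the supersmooth case the series representation the paper already has. No genuine gap.
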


Note that $C(\cdot,\theta^\ast)$ can be understood as the projection of $m(X)-g(X;\theta^\ast)$ onto the orthogonal space of $\dot{g}(X;\theta^\ast)$, as explained in \cite{dominguez2015simple}. Under this interpretation, the consistency of our tests is guaranteed as long as for any vector $\gamma$,
\begin{align}\label{thm.known alt consis cond}
    \mathbb{P}\left[m(X)-g(X;\theta^\ast) = \gamma^\top\dot{g}(X;\theta^\ast)\right]<1.
\end{align}
This ensures $C(\xi,\theta^\ast)\neq 0$ for at least some $\xi$ with a positive Lebesgue measure, thereby leading to the fact that $\sqrt nS_{n}^{pro}(\cdot,\hat{\theta}_n)$ diverges asymptotically and thus $\sqrt nKS_n$ and $nCvM_n$ will diverge to positive infinity in probability. This indicates that any fixed alternative satisfying \eqref{thm.known alt consis cond} can be detected by our proposed test with probability tending to one. Although the tests are not consistent against all possible alternatives, we do not regard those alternatives that violate this condition as a primary empirical concern, since they are nearly observationally equivalent to the null.

\section{Case of unknown measurement error distribution}\label{sec.unknown ME}
Since it is often challenging for researchers to obtain the density function of the measurement error in practical applications, we develop specification tests for settings with an unknown measurement error. As stated in \cite{delaigle2008deconvolution} and \cite{alexander2009deconvolution}, the approach typically used to address this problem is based on additional data, more specifically, repeated measurements on $X$ in the form of
\begin{align}\label{var.repeated ME}
    &W^r = X + \epsilon^r,
\end{align}
where $\epsilon^r$ and $\epsilon$ are identically distributed
and $(X,\epsilon,\epsilon^r)$ are mutually independent. Repeated measurements are used to construct the following consistent estimator for $f^{ft}_\epsilon$, 
\begin{align}\label{est.ME density}
    &\hat{f}^{ft}_\epsilon(t) = \left\vert\frac{1}{n}\sum\limits_{i=1}^n\cos\left[t(W_i-W_i^r)\right]\right\vert^{1/2}
\end{align}
as proposed by \cite{delaigle2008deconvolution} and recommended in \cite{otsu2021specification}. 

Our main contribution is to establish detailed theoretical results for ICM-type specification tests based on the repeated measurements approach described above. The theoretical justification and the practical implementation of the multiplier bootstrap for obtaining critical values are then developed and discussed in Section \ref{sec.boot}. Specifically, we construct the following empirical process based on the aforementioned estimator,
\begin{align}\label{Test stat unknown}
\hat{S}_{n}^{pro}(\xi,\hat\theta_n)=\frac{1}{n}\sum_{i=1}^n\int\left(Y_i-g(x;\hat{\theta}_n)\right)\hat{\mathcal{K}}_b\left(\frac{x-W_i}{b}\right)\hat{\mathcal{P}}_n(x;\xi,\hat{\theta}_n)\,dx,
\end{align}
where $\hat{\mathcal{K}}_b(\cdot)$ and $\hat{\mathcal{P}}_n(\cdot)$ denote the respective counterparts of $\mathcal{K}_b(\cdot)$ and $\mathcal{P}_n(\cdot)$ by replacing $f^{ft}_\epsilon(\cdot)$ with $\hat{f}^{ft}_\epsilon(\cdot)$. Using the empirical process $\hat{S}_{n}^{pro}(\cdot,\hat\theta_n)$ constructed above, the test statistics $CvM_n$ and $KS_n$ introduced earlier can be modified as
\begin{align*}
  \widehat{CvM}_{n}=\int_\Pi\left\vert \hat{S}^{pro}_{n}(\xi,\hat{\theta}_n)\right\vert^2\,d\xi \quad\text{and}\quad \widehat{KS}_{n}=\sup_{\xi\in\Pi}\left\vert \hat{S}^{pro}_{n}(\xi,\hat{\theta}_n)\right\vert.
\end{align*}

Furthermore, to derive the asymptotic properties of the modified statistics, we need to strengthen the assumptions to address the uncertainty introduced by the estimation of the error characteristic functions, by imposing the following additional assumptions in addition to Assumption \ref{ass.D}.
\begin{assumption}\label{ass.D'}

    \quad

    \begin{enumerate}[label=(\roman*)]
	\item $\{W_i^r\}_{i=1}^n$ is an i.i.d sample of $W^r$ satisfying \eqref{var.repeated ME}.

	\item $\epsilon^r$ is i.i.d. as $\epsilon$, $\mathbb{E}\vert\epsilon\vert^{(p+1)(2+\zeta)}<\infty$ for some positive constant $\zeta$, and $f_\epsilon$ is symmetric around zero.
    \end{enumerate}

\end{assumption}
Assumption \ref{ass.D'}({\romannumeral1}) is common in literature, see, e.g., \cite{delaigle2008deconvolution}, requiring independent and identically distributed repeated measurements. 
Assumption \ref{ass.D'}({\romannumeral2}) is used to evaluate the asymptotic properties of the estimator $\hat{f}^{ft}_\epsilon(t)$ in \eqref{est.ME density}, as mentioned in \cite{kurisu2022uniform}. To derive the asymptotic distribution of our statistics under the ordinary smooth case, we need to impose the following assumptions.

\begin{assumption}\label{ass.O'}

    \quad 

    \begin{enumerate}[label=(\roman*)]
        \item $nb^{10\alpha+6}\log(\frac{1}{b})^{-4}\to \infty$ as $n\to\infty$.

        \item Following the notation $r^{os}_{\infty}$ and the function $h(x;\theta)$ defined in Theorem \ref{theorem.ordinary smooth under H_0} and Assumption \ref{ass.O}, respectively,
    \begin{align*}
        &\mathbb{E}\left\{\int_{\Pi} \left[r^{os}_{\infty}(Y,W,\xi,\theta_0)+ r^{\epsilon}_{h,\infty}(Y,W,\xi,\theta_0)\right]^2\,d\xi\right\}<\infty,
    \end{align*}
    where
    \begin{align*}
        &r^{\epsilon}_{h,\infty}(Y,W,\xi,\theta_0) = \int \left[(gf_X)^{ft}(t)h^{ft}(\xi-t) - f_X^{ft}(t)(gh)^{ft}(\xi-t)\right]\Pi_{\epsilon}(t)dt,\\
        &\Pi_{\epsilon}(t) = \frac{1}{2}-\frac{\cos(t(W-W^r))}{2\left\vert f^{ft}_\epsilon(t)\right\vert^2}.
    \end{align*}
    \end{enumerate}
\end{assumption}

In Assumption \ref{ass.O'}({\romannumeral1}), we enhance the assumption about bandwidth compared to Assumption \ref{ass.O}({\romannumeral4}). 
This enhancement is necessary because in the absence of information about the measurement error distribution, the statistic is influenced by the uncertainty in estimating $f^{ft}_\epsilon$. Assumption \ref{ass.O'}({\romannumeral1}) ensures the asymptotic negligibility of such uncertainty brought by the estimator $\hat{f}^{ft}_\epsilon$. Next, compared to our Assumptions in \ref{ass.O}({\romannumeral5}), we add a moment restriction on the term brought by the estimator of the unknown distribution in Assumption \ref{ass.O'}({\romannumeral2}).

With the assumptions above, the following theorem, which characterizes the asymptotic behavior of the proposed empirical process with the unknown characteristic function replaced by its estimator, is established under the ordinary smooth case.
\begin{theorem}\label{theorem.unknown ordinary smooth under H_0}
    
    Suppose that Assumptions \ref{ass.D}, \ref{ass.O}, \ref{ass.D'}, and \ref{ass.O'} hold. Then, under the null hypothesis $H_0$ in \eqref{hyp.null1}, we have 
    \begin{align}\label{result of param effect unknown ass O H0}
        &\sup_{\xi\in\Pi}\left\vert \hat{S}_{n}^{pro}(\xi,\hat\theta_n)- \hat{S}_{n}^{pro}(\xi,\theta_0)\right\vert=o_p\left(n^{-\frac{1}{2}}\right).
    \end{align}
    Furthermore, 
    \begin{align}\label{result of main term unknown ass O H0}
        \sqrt n \hat{S}_{n}^{pro}(\cdot,\hat\theta_n)\Longrightarrow \hat{S}_{\infty}^{os}(\cdot,\theta_0),
     \end{align}
    where $\hat{S}_{\infty}^{os}(\cdot,\theta_0)$ is a Gaussian process with mean zero and covariance structure
    \begin{align*}
        &Cov\left[\hat{S}_{\infty}^{os}(\xi,\theta_0),\hat{S}_{\infty}^{os}(\xi^\prime,\theta_0)\right] = \mathbb{E}\left[\hat{r}^{os}_{\infty}(Y,W;\xi,\theta_0)\overline{\hat{r}^{os}_{\infty}}(Y,W;\xi^\prime,\theta_0)\right],
    \end{align*}
    where
    \begin{align}\label{theorem.unknown ordinary smooth under H_0 roshat}
        &\hat{r}^{os}_{\infty}(Y,W;\xi,\theta_0) = r^{os}_{\infty}(Y,W;\xi,\theta_0) + r^{\epsilon}_{\infty}(Y,W;\xi,\theta_0),
    \end{align}
    with
    \begin{align}\label{theorem.unknown ordinary smooth under H_0 rplus}
        &2\pi\cdot r^{\epsilon}_{\infty}(Y,W;\xi,\theta_0) = (gf_X)^{ft}(\xi)\Pi_{\epsilon}(\xi) - \int f_X^{ft}(t)g^{ft}(\xi-t)\Pi_{\epsilon}(t)dt\notag \\
        & - \left\{\int \left[(gf_X)^{ft}(t)\dot{g}^{ft}(-t) - f_X^{ft}(t)(g\dot{g})^{ft}(-t)\right]\Pi_{\epsilon}(t)dt\right\}^\top\Delta^{-1}(\theta_0)G(\xi,\theta_0).
    \end{align}
\end{theorem}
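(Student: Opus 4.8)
The plan is to proceed in three stages: first remove the parameter estimation effect, so that $\hat\theta_n$ may be replaced by $\theta_0$; second, linearize uniformly in $\xi$ the error caused by using $\hat{f}^{ft}_\epsilon$ in place of $f^{ft}_\epsilon$, thereby isolating the extra influence term $r^{\epsilon}_\infty$; third, apply a functional central limit theorem to the resulting i.i.d.\ average of influence functions. For the first stage, which gives \eqref{result of param effect unknown ass O H0}, I would Taylor-expand $g(x;\hat\theta_n)-g(x;\theta_0)=\dot{g}^\top(x;\theta_0)(\hat\theta_n-\theta_0)+\tfrac{1}{2}(\hat\theta_n-\theta_0)^\top\ddot{g}(x;\tilde\theta_n)(\hat\theta_n-\theta_0)$ and insert it into $\hat{S}^{pro}_n(\xi,\hat\theta_n)-\hat{S}^{pro}_n(\xi,\theta_0)$. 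Using the smoothness and integrability bounds of Assumption \ref{ass.O}(i), the integral properties of $\hat{\mathcal{K}}_b$ (which persist because $\hat{f}^{ft}_\epsilon$ is uniformly consistent on the expanding frequency band under Assumption \ref{ass.O'}(i)), and $\hat\theta_n-\theta_0=O_p(n^{\delta-1/2})$ with $\delta<1/4$, the quadratic remainder is $O_p(n^{2\delta-1})=o_p(n^{-1/2})$ uniformly in $\xi$. For the linear term, the coefficient of $(\hat\theta_n-\theta_0)$ is $\tfrac{1}{n}\sum_i\int\dot{g}(x;\theta_0)\hat{\mathcal{K}}_b(\tfrac{x-W_i}{b})\hat{\mathcal{P}}_n(x;\xi,\hat\theta_n)\,dx$, which converges in probability, uniformly in $\xi$, to $\mathbb{E}[\dot{g}(X;\theta_0)\mathcal{P}(X;\xi,\theta_0)]=0$ by the orthogonality of the projection established in Section \ref{sec.Test}; since its centered part is $O_p(n^{-1/2})$ and its smoothing bias is $o(n^{-1/2})$ by $nb^{2p}\to0$, the linear term is $O_p(n^{\delta-1/2})\,O_p(n^{-1/2})=o_p(n^{-1/2})$ uniformly in $\xi$. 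It then remains to analyze $\sqrt{n}\,\hat{S}^{pro}_n(\cdot,\theta_0)$.

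For the second stage I would split $\sqrt{n}\,\hat{S}^{pro}_n(\xi,\theta_0)=\sqrt{n}\,S^{pro}_n(\xi,\theta_0)+\sqrt{n}\bigl[\hat{S}^{pro}_n(\xi,\theta_0)-S^{pro}_n(\xi,\theta_0)\bigr]$. As in the proof of Theorem \ref{theorem.ordinary smooth under H_0}, the first summand admits the uniform i.i.d.\ representation $\sqrt{n}\,S^{pro}_n(\xi,\theta_0)=\tfrac{1}{\sqrt{n}}\sum_i r^{os}_\infty(Y_i,W_i;\xi,\theta_0)+o_p(1)$. For the second summand I would pass to the Fourier domain, where the Plancherel identity reduces every integral $\int(\cdot)\mathcal{K}_b(\tfrac{x-W}{b})(\cdot)\,dx$ to an expression in which the deconvolution kernel contributes only through $1/f^{ft}_\epsilon$; the discrepancy is then driven by $1/\hat{f}^{ft}_\epsilon(t/b)-1/f^{ft}_\epsilon(t/b)$, which I would expand using $\hat{f}^{ft}_\epsilon(t)^2-f^{ft}_\epsilon(t)^2=\tfrac{1}{n}\sum_j\bigl[\cos(t(W_j-W^r_j))-|f^{ft}_\epsilon(t)|^2\bigr]$ (valid since the inner quantity in \eqref{est.ME density} is nonnegative and $|f^{ft}_\epsilon|^2=(f^{ft}_\epsilon)^2$ by the symmetry of $f_\epsilon$ in Assumption \ref{ass.D'}(ii)). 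The leading term of this expansion produces a second-order degenerate $U$-type object. Its H\'ajek projection onto the $(Y_i,W_i)$-coordinates vanishes because $\mathbb{E}[\Pi_\epsilon(t)]=0$; its projection onto the $(W_j,W^r_j)$-coordinates is obtained by integrating the residual factor $Y_i-g(x;\theta_0)$ and the weight against the joint law of $(Y,X)$ under $H_0$, which generates the Fourier-transform quantities $(gf_X)^{ft}$, $f_X^{ft}$, $g^{ft}$, $\dot{g}^{ft}$, $(g\dot{g})^{ft}$ of \eqref{theorem.unknown ordinary smooth under H_0 rplus}, the factor $\Delta^{-1}(\theta_0)G(\xi,\theta_0)$ being inherited from the $-\dot{g}^\top(x;\theta_0)\Delta^{-1}(\theta_0)G(\xi,\theta_0)$ component of the projection weight $\mathcal{P}$; after $\sqrt{n}$-scaling this projection equals $\tfrac{1}{\sqrt{n}}\sum_j r^{\epsilon}_\infty(Y_j,W_j;\xi,\theta_0)+o_p(1)$ uniformly in $\xi$. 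All remaining pieces --- the diagonal term, the higher-order terms in the expansion of $1/\hat{f}^{ft}_\epsilon$, and the $O(n^{-1})$ bias of $\hat{f}^{ft}_\epsilon$ coming from the square root --- carry $b^{-\alpha}$-type amplifications from the deconvolution kernel and are shown to be $o_p(1)$ after $\sqrt{n}$-scaling precisely under the strengthened bandwidth condition $nb^{10\alpha+6}\log(1/b)^{-4}\to\infty$ of Assumption \ref{ass.O'}(i); finally, the residual $\hat{f}^{ft}_\epsilon$-dependence of $\hat{\mathcal{P}}_n$ through $G_n$ and $\Delta_n$ is $o_p(n^{-1/2})$ because it multiplies the average $\tfrac{1}{n}\sum_i\int(Y_i-g(x;\theta_0))\hat{\mathcal{K}}_b(\tfrac{x-W_i}{b})\dot{g}^\top(x;\theta_0)\,dx$, which is $O_p(n^{-1/2})$ with mean zero under $H_0$, exactly as in the first stage.

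Combining the two stages yields, uniformly in $\xi\in\Pi$, the representation
\[
\sqrt{n}\,\hat{S}^{pro}_n(\cdot,\hat\theta_n)=\frac{1}{\sqrt{n}}\sum_{i=1}^{n}\hat{r}^{os}_\infty(Y_i,W_i;\xi,\theta_0)+o_p(1),\qquad \hat{r}^{os}_\infty=r^{os}_\infty+r^{\epsilon}_\infty ,
\]
where each summand is a genuine function of the i.i.d.\ triple $(Y_i,W_i,W^r_i)$. The summands have mean zero: $\mathbb{E}[r^{os}_\infty]=0$ under $H_0$, by the conditional moment restriction \eqref{hyp.null2} together with the deconvolution identities, and $\mathbb{E}[r^{\epsilon}_\infty]=0$ since $r^{\epsilon}_\infty$ is linear in $\Pi_\epsilon$ and $\mathbb{E}[\Pi_\epsilon(t)]=0$. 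Finite-dimensional convergence to a centered complex Gaussian vector with covariance $\mathbb{E}[\hat{r}^{os}_\infty\overline{\hat{r}^{os}_\infty}]$ then follows from the classical multivariate central limit theorem, the relevant second moments being finite by Assumptions \ref{ass.O}(v) and \ref{ass.O'}(ii); asymptotic tightness in $\ell^\infty(\Pi)$ follows from the smoothness of $\xi\mapsto\hat{r}^{os}_\infty(Y,W;\xi,\theta_0)$ on the compact set $\Pi$ together with a square-integrable envelope, by the same bracketing argument used in the proof of Theorem \ref{theorem.ordinary smooth under H_0}. This establishes \eqref{result of main term unknown ass O H0}.

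The main obstacle is the second stage: tracking, uniformly in $\xi$, how the characteristic-function estimation error $\hat{f}^{ft}_\epsilon-f^{ft}_\epsilon$ propagates through the deconvolution kernel, whose factor $1/f^{ft}_\epsilon(t/b)$ magnifies errors by $b^{-\alpha}$ over the expanding frequency band $|t|\le 1/b$. This demands (a) a uniform-in-$t$ stochastic expansion of $\hat{f}^{ft}_\epsilon$ with sharp rates; (b) a careful Hoeffding decomposition of the ensuing degenerate double sum to recover the exact form of $r^{\epsilon}_\infty$, including the $\Delta^{-1}(\theta_0)G(\xi,\theta_0)$-term routed through the projection; and (c) showing that every higher-order remainder, as well as the square-root bias of $\hat{f}^{ft}_\epsilon$, is annihilated by the bandwidth restriction $nb^{10\alpha+6}\log(1/b)^{-4}\to\infty$ --- which is precisely why this condition must be imposed on top of the undersmoothing condition $nb^{2p}\to0$ of the known-$f_\epsilon$ case.
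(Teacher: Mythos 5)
Your proposal is correct and follows essentially the same route as the paper: eliminate the parameter estimation effect via the projection's orthogonality together with the $O_p(n^{2\delta-1})$ quadratic remainder, expand $\hat{f}^{ft}_\epsilon$ around $f^{ft}_\epsilon$ (the paper does this through an explicit $\psi_1/\psi_2$ decomposition of $\hat{\mathcal{K}}_b$), identify $r^{\epsilon}_{\infty}$ as the surviving H\'ajek projection of the resulting second-order $U$-statistic onto the $(W_j,W_j^r)$-coordinates, kill all remainders with the strengthened bandwidth condition of Assumption \ref{ass.O'}(i), and conclude with a functional CLT under the moment condition of Assumption \ref{ass.O'}(ii). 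The only differences are presentational (the paper packages the steps into Lemmas \ref{lemma.convergence of main term unknown}--\ref{lemma.proj main term unknown} and invokes Theorem 3.9 of \cite{chen1998central} rather than a fidi-plus-bracketing argument), not substantive.
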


Using the continuous mapping theorem, we can readily derive the asymptotic null distributions of $\widehat{KS}_n$ and $\widehat{CvM}_n$ for the ordinary smooth case.

\begin{corollary}\label{Corollary.unknown ordinary}
Suppose that Assumptions \ref{ass.D},\ref{ass.D'},\ref{ass.O}, and \ref{ass.O'} hold. Then, under the null hypothesis $H_0$ in \eqref{hyp.null1}, we have
\begin{align*}
\sqrt{n}\widehat{KS}_{n} \stackrel{d}\longrightarrow \sup\limits_{\xi\in\Pi}\left\vert\hat{S}_{\infty}^{os}(\xi,\theta_0)\right\vert   \quad\text{ and }   \quad  n\widehat{CvM}_{n} \stackrel{d}\longrightarrow \int_{\Pi}\left\vert\hat{S}_{\infty}^{os}(\xi,\theta_0)\right\vert^2d\xi.
\end{align*}
\end{corollary}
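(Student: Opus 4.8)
The plan is to obtain the corollary directly from Theorem~\ref{theorem.unknown ordinary smooth under H_0} via the continuous mapping theorem, so that no new probabilistic argument is needed beyond the weak convergence in \eqref{result of main term unknown ass O H0}; all the substantive work has already been carried out in the theorem, and what remains is to verify that the Kolmogorov--Smirnov and Cram\'er--von Mises functionals are continuous on the path space $\ell^\infty(\Pi)$ and to dispose of the usual measurability bookkeeping. Concretely, I would write $\sqrt n\,\widehat{KS}_n=\Phi_{\mathrm{KS}}\bigl(\sqrt n\,\hat S_n^{pro}(\cdot,\hat\theta_n)\bigr)$ and $n\,\widehat{CvM}_n=\Phi_{\mathrm{CvM}}\bigl(\sqrt n\,\hat S_n^{pro}(\cdot,\hat\theta_n)\bigr)$, where $\Phi_{\mathrm{KS}},\Phi_{\mathrm{CvM}}:\ell^\infty(\Pi)\to[0,\infty)$ are $\Phi_{\mathrm{KS}}(z)=\sup_{\xi\in\Pi}|z(\xi)|$ and $\Phi_{\mathrm{CvM}}(z)=\int_\Pi|z(\xi)|^2\,d\xi$, and then push the weak limit through.

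The first step is continuity of the two functionals. The map $\Phi_{\mathrm{KS}}$ is $1$-Lipschitz with respect to $\|\cdot\|_\infty$, hence continuous everywhere on $\ell^\infty(\Pi)$. Since $\Pi$ is a bounded set, $\Phi_{\mathrm{CvM}}$ restricted to $\xi$-measurable bounded functions satisfies $|\Phi_{\mathrm{CvM}}(z)-\Phi_{\mathrm{CvM}}(z')|\le \mathrm{Leb}(\Pi)\,(\|z\|_\infty+\|z'\|_\infty)\,\|z-z'\|_\infty$, so it is Lipschitz on every $\|\cdot\|_\infty$-bounded subset; because the tight Gaussian limit $\hat S_\infty^{os}(\cdot,\theta_0)$ has almost surely bounded sample paths and $\sqrt n\,\hat S_n^{pro}(\cdot,\hat\theta_n)$ is $O_p(1)$ in sup norm (being weakly convergent), the relevant continuity set has probability one under the limit law. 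The continuous mapping theorem for weak convergence in the Hoffmann--J{\o}rgensen sense (see, e.g., \cite{van1996weak}) then applies to \eqref{result of main term unknown ass O H0} and yields $\sqrt n\,\widehat{KS}_n\stackrel{d}{\longrightarrow}\sup_{\xi\in\Pi}|\hat S_\infty^{os}(\xi,\theta_0)|$ and $n\,\widehat{CvM}_n\stackrel{d}{\longrightarrow}\int_\Pi|\hat S_\infty^{os}(\xi,\theta_0)|^2\,d\xi$.

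Two routine points need checking along the way. First, measurability: the supremum of $|z(\cdot)|$ over the uncountable index set $\Pi$, and the integral $\int_\Pi|z(\cdot)|^2$, are measurable functionals of $z$ once $z$ has $\xi$-continuous sample paths; this holds for $\sqrt n\,\hat S_n^{pro}(\cdot,\hat\theta_n)$, which is a smooth function of $\xi$ built from $\ee^{\ii x\xi}$ and $G_n(\cdot,\cdot)$, and for the Gaussian limit, whose covariance kernel $\mathbb E[\hat r_\infty^{os}(Y,W;\xi)\overline{\hat r_\infty^{os}}(Y,W;\xi')]$ is continuous in $(\xi,\xi')$ on the compact $\Pi$ by dominated convergence using the square-integrable envelopes in Assumptions \ref{ass.O}(v) and \ref{ass.O'}(ii); separability of the limit process then removes any outer-expectation caveats. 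Second, it is the $\hat\theta_n$-version of the process that enters $\widehat{KS}_n$ and $\widehat{CvM}_n$, but Theorem~\ref{theorem.unknown ordinary smooth under H_0} already delivers the weak convergence of $\sqrt n\,\hat S_n^{pro}(\cdot,\hat\theta_n)$ itself, the parameter-estimation effect having been absorbed through \eqref{result of param effect unknown ass O H0}, so no further reduction to $\theta_0$ is required at this stage.

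\textbf{Where the difficulty lies.} At the level of the corollary there is essentially no obstacle—the continuous mapping theorem is the whole proof—so the only care needed is the measurability/separability bookkeeping above. All the genuine work is upstream in Theorem~\ref{theorem.unknown ordinary smooth under H_0}: there the projection $\hat{\mathcal P}_n$ is constructed so that $\int\dot g(x;\theta)\hat f_X(x)\hat{\mathcal P}_n(x;\xi,\theta)\,dx\equiv 0$, and since $\hat f_X$ is exactly the $\hat{\mathcal K}_b$-weighted empirical measure appearing in $\hat S_n^{pro}$, this kills the leading $(\hat\theta_n-\theta_0)$ term and reduces everything to $\theta_0$; a Fourier-domain expansion of $1/\hat f_\epsilon^{ft}(t/b)-1/f_\epsilon^{ft}(t/b)$ around the true characteristic function (using that $\hat f_\epsilon^{ft}(\cdot)^2$ is unbiased for $|f_\epsilon^{ft}(\cdot)|^2$ and that $f_\epsilon$ is symmetric) then produces the extra influence function $r^\epsilon_\infty$ alongside the known-$f_\epsilon$ term $r^{os}_\infty$ of Theorem~\ref{theorem.ordinary smooth under H_0}, after which a joint i.i.d.\ central limit theorem over $\xi\in\Pi$ closes the argument. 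The one delicate estimate there—and the reason for the sharp condition $nb^{10\alpha+6}\log(1/b)^{-4}\to\infty$ in Assumption~\ref{ass.O'}(i)—is the uniform control of the quadratic remainder of that expansion at deconvolution frequencies of order $1/b$, where $|f_\epsilon^{ft}|$ is of order $b^\alpha$; none of this resurfaces when passing from the theorem to the corollary.
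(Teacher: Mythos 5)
Your proposal is correct and follows exactly the route the paper takes: the corollary is obtained by applying the continuous mapping theorem to the weak convergence $\sqrt n\,\hat S_n^{pro}(\cdot,\hat\theta_n)\Longrightarrow\hat S_\infty^{os}(\cdot,\theta_0)$ established in Theorem \ref{theorem.unknown ordinary smooth under H_0}, with the sup-norm and squared-$L^2$-norm functionals. Your additional remarks on continuity of the functionals and measurability are sound bookkeeping that the paper leaves implicit.
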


Theorem \ref{theorem.unknown ordinary smooth under H_0} shows that for the ordinary smooth case when the distribution of measurement error is unknown, $\hat{S}^{pro}_n(\cdot,\hat{\theta}_n)$ still converges at the $\sqrt{n}$-rate to a centered Gaussian process. But it is worth noting that the Gaussian process here is different from the one mentioned in Theorem \ref{theorem.ordinary smooth under H_0}, because the uncertainty brought by the $\hat{f}_{\epsilon}^{ft}$ (represented by the term $ r^{\epsilon}_{\infty}(Y,W;\xi,\theta_0)$) changes the covariance structure of the limiting process.

For the supersmooth case, the following assumptions are necessary in addition to Assumption \ref{ass.S} to derive asymptotic properties of our statistics.

\begin{assumption}\label{ass.S'}

    \quad

    \begin{enumerate}[label=(\roman*)]
        \item $n\ee^{-6\mu (1+b^{-1})^{2}}\log(\frac{1}{b})^{-2}\to \infty$ as $n\to\infty$.

        \item Assume that
    \begin{align*}
        &\mathbb{E}\left\{\int_{\Pi} \left[r^{ss}_{\infty}(Y,W,\xi,\theta_0)+ r^{\epsilon}_{h,\infty}(Y,W,\xi,\theta_0)\right]^2\,d\xi\right\}<\infty,
    \end{align*}
    where $r^{ss}_{\infty}$ and $r^{\epsilon}_{h,\infty}$ are defined in Theorem \ref{theorem.supersmooth under H_0} and Assumption \ref{ass.O'}, respectively.
    \end{enumerate}
\end{assumption}

Similar to Assumption \ref{ass.O'}, Assumption \ref{ass.S'}({\romannumeral1}) requires a stronger assumption about the bandwidth $b$ for ensuring the asymptotic negligibility of uncertainty brought by the estimation of the error characteristic functions, while Assumption \ref{ass.S'}({\romannumeral2}) ensures that the asymptotic variance of $\hat{S}_{n}^{pro}(\cdot,\hat\theta_n)$ is bounded through an additional moment restriction based on Assumption \ref{ass.S}({\romannumeral5}).

The following theorem characterizes the asymptotic behavior of $\hat{S}_{n}^{pro}(\cdot,\hat\theta_n)$ for the supersmooth case under the null hypothesis.

\begin{theorem}\label{theorem.unknown supersmooth under H_0}

Suppose that Assumptions \ref{ass.D},\ref{ass.S},\ref{ass.D'}, and \ref{ass.S'} hold. Then, under the null hypothesis $H_0$ in \eqref{hyp.null1}, we have 
\begin{align}\label{result of main term unknown ass S H0}
\sqrt n \hat{S}_{n}^{pro}(\cdot,\hat\theta_n)\Longrightarrow \hat{S}_{\infty}^{ss}(\cdot,\theta_0),
\end{align}
where $\hat{S}_{\infty}^{ss}(\cdot,\theta_0)$ is a Gaussian process with mean zero and covariance structure 
\begin{align*}
    &Cov\left[\hat{S}_{\infty}^{ss}(\xi,\theta_0),\hat{S}_{\infty}^{ss}(\xi^\prime,\theta_0)\right] = \mathbb{E}\left[\hat{r}^{ss}_{\infty}(Y,W;\xi,\theta_0)\overline{\hat{r}^{ss}_{\infty}}(Y,W;\xi^\prime,\theta_0)\right],
\end{align*}
where
\begin{align}\label{theorem.unknown supersmooth under H_0 rss}
    &\hat{r}^{ss}_{\infty}(Y,W;\xi,\theta_0) = r^{ss}_{\infty}(Y,W;\xi,\theta_0) + r^\epsilon_{\infty}(Y,W;\xi,\theta_0)
\end{align}
and $r^\epsilon_{\infty}$ is defined in Theorem \ref{theorem.unknown ordinary smooth under H_0}.
\end{theorem}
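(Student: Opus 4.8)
The argument runs in parallel to the proofs of Theorem~\ref{theorem.supersmooth under H_0} and Theorem~\ref{theorem.unknown ordinary smooth under H_0}: the leading term $r^{ss}_\infty$ and the treatment of the infinite-order kernel together with the series $\sum_l c^{ss}_l(\xi)$ are carried over from the former, while the correction term $r^{\epsilon}_\infty$ created by estimating $f^{ft}_\epsilon$ is handled as in the latter. First I would dispose of the parameter estimation effect, showing $\sup_{\xi\in\Pi}|\hat S_n^{pro}(\xi,\hat\theta_n)-\hat S_n^{pro}(\xi,\theta_0)|=o_p(n^{-1/2})$ exactly as in the proof of Theorem~\ref{theorem.supersmooth under H_0} (cf. \eqref{result of param effect ass O H0}); this uses only the orthogonality built into $\hat{\mathcal P}_n$, the infinite differentiability in $x$ and smoothness in $\theta$ of $g$ from Assumption~\ref{ass.S}(i), the rate $\hat\theta_n-\theta_0=O_p(n^{\delta-1/2})$ with $\delta<1/4$ from Assumption~\ref{ass.D}(ii), and the uniform control of the deconvolution quantities $\hat f_X$, $G_n$, $\Delta_n$ appearing in $\hat{\mathcal P}_n$ described below. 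It then suffices to analyze $\sqrt n\,\hat S_n^{pro}(\cdot,\theta_0)$.

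Second, I would pass to the frequency domain. Using Parseval's identity together with the compact support and boundedness of $K^{ft}$ (Assumption~\ref{ass.S}(iii)), rewrite $\hat S_n^{pro}(\xi,\theta_0)$ as an integral in $t$ of i.i.d. sample averages weighted by $K^{ft}(t)/\hat f^{ft}_\epsilon(t/b)$, mirroring the representation used for $S_n^{pro}$ in the proof of Theorem~\ref{theorem.supersmooth under H_0}. Then split $\sqrt n\,\hat S_n^{pro}(\xi,\theta_0)=\sqrt n\,S_n^{pro}(\xi,\theta_0)+R_n(\xi)$, where $R_n(\xi)$ collects every term created by replacing $f^{ft}_\epsilon$ by $\hat f^{ft}_\epsilon$, both inside the deconvolution kernel $\hat{\mathcal K}_b$ and inside the projection $\hat{\mathcal P}_n$ (through $\hat f_X$, $G_n$, $\Delta_n$, whose consistency to their population analogues, in particular non-singularity of $\Delta_n$, is verified along the way). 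Theorem~\ref{theorem.supersmooth under H_0} already delivers $\sqrt n\,S_n^{pro}(\cdot,\theta_0)=n^{-1/2}\sum_{i=1}^n r^{ss}_\infty(Y_i,W_i;\xi,\theta_0)+o_p(1)$ uniformly in $\xi\in\Pi$, so only $R_n$ remains.

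Third, I would linearize $\hat f^{ft}_\epsilon$. Put $\hat\psi_n(t)=\left|\frac1n\sum_{i=1}^n\cos(t(W_i-W_i^r))\right|$ and $\psi(t)=|f^{ft}_\epsilon(t)|^2=\ee^{-2\mu t^2}$; since $W-W^r=\epsilon-\epsilon^r$ and $f_\epsilon$ is symmetric (Assumption~\ref{ass.D'}(ii)), $\mathbb{E}[\cos(t(W-W^r))]=\psi(t)$, so a Taylor expansion of $u\mapsto u^{-1/2}$ gives
\begin{align*}
\frac{1}{\hat f^{ft}_\epsilon(t/b)}-\frac{1}{f^{ft}_\epsilon(t/b)}
= -\frac{1}{2}\,\psi(t/b)^{-3/2}\cdot\frac1n\sum_{i=1}^n\left[\cos\!\left(\frac{t(W_i-W_i^r)}{b}\right)-\psi(t/b)\right]+(\text{remainder}).
\end{align*}
Inserting the leading term back into the frequency-domain representation and integrating against $K^{ft}$ — and performing the analogous expansion of $\hat f_X$, $G_n$, $\Delta_n$ in $\hat{\mathcal P}_n$ — collapses, by the same algebra as in the ordinary smooth case (the projection structure being identical), to $n^{-1/2}\sum_i r^{\epsilon}_\infty(Y_i,W_i;\xi,\theta_0)$ with $\Pi_\epsilon$ and $r^\epsilon_\infty$ as in \eqref{theorem.unknown ordinary smooth under H_0 rplus}, the ``projected'' part of $r^\epsilon_\infty$ appearing automatically. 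Hence $\sqrt n\,\hat S_n^{pro}(\cdot,\theta_0)=n^{-1/2}\sum_i \hat r^{ss}_\infty(Y_i,W_i;\xi,\theta_0)+o_p(1)$ uniformly in $\xi\in\Pi$, and it remains to apply a functional CLT to this asymptotically equivalent i.i.d. empirical process (whose summands also depend on $W_i^r$ through $\Pi_\epsilon$): finite-dimensional convergence via the Cram\'er--Wold device and the Lindeberg--Feller CLT using the finite second moment from Assumptions~\ref{ass.S}(v) and \ref{ass.S'}(ii); asymptotic tightness because $\xi\mapsto\hat r^{ss}_\infty(Y,W;\xi,\theta_0)$ is smooth on the compact set $\Pi$ (built from the entire functions $\xi\mapsto c^{ss}_l(\xi)$, $\xi\mapsto\ee^{\ii W\xi}$, and $\xi\mapsto G(\xi,\theta_0)$) with an $L^2$-Lipschitz envelope supplied by the integrability in Assumption~\ref{ass.S'}(ii), so the index class is Donsker. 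The claimed weak convergence in $(l^\infty(\Pi),\mathcal B_\infty)$ to the centered Gaussian process with covariance $\mathbb{E}[\hat r^{ss}_\infty(Y,W;\xi,\theta_0)\overline{\hat r^{ss}_\infty}(Y,W;\xi^\prime,\theta_0)]$ then follows.

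The main obstacle is the remainder in the third step. Because $\psi(t/b)^{-3/2}=\ee^{3\mu t^2/b^2}$ blows up over the support $|t|\le C$ of $K^{ft}$ — i.e. over the expanding frequency band $|u|\le C/b$ — showing that the remainder, once integrated against $K^{ft}$, rescaled by $\sqrt n$, and summed over all the constituent pieces (including those coming from the projection), is $o_p(1)$ uniformly in $\xi$ requires sharp uniform-in-$t$ control of $\hat\psi_n(t)-\psi(t)$ over that band. This calls for exponential (Bernstein/Hoeffding-type) concentration of the trigonometric averages $\frac1n\sum_i\cos(t(W_i-W_i^r)/b)$, combined with the moment bound $\mathbb{E}|\epsilon|^{(p+1)(2+\zeta)}<\infty$ of Assumption~\ref{ass.D'}(ii) and the uniform rates for $\hat f^{ft}_\epsilon$ of \cite{kurisu2022uniform}; heuristically one needs $\ee^{3\mu/b^2}=o(\sqrt n)$ up to logarithmic factors, which is precisely what the stringent bandwidth condition $n\,\ee^{-6\mu(1+b^{-1})^2}\log(1/b)^{-2}\to\infty$ of Assumption~\ref{ass.S'}(i) guarantees, and verifying its sufficiency is the technical heart of the argument.
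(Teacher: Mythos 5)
Your plan follows essentially the same route as the paper's proof: the paper reduces Theorem \ref{theorem.unknown supersmooth under H_0} to the arguments of Theorems \ref{theorem.supersmooth under H_0} and \ref{theorem.unknown ordinary smooth under H_0}, decomposing $\hat{\mathcal K}_b=\mathcal K_b+\mathcal K_{b,1}+\mathcal K_{b,2}$ where $\mathcal K_{b,1}$ is exactly your first-order Taylor term in $\hat f^{ft}_\epsilon$ (yielding $r^\epsilon_\infty$ via a H\'ajek projection of the resulting U-statistic in Lemma \ref{lemma.convergence of main term unknown}) and $\mathcal K_{b,2}$ is your remainder, killed uniformly over the band $|t|\le b^{-1}$ by the Kurisu-type rates and Assumption \ref{ass.S'}(i) in Lemmas \ref{lemma.power of tsf kernel}--\ref{lemma.power of decon}, before invoking the Hilbert-space CLT under Assumption \ref{ass.S'}(ii). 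Your identification of the exponential blow-up of $\psi(t/b)^{-3/2}$ against the bandwidth condition as the technical heart matches precisely what the paper's lemmas verify, so the proposal is correct and not materially different from the paper's argument.
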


\begin{corollary}\label{Corollary.unknown super}

Suppose that Assumptions \ref{ass.D}, \ref{ass.D'}, \ref{ass.S}, and \ref{ass.S'} hold. Then, under the null hypothesis $H_0$ in \eqref{hyp.null1}, we have
\begin{align*}
\sqrt{n}\widehat{KS}_{n} \stackrel{d}\longrightarrow \sup\limits_{\xi\in\Pi}\left\vert \hat{S}_{\infty}^{ss}(\xi,\theta_0)\right\vert    \quad\text{ and }\quad n\widehat{CvM}_{n} \stackrel{d}\longrightarrow \int_{\Pi}\left\vert \hat{S}_{\infty}^{ss}(\xi,\theta_0)\right\vert^2\,d\xi.
\end{align*}
\end{corollary}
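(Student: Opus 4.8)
The plan is to prove Corollary \ref{Corollary.unknown super} by combining Theorem \ref{theorem.unknown supersmooth under H_0} with the continuous mapping theorem, exactly paralleling the derivation of Corollary \ref{Corollary.known super} from Theorem \ref{theorem.supersmooth under H_0}. First I would invoke Theorem \ref{theorem.unknown supersmooth under H_0}, which under Assumptions \ref{ass.D}, \ref{ass.S}, \ref{ass.D'}, and \ref{ass.S'} gives the weak convergence $\sqrt n \hat{S}_{n}^{pro}(\cdot,\hat\theta_n)\Longrightarrow \hat{S}_{\infty}^{ss}(\cdot,\theta_0)$ in $(l^\infty(\Pi),\mathcal{B}_\infty)$ in the sense of Hoffmann--J{\o}rgensen, where $\hat{S}_{\infty}^{ss}(\cdot,\theta_0)$ is the centered Gaussian process with covariance kernel determined by $\hat{r}^{ss}_{\infty}$ in \eqref{theorem.unknown supersmooth under H_0 rss}. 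Assumption \ref{ass.S'}({\romannumeral2}) guarantees $\mathbb{E}\int_\Pi |\hat{r}^{ss}_{\infty}(Y,W;\xi,\theta_0)|^2\,d\xi<\infty$, so the limiting process has almost surely square-integrable (hence, together with measurability of the kernel, almost surely bounded in the relevant sense) sample paths on the compact set $\Pi$, ensuring the functionals below are well defined.

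Next I would check that the two maps in question are continuous functionals on $l^\infty(\Pi)$. The map $s\mapsto \sup_{\xi\in\Pi}|s(\xi)|$ is the sup-norm itself and is trivially (Lipschitz) continuous on $l^\infty(\Pi)$. The map $s\mapsto\int_\Pi |s(\xi)|^2\,d\xi$ is continuous on $l^\infty(\Pi)$ because, for $s,s'\in l^\infty(\Pi)$, $\left|\int_\Pi |s|^2 - \int_\Pi |s'|^2\right| \le \int_\Pi \bigl| |s|^2-|s'|^2\bigr| \le \|s-s'\|_\infty\bigl(\|s\|_\infty+\|s'\|_\infty\bigr)\,|\Pi|$, where $|\Pi|$ is the (finite) Lebesgue measure of the compact set $\Pi$; this is the same continuity argument implicitly used for $CvM_n$ in Corollaries \ref{Corollary.known ordinary}, \ref{Corollary.known super}, and \ref{Corollary.unknown ordinary}. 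Applying the continuous mapping theorem for weak convergence in metric spaces (e.g., Theorem 1.3.6 in \cite{van1996weak}) then yields $\sqrt n\,\widehat{KS}_n = \sup_{\xi\in\Pi}|\sqrt n\hat{S}_n^{pro}(\xi,\hat\theta_n)| \stackrel{d}\longrightarrow \sup_{\xi\in\Pi}|\hat{S}_\infty^{ss}(\xi,\theta_0)|$ and $n\,\widehat{CvM}_n = \int_\Pi |\sqrt n\hat{S}_n^{pro}(\xi,\hat\theta_n)|^2\,d\xi \stackrel{d}\longrightarrow \int_\Pi |\hat{S}_\infty^{ss}(\xi,\theta_0)|^2\,d\xi$, which is the claim.

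Since Theorem \ref{theorem.unknown supersmooth under H_0} is assumed available, this corollary is essentially a one-line consequence, and there is no serious obstacle; the only point requiring any care is confirming that the CvM functional is genuinely continuous with respect to the uniform metric on $l^\infty(\Pi)$ (rather than merely on $L^2(\Pi)$), which is handled by the elementary bound above together with the compactness of $\Pi$. One should also remark, as the paper does after Theorem \ref{theorem.unknown ordinary smooth under H_0}, that the limiting law here differs from that in Corollary \ref{Corollary.known super}: the covariance of $\hat{S}_\infty^{ss}$ carries the extra term $r^\epsilon_\infty$ arising from estimating $f_\epsilon^{ft}$ via repeated measurements, so the bootstrap critical values in Section \ref{sec.boot} must be computed from the feasible process $\hat{S}_n^{pro}$ rather than $S_n^{pro}$.
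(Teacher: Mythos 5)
Your proposal is correct and follows exactly the route the paper takes: Corollary \ref{Corollary.unknown super} is obtained by applying the continuous mapping theorem to the weak convergence $\sqrt n \hat{S}_{n}^{pro}(\cdot,\hat\theta_n)\Longrightarrow \hat{S}_{\infty}^{ss}(\cdot,\theta_0)$ established in Theorem \ref{theorem.unknown supersmooth under H_0}, with the sup-norm and squared-$L^2$ functionals being continuous on $l^\infty(\Pi)$. Your explicit verification of continuity of the CvM functional under the uniform metric is a sound (and slightly more detailed) rendering of the same argument.
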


Theorem \ref{theorem.unknown supersmooth under H_0} shows that for the supersmooth case, our empirical process still maintains the $\sqrt{n}$-convergence. Furthermore, Corollary \ref{Corollary.unknown super} shows convergence in distribution of $\sqrt n\widehat{KS}_n$ and $n\widehat{CvM}_n$ under the null hypothesis. The subsequent analysis considers the results under the sequence of local alternatives and the fixed alternative, thereby establishing the asymptotic power properties of the tests without requiring knowledge of the measurement error distribution.

\begin{theorem}\label{theorem.unknown under H_1n}

Under the sequence of local alternatives $H_{1n}$ in \eqref{hyp.local alt}, suppose that Assumptions \ref{ass.D} and \ref{ass.D'} hold. For the ordinary smooth case, suppose that Assumptions \ref{ass.O} and \ref{ass.O'} hold, 
\begin{align*}
\sqrt n \hat{S}_{n}^{pro}(\cdot,\hat\theta_n)\Longrightarrow \hat{S}_{\infty}^{os}(\cdot,\theta_0)+\mu_\Delta(\cdot,\theta_0),
\end{align*}
and for the supersmooth case, suppose that Assumptions \ref{ass.S} and \ref{ass.S'} hold,
\begin{align*}
    \sqrt n \hat{S}_{n}^{pro}(\cdot,\hat\theta_n)\Longrightarrow \hat{S}_{\infty}^{ss}(\cdot,\theta_0)+\mu_\Delta(\cdot,\theta_0),
\end{align*}
where $\hat{S}_{\infty}^{os}(\cdot,\theta_0)$ and $\hat{S}_{\infty}^{ss}(\cdot,\theta_0)$ are the centered Gaussian processes defined in Theorems \ref{theorem.unknown ordinary smooth under H_0} and \ref{theorem.unknown supersmooth under H_0}, respectively, and $\mu_\Delta(\cdot,\theta_0)$ is the deterministic shift process defined in Theorem \ref{theorem.known under H_1n}.
\end{theorem}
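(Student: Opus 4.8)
The plan is to combine the fact that the test process does not involve the responses through its deconvolution kernel or its projection with the null-limit results of Theorems~\ref{theorem.unknown ordinary smooth under H_0} and~\ref{theorem.unknown supersmooth under H_0} and a Slutsky-type argument; I describe the ordinary smooth case, the supersmooth case being identical with Assumptions~\ref{ass.S}, \ref{ass.S'} and Theorem~\ref{theorem.unknown supersmooth under H_0} replacing Assumptions~\ref{ass.O}, \ref{ass.O'} and Theorem~\ref{theorem.unknown ordinary smooth under H_0}. First I would remove the parameter estimation effect. Under $H_{1n}$ the pseudo-true value still equals $\theta_0$ and, by Assumption~\ref{ass.D}(ii), $\hat\theta_n-\theta_0=O_p(n^{\delta-1/2})$ with $0\le\delta<1/4$; repeating the Taylor expansion of $\hat S_n^{pro}(\xi,\hat\theta_n)$ about $\theta_0$ used to prove \eqref{result of param effect unknown ass O H0}, but now starting from $Y_i=g(X_i;\theta_0)+\Delta(X_i)/\sqrt n+U_i$, the leading term is $(\hat\theta_n-\theta_0)^\top$ times a sample average whose population limit vanishes --- both because $\mathbb E[U|X]=0$ and because of the orthogonality $\mathbb E[\dot g(X;\theta_0)\mathcal P(X;\xi,\theta_0)]=0$ built into the projection --- so this term is $O_p(n^{\delta-1})=o_p(n^{-1/2})$ uniformly in $\xi$, and the only new piece relative to the null analysis, generated by the drift $\Delta(X_i)/\sqrt n$, is likewise $(\hat\theta_n-\theta_0)^\top\cdot O_p(n^{-1/2})=o_p(n^{-1/2})$. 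Hence $\sup_{\xi\in\Pi}\big|\hat S_n^{pro}(\xi,\hat\theta_n)-\hat S_n^{pro}(\xi,\theta_0)\big|=o_p(n^{-1/2})$ under $H_{1n}$, and it suffices to analyze $\sqrt n\,\hat S_n^{pro}(\cdot,\theta_0)$.

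Next I would split off the drift. Since $\hat{\mathcal K}_b$ depends on the data only through $\{W_i,W_i^r\}$ and $\hat{\mathcal P}_n(\cdot;\xi,\theta_0)$ only through $\{W_i\}$, writing $Y_i=\tilde Y_i+\Delta(X_i)/\sqrt n$ with $\tilde Y_i=g(X_i;\theta_0)+U_i$ and letting $\tilde S_n^{pro}(\xi,\theta_0)$ denote $\hat S_n^{pro}(\xi,\theta_0)$ with $Y_i$ replaced by $\tilde Y_i$ splits the sum exactly:
\begin{align*}
\sqrt n\,\hat S_n^{pro}(\xi,\theta_0)=\sqrt n\,\tilde S_n^{pro}(\xi,\theta_0)+\frac1n\sum_{i=1}^n\Delta(X_i)\int\hat{\mathcal K}_b\!\left(\frac{x-W_i}{b}\right)\hat{\mathcal P}_n(x;\xi,\theta_0)\,dx.
\end{align*}
Since the joint law of $(X_i,U_i,\epsilon_i,\epsilon_i^r)$ is unchanged between $H_0$ and $H_{1n}$, the sample $\{(\tilde Y_i,W_i,W_i^r)\}_{i=1}^n$ has exactly the null distribution, so by Theorem~\ref{theorem.unknown ordinary smooth under H_0} together with \eqref{result of param effect unknown ass O H0} the first term converges weakly to $\hat S_\infty^{os}(\cdot,\theta_0)$ in $(l^{\infty}(\Pi),\mathcal B_\infty)$.

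It then remains to show the second (drift) term converges in probability to $\mu_\Delta(\cdot,\theta_0)$ uniformly on $\Pi$. Using $\int\hat{\mathcal K}_b(\frac{x-W_i}{b})\ee^{\ii x\xi}\,dx=K^{ft}(b\xi)\,\ee^{\ii\xi W_i}/\hat f_\epsilon^{ft}(\xi)$ and expanding $\hat{\mathcal P}_n$, this term equals $\frac1n\sum_i\Delta(X_i)K^{ft}(b\xi)\ee^{\ii\xi W_i}/\hat f_\epsilon^{ft}(\xi)$ minus $\big[\frac1n\sum_i\Delta(X_i)\int\hat{\mathcal K}_b(\frac{x-W_i}{b})\dot g^\top(x;\theta_0)\,dx\big]\Delta_n^{-1}(\theta_0)G_n(\xi,\theta_0)$. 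I would then replace $\hat f_\epsilon^{ft},\hat f_X,\Delta_n(\theta_0),G_n(\cdot,\theta_0),K^{ft}(b\xi)$ by $f_\epsilon^{ft},f_X,\Delta(\theta_0),G(\cdot,\theta_0),1$ --- the errors being $o_p(1)$ uniformly in $\xi$ by Assumptions~\ref{ass.D'}, \ref{ass.O'}(i) and the uniform rates already established for the null case, and the residual deconvolution-smoothing bias being $o(1)$ since $b\to0$ --- and invoke a uniform law of large numbers over the compact $\Pi$, licensed by boundedness of $\Delta$ and the integrability and smoothness of $g,\dot g,f_X$. Using $\mathbb E[\Delta(X)\ee^{\ii\xi W}/f_\epsilon^{ft}(\xi)]=\mathbb E[\Delta(X)\ee^{\ii\xi X}]$ (independence of $\epsilon$ and $X$) and the limit $\mathbb E[\Delta(X)\dot g(X;\theta_0)]$ for the remaining average, the limit is $\mathbb E[\Delta(X)\ee^{\ii X\xi}]-\mathbb E[\Delta(X)\dot g^\top(X;\theta_0)]\Delta^{-1}(\theta_0)G(\xi,\theta_0)=\mu_\Delta(\xi,\theta_0)$. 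Combining the three pieces via Slutsky's lemma gives $\sqrt n\,\hat S_n^{pro}(\cdot,\hat\theta_n)\Longrightarrow\hat S_\infty^{os}(\cdot,\theta_0)+\mu_\Delta(\cdot,\theta_0)$, and the supersmooth statement follows verbatim with $\hat S_\infty^{ss}$ and Theorem~\ref{theorem.unknown supersmooth under H_0}.

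The main obstacle is this last step: establishing the uniform-in-$\xi$ convergence of the drift term, which requires controlling simultaneously the deconvolution-kernel bias, the estimation error of $\hat f_\epsilon^{ft}$ --- delicate in the supersmooth case, where $1/f_\epsilon^{ft}$ is unbounded on $\mathbb R$, though still bounded on the compact $\Pi$, and the governing bandwidth restriction is Assumption~\ref{ass.S'}(i) --- and the passage from empirical averages to their expectations, all uniformly over $\Pi$. Fortunately the requisite uniform rates for $\hat f_\epsilon^{ft}$, $\hat f_X$, $\Delta_n$ and $G_n$ are already in hand from the proofs of Theorems~\ref{theorem.unknown ordinary smooth under H_0} and~\ref{theorem.unknown supersmooth under H_0}, so this is more careful bookkeeping than a fundamentally new difficulty.
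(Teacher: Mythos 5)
Your proposal is correct and follows essentially the same route as the paper: decompose $Y_i$ into a null-distributed part plus the drift $\Delta(X_i)/\sqrt n$, apply the null-hypothesis theorem (Theorem \ref{theorem.unknown ordinary smooth under H_0} or \ref{theorem.unknown supersmooth under H_0}) to the former, and show the drift contribution converges uniformly on $\Pi$ to $\mu_\Delta(\cdot,\theta_0)$ --- the paper accomplishes this last step by invoking Lemma \ref{lemma.ULLN of main term unknown} with $h(x,\theta)=\Delta(x)$ and $h(x,\theta)=\Delta(x)\dot g(x;\theta)$, which is exactly the computation you carry out explicitly via the Fourier identity $\int\hat{\mathcal K}_b(\tfrac{x-W_i}{b})\ee^{\ii x\xi}\,dx=K^{ft}(b\xi)\ee^{\ii\xi W_i}/\hat f_\epsilon^{ft}(\xi)$. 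Your only deviations are cosmetic (removing the parameter estimation effect for the whole process up front rather than piecewise, and spelling out the drift limit in more detail).
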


\begin{theorem}\label{theorem.unknown alternative}

Under the alternative hypothesis $H_{1}$ in \eqref{hyp.alternative1}, suppose that Assumptions \ref{ass.D} and \ref{ass.D'} hold. Whether Assumptions \ref{ass.O} and \ref{ass.O'} hold for the ordinary smooth case or Assumptions \ref{ass.S} and \ref{ass.S'} hold for the supersmooth case, we have
\begin{equation*}
\sup_{\xi\in\Pi}\left\vert \hat{S}_{n}^{pro}(\xi,\hat{\theta}_n) - C(\xi,\theta^\ast) \right\vert = o_p(1),
\end{equation*}
where $C(\cdot,\theta^\ast)$ is defined in \eqref{Drift}.
\end{theorem}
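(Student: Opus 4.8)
The plan is to reduce Theorem~\ref{theorem.unknown alternative} to its known--measurement--error counterpart, Theorem~\ref{theorem.known alternative}, by showing that replacing $f^{ft}_\epsilon$ with the estimator $\hat f^{ft}_\epsilon$ of \eqref{est.ME density} throughout $\hat{\mathcal K}_b$ and $\hat{\mathcal P}_n$ contributes only a uniformly negligible error. Decompose
\[
\hat S_{n}^{pro}(\xi,\hat\theta_n) - C(\xi,\theta^\ast) = \underbrace{\big[S_{n}^{pro}(\xi,\hat\theta_n) - C(\xi,\theta^\ast)\big]}_{A_n(\xi)} + \underbrace{\big[\hat S_{n}^{pro}(\xi,\hat\theta_n) - S_{n}^{pro}(\xi,\hat\theta_n)\big]}_{B_n(\xi)}.
\]
The maintained hypotheses imply Assumption~\ref{ass.D} and either Assumption~\ref{ass.O} or Assumption~\ref{ass.S}, so Theorem~\ref{theorem.known alternative} already gives $\sup_{\xi\in\Pi}|A_n(\xi)| = o_p(1)$, where $C(\xi,\theta^\ast)$ in \eqref{Drift} equals $\mathbb{E}\{[m(X)-g(X;\theta^\ast)]\,\mathcal P(X;\xi,\theta^\ast)\}$ with $\mathcal P$ as in \eqref{projection distribution} (recall $\theta^\ast=\operatorname*{plim}\hat\theta_n$). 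It therefore remains to prove $\sup_{\xi\in\Pi}|B_n(\xi)| = o_p(1)$.

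For $B_n$ I would write $\hat S_n^{pro}$, $S_n^{pro}$, and the ingredients $\hat f_X$, $G_n$, $\Delta_n$ of the projection in Fourier form, using $\int\phi(x)\,\mathcal K_b\!\big(\tfrac{x-W}{b}\big)\,dx = (2\pi)^{-1}\int\phi^{ft}(-s)\,\ee^{\ii sW}\,K^{ft}(bs)/f^{ft}_\epsilon(s)\,ds$ for the smooth integrable $\phi$ arising here (namely $1$, $\dot g(\cdot;\theta)$, $g\dot g(\cdot;\theta)$, and the residual weight). Since $K^{ft}$ is compactly supported, say on $[-\kappa,\kappa]$, every $s$-integral runs over $|s|\le\kappa/b$, and the replacement $f^{ft}_\epsilon(s)\mapsto\hat f^{ft}_\epsilon(s)$ produces the factor $\hat f^{ft}_\epsilon(s)^{-1}-f^{ft}_\epsilon(s)^{-1} = \big(f^{ft}_\epsilon(s)-\hat f^{ft}_\epsilon(s)\big)\big/\big(f^{ft}_\epsilon(s)\hat f^{ft}_\epsilon(s)\big)$. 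Hence $\sup_{\xi\in\Pi}|B_n(\xi)|$ is bounded, up to a constant, by
\[
\Big(\sup_{|s|\le\kappa/b}\big|\hat f^{ft}_\epsilon(s)-f^{ft}_\epsilon(s)\big|\Big)\cdot\Big(\inf_{|s|\le\kappa/b}\big|f^{ft}_\epsilon(s)\big|\Big)^{-q}\cdot b^{-1}\cdot O_p(1),
\]
for a finite power $q$, where the factor $b^{-1}$ accounts for the length of the integration region and the $O_p(1)$ collects $n^{-1}\sum_i|Y_i|$, $n^{-1}\sum_i 1$, the plug-in $\Delta_n(\hat\theta_n)^{-1}$ (which is $O_p(1)$ by $\hat\theta_n\stackrel{p}{\longrightarrow}\theta^\ast$, continuity of $\Delta(\cdot)$, and nonsingularity of $\Delta(\theta^\ast)$), and the $\phi^{ft}$-integrals, which reduce via the polynomial form of $f^{ft}_\epsilon(s)^{-1}$ in Assumption~\ref{ass.O}(ii) (respectively the series expansion for the supersmooth case) to finite combinations of $\int|\phi^{(l)}|$ that are controlled by Assumptions~\ref{ass.O}(i),(v) (respectively \ref{ass.S}(i),(v)). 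Inserting the uniform rate $\sup_{|s|\le\kappa/b}|\hat f^{ft}_\epsilon(s)-f^{ft}_\epsilon(s)| = O_p\big(\sqrt{\log(1/b)/n}\big)$ of \cite{delaigle2008deconvolution} and \cite{kurisu2022uniform} (valid under Assumption~\ref{ass.D'}) together with $\inf_{|s|\le\kappa/b}|f^{ft}_\epsilon(s)|\gtrsim b^{\alpha}$ for the ordinary smooth case and $\gtrsim\ee^{-\mu/b^2}$ for the supersmooth case, the bound becomes $O_p\big(b^{-q\alpha-1}\sqrt{\log(1/b)/n}\big)$, which is $o_p(1)$ under $nb^{10\alpha+6}\log(1/b)^{-4}\to\infty$ (Assumption~\ref{ass.O'}(i)), respectively $O_p\big(b^{-1}\ee^{q\mu/b^2}\sqrt{\log(1/b)/n}\big)$, which is $o_p(1)$ under $n\ee^{-6\mu(1+1/b)^2}\log(1/b)^{-2}\to\infty$ (Assumption~\ref{ass.S'}(i)); these bandwidth conditions are calibrated for exactly this purpose. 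Uniformity over $\xi\in\Pi$ follows from compactness of $\Pi$ and a routine equicontinuity argument, the only $\xi$-dependence entering through the bounded factor $\ee^{\ii x\xi}$ and through $G_n(\xi,\cdot)$, which is Lipschitz in $\xi$.

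The hard part is this amplification step: tracking how the $O_p(\sqrt{\log(1/b)/n})$ error in $\hat f^{ft}_\epsilon$ is magnified by division by $f^{ft}_\epsilon$ --- and by $\hat\Delta_n^{-1}$ inside the projection --- and by integration against $K^{ft}(b\,\cdot)$, and then checking that Assumptions~\ref{ass.O'}(i) and \ref{ass.S'}(i) absorb this magnification; this is delicate in the supersmooth case, where the magnifying factor $\ee^{\mu/b^2}$ is exponential in $1/b$. The estimates involved are crude ($o_p(1)$, not exact-limit) versions of those behind the null-hypothesis Theorems~\ref{theorem.unknown ordinary smooth under H_0} and \ref{theorem.unknown supersmooth under H_0}; and because only a uniform law of large numbers for the deconvoluted sum is needed here --- rather than a weak-convergence statement --- no stochastic equicontinuity or tightness argument is required, and the bandwidth restrictions enter with considerable slack.
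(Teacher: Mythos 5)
Your proof is correct in substance, but it follows a genuinely different route from the paper's. The paper decomposes the \emph{response}: it writes $Y_i = Y_{i0} + [m(X_i)-g(X_i;\theta^\ast)]$ with $Y_{i0}$ a ``null-type'' observation, so that $\hat S_{n}^{pro} = \hat S_{n0}^{pro} + \hat S_{n1}^{pro}$; the first piece is $O_p(n^{-1/2})$ by the null-hypothesis Theorems \ref{theorem.unknown ordinary smooth under H_0} and \ref{theorem.unknown supersmooth under H_0}, and the drift piece converges uniformly to $C(\xi,\theta^\ast)$ by the ULLN of Lemma \ref{lemma.ULLN of main term unknown} applied with $h=(m-g(\cdot;\theta^\ast))\times$ the projection weights. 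You instead decompose the \emph{statistic}, reducing to the known-error Theorem \ref{theorem.known alternative} and then bounding $\hat S_n^{pro}-S_n^{pro}$ by a sup-norm amplification argument in the Fourier domain. The paper's route is essentially a two-line corollary of machinery already built (in particular Lemma \ref{lemma.convergence of main term unknown}, which treats the plug-in of $\hat f_\epsilon^{ft}$ sharply at the $n^{-1/2}$ scale via a H\'{a}jek projection of the linear term); your route avoids that lemma entirely because only $o_p(1)$ is needed, which is a legitimate simplification, but it obliges you to (i) pin down the power $q$ in the amplification factor rather than leave it symbolic, since the bandwidth conditions \ref{ass.O'}(i) and \ref{ass.S'}(i) only absorb a specific polynomial (resp.\ exponential) order; (ii) justify division by $\hat f_\epsilon^{ft}$ on $|s|\le\kappa/b$, i.e., note that $\sup_{|s|\le\kappa/b}|\hat f_\epsilon^{ft}(s)-f_\epsilon^{ft}(s)|/\inf_{|s|\le\kappa/b}|f_\epsilon^{ft}(s)|\to_p 0$ under the same bandwidth conditions, so $\hat f_\epsilon^{ft}$ is bounded below w.p.a.\ 1; and (iii) be careful with the Fourier representation for the weight $\ee^{\ii x\xi}$ (whose transform is a point mass, so that term is a pointwise evaluation at $\xi\in\Pi$ where $f_\epsilon^{ft}$ is bounded away from zero, not an integral over $|s|\le\kappa/b$) and for $\phi^{ft}$ that may fail to be absolutely integrable under Assumption \ref{ass.O}(i), where the restriction to $|s|\le\kappa/b$ costs at most an extra power of $b^{-1}$. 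None of these is a gap in the idea --- each is absorbed by the stated assumptions --- but they are the places where your sketch must be made precise for the argument to close.
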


Under the sequence of local alternative hypotheses defined in \eqref{hyp.local alt}, we still have a deterministic shift term similar to Theorem \ref{theorem.known under H_1n}, signifying the nontrivial local power of our tests. We also find that under the alternative hypothesis $H_1$, our proposed specification tests share the same consistency condition as in the case where the measurement error distribution is known, see condition \eqref{thm.known alt consis cond}. 

\section{Multiplier bootstrap}\label{sec.boot}
As we show in Sections \ref{sec.Asy} and \ref{sec.unknown ME}, the null limiting distributions of test statistics $KS_{n}$ and $CvM_{n}$ usually depend on the underlying DGP in a rather complicated manner. As such, the associated critical values are case-dependent, forcing people to resort to bootstrap procedures. Unfortunately, although the wild bootstrap procedure is typically employed to implement tests in the error-free case, no computationally easy and valid bootstrap procedures are currently available for tests in the presence of measurement errors, as indicated above. In particular, any residual-based bootstrap methods (e.g., the widely used wild bootstrap in the literature of specification tests) are apparently infeasible in the presence of measurement errors, given that we need the true regressors to construct the residuals, but we cannot observe the regressors directly. 

For the influential local smoothing specification test proposed in \cite{otsu2021specification}, a multi-step estimation and resampling method is employed to operationalize the bootstrap procedure, a strategy also adopted in classical studies such as \cite{hall2007testing} for global smoothing tests. Specifically, they first estimate the density function of the unobservable true regressor $X$, the fitted nonparametric regression function, and the parametric fit using standard deconvolution techniques and a parametric estimator. Then, the residuals are resampled by estimating their second moments and drawing from distributions such as the two-point distribution of \cite{mammen1993bootstrap}. Based on these resampled residuals, bootstrap samples of $Y$ are generated as the sum of the resampled residuals and the estimated parametric fit evaluated at the resampled regressors drawn from the estimated density of $X$. The measurement error–contaminated bootstrap samples of variable $W$ are subsequently obtained by adding the true regressor and an independent draw from the known (or estimated) measurement error distribution. Finally, the bootstrap counterparts of the test statistics are computed using each resampled pair $\{(Y_i^\ast,W_i^\ast)^\top\}_{i=1}^n$. Furthermore, bandwidth selection is also a critical issue for local smoothing tests due to the sensitivity of local smoothers to the bandwidth choice, which is addressed by adopting the \textquotedblleft two-stage selection plug-in\textquotedblright{} method of \cite{delaigle2004practical} in the test of \cite{otsu2021specification}.

While the above approach is feasible, it still has several limitations that deserve further attention. First, the complex estimation and resampling steps complicate both theoretical justification and empirical implementation. In particular, integrals involving deconvolution kernels often lack closed-form solutions for many higher-order kernel functions, and numerical approximation may reduce accuracy. Second, within each simulation, bootstrap iteration requires repeating the estimation of functions and the computation of the bootstrap counterparts of the test statistics, making the overall procedure computationally inefficient. Third, the need to select suitable bandwidths adds further computational complexity and theoretical difficulties. In our proposed tests, we eliminate the parameter estimation effect via the projection approach, thereby facilitating a much simpler multiplier bootstrap and avoiding the aforementioned problems. This serves as one of the main contributions of the study.


\subsection{Known measurement error}\label{subsec.kme}
For the cases with known measurement error distribution, we approximate the asymptotic null distribution of a continuous functional $\Gamma(S_n^{pro})$ by that of $\Gamma(S_n^{pro,\ast})$, where $\Gamma(\cdot)$ represents the sup norm and the squared norm employed in constructing the statistics $KS_{n}$ and $CvM_{n}$, respectively, and
\begin{align}\label{bootstrap version of stat known ME}
S_{n}^{pro,\ast}(\xi,\hat\theta_n)=\frac{1}{n}\sum_{i=1}^n\int V_i\left(Y_i-g(x;\hat{\theta}_n)\right)\mathcal{K}_b\left(\frac{x-W_i}{b}\right)\mathcal{P}_n (x;\xi,\hat{\theta}_n)\,dx.
\end{align}
Here, $\{V_i\}_{i=1}^n$ is a sequence of i.i.d. random variables with zero mean, unit variance, and independent of the original sample, known as the multipliers. A popular example is i.i.d. Bernoulli variables constructed by \cite{mammen1993bootstrap}. Both for the ordinary smooth case and the supersmooth case, $S_{n}^{pro,\ast}(\cdot,\hat\theta_n)$ is expected to have the same limiting process as $S_{n}^{pro}(\cdot,\hat\theta_n)$ under the null hypothesis, due to the zero mean and unit variance properties of the multipliers. Furthermore, we expect that the multipliers can eliminate the deterministic shift term mentioned in Theorem \ref{theorem.known under H_1n} under the local alternatives, thus establishing that $S_{n}^{pro,\ast}(\cdot,\hat\theta_n)$ converges to the same limiting process as under the null hypothesis. 

Defining \textquotedblleft $\overset{\ast}{\Longrightarrow}$\textquotedblright{} as weak convergence and $\mathbb{P}_n^\ast$ as the bootstrap probability under the bootstrap law, i.e., conditional on the original sample, see, e.g., Section 2.9 of \cite{van1996weak}, the validity of the proposed multiplier bootstrap is formally established by the following theorem.

\begin{theorem}\label{theorem.boot known}
Under both the null hypothesis $H_0$ in \eqref{hyp.null1} and the sequence of local alternatives $H_{1n}$ in \eqref{hyp.local alt}, suppose that Assumptions \ref{ass.D} and \ref{ass.O} hold for the ordinary smooth case, and Assumptions \ref{ass.D} and \ref{ass.S} hold for the supersmooth case. Then, we have
\begin{align}\label{result of param effect boot}
\sup_{\xi\in\Pi}\left\vert S_n^{pro,\ast}(\xi,\hat\theta_n)- S_n^{pro,\ast}(\xi,\theta_0)\right\vert=o_p\left(n^{-\frac{1}{2}}\right).
\end{align}
Furthermore, for the ordinary smooth case,
\begin{align}\label{result of main term ass O boot}
    \sqrt n S_{n}^{pro,\ast}(\cdot,\hat\theta_n)\overset{\ast}{\Longrightarrow}S_{\infty}^{os}(\cdot,\theta_0),
\end{align}
and for the supersmooth case, 
\begin{align}\label{result of main term ass S boot}
    \sqrt n S_{n}^{pro,\ast}(\cdot,\hat\theta_n)\overset{\ast}{\Longrightarrow}S_{\infty}^{ss}(\cdot,\theta_0),
\end{align}
where $S_{\infty}^{os}(\cdot,\theta_0)$ and $S_{\infty}^{ss}(\cdot,\theta_0)$ are the centered Gaussian processes defined in Theorems \ref{theorem.ordinary smooth under H_0} and \ref{theorem.supersmooth under H_0}, respectively.
\end{theorem}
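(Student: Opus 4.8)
The plan is to mimic, step by step, the asymptotic analysis that already produced Theorems \ref{theorem.ordinary smooth under H_0} and \ref{theorem.supersmooth under H_0}, conditionally on the data. First I would establish \eqref{result of param effect boot}, i.e.\ that the bootstrap process is insensitive to replacing $\hat\theta_n$ by $\theta_0$. The key observation is that in $S_n^{pro,\ast}(\xi,\hat\theta_n)-S_n^{pro,\ast}(\xi,\theta_0)$ the multipliers $V_i$ enter linearly, so a mean-value expansion in $\theta$ around $\theta_0$ yields a term of the form $(\hat\theta_n-\theta_0)^\top\cdot\frac1n\sum_i V_i\,(\text{stuff})_i$ plus an analogous contribution from the derivative of $\mathcal P_n$. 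By Assumption \ref{ass.D}(ii) the factor $\hat\theta_n-\theta_0$ is $O_p(n^{\delta-1/2})$ with $\delta<1/4$, and conditionally on the sample the averages $\frac1n\sum_i V_i(\cdot)_i$ are $O_{p^\ast}(n^{-1/2})$ by a conditional Markov/Chebyshev bound using $\mathbb{E}^\ast V_i=0$, $\mathbb{E}^\ast V_i^2=1$, together with the same moment bounds on the bracketed integrals that were verified for the non-bootstrap process in the appendix (invoking Assumptions \ref{ass.O}(i),(v) or \ref{ass.S}(i),(v)); this gives the claimed $o_p(n^{-1/2})$ uniformly in $\xi$, exactly as the projection kills the leading $G(\xi,\theta_0)$-term there. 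Crucially, because the projection $\mathcal P_n$ already annihilates the parameter-estimation effect, there is no $\sqrt n(\hat\theta_n-\theta_0)$ drift to control here, which is why only $\delta>1/4$ — rather than $\sqrt n$-consistency or an asymptotically linear expansion \eqref{param expansion} — is needed.

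Next I would treat the main term $\sqrt n S_n^{pro,\ast}(\cdot,\theta_0)$. Write it as $\frac{1}{\sqrt n}\sum_{i=1}^n V_i\,\psi_{n,i}(\xi)$, where $\psi_{n,i}(\xi)=\int (Y_i-g(x;\theta_0))\mathcal K_b((x-W_i)/b)\mathcal P_n(x;\xi,\theta_0)\,dx$ is precisely the $i$-th summand appearing in $\sqrt n S_n^{pro}(\cdot,\theta_0)$. The analysis already carried out for $S_n^{pro}$ (in the appendix supporting Theorems \ref{theorem.ordinary smooth under H_0} and \ref{theorem.supersmooth under H_0}) shows that $\psi_{n,i}(\xi)$ can be replaced, up to a uniformly $o_p$ error, by the fixed (bandwidth-free) envelope $r^{os}_\infty(Y_i,W_i;\xi,\theta_0)$ (resp.\ $r^{ss}_\infty$), using the deconvolution-kernel integration identities from Assumptions \ref{ass.O}(ii)--(iv) (resp.\ \ref{ass.S}(ii)--(iv)), the Lipschitz/undersmoothing arguments controlling the kernel-induced bias and variance, and the replacement of $\hat f_X,\Delta_n,G_n$ by their population analogs. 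Since the $V_i$ are i.i.d., bounded-moment, and independent of the data, I would invoke a conditional multiplier central limit theorem in $\ell^\infty(\Pi)$ — e.g.\ the multiplier CLT of \cite{van1996weak} (Theorem 2.9.6) — applied to the i.i.d.\ class $\{r^{os}_\infty(\cdot;\xi,\theta_0):\xi\in\Pi\}$, which is Donsker because $\xi\mapsto r^{os}_\infty(y,w;\xi,\theta_0)$ is smooth (indeed, a finite trigonometric-polynomial-type combination with $\Pi$ compact) and square-integrable by Assumption \ref{ass.O}(v) (resp.\ \ref{ass.S}(v)). The conditional limit law is the centered Gaussian process with covariance $\mathbb{E}[r^{os}_\infty\overline{r^{os}_\infty}]$, i.e.\ exactly $S_\infty^{os}$ (resp.\ $S_\infty^{ss}$), establishing \eqref{result of main term ass O boot} and \eqref{result of main term ass S boot}.

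Finally, the assertion that these bootstrap limits hold \emph{both} under $H_0$ and under the local alternatives $H_{1n}$: under \eqref{hyp.local alt}, the extra piece entering $\psi_{n,i}(\xi)$ is $\frac{1}{\sqrt n}\Delta(W_i)$-type terms producing, after averaging \emph{without} multipliers, the drift $\mu_\Delta(\xi,\theta_0)$ of Theorem \ref{theorem.known under H_1n}; but in the bootstrap this contribution is multiplied by $V_i$, so its conditional mean is zero and its conditional variance is $O(n^{-1})\cdot O(1)=o(1)$ uniformly in $\xi$ (again by the moment bounds on $\Delta$ inherited from the regularity imposed on $h$). Hence the drift is washed out and the bootstrap process converges to the \emph{same} centered Gaussian limit regardless of whether $H_0$ or $H_{1n}$ holds — which is exactly what makes the bootstrap critical values valid and the test consistent. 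I expect the main obstacle to be the rigorous, uniform-in-$\xi$ control in the second step: showing that the data-dependent, bandwidth-dependent summands $\psi_{n,i}$ may be replaced by the fixed Donsker envelopes $r^{os}_\infty$ (resp.\ $r^{ss}_\infty$) \emph{inside the multiplier sum}, i.e.\ that $\frac{1}{\sqrt n}\sum_i V_i(\psi_{n,i}-r_\infty(\cdot;\xi))$ is $o_{p^\ast}(1)$ in $\ell^\infty(\Pi)$; this requires a conditional maximal inequality (a multiplier-symmetrized chaining or bracketing bound) applied to the triangular array of remainders, leveraging the same Lipschitz-continuity and undersmoothing estimates ($nb^{2p}\to0$) that controlled the non-bootstrap remainder, now in their bootstrap (conditional-on-data) incarnation.
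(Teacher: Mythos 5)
Your plan follows essentially the same route as the paper's proof: the paper likewise shows $\sup_{\xi\in\Pi}|S_n^{pro,\ast}(\xi,\hat\theta_n)-S_n^{pro,\ast}(\xi,\theta_0)|=o_p(n^{-1/2})+O_p(|\hat\theta_n-\theta_0|^2)$ by Taylor-expanding in $\theta$ and using that the multiplier-weighted averages ($G_n^\ast$, $M_n^\ast$, $\Delta_n^\ast$, $\kappa_n^\ast$) are $O_p(n^{-1/2})$ thanks to the mean-zero $V_i$ (its Lemmas C.1--C.12), then argues that the expansion of $\sqrt n S_n^{pro,\ast}(\cdot,\theta_0)$ into $n^{-1/2}\sum_i V_i r^{os}_\infty$ (resp.\ $r^{ss}_\infty$) plus negligible remainders carries over by the unit variance of $V_i$, yielding the same conditional Gaussian limits, and under $H_{1n}$ decomposes $Y_i=Y_{i0}+n^{-1/2}\Delta(X_i)$ so that the drift, being multiplied by $V_i$, is conditionally mean-zero and washed out. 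Your only cosmetic slip is attributing the negligibility in \eqref{result of param effect boot} to the projection's cancellation, whereas in the bootstrap difference it is the multipliers themselves that make the $G$-type drift $O_p(n^{-1/2})$; this does not affect the validity of the argument.
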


For the ordinary smooth and the supersmooth cases, respectively, the bootstrap empirical processes share the same limiting behavior under $H_0$ and $H_{1n}$ as their original sample counterparts under the null, while the latter exhibit a deterministic shift under $H_{1n}$. As a consequence of the above analysis, the asymptotic critical value at the significance level $\alpha$ is $c^\ast_{\alpha}=\inf\{c_\alpha\in[0,\infty):\lim_{n\to\infty}\mathbb{P}_n^\ast\{\Gamma(\sqrt n S_{n}^{pro,\ast})>c_\alpha\}=\alpha\}$. In practice, $c^\ast_{\alpha}$ can be approximated as $c^\ast_{n,\alpha} = \{\Gamma(\sqrt n S_{n}^{pro,\ast})\}_{B(1-\alpha)}$, the $B(1-\alpha)$-th order statistic for $B$ replicates $\{\Gamma(\sqrt n S_{n,b}^{pro,\ast})\}_{b=1}^B$ of $\{\Gamma(\sqrt n S_{n}^{pro,\ast})\}$ and we reject $H_0$ if $\Gamma(\sqrt n S_{n}^{pro})>c^\ast_{n,\alpha}$. Specifically, building upon Theorem \ref{theorem.boot known} and taking $\Gamma(\cdot)$ to be the sup norm and the squared norm, respectively, we establish the asymptotic validity of the multiplier bootstrap procedure for the $KS_n$ and $CvM_n$ statistics. 

\subsection{Unknown measurement error}
As we mentioned in Section \ref{sec.unknown ME}, in the absence of information about the distribution of measurement errors, we argue that our multiplier bootstrap procedure remains feasible. We still use repeated measurements to estimate the characteristic function of the measurement error, thereby estimating the deconvolution kernel as shown in \eqref{var.repeated ME}. Then we can construct the multiplier bootstrap version of $\hat{S}_n^{pro}(\cdot,\hat{\theta}_n)$ as given by
\begin{align}\label{bootstrap version of stat unknown ME}
\hat S_{n}^{pro,\ast}(\xi,\hat\theta_n)=\frac{1}{n}\sum_{i=1}^n\int V_i\left(Y_i-g(x;\hat{\theta}_n)\right)\hat{\mathcal{K}}_b\left(\frac{x-W_i}{b}\right)\hat{\mathcal{P}}_n(x;\xi,\hat{\theta}_n)\,dx.
\end{align}
Similar to what we have shown in Theorem \ref{theorem.boot known}, we establish the validity of our bootstrap procedure by the following theorem for the unknown measurement error case.

\begin{theorem}\label{theorem.boot unknown}
Suppose that Assumptions \ref{ass.D} and \ref{ass.D'} hold. Under both the null hypothesis $H_0$ in \eqref{hyp.null1} and the sequence of local alternatives $H_{1n}$ in \eqref{hyp.local alt}, suppose that Assumptions \ref{ass.O} and \ref{ass.O'} hold for the ordinary smooth case, and Assumptions \ref{ass.S} and \ref{ass.S'} hold for the supersmooth case. Then, we have 
\begin{align}\label{result of param effect unknown boot}
\sup_{\xi\in\Pi}\left\vert \hat{S}_n^{pro,\ast}(\xi,\hat\theta_n)- \hat{S}_n^{pro,\ast}(\xi,\theta_0)\right\vert=o_p\left(n^{-\frac{1}{2}}\right).
\end{align}
Furthermore, for the ordinary smooth case, 
\begin{align}\label{result of main term ass O unknown boot}
    \sqrt n \hat{S}_{n}^{pro,\ast}(\cdot,\hat\theta_n)\overset{\ast}{\Longrightarrow}\hat{S}_{\infty}^{os}(\cdot,\theta_0),
\end{align}
and for the supersmooth case, 
\begin{align}\label{result of main term ass S unknown boot}
    \sqrt n \hat{S}_{n}^{pro,\ast}(\cdot,\hat\theta_n)\overset{\ast}{\Longrightarrow}\hat{S}_{\infty}^{ss}(\cdot,\theta_0),
\end{align}
where $\hat{S}_{\infty}^{os}(\cdot,\theta_0)$ and $\hat{S}_{\infty}^{ss}(\cdot,\theta_0)$ are the centered Gaussian processes defined in Theorems \ref{theorem.unknown ordinary smooth under H_0} and \ref{theorem.unknown supersmooth under H_0}, respectively.
\end{theorem}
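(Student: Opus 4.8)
The plan is to mirror the proof of Theorem~\ref{theorem.boot known}, transferring the decomposition already established in Theorems~\ref{theorem.unknown ordinary smooth under H_0} and~\ref{theorem.unknown supersmooth under H_0} to the bootstrap world and then invoking a conditional multiplier central limit theorem. The key simplification is that the estimated characteristic function $\hat f_\epsilon^{ft}$, and hence $\hat{\mathcal{K}}_b$, $\hat{\mathcal{P}}_n$, $G_n$ and $\Delta_n$, depends only on the original sample $\{(Y_i,W_i,W_i^r)^\top\}_{i=1}^n$ and not on the multipliers $\{V_i\}_{i=1}^n$; consequently every uniform-in-$\xi$ bound obtained for the non-bootstrapped process in the cited theorems carries over once we condition on the data, with the multipliers entering only through $\mathbb{E}^\ast[V_i]=0$, $\mathbb{E}^\ast[V_i^2]=1$ and their boundedness in $L^{2+\delta}$.

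\textbf{Step 1 (parameter estimation effect).} To establish~\eqref{result of param effect unknown boot}, I would repeat the argument behind~\eqref{result of param effect unknown ass O H0}: a mean-value expansion of $g(x;\hat\theta_n)-g(x;\theta_0)$ and of $\hat{\mathcal{P}}_n(x;\xi,\hat\theta_n)-\hat{\mathcal{P}}_n(x;\xi,\theta_0)$ in $\theta$, together with the orthogonality $\mathbb{E}[\dot g(X;\theta_0)\mathcal{P}(X;\xi,\theta_0)]=0$, which is reproduced in-sample through the construction of $G_n$ and $\Delta_n$. Inserting the bounded multipliers $V_i$ and using a conditional maximal inequality leaves the relevant moment bounds (those already used under Assumptions~\ref{ass.O'} and~\ref{ass.S'}) unchanged; since $\hat\theta_n-\theta_0=O_p(n^{\delta-1/2})$ with $\delta<1/4$, the residual quadratic term is $o_p(n^{-1/2})$, so it suffices to study the bootstrap process evaluated at $\theta_0$.

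\textbf{Step 2 (conditional weak convergence at $\theta_0$).} Following the decompositions in the proofs of Theorems~\ref{theorem.unknown ordinary smooth under H_0} and~\ref{theorem.unknown supersmooth under H_0}, write, with $\hat r^{\bullet}_\infty$ denoting $\hat r^{os}_\infty$ in the ordinary-smooth case and $\hat r^{ss}_\infty$ in the supersmooth case,
\[
\sqrt n\,\hat S_n^{pro,\ast}(\xi,\theta_0)=\frac{1}{\sqrt n}\sum_{i=1}^n V_i\,\hat r^{\bullet}_\infty(Y_i,W_i;\xi,\theta_0)+R_n^\ast(\xi),
\]
where $\hat r^{\bullet}_\infty$ is the influence function defining the covariance kernel of $\hat S_\infty^{os}$ (resp.\ $\hat S_\infty^{ss}$), already incorporating the term $r^\epsilon_\infty$ due to $\hat f_\epsilon^{ft}$. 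The remainder $R_n^\ast(\xi)$ collects the deconvolution smoothing bias (killed by undersmoothing in Assumption~\ref{ass.O'}(i), and by the infinite-order kernel in the supersmooth case), the linearization error of $\hat f_\epsilon^{ft}$ about $f_\epsilon^{ft}$ (controlled by Assumptions~\ref{ass.D'}(ii) and~\ref{ass.O'}(i)/\ref{ass.S'}(i)), and the error from replacing $G_n,\Delta_n$ by their population analogues; each piece is $o_p(n^{-1/2})$ uniformly in $\xi$ by the estimates in the cited theorems, and multiplication by the $V_i$ does not change the order. It then remains to show $\frac{1}{\sqrt n}\sum_i V_i\,\hat r^{\bullet}_\infty(Y_i,W_i;\cdot,\theta_0)\overset{\ast}{\Longrightarrow}\hat S_\infty^{\bullet}(\cdot,\theta_0)$ in $\ell^\infty(\Pi)$, conditionally on the sample. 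This follows from the conditional multiplier central limit theorem (see, e.g., Section~2.9 of~\cite{van1996weak}): the class $\{\hat r^{\bullet}_\infty(Y,W;\xi,\theta_0):\xi\in\Pi\}$ is Donsker since $\xi\mapsto\hat r^{\bullet}_\infty(y,w;\xi,\theta_0)$ is Lipschitz on the compact set $\Pi$ (it is built from $\ee^{\ii w\xi}$, finitely many $x$-derivatives of the structural functions, and the analytic coefficient functions $c_l^{os}(\xi)$, $c_l^{ss}(\xi)$), and a square-integrable envelope is supplied by Assumptions~\ref{ass.O'}(ii) and~\ref{ass.S'}(ii); the limiting covariance is the conditional second moment of $\hat r^{\bullet}_\infty$, which is precisely the kernel in Theorems~\ref{theorem.unknown ordinary smooth under H_0} and~\ref{theorem.unknown supersmooth under H_0}.

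\textbf{Step 3 ($H_{1n}$ robustness and main obstacle).} Under $H_{1n}$ one has $Y_i=g(X_i;\theta_0)+\Delta(X_i)/\sqrt n+U_i$, so $\sqrt n\,\hat S_n^{pro,\ast}$ acquires an extra term of the form $\frac{1}{n}\sum_i V_i\,\psi_n(Y_i,W_i;\xi)$ with $\psi_n$ bounded uniformly in $\xi$ by the same arguments as in Theorem~\ref{theorem.unknown under H_1n}; since the $V_i$ are centered and independent of the data, a conditional law of large numbers gives that this term is $o_p(n^{-1/2})$ in $\mathbb{P}_n^\ast$-probability, so the bootstrap does not reproduce the deterministic shift $\mu_\Delta$, and~\eqref{result of main term ass O unknown boot}--\eqref{result of main term ass S unknown boot} hold under both $H_0$ and $H_{1n}$. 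I expect the main obstacle to be the linearization of $\hat f_\epsilon^{ft}(t)$ about $f_\epsilon^{ft}(t)$ uniformly over the expanding frequency band $|t|\lesssim b^{-1}$ relevant for $\hat{\mathcal{K}}_b$ and $\hat{\mathcal{P}}_n$ — showing that the linear term is exactly the $r^\epsilon_\infty$ contribution and that higher-order terms are $o_p(n^{-1/2})$, which is where the strengthened bandwidth conditions~\ref{ass.O'}(i)/\ref{ass.S'}(i) and the moment bound~\ref{ass.D'}(ii) are used. Fortunately, because $\hat f_\epsilon^{ft}$ is a function of the data alone, this step is identical to the one already carried out in the proofs of Theorems~\ref{theorem.unknown ordinary smooth under H_0} and~\ref{theorem.unknown supersmooth under H_0}, so conditioning on the sample introduces no genuinely new difficulty.
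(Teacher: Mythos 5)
Your proposal follows essentially the same route as the paper's proof: the paper likewise decomposes $\hat S_n^{pro,\ast}$ via the projection formula, invokes the bootstrap versions of the Appendix~C lemmas (already proved for the estimated kernel $\hat{\mathcal{K}}_b$) to wipe out the parameter estimation effect using only $\hat\theta_n-\theta_0=O_p(n^{\delta-1/2})$, transfers the limit of Theorems \ref{theorem.unknown ordinary smooth under H_0} and \ref{theorem.unknown supersmooth under H_0} to the bootstrap process through the zero-mean/unit-variance multipliers (conditional multiplier CLT), and eliminates the $H_{1n}$ shift by the centered $V_i$ exactly as in Theorem \ref{theorem.boot known}. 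The only slip is cosmetic: in your Step 3 the bootstrapped shift term is $O_{p}(n^{-1/2})$ rather than $o_p(n^{-1/2})$, which is still $o_p(1)$ after the $\sqrt n$ scaling and hence suffices for the stated conclusion.
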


Theorem \ref{theorem.boot unknown} shows that the bootstrap version of the empirical process constructed by repeated measurements converges to the same limiting process under $H_0$ and $H_{1n}$, which is the same limiting process as the one under the null hypothesis mentioned in Section \ref{sec.unknown ME}, for the ordinary smooth and super smooth cases, respectively. Consequently, we can validate the multiplier bootstrap for $\widehat{KS}_n$ and $\widehat{CvM}_n$ using the continuous mapping theorem and obtain the critical values as discussed in Subsection \ref{subsec.kme}. Therefore, using the multiplier bootstrap to obtain the critical values for our tests remains feasible even when the measurement error is unknown.

\section{Numerical evidence}\label{sec.Simulation}
In this section, we investigate the finite-sample performance of the proposed test by Monte Carlo experiments. We set the unobservable regressors $\{X_i\}_{i=1}^n$ distributed as $N(0,1)$, and we use a trigonometric model and a polynomial model in our simulation, although we note that any nonlinear model can be used in our test procedure. We name $Y_i=1+X_i+\delta X_i^2+U_i$ as model one and $Y_i=1+X_i+\delta \cos(\pi X_i)+U_i$ as model two, respectively, for $i$ ranging from $1$ to $n$, where $\delta$ is a variable constant representing the nonlinearity of the model and is set to $0.5$. Additionally, $U_i\sim N(0,1/4)$. The contaminated regressor is given by $W_i=X_i+\epsilon_i$, where we use the Laplace distribution with variance of $1/12$ for the ordinary smooth case and $N(0,1/12)$ for the supersmooth case to remain consistent with the experiment in \cite{otsu2021specification}. We use the infinite-order flat-top kernel proposed by \cite{mcmurry2004nonparametric}, which has also been employed in \cite{dong2022nonparametric}, for both cases and all of our simulations, which is defined by
\begin{equation*}
K^{ft}(t) = \begin{cases} 
1 & \text{if } |t| \leq 0.05, \\
\exp \left\{ \frac{-\exp(-(|t|-0.05)^{-2})}{(|t|-1)^2} \right\} & \text{if } 0.05 < |t| < 1, \\
0 & \text{if } |t| \geq 1.
\end{cases}
\end{equation*}
We choose the bandwidth according to the rules of thumb, see, e.g., \cite{delaigle2008deconvolution}. Specifically, we use $b=c(5\sigma^4/n)^{1/27}$ for the ordinary smooth case and $b = c(4\sigma^2/\log(n))^{1/2}$ for the supersmooth case, where $\sigma$ is the standard deviation of the measurement error and sensitivity indicator $c$ varies in the grid presented in the following table. For the estimator in our test, we use a polynomial estimator of degree $2$, as mentioned in \cite{cheng1998polynomial}, which is also consistent with the results in \cite{otsu2021specification}.

We report the simulation results based on $199$ bootstrap iterations and $1000$ Monte Carlo replications. Tables include the size (the proportion of rejections when the DGP is under the null hypothesis, named DGP$(0)$) and powers (the proportion of rejections when the DGP is under the alternative hypothesis model one, named DGP$(1)$, and model two, named DGP$(2)$). We report the results for two statistics $KS_n$ and $CvM_n$, two sample sizes $n=\{500,1000\}$, three nominal level $\alpha = \{1\%,5\%,10\%\}$ and varied sensitivity indicators. Constrained by the length of papers, only part of the results $c\in\{1,5,10\}$ and $\alpha\in\{5\%,10\%\}$ are shown in the main text, and the complete results are reported in Tables \ref{tab:knwon laplace 500}--\ref{tab:unknwon normal 1000} of Appendix \ref{sec.Appendix sim res}.

\begin{table}[ht!]
\centering

	\caption{Results under known ordinary smooth case}
		
		\scalebox{0.95}{
			\begin{tabular}{c|c|c|cccccc}
			\hline\hline
                \multirow{2}{*}{$n$} & \multirow{2}{*}{$c$} & \multirow{2}{*}{level} & \multicolumn{2}{c}{DGP(0)} & \multicolumn{2}{c}{DGP(1)} & \multicolumn{2}{c}{DGP(2)}\\ 
                \cline{4-9} 
                \quad & \quad & \quad & KS & CvM & KS & CvM & KS & CvM\\
                \hline
                \multirow{6}{*}{500} & \multirow{2}{*}{1} & 0.05 & 0.042 & 0.041 & 0.815 & 0.806 & 0.775 & 0.762\\
                \quad & \quad & 0.1 & 0.115 & 0.114 & 0.898 & 0.888 & 0.864 & 0.849\\
                \cline{2-9}
                \quad & \multirow{2}{*}{5} & 0.05 & 0.058 & 0.054 & 0.835 & 0.818 & 0.779 & 0.769\\
                \quad & \quad & 0.1 & 0.099 & 0.101 & 0.890 & 0.882 & 0.870 & 0.856\\
                \cline{2-9}
                \quad & \multirow{2}{*}{10} & 0.05 & 0.051 & 0.052 & 0.818 & 0.797 & 0.745 & 0.735\\
                \quad & \quad & 0.1 & 0.093 & 0.093 & 0.881 & 0.870 & 0.855 & 0.843\\
                \hline
                \multirow{6}{*}{1000} & \multirow{2}{*}{1} & 0.05 & 0.054 & 0.052 & 0.985 & 0.979 & 0.965 & 0.959\\
                \quad & \quad & 0.1 & 0.123 & 0.117 & 0.990 & 0.987 & 0.979 & 0.974\\
                \cline{2-9}
                \quad & \multirow{2}{*}{5} & 0.05 & 0.052 & 0.050 & 0.977 & 0.968 & 0.966 & 0.955\\
                \quad & \quad & 0.1 & 0.101 & 0.099 & 0.993 & 0.988 & 0.978 & 0.973\\
                \cline{2-9}
                \quad & \multirow{2}{*}{10} & 0.05 & 0.049 & 0.050 & 0.969 & 0.962 & 0.969 & 0.964\\
                \quad & \quad & 0.1 & 0.087 & 0.090 & 0.985 & 0.979 & 0.987 & 0.978\\
			\hline\hline
			\end{tabular}
		}
		\label{tab:knwon laplace}
\end{table}

\begin{table}[ht!]
\centering
\caption{Results under known supersmooth case}
\scalebox{0.95}{
			\begin{tabular}{c|c|c|cccccc}
			\hline\hline
                \multirow{2}{*}{$n$} & \multirow{2}{*}{$c$} & \multirow{2}{*}{level} & \multicolumn{2}{c}{DGP(0)} & \multicolumn{2}{c}{DGP(1)} & \multicolumn{2}{c}{DGP(2)}\\ 
                \cline{4-9} 
                \quad & \quad & \quad & KS & CvM & KS & CvM & KS & CvM\\
                \hline
                \multirow{6}{*}{500} & \multirow{2}{*}{1} & 0.05 & 0.059 & 0.061 & 0.921 & 0.920 & 0.893 & 0.885\\
                \quad & \quad & 0.1 & 0.109 & 0.108 & 0.961 & 0.957 & 0.935 & 0.933\\
                \cline{2-9}
                \quad & \multirow{2}{*}{5} & 0.05 & 0.062 & 0.061 & 0.933 & 0.932 & 0.901 & 0.896\\
                \quad & \quad & 0.1 & 0.107 & 0.109 & 0.953 & 0.948 & 0.931 & 0.929\\
                \cline{2-9}
                \quad & \multirow{2}{*}{10} & 0.05 & 0.042 & 0.045 & 0.953 & 0.951 & 0.901 & 0.898\\
                \quad & \quad & 0.1 & 0.110 & 0.113 & 0.971 & 0.969 & 0.937 & 0.936\\
                \hline
                \multirow{6}{*}{1000} & \multirow{2}{*}{1} & 0.05 & 0.056 & 0.048 & 0.997 & 0.997 & 0.998 & 0.997\\
                \quad & \quad & 0.1 & 0.110 & 0.111 & 0.999 & 0.999 & 0.997 & 0.997\\
                \cline{2-9}
                \quad & \multirow{2}{*}{5} & 0.05 & 0.050 & 0.049 & 1.000 & 1.000 & 0.993 & 0.992\\
                \quad & \quad & 0.1 & 0.114 & 0.114 & 1.000 & 1.000 & 0.998 & 0.998\\
                \cline{2-9}
                \quad & \multirow{2}{*}{10} & 0.05 & 0.056 & 0.057 & 0.996 & 0.995 & 0.994 & 0.994\\
                \quad & \quad & 0.1 & 0.110 & 0.108 & 0.999 & 0.999 & 0.998 & 0.998\\
			\hline\hline
			\end{tabular}
		}
\label{tab:known normal}
\end{table}

\begin{table}[ht!]
\centering
\caption{Results under unknown ordinary smooth case}
\scalebox{0.95}{
			\begin{tabular}{c|c|c|cccccc}
			\hline\hline
                \multirow{2}{*}{$n$} & \multirow{2}{*}{$c$} & \multirow{2}{*}{level} & \multicolumn{2}{c}{DGP(0)} & \multicolumn{2}{c}{DGP(1)} & \multicolumn{2}{c}{DGP(2)}\\ 
                \cline{4-9} 
                \quad & \quad & \quad & KS & CvM & KS & CvM & KS & CvM\\
                \hline
                \multirow{6}{*}{500} & \multirow{2}{*}{1} & 0.05 & 0.060 & 0.059 & 0.814 & 0.796 & 0.777 & 0.768\\
                \quad & \quad & 0.1 & 0.133 & 0.128 & 0.892 & 0.874 & 0.868 & 0.851\\
                \cline{2-9}
                \quad & \multirow{2}{*}{5} & 0.05 & 0.055 & 0.053 & 0.822 & 0.801 & 0.786 & 0.770\\
                \quad & \quad & 0.1 & 0.116 & 0.108 & 0.891 & 0.885 & 0.844 & 0.837\\
                \cline{2-9}
                \quad & \multirow{2}{*}{10} & 0.05 & 0.042 & 0.041 & 0.830 & 0.812 & 0.777 & 0.766\\
                \quad & \quad & 0.1 & 0.104 & 0.105 & 0.909 & 0.900 & 0.844 & 0.831\\
                \hline
                \multirow{6}{*}{1000} & \multirow{2}{*}{1} & 0.05 & 0.061 & 0.058 & 0.970 & 0.963 & 0.949 & 0.946\\
                \quad & \quad & 0.1 & 0.118 & 0.115 & 0.989 & 0.984 & 0.969 & 0.964\\
                \cline{2-9}
                \quad & \multirow{2}{*}{5} & 0.05 & 0.062 & 0.060 & 0.980 & 0.970 & 0.964 & 0.956\\
                \quad & \quad & 0.1 & 0.116 & 0.111 & 0.989 & 0.987 & 0.980 & 0.976\\
                \cline{2-9}
                \quad & \multirow{2}{*}{10} & 0.05 & 0.057 & 0.052 & 0.980 & 0.978 & 0.955 & 0.946\\
                \quad & \quad & 0.1 & 0.105 & 0.104 & 0.992 & 0.988 & 0.982 & 0.973\\
			\hline\hline
			\end{tabular}
		}
\label{tab:unknown laplace}
\end{table}

\begin{table}[ht!]
\centering
\caption{Results under unknown supersmooth case}
\scalebox{0.95}{
			\begin{tabular}{c|c|c|cccccc}
			\hline\hline
                \multirow{2}{*}{$n$} & \multirow{2}{*}{$c$} & \multirow{2}{*}{level} & \multicolumn{2}{c}{DGP(0)} & \multicolumn{2}{c}{DGP(1)} & \multicolumn{2}{c}{DGP(2)}\\ 
                \cline{4-9} 
                \quad & \quad & \quad & KS & CvM & KS & CvM & KS & CvM\\
                \hline
                \multirow{6}{*}{500} & \multirow{2}{*}{1} & 0.05 & 0.056 & 0.057 & 0.921 & 0.920 & 0.895 & 0.886\\
                \quad & \quad & 0.1 & 0.114 & 0.115 & 0.974 & 0.973 & 0.936 & 0.933\\
                \cline{2-9}
                \quad & \multirow{2}{*}{5} & 0.05 & 0.057 & 0.057 & 0.927 & 0.924 & 0.901 & 0.901\\
                \quad & \quad & 0.05 & 0.057 & 0.057 & 0.927 & 0.924 & 0.901 & 0.901\\
                \cline{2-9}
                \quad & \multirow{2}{*}{10} & 0.05 & 0.055 & 0.055 & 0.927 & 0.927 & 0.896 & 0.894\\
                \quad & \quad & 0.1 & 0.092 & 0.092 & 0.973 & 0.971 & 0.932 & 0.932\\
                \hline
                \multirow{6}{*}{1000} & \multirow{2}{*}{1} & 0.05 & 0.046 & 0.045 & 0.998 & 0.998 & 0.995 & 0.993\\
                \quad & \quad & 0.1 & 0.116 & 0.115 & 1.000 & 1.000 & 0.998 & 0.998\\
                \cline{2-9}
                \quad & \multirow{2}{*}{5} & 0.05 & 0.052 & 0.051 & 0.996 & 0.996 & 0.993 & 0.993\\
                \quad & \quad & 0.1 & 0.097 & 0.102 & 0.999 & 0.999 & 0.998 & 0.997\\
                \cline{2-9}
                \quad & \multirow{2}{*}{10} & 0.05 & 0.058 & 0.057 & 1.000 & 0.999 & 0.993 & 0.992\\
                \quad & \quad & 0.1 & 0.108 & 0.106 & 0.999 & 0.999 & 1.000 & 1.000\\
			\hline\hline
			\end{tabular}
		}
\label{tab:unknown normal}
\end{table}

Tables \ref{tab:knwon laplace} and \ref{tab:known normal} show our testing results when we know the distribution of the measurement error for the ordinary smooth case and the supersmooth case, respectively. In our results, the $KS_n$ and $CvM_n$ statistics perform well 
in terms of both size and power. 
We observe significant improvements in size accuracy and power gain as the sample size increases. 
We also observe that the size and power of the tests do not fluctuate much with changes in bandwidth values on the grid. Comparing the results between the ordinary smooth case and the supersmooth case, we find that test sizes for the supersmooth cases perform comparably to those for the ordinary smooth cases, which differs from the findings reported in \cite{otsu2021specification}. It stems from the fact that the statistics for both the ordinary smooth and supersmooth cases exhibit $\sqrt{n}$-rate convergence, another advantage of the proposed tests. Other results in the tables reveal the power properties of our tests, where DGP$(1)$ represents the polynomial model and DGP$(2)$ represents the trigonometric model. It is well known that global smoothing tests perform better in polynomial models, while local smoothing tests perform better in trigonometric models. For our results in Tables \ref{tab:knwon laplace} and \ref{tab:known normal}, the test performance under DGP$(1)$ is better than that under DGP$(2)$. We can expect better performance with higher nonlinearity and more samples.

Tables \ref{tab:unknown laplace} and \ref{tab:unknown normal} display size and power properties of the $\widehat{KS}_n$ and $\widehat{CvM}_n$ statistics in the absence of information about the distribution of the measurement error. However, because our asymptotic variance is affected by the estimation error of the characteristic functions, the results for size and power in finite samples are slightly worse but remain within the acceptable range. 
We can also observe that our tests perform better under the polynomial model than under the trigonometric model. Meanwhile, our tests still retain robustness to bandwidth selection when the measurement error distribution is unknown. 

Finally, we summarize the advantages of our tests, as demonstrated in the simulations, including a good level of accuracy and power properties due to the $\sqrt{n}$-convergence of the statistics, better efficiency for polynomials and low-frequency alternative hypotheses, and robustness in bandwidth selection. Additionally, our testing framework naturally accommodates the case without measurement error.

\section{Conclusion}\label{sec.Conclusion}
In this paper, we propose new ICM-type specification tests for regression models with measurement errors based on a deconvoluted residual-marked empirical process. The innovation of our method is the use of a projection to eliminate the parameter estimation effect, a particularly attractive tool in the presence of measurement errors. 
We formally establish the asymptotic properties of the projection-based 
test statistics and suggest a straightforward multiplier bootstrap procedure to simulate the critical values. Tests and the multiplier bootstrap procedure are also examined when the measurement error distribution is unknown. We emphasize three main advantages of the tests proposed in this paper. From a theoretical perspective, the statistics of $\sqrt{n}$-convergence for both the ordinary smooth and the supersmooth measurement errors ensure excellent size accuracy and satisfactory power properties, especially for the supersmooth cases, where slower convergence rates can typically be achieved, as mentioned in the previous literature on measurement error. From the application perspective, our proposed method enables a general convergence rate of the parametric estimator, thereby resolving the difficulty in obtaining a $\sqrt{n}$-convergence estimator in the presence of measurement errors. Additionally, our tests are more robust to bandwidth selection and, therefore, more reliable. From a computational perspective, our projection approach enables us to 
simulate the critical values as accurately as desired via the multiplier bootstrap, which is easier to understand and compute. 

For future research, we anticipate that the valuable combination of projection and multiplier bootstrap can be extended to other interesting testing problems in the presence of measurement errors, such as tests for heteroskedasticity and significance tests.


\bibliographystyle{apalike_revised}
\bibliography{Ref}

\newpage

\begin{center}
\Large{Specification tests for regression models with
measurement errors\\
-- Online supplementary appendix}
\end{center}

\begin{center}
\Large{Xiaojun Song and Jichao Yuan\\
Peking University}
\end{center}

\normalsize

In this appendix, we provide additional simulation results and proofs of the main theoretical results.

\appendix
\section{Additional simulation results}\label{sec.Appendix sim res}
\begin{table}[ht!]
\centering

	\caption{Results for the known ordinary smooth case, $n=500$}
		
		\scalebox{0.75}{
			\begin{tabular}{cccccccc}
			\hline\hline
                \multirow{2}{*}{$c$} & \multirow{2}{*}{level} & \multicolumn{2}{c}{DGP(0)} & \multicolumn{2}{c}{DGP(1)} & \multicolumn{2}{c}{DGP(2)}\\ 
                \cline{3-8} 
                \quad & \quad & KS & CvM & KS & CvM & KS & CvM\\
                \hline
                \multirow{3}{*}{1} & 0.01 & 0.021 & 0.022 & 0.662 & 0.644 & 0.609 & 0.591\\
                \quad & 0.05 & 0.042 & 0.041 & 0.815 & 0.806 & 0.775 & 0.762\\
                \quad & 0.1 & 0.115 & 0.114 & 0.898 & 0.888 & 0.864 & 0.849\\
                \hline
                \multirow{3}{*}{2} & 0.01 & 0.009 & 0.008 & 0.664 & 0.648 & 0.603 & 0.580\\
                \quad & 0.05 & 0.070 & 0.067 & 0.849 & 0.840 & 0.783 & 0.767\\
                \quad & 0.1 & 0.109 & 0.109 & 0.899 & 0.886 & 0.832 & 0.824\\
                \hline
                \multirow{3}{*}{3} & 0.01 & 0.011 & 0.011 & 0.656 & 0.641 & 0.596 & 0.583\\
                \quad & 0.05 & 0.045 & 0.045 & 0.820 & 0.804 & 0.773 & 0.764\\
                \quad & 0.1 & 0.112 & 0.114 & 0.876 & 0.867 & 0.842 & 0.832\\
                \hline
                \multirow{3}{*}{5} & 0.01 & 0.012 & 0.013 & 0.639 & 0.617 & 0.616 & 0.601\\
                \quad & 0.05 & 0.058 & 0.054 & 0.835 & 0.818 & 0.779 & 0.769\\
                \quad & 0.1 & 0.099 & 0.101 & 0.890 & 0.882 & 0.870 & 0.856\\
                \hline
                \multirow{3}{*}{10} & 0.01 & 0.01 & 0.01 & 0.677 & 0.662 & 0.584 & 0.570\\
                \quad & 0.05 & 0.051 & 0.052 & 0.818 & 0.797 & 0.745 & 0.735\\
                \quad & 0.1 & 0.093 & 0.093 & 0.881 & 0.870 & 0.855 & 0.843\\
                \hline
                \multirow{3}{*}{15} & 0.01 & 0.011 & 0.011 & 0.675 & 0.656 & 0.595 & 0.582\\
                \quad & 0.05 & 0.047 & 0.045 & 0.837 & 0.829 & 0.775 & 0.761\\
                \quad & 0.1 & 0.047 & 0.045 & 0.837 & 0.829 & 0.775 & 0.761\\
			\hline\hline
			\end{tabular}
		}
		\label{tab:knwon laplace 500}
\end{table}

\begin{table}[ht!]
\centering

	\caption{Results for the known ordinary smooth case, $n=1000$}
		
		\scalebox{0.75}{
			\begin{tabular}{cccccccc}
			\hline\hline
                \multirow{2}{*}{$c$} & \multirow{2}{*}{level} & \multicolumn{2}{c}{DGP(0)} & \multicolumn{2}{c}{DGP(1)} & \multicolumn{2}{c}{DGP(2)}\\ 
                \cline{3-8} 
                \quad & \quad & KS & CvM & KS & CvM & KS & CvM\\
                \hline
                \multirow{3}{*}{1} & 0.01 & 0.014 & 0.014 & 0.911 & 0.898 & 0.867 & 0.850\\
                \quad & 0.05 & 0.054 & 0.052 & 0.985 & 0.979 & 0.965 & 0.959\\
                \quad & 0.1 & 0.123 & 0.117 & 0.990 & 0.987 & 0.979 & 0.974\\
                \hline
                \multirow{3}{*}{2} & 0.01 & 0.013 & 0.012 & 0.931 & 0.917 & 0.881 & 0.873\\
                \quad & 0.05 & 0.054 & 0.056 & 0.974 & 0.963 & 0.961 & 0.953\\
                \quad & 0.1 & 0.105 & 0.102 & 0.988 & 0.986 & 0.976 & 0.965\\
                \hline
                \multirow{3}{*}{3} & 0.01 & 0.019 & 0.017 & 0.930 & 0.921 & 0.875 & 0.863\\
                \quad & 0.05 & 0.052 & 0.049 & 0.973 & 0.968 & 0.947 & 0.935\\
                \quad & 0.1 & 0.101 & 0.107 & 0.987 & 0.983 & 0.980 & 0.970\\
                \hline
                \multirow{3}{*}{5} & 0.01 & 0.010 & 0.011 & 0.930 & 0.918 & 0.879 & 0.859\\
                \quad & 0.05 & 0.052 & 0.050 & 0.977 & 0.968 & 0.966 & 0.955\\
                \quad & 0.1 & 0.101 & 0.099 & 0.993 & 0.988 & 0.978 & 0.973\\
                \hline
                \multirow{3}{*}{10} & 0.01 & 0.012 & 0.015 & 0.924 & 0.915 & 0.882 & 0.853\\
                \quad & 0.05 & 0.049 & 0.050 & 0.969 & 0.962 & 0.969 & 0.964\\
                \quad & 0.1 & 0.087 & 0.090 & 0.985 & 0.979 & 0.987 & 0.978\\
                \hline
                \multirow{3}{*}{15} & 0.01 & 0.009 & 0.009 & 0.938 & 0.921 & 0.910 & 0.898\\
                \quad & 0.05 & 0.045 & 0.045 & 0.975 & 0.973 & 0.962 & 0.949\\
                \quad & 0.1 & 0.091 & 0.094 & 0.993 & 0.987 & 0.984 & 0.980\\
			\hline\hline
			\end{tabular}
		}
		\label{tab:knwon laplace 1000}
\end{table}

\begin{table}[ht!]
\centering

	\caption{Results for the known supersmooth case, $n=500$}
		
		\scalebox{0.9}{
			\begin{tabular}{cccccccc}
			\hline\hline
                \multirow{2}{*}{$c$} & \multirow{2}{*}{level} & \multicolumn{2}{c}{DGP(0)} & \multicolumn{2}{c}{DGP(1)} & \multicolumn{2}{c}{DGP(2)}\\ 
                \cline{3-8} 
                \quad & \quad & KS & CvM & KS & CvM & KS & CvM\\
                \hline
                \multirow{3}{*}{1} & 0.01 & 0.014 & 0.014 & 0.805 & 0.805 & 0.755 & 0.754\\
                \quad & 0.05 & 0.059 & 0.061 & 0.921 & 0.920 & 0.893 & 0.885\\
                \quad & 0.1 & 0.109 & 0.108 & 0.961 & 0.957 & 0.935 & 0.933\\
                \hline
                \multirow{3}{*}{2} & 0.01 & 0.010 & 0.010 & 0.819 & 0.814 & 0.750 & 0.741\\
                \quad & 0.05 & 0.059 & 0.059 & 0.927 & 0.928 & 0.905 & 0.902\\
                \quad & 0.1 & 0.090 & 0.092 & 0.968 & 0.965 & 0.963 & 0.962\\
                \hline
                \multirow{3}{*}{3} & 0.01 & 0.023 & 0.022 & 0.812 & 0.811 & 0.747 & 0.742\\
                \quad & 0.05 & 0.049 & 0.049 & 0.937 & 0.936 & 0.882 & 0.876\\
                \quad & 0.1 & 0.123 & 0.122 & 0.973 & 0.971 & 0.950 & 0.950\\
                \hline
                \multirow{3}{*}{5} & 0.01 & 0.017 & 0.017 & 0.807 & 0.803 & 0.757 & 0.752\\
                \quad & 0.05 & 0.062 & 0.061 & 0.933 & 0.932 & 0.901 & 0.896\\
                \quad & 0.1 & 0.107 & 0.109 & 0.953 & 0.948 & 0.931 & 0.929\\
                \hline
                \multirow{3}{*}{10} & 0.01 & 0.011 & 0.011 & 0.834 & 0.829 & 0.761 & 0.759\\
                \quad & 0.05 & 0.042 & 0.045 & 0.953 & 0.951 & 0.901 & 0.898\\
                \quad & 0.1 & 0.110 & 0.113 & 0.971 & 0.969 & 0.937 & 0.936\\
                \hline
                \multirow{3}{*}{15} & 0.01 & 0.020 & 0.021 & 0.821 & 0.815 & 0.773 & 0.766\\
                \quad & 0.05 & 0.049 & 0.049 & 0.934 & 0.936 & 0.896 & 0.896\\
                \quad & 0.1 & 0.106 & 0.107 & 0.977 & 0.974 & 0.949 & 0.947\\
			\hline\hline
			\end{tabular}
		}
		\label{tab:knwon normal 500}
\end{table}

\begin{table}[ht!]
\centering

	\caption{Results for the known supersmooth case, $n=1000$}
		
		\scalebox{0.9}{
			\begin{tabular}{cccccccc}
			\hline\hline
                \multirow{2}{*}{$c$} & \multirow{2}{*}{level} & \multicolumn{2}{c}{DGP(0)} & \multicolumn{2}{c}{DGP(1)} & \multicolumn{2}{c}{DGP(2)}\\ 
                \cline{3-8} 
                \quad & \quad & KS & CvM & KS & CvM & KS & CvM\\
                \hline
                \multirow{3}{*}{1} & 0.01 & 0.014 & 0.015 & 0.989 & 0.989 & 0.975 & 0.974\\
                \quad & 0.05 & 0.056 & 0.048 & 0.997 & 0.997 & 0.998 & 0.997\\
                \quad & 0.1 & 0.110 & 0.111 & 0.999 & 0.999 & 0.997 & 0.997\\
                \hline
                \multirow{3}{*}{2} & 0.01 & 0.008 & 0.008 & 0.984 & 0.983 & 0.977 & 0.977\\
                \quad & 0.05 & 0.056 & 0.056 & 0.996 & 0.996 & 0.997 & 0.997\\
                \quad & 0.1 & 0.096 & 0.098 & 0.999 & 0.999 & 0.998 & 0.999\\
                \hline
                \multirow{3}{*}{3} & 0.01 & 0.010 & 0.011 & 0.984 & 0.981 & 0.969 & 0.967\\
                \quad & 0.05 & 0.054 & 0.053 & 1.000 & 1.000 & 0.998 & 0.998\\
                \quad & 0.1 & 0.103 & 0.100 & 0.999 & 0.999 & 0.997 & 0.997\\
                \hline
                \multirow{3}{*}{5} & 0.01 & 0.011 & 0.010 & 0.983 & 0.983 & 0.975 & 0.974\\
                \quad & 0.05 & 0.050 & 0.049 & 1.000 & 1.000 & 0.993 & 0.992\\
                \quad & 0.1 & 0.114 & 0.114 & 1.000 & 1.000 & 0.998 & 0.998\\
                \hline
                \multirow{3}{*}{10} & 0.01 & 0.016 & 0.018 & 0.993 & 0.993 & 0.980 & 0.977\\
                \quad & 0.05 & 0.056 & 0.057 & 0.996 & 0.995 & 0.994 & 0.994\\
                \quad & 0.1 & 0.110 & 0.108 & 0.999 & 0.999 & 0.998 & 0.998\\
                \hline
                \multirow{3}{*}{15} & 0.01 & 0.012 & 0.013 & 0.992 & 0.992 & 0.981 & 0.979\\
                \quad & 0.05 & 0.047 & 0.048 & 0.998 & 0.997 & 0.998 & 0.998\\
                \quad & 0.1 & 0.109 & 0.111 & 0.999 & 0.999 & 1.000 & 1.000\\
			\hline\hline
			\end{tabular}
		}
		\label{tab:knwon normal 1000}
\end{table}

\begin{table}[ht!]
\centering

	\caption{Results for the unknown ordinary smooth case, $n=500$}
		
		\scalebox{0.9}{
			\begin{tabular}{cccccccc}
			\hline\hline
                \multirow{2}{*}{$c$} & \multirow{2}{*}{level} & \multicolumn{2}{c}{DGP(0)} & \multicolumn{2}{c}{DGP(1)} & \multicolumn{2}{c}{DGP(2)}\\ 
                \cline{3-8} 
                \quad & \quad & KS & CvM & KS & CvM & KS & CvM\\
                \hline
                \multirow{3}{*}{1} & 0.01 & 0.015 & 0.012 & 0.624 & 0.607 & 0.591 & 0.578\\
                \quad & 0.05 & 0.060 & 0.059 & 0.814 & 0.796 & 0.777 & 0.768\\
                \quad & 0.1 & 0.133 & 0.128 & 0.892 & 0.874 & 0.868 & 0.851\\
                \hline
                \multirow{3}{*}{2} & 0.01 & 0.015 & 0.014 & 0.666 & 0.649 & 0.567 & 0.549\\
                \quad & 0.05 & 0.061 & 0.056 & 0.823 & 0.813 & 0.776 & 0.759\\
                \quad & 0.1 & 0.124 & 0.121 & 0.892 & 0.880 & 0.851 & 0.833\\
                \hline
                \multirow{3}{*}{3} & 0.01 & 0.015 & 0.016 & 0.682 & 0.669 & 0.576 & 0.566\\
                \quad & 0.05 & 0.065 & 0.058 & 0.830 & 0.819 & 0.782 & 0.767\\
                \quad & 0.1 & 0.115 & 0.112 & 0.905 & 0.890 & 0.853 & 0.846\\
                \hline
                \multirow{3}{*}{5} & 0.01 & 0.016 & 0.018 & 0.651 & 0.633 & 0.597 & 0.590\\
                \quad & 0.05 & 0.055 & 0.053 & 0.822 & 0.801 & 0.786 & 0.770\\
                \quad & 0.1 & 0.116 & 0.108 & 0.891 & 0.885 & 0.844 & 0.837\\
                \hline
                \multirow{3}{*}{10} & 0.01 & 0.019 & 0.019 & 0.664 & 0.640 & 0.590 & 0.571\\
                \quad & 0.05 & 0.042 & 0.041 & 0.830 & 0.812 & 0.777 & 0.766\\
                \quad & 0.1 & 0.104 & 0.105 & 0.909 & 0.900 & 0.844 & 0.831\\
                \hline
                \multirow{3}{*}{15} & 0.01 & 0.006 & 0.006 & 0.672 & 0.657 & 0.604 & 0.590\\
                \quad & 0.05 & 0.056 & 0.046 & 0.833 & 0.822 & 0.796 & 0.775\\
                \quad & 0.1 & 0.056 & 0.046 & 0.833 & 0.822 & 0.796 & 0.775\\
			\hline\hline
			\end{tabular}
		}
		\label{tab:unknwon laplace 500}
\end{table}

\begin{table}[ht!]
\centering

	\caption{Results for the unknown ordinary smooth case, $n=1000$}
		
		\scalebox{0.9}{
			\begin{tabular}{cccccccc}
			\hline\hline
                \multirow{2}{*}{$c$} & \multirow{2}{*}{level} & \multicolumn{2}{c}{DGP(0)} & \multicolumn{2}{c}{DGP(1)} & \multicolumn{2}{c}{DGP(2)}\\ 
                \cline{3-8} 
                \quad & \quad & KS & CvM & KS & CvM & KS & CvM\\
                \hline
                \multirow{3}{*}{1} & 0.01 & 0.012 & 0.011 & 0.931 & 0.914 & 0.890 & 0.870\\
                \quad & 0.05 & 0.061 & 0.058 & 0.970 & 0.963 & 0.949 & 0.946\\
                \quad & 0.1 & 0.118 & 0.115 & 0.989 & 0.984 & 0.969 & 0.964\\
                \hline
                \multirow{3}{*}{2} & 0.01 & 0.012 & 0.010 & 0.936 & 0.924 & 0.893 & 0.878\\
                \quad & 0.05 & 0.061 & 0.061 & 0.983 & 0.977 & 0.961 & 0.953\\
                \quad & 0.1 & 0.106 & 0.106 & 0.993 & 0.986 & 0.983 & 0.976\\
                \hline
                \multirow{3}{*}{3} & 0.01 & 0.013 & 0.012 & 0.928 & 0.913 & 0.904 & 0.894\\
                \quad & 0.05 & 0.062 & 0.053 & 0.976 & 0.971 & 0.972 & 0.960\\
                \quad & 0.1 & 0.104 & 0.100 & 0.989 & 0.985 & 0.981 & 0.978\\
                \hline
                \multirow{3}{*}{5} & 0.01 & 0.020 & 0.018 & 0.925 & 0.909 & 0.905 & 0.886\\
                \quad & 0.05 & 0.062 & 0.060 & 0.98 & 0.97 & 0.964 & 0.956\\
                \quad & 0.1 & 0.116 & 0.111 & 0.989 & 0.987 & 0.980 & 0.976\\
                \hline
                \multirow{3}{*}{10} & 0.01 & 0.011 & 0.010 & 0.935 & 0.920 & 0.895 & 0.878\\
                \quad & 0.05 & 0.057 & 0.052 & 0.980 & 0.978 & 0.955 & 0.946\\
                \quad & 0.1 & 0.105 & 0.104 & 0.992 & 0.988 & 0.982 & 0.973\\
                \hline
                \multirow{3}{*}{15} & 0.01 & 0.010 & 0.009 & 0.920 & 0.906 & 0.894 & 0.872\\
                \quad & 0.05 & 0.042 & 0.040 & 0.979 & 0.971 & 0.956 & 0.949\\
                \quad & 0.1 & 0.109 & 0.101 & 0.991 & 0.984 & 0.992 & 0.984\\
			\hline\hline
			\end{tabular}
		}
		\label{tab:unknwon laplace 1000}
\end{table}

\begin{table}[ht!]
\centering

	\caption{Results for the unknown supersmooth case, $n=500$}
		
		\scalebox{0.9}{
			\begin{tabular}{cccccccc}
			\hline\hline
                \multirow{2}{*}{$c$} & \multirow{2}{*}{level} & \multicolumn{2}{c}{DGP(0)} & \multicolumn{2}{c}{DGP(1)} & \multicolumn{2}{c}{DGP(2)}\\ 
                \cline{3-8} 
                \quad & \quad & KS & CvM & KS & CvM & KS & CvM\\
                \hline
                \multirow{3}{*}{1} & 0.01 & 0.019 & 0.019 & 0.816 & 0.812 & 0.750 & 0.739\\
                \quad & 0.05 & 0.056 & 0.057 & 0.921 & 0.920 & 0.895 & 0.886\\
                \quad & 0.1 & 0.114 & 0.115 & 0.974 & 0.973 & 0.936 & 0.933\\
                \hline
                \multirow{3}{*}{2} & 0.01 & 0.011 & 0.012 & 0.822 & 0.819 & 0.781 & 0.778\\
                \quad & 0.05 & 0.059 & 0.060 & 0.934 & 0.932 & 0.902 & 0.899\\
                \quad & 0.1 & 0.120 & 0.118 & 0.969 & 0.968 & 0.934 & 0.931\\
                \hline
                \multirow{3}{*}{3} & 0.01 & 0.012 & 0.012 & 0.820 & 0.814 & 0.757 & 0.755\\
                \quad & 0.05 & 0.076 & 0.077 & 0.925 & 0.923 & 0.930 & 0.929\\
                \quad & 0.1 & 0.076 & 0.077 & 0.925 & 0.923 & 0.930 & 0.929\\
                \hline
                \multirow{3}{*}{5} & 0.01 & 0.013 & 0.013 & 0.830 & 0.830 & 0.751 & 0.744\\
                \quad & 0.05 & 0.057 & 0.057 & 0.927 & 0.924 & 0.901 & 0.901\\
                \quad & 0.1 & 0.127 & 0.124 & 0.959 & 0.960 & 0.952 & 0.951\\
                \hline
                \multirow{3}{*}{10} & 0.01 & 0.014 & 0.014 & 0.804 & 0.800 & 0.757 & 0.749\\
                \quad & 0.05 & 0.055 & 0.055 & 0.927 & 0.927 & 0.896 & 0.894\\
                \quad & 0.1 & 0.092 & 0.092 & 0.973 & 0.971 & 0.932 & 0.932\\
                \hline
                \multirow{3}{*}{15} & 0.01 & 0.015 & 0.014 & 0.808 & 0.804 & 0.777 & 0.775\\
                \quad & 0.05 & 0.057 & 0.057 & 0.926 & 0.920 & 0.904 & 0.898\\
                \quad & 0.1 & 0.098 & 0.100 & 0.976 & 0.974 & 0.951 & 0.950\\
			\hline\hline
			\end{tabular}
		}
		\label{tab:unknwon normal 500}
\end{table}

\begin{table}[ht!]
\centering

	\caption{Results for the unknown supersmooth case, $n=1000$}
		
		\scalebox{0.9}{
			\begin{tabular}{cccccccc}
			\hline\hline
                \multirow{2}{*}{$c$} & \multirow{2}{*}{level} & \multicolumn{2}{c}{DGP(0)} & \multicolumn{2}{c}{DGP(1)} & \multicolumn{2}{c}{DGP(2)}\\ 
                \cline{3-8} 
                \quad & \quad & KS & CvM & KS & CvM & KS & CvM\\
                \hline
                \multirow{3}{*}{1} & 0.01 & 0.018 & 0.019 & 0.990 & 0.988 & 0.983 & 0.984\\
                \quad & 0.05 & 0.046 & 0.045 & 0.998 & 0.998 & 0.995 & 0.993\\
                \quad & 0.1 & 0.116 & 0.115 & 1.000 & 1.000 & 0.998 & 0.998\\
                \hline
                \multirow{3}{*}{2} & 0.01 & 0.017 & 0.018 & 0.987 & 0.986 & 0.975 & 0.974\\
                \quad & 0.05 & 0.046 & 0.046 & 0.999 & 0.999 & 0.996 & 0.993\\
                \quad & 0.1 & 0.109 & 0.105 & 0.998 & 0.998 & 0.998 & 0.998\\
                \hline
                \multirow{3}{*}{3} & 0.01 & 0.020 & 0.020 & 0.990 & 0.990 & 0.983 & 0.982\\
                \quad & 0.05 & 0.051 & 0.050 & 0.999 & 0.999 & 0.989 & 0.989\\
                \quad & 0.1 & 0.108 & 0.106 & 0.999 & 0.999 & 0.999 & 0.999\\
                \hline
                \multirow{3}{*}{5} & 0.01 & 0.019 & 0.020 & 0.986 & 0.985 & 0.981 & 0.979\\
                \quad & 0.05 & 0.052 & 0.051 & 0.996 & 0.996 & 0.993 & 0.993\\
                \quad & 0.1 & 0.097 & 0.102 & 0.999 & 0.999 & 0.998 & 0.997\\
                \hline
                \multirow{3}{*}{10} & 0.01 & 0.018 & 0.018 & 0.985 & 0.985 & 0.968 & 0.969\\
                \quad & 0.05 & 0.058 & 0.057 & 1.000 & 0.999 & 0.993 & 0.992\\
                \quad & 0.1 & 0.108 & 0.106 & 0.999 & 0.999 & 1.000 & 1.000\\
                \hline
                \multirow{3}{*}{15} & 0.01 & 0.017 & 0.017 & 0.979 & 0.978 & 0.981 & 0.980\\
                \quad & 0.05 & 0.056 & 0.055 & 0.997 & 0.997 & 0.998 & 0.997\\
                \quad & 0.1 & 0.099 & 0.100 & 1.000 & 1.000 & 0.999 & 0.999\\
			\hline\hline
			\end{tabular}
		}
		\label{tab:unknwon normal 1000}
\end{table}

\newpage

\section{Definitions in this article}\label{sec.AppendixA}
In this subsection, we first give the notations to be used, and then begin our proofs of theorems and lemmas. We recall the univariate deconvolution kernel from Section \ref{sec.Test},
\begin{align*}
    &\mathcal{K}_b(a)=\frac{1}{2\pi b}\int \ee^{-\ii ta}\frac{K^{\text{ft}}(t)}{f_\epsilon^{\text{ft}}(t/b)}\,dt.
\end{align*}
As explained in \cite{fan1992deconvolution}, the idea behind the construction of the kernel function is to use Fourier transform and its inverse, possessing the following properties, $\mathcal{F}(f\ast g) = \mathcal{F}(f)\mathcal{F}(g)$ and $\mathcal{F}^{-1}(fg) = \mathcal{F}^{-1}(f)\ast\mathcal{F}^{-1}(g)$, where
\begin{align*}
    &\mathcal{F}(f) = \int\ee^{\ii xt}f(x)\,dx, \,\,\, \mathcal{F}^{-1}(g) = \frac{1}{2\pi}\int\ee^{-\ii xt}g(t)\,dt.
\end{align*}
And for convenience, we give the following notations about the kernel function,
\begin{align*}
    &\mathcal{K}_\epsilon(x) = b\mathcal{K}_b(x),\quad
    \hat{\mathcal{K}}_\epsilon(x) = b\hat{\mathcal{K}}_b(x).
\end{align*}
Then, for the unknown measurement error case mentioned in section \ref{sec.unknown ME}, we denote,
\begin{align} \label{Def of est of ME}
    &\hat{f}_\epsilon^{\text{ft}}(t) = \left\vert\frac{1}{n}\sum_{j=1}^n\zeta_j(t)\right\vert^{1/2},\,\,\,\zeta_j(t) = \cos\left[t(W_j-W_j^r)\right].
\end{align}
Next, we decompose the deconvolution kernel with the Fr\'{e}chet derivative as mentioned in \cite{dong2022nonparametric},
\begin{align}
    &\mathcal{K}_{\epsilon,1}(a)=\frac{1}{2\pi}\int \ee^{-\ii t^\top a}\frac{K^{\text{ft}}(t)}{f_\epsilon^{\text{ft}}(t/b)}\psi_1(t/b)\,dt,\quad \mathcal{K}_{\epsilon,1}(x) = b\mathcal{K}_{b,1}(x),\notag \\ 
    &\psi_1(t) = \frac{\left[f_\epsilon^{\text{ft}}(t)\right]^2 - \left[\hat{f}_\epsilon^{\text{ft}}(t)\right]^2}{2\left[f_\epsilon^{\text{ft}}(t)\right]^2},\label{Decompose of est of decon part1}\\
    &\mathcal{K}_{\epsilon,2}(a)=\frac{1}{2\pi}\int \ee^{-\ii t^\top a}\frac{K^{\text{ft}}(t)}{f_\epsilon^{\text{ft}}(t/b)}\psi_2(t/b)\,dt,\quad \mathcal{K}_{\epsilon,2}(x) = b\mathcal{K}_{b,2}(x),\notag \\ 
    &\psi_2(t) = \frac{\left[\hat{f}_\epsilon^{\text{ft}}(t)+2f_\epsilon^{\text{ft}}(t)\right]\left\{\left[\hat{f}_\epsilon^{\text{ft}}(t)\right]^2 - \left[f_\epsilon^{\text{ft}}(t)\right]^2\right\}^2}{2\hat{f}_\epsilon^{\text{ft}}(t)\left[f_\epsilon^{\text{ft}}(t)\right]^2\left[\hat{f}_\epsilon^{\text{ft}}(t)+f_\epsilon^{\text{ft}}(t)\right]^2}.\label{Decompose of est of decon part2}
\end{align}
For convenience, we define the following transformation of kernels,
\begin{align} \label{tsf of kernel}
    &K_{\epsilon,j}(x) = \frac{1}{2\pi}\int \ee^{-\ii tx}K^{ft}(t)\psi_j(t/b)\,dt, \,\,\, j=1,2.
\end{align}
Thus, we can decompose the estimation of the deconvolution kernel by the following equation,
\begin{align} \label{Decompose of est of decon}
    &\hat{\mathcal{K}}_\epsilon(x) = \mathcal{K}_\epsilon(x) + \mathcal{K}_{\epsilon,1}(x) + \mathcal{K}_{\epsilon,2}(x) .
\end{align}
We then list the notations related to the statistics in our testing framework. First, empirical processes without projection mentioned in Section \ref{sec.Test} and their bootstrap versions mentioned in Section \ref{sec.boot} are defined as follows:
\begin{align} \label{Stat without proj}
    &S_{n}(\xi,\theta)  = \frac{1}{n}\sum_{i=1}^n\int\left(Y_i-g(x;\theta)\right)\mathcal{K}_b\left(\frac{x-W_i}{b}\right)e^{ix\xi}\,dx ,\\
    &\hat{S}_{n}(\xi,\theta)  = \frac{1}{n}\sum_{i=1}^n\int\left(Y_i-g(x;\theta)\right)\hat{\mathcal{K}}_b\left(\frac{x-W_i}{b}\right)e^{ix\xi}\,dx ,\\
    &S_{n}^\ast(\xi,\theta)  = \frac{1}{n}\sum_{i=1}^n V_i\int\left(Y_i-g(x;\theta)\right)\mathcal{K}_b\left(\frac{x-W_i}{b}\right)e^{ix\xi}\,dx ,\\
    &\hat{S}_{n}^\ast(\xi,\theta)  = \frac{1}{n}\sum_{i=1}^n V_i\int\left(Y_i-g(x;\theta)\right)\hat{\mathcal{K}}_b\left(\frac{x-W_i}{b}\right)e^{ix\xi}\,dx .
\end{align}
Then we define the non-negligible main items brought by the projection and their bootstrap version,
\begin{align} \label{Stat with proj main}
    &M_{n}(\theta)  = \frac{1}{n}\sum_{i=1}^n\int\left(Y_i-g(x;\theta)\right)\frac{\partial g(x;\theta)}{\partial\theta}\mathcal{K}_b\left(\frac{x-W_i}{b}\right)\,dx ,\\
    &\hat{M}_{n}(\theta)  = \frac{1}{n}\sum_{i=1}^n\int\left(Y_i-g(x;\theta)\right)\frac{\partial g(x;\theta)}{\partial\theta}\hat{\mathcal{K}}_b\left(\frac{x-W_i}{b}\right)\,dx \\
    &M_{n}^\ast(\theta)  = \frac{1}{n}\sum_{i=1}^n V_i\int\left(Y_i-g(x;\theta)\right)\frac{\partial g(x;\theta)}{\partial\theta}\mathcal{K}_b\left(\frac{x-W_i}{b}\right)\,dx ,\\
    &\hat{M}_{n}^\ast(\theta)  = \frac{1}{n}\sum_{i=1}^n V_i\int\left(Y_i-g(x;\theta)\right)\frac{\partial g(x;\theta)}{\partial\theta}\hat{\mathcal{K}}_b\left(\frac{x-W_i}{b}\right)\,dx .
\end{align}
Next, we give the notation for the main components of the projection and bootstrap versions,
\begin{align*}
    &\Delta(\theta) =\mathbb{E}[\frac{\partial g(x;\theta)}{\partial\theta}\frac{\partial g(x;\theta)}{\partial\theta^\top}]=\int\frac{\partial g(x;\theta)}{\partial\theta}\frac{\partial g(x;\theta)}{\partial\theta^\top}f_X(x)\,dx,\\
    &\Delta_n(\theta)  = \frac{1}{n}\sum_{i=1}^n\int\frac{\partial g(x;\theta)}{\partial\theta}\frac{\partial g(x;\theta)}{\partial\theta^\top}\mathcal{K}_b\left(\frac{x-W_i}{b}\right)\,dx ,\\
    &\hat{\Delta}_n(\theta)  = \frac{1}{n}\sum_{i=1}^n\int \frac{\partial g(x;\theta)}{\partial\theta}\frac{\partial g(x;\theta)}{\partial\theta^\top}\hat{\mathcal{K}}_b\left(\frac{x-W_i}{b}\right)\,dx,\\
    &\Delta_n^\ast(\theta)  = \frac{1}{n}\sum_{i=1}^n V_i\int \frac{\partial g(x;\theta)}{\partial\theta}\frac{\partial g(x;\theta)}{\partial\theta^\top}\mathcal{K}_b\left(\frac{x-W_i}{b}\right)\,dx ,\\
    &\hat{\Delta}_n^\ast(\theta)  = \frac{1}{n}\sum_{i=1}^n V_i\int\frac{\partial g(x;\theta)}{\partial\theta}\frac{\partial g(x;\theta)}{\partial\theta^\top}\hat{\mathcal{K}}_b\left(\frac{x-W_i}{b}\right)\,dx.\\
    &G(\xi,\theta)=\mathbb{E}[\frac{\partial g(x;\theta)}{\partial\theta}\ee^{\ii X\xi}]=\int\frac{\partial g(x;\theta)}{\partial\theta}\ee^{\ii x\xi}f_X(x)\,dx,\\
    &G_n(\xi,\theta)  = \frac{1}{n}\sum_{i=1}^n\int \frac{\partial g(x;\theta)}{\partial\theta}\mathcal{K}_b\left(\frac{x-W_i}{b}\right)\ee^{\ii x\xi}\,dx ,\\
    &\hat{G}_n(\xi,\theta)  = \frac{1}{n}\sum_{i=1}^n\int \frac{\partial g(x;\theta)}{\partial\theta}\hat{\mathcal{K}}_b\left(\frac{x-W_i}{b}\right)\ee^{\ii x\xi}\,dx,\\
    &G_n^\ast(\xi,\theta)  = \frac{1}{n}\sum_{i=1}^n V_i\int \frac{\partial g(x;\theta)}{\partial\theta}\mathcal{K}_b\left(\frac{x-W_i}{b}\right)\ee^{\ii x\xi}\,dx ,\\
    &\hat{G}_n^\ast(\xi,\theta)  = \frac{1}{n}\sum_{i=1}^n V_i\int \frac{\partial g(x;\theta)}{\partial\theta}\hat{\mathcal{K}}_b\left(\frac{x-W_i}{b}\right)\ee^{\ii x\xi}\,dx.
\end{align*}
At the same time, we define our projection structure using the following notations,
\begin{align*}
    &\mathcal{P}_n(x;\xi,\theta)=\ee^{\ii x\xi}
    -\frac{\partial g(x;\theta)}{\partial\theta^\top}\Delta_n^{-1}(\theta)G_n(\xi,\theta),\\
    &\hat{\mathcal{P}}_n(x;\xi,\theta)=\ee^{\ii x\xi}
    -\frac{\partial g(x;\theta)}{\partial\theta^\top}\hat{\Delta}_n^{-1}(\theta)\hat{G}_n(\xi,\theta).
\end{align*}
Finally, we give the notation for the statistics constructed from projection-based empirical processes, both for known and unknown measurement errors.
\begin{align*}
    &S_{n}^{pro}(\xi,\hat\theta_n)=\frac{1}{n}\sum_{i=1}^n\int\left(Y_i-g(x;\hat{\theta}_n)\right)\mathcal{K}_b\left(\frac{x-W_i}{b}\right)\mathcal{P}_n(x;\xi,\theta)\,dx,\\
    &\hat{S}_{n}^{pro}(\xi,\hat\theta_n)=\frac{1}{n}\sum_{i=1}^n\int\left(Y_i-g(x;\hat{\theta}_n)\right)\hat{\mathcal{K}}_b\left(\frac{x-W_i}{b}\right)\hat{\mathcal{P}}_n(x;\xi,\theta)\,dx .
\end{align*}
Additionally, our statistics can be rewritten as,
\begin{align*}
    &S_{n}^{pro}(\xi,\hat{\theta}_n)  = S_{n}(\xi,\hat{\theta}_n) - M_{n}^\top(\hat{\theta}_n)\Delta^{-1}_n(\hat{\theta}_n)G_n(\xi,\hat{\theta}_n),\\
    &\hat{S}_{n}^{pro}(\xi,\hat{\theta}_n)  = \hat{S}_{n}(\xi,\hat{\theta}_n) - \hat{M}_{n}^\top(\hat{\theta}_n)\hat{\Delta}^{-1}_n(\hat{\theta}_n)\hat{G}_n(\xi,\hat{\theta}_n) .
\end{align*}

\section{Proofs of theorems}\label{sec.AppendixB}

\begin{proof}[Proof of Theorem \ref{theorem.ordinary smooth under H_0}]
    We first prove for the ordinary smooth case under the null hypothesis \eqref{hyp.null1}. First, we can rewrite our statistics as follows, using the notations from Appendix \ref{sec.AppendixA}, 
    \begin{align}\label{Decompose of statistics}
        &S_{n}^{pro}(\xi,\hat{\theta}_n)  = S_{n}(\xi,\hat{\theta}_n) - M_{n}^\top(\hat{\theta}_n)\Delta^{-1}_n(\hat{\theta}_n)G_n(\xi,\hat{\theta}_n).
    \end{align}
    Subsequently, \eqref{Decompose of statistics} along with Lemmas \ref{lemma.projG}--\ref{lemma.proj main term unknown} implies the further decomposition
    \begin{align*}
        \sup_{\xi\in\Pi}\left\vert S_{n}^{pro}(\xi,\hat{\theta}_n)-S_{n}(\xi,\theta_0)\right\vert=o_p\left( \left\vert M_n(\theta_0)\right\vert+\left\vert\kappa_n(\theta_0)\right\vert\right)+O_p\left(\left\vert \hat{\theta}_n-\theta_0\right\vert^2\right).
    \end{align*}
    Henceforth, $|c|$ is used to denote the Euclidean norm of a vector $c$ and $|A|$ is used to denote the Frobenius norm of a matrix $A$. Under Assumption \ref{ass.O}, designating $\partial^2 g(x;\theta)/\partial\theta_j\partial\theta_k$ and $\partial g(x;\theta)/\partial\theta_j$ as $h(x;\theta)$ in Lemma \ref{lemma.convergence of main term known} separately, where $j,k$ value from $1$ to $d$, we obtain $|M_n(\theta_0)| = O_p(n^{-1/2})$ and $|\kappa(\theta_0)| = O_p(n^{-1/2})$ and thus,
    \begin{align}\label{theoremproof.ordinary smooth under H_0 stat decomp}
        &\sup_{\xi\in\Pi}\left\vert S_{n}^{pro}(\xi,\hat{\theta}_n) - S_{n}^{pro}(\xi,\theta_0)\right\vert = o_p\left(n^{-\frac{1}{2}}\right) + O_p\left(\left\vert\hat{\theta}-\theta_0\right\vert^2\right).
    \end{align}
    In a similar manner to the proof of Lemma \ref{lemma.convergence of main term known} and with additional consideration of the null hypothesis,
    \begin{align}\label{theoremproof.ordinary smooth under H_0 main term decomp}
        \sup_{\xi\in\Pi}\left\vert S_{n}^{pro}(\xi,\theta_0) - \frac{1}{n}\sum_{i=1}^n \left\{r^{os}_{\infty}(d_i;\xi,\theta_0) - \mathbb{E}\left[r^{os}_{\infty}(d_i;\xi,\theta_0)\right]\right\}\right\vert = o_p\left(n^{-\frac{1}{2}}\right),\
    \end{align}
    for the ordinary smooth case, where $r^{os}_{\infty}(d_j;\xi,\theta)$ is defined in \eqref{theorem.ordinary smooth under H_0 ros}. Note that Assumption \ref{ass.O} implies $\mathbb{E}[\int_{\Pi}r^{os}_{\infty}(Y,W;\xi,\theta_0)^2d\xi]<\infty$, we obtain 
   \begin{align}\label{theoremproof.ordinary smooth under H_0 conv}
        \frac{1}{\sqrt{n}}\sum_{i=1}^n \left\{r^{os}_{\infty}(d_i;\cdot,\theta_0) - \mathbb{E}\left[r^{os}_{\infty}(d_i;\cdot,\theta_0)\right]\right\} \Longrightarrow S_{\infty}^{os}(\cdot,\theta_0)
    \end{align}
    by Theorem $3.9$ of \cite{chen1998central}. Combining \eqref{theoremproof.ordinary smooth under H_0 stat decomp}, \eqref{theoremproof.ordinary smooth under H_0 main term decomp}, and \eqref{theoremproof.ordinary smooth under H_0 conv}, we have finished our proof.
\end{proof}

\begin{proof}[Proof of Theorem \ref{theorem.supersmooth under H_0}]
    For the supersmooth case, we follow along the same line as Theorem \ref{theorem.ordinary smooth under H_0} except for using Assumption \ref{ass.S} instead of Assumption \ref{ass.O}. Therefore, under the null hypothesis, 
    \begin{align}\label{theoremproof.supersmooth under H_0 main term decomp}
        &\sup_{\xi\in\Pi}\left\vert S_{n}^{pro}(\xi,\theta_0) - \frac{1}{n}\sum_{i=1}^n \left\{r^{ss}_{\infty}(d_i;\xi,\theta_0) - \mathbb{E}\left[r^{ss}_{\infty}(d_i;\xi,\theta_0)\right]\right\}\right\vert= o_p\left(n^{-\frac{1}{2}}\right),
    \end{align}
    where $r^{ss}_{\infty}(d_j;\xi,\theta)$ is mentioned in \eqref{theorem.supersmooth under H_0 rss}. By similar arguments to the proof of Lemma \ref{lemma.convergence of main term known}, 
    \begin{align}\label{theoremproof.supersmooth under H_0 conv}
        \frac{1}{\sqrt{n}}\sum_{i=1}^n \left\{r^{ss}_{\infty}(d_i;\cdot,\theta_0) - \mathbb{E}\left[r^{ss}_{\infty}(d_i;\cdot,\theta_0)\right]\right\} \Longrightarrow S_{\infty}^{ss}(\cdot,\theta_0)
    \end{align}
    follows by $\mathbb{E}[\int_{\Pi}r^{ss}_{\infty}(Y,W;\xi,\theta_0)^2d\xi]<\infty$ which is guaranteed by Assumption \ref{ass.S}. Combining \eqref{theoremproof.ordinary smooth under H_0 stat decomp}, \eqref{theoremproof.supersmooth under H_0 main term decomp}, and \eqref{theoremproof.supersmooth under H_0 conv}, the conclusion holds.
\end{proof}

\begin{proof}[Proof of Theorem \ref{theorem.known under H_1n}]
    Under the local alternative hypothesis, we notice that our variable can be decomposed as $Y_i = Y_{i0} + n^{-1/2}\Delta(X_i)$, and our conclusion under the null hypothesis \eqref{hyp.null1} can be fully applied to $Y_{i0}$. Therefore, we can decompose our statistics as follows,
    \begin{align*}
        &S_{n}^{pro}(\xi,\hat{\theta}_n) = S_{n0}^{pro}(\xi,\hat{\theta}_n) + S_{\Delta,n}^{pro}(\xi,\hat{\theta}_n),
    \end{align*}
    where $S_{n0}^{pro}(\xi,\hat{\theta}_n)$ represents the statistics replacing $Y_{i}$ with $Y_{i0}$, and 
    \begin{align*}
        &\sqrt{n} S_{\Delta,n}^{pro}(\xi,\hat{\theta}_n) = \frac{1}{n}\sum_{i=1}^n\int\Delta(X_i)\mathcal{K}_b\left(\frac{x-W_i}{b}\right)\mathcal{P}_n(x;\xi,\hat{\theta}_n)\,dx\\
        = & S_{\Delta,n}(\xi,\hat{\theta}_n) - M_{\Delta,n}^\top(\hat{\theta}_n)\Delta^{-1}_n(\hat{\theta}_n)G_n(\xi,\hat{\theta}_n).
    \end{align*}
    For the main term, we still obtain our conclusions in Theorems \ref{theorem.ordinary smooth under H_0} and \ref{theorem.supersmooth under H_0}, that is to say,
    \begin{align*}
        &\sqrt n S_{n0}^{pro}(\cdot,\hat\theta_n)\Longrightarrow S_{\infty}^{os}(\cdot,\theta_0),\quad \sqrt n S_{n0}^{pro}(\cdot,\hat\theta_n)\Longrightarrow S_{\infty}^{ss}(\cdot,\theta_0).
    \end{align*}
    For the deterministic shift term, 
    \begin{align*}
        &\sup_{\xi\in\Pi}\left\vert\sqrt n S_{\Delta,n}^{pro}(\xi,\theta_0) - \mu_\Delta(\xi,\theta_0)\right\vert = o_p\left(1\right)
    \end{align*}
    follows by similar arguments to the proof of Lemma \ref{lemma.ULLN of main term} where $\mu_\Delta(\xi,\theta_0)$ is defined as,
    \begin{align*}
        \mu_\Delta(\xi,\theta_0) = \mathbb{E}\left\{\Delta(X)\left[e^{\ii X\xi}-\dot{g}^T(X;\theta_0)\Delta^{-1}(\theta_0)G(\xi,\theta_0)\right]\right\}.
    \end{align*}
    Notably, we use $h(x,\theta) = \Delta(x)$ in $S_{\Delta,n}(\xi,\theta_0)$, while $h(x,\theta) = \Delta(x)[\partial g(x;\theta)/\partial\theta]$ and $\xi=0$ in $M_{\Delta,n}(\theta)$. Thus, we now complete our proof of Theorem \ref{theorem.known under H_1n}.
\end{proof}

\begin{proof}[Proof of Theorem \ref{theorem.known alternative}]
    We follow the same line as proof of Theorem \ref{theorem.known under H_1n} to get consistency. Notice that $Y_i$ can be decomposed as $Y_i = Y_{i0} + m(X_i) - g(X_i;\theta^\ast)$, and our conclusion under the null hypothesis \eqref{hyp.null1} can be applied to $Y_{i0}$, we can decompose our statistics as follows,
    \begin{align*}
        &S_{n}^{pro}(\xi,\hat{\theta}_n) = S_{n0}^{pro}(\xi,\hat{\theta}_n) + S_{n1}^{pro}(\xi,\hat{\theta}_n),
    \end{align*}
    where $S_{n0}^{pro}(\xi,\hat{\theta}_n)$ is defined in Theorem \ref{theorem.known under H_1n}, and
    \begin{align*}
        &S_{n1}^{pro}(\xi,\hat{\theta}_n) = \frac{1}{n}\sum_{i=1}^n\int \left(m(X_i) - g(X_i;\theta^\ast)\right)\mathcal{K}_b\left(\frac{x-W_i}{b}\right)\mathcal{P}_n(x;\hat{\theta}_n,\xi)\,dx.
    \end{align*}
    By similar arguments to the proof of Theorem \ref{theorem.known under H_1n},
    \begin{align*}
        &\sup_{\xi\in\Pi}\left\vert S_{n1}^{pro}(\xi,\theta^\ast) - C(\xi,\theta^\ast)\right\vert = o_p(1),
    \end{align*}
    where
    \begin{align*}
        &C(\xi,\theta^\ast) = \mathbb{E}\left\{\left[m(X)-g(X;\theta^\ast)\right]\left[e^{\ii X\xi}-\frac{\partial g(X;\theta^\ast)}{\partial\theta^\top}\Delta^{-1}(\theta^\ast)G(\xi,\theta^\ast)\right]\right\}.
    \end{align*}
\end{proof}

\begin{proof}[Proof of Theorem \ref{theorem.unknown ordinary smooth under H_0}]
    For unknown measurement error under the null hypothesis, we can still rewrite,
    \begin{align}\label{Decompose of statistics unknown}
        &\hat{S}_{n}^{pro}(\xi,\hat{\theta}_n)  = \hat{S}_{n}(\xi,\hat{\theta}_n) - \hat{M}_{n}^\top(\hat{\theta}_n)\hat{\Delta}^{-1}_n(\hat{\theta}_n)\hat{G}_n(\xi,\hat{\theta}_n).
    \end{align}
    By similar arguments to the proof of Theorem \ref{theorem.ordinary smooth under H_0}, 
    \begin{align*}
        \sup_{\xi\in\Pi}\left\vert S_{n}^{pro}(\xi,\hat{\theta}_n)-S_{n}(\xi,\theta_0)\right\vert=o_p\left( \left\vert\hat{M}_n(\theta_0)\right\vert+\left\vert\hat{\kappa}_n(\theta_0)\right\vert\right)+O_p\left(\left\vert \hat{\theta}_n-\theta_0\right\vert^2\right)
    \end{align*}
    follows by Lemma \ref{lemma.projG}--\ref{lemma.proj main term unknown} and therefore,
    \begin{align}\label{theorem.unknown ordinary smooth under H_0 main term}
        &\sup_{\xi\in\Pi}\left\vert \hat{S}_{n}^{pro}(\xi,\hat{\theta}_n) - \hat{S}_{n}^{pro}(\xi,\theta_0)\right\vert = o_p\left(n^{-\frac{1}{2}}\right) + O_p\left(\left\vert \hat{\theta}_n-\theta_0\right\vert^2\right).
    \end{align}
    Subsequently, under the null hypothesis, following the proof of Lemma \ref{lemma.convergence of main term unknown}, 
    \begin{align}\label{theorem.unknown ordinary smooth under H_0 main term decomp}
        &\sup_{\xi\in\Pi}\left\vert \hat{S}_{n}^{pro}(\xi,\theta_0)-\frac{1}{n}\sum_{i=1}^n \left\{\hat{r}^{os}_{\infty}(d_i;\xi,\theta_0) - \mathbb{E}\left[\hat{r}^{os}_{\infty}(d_i;\xi,\theta_0)\right]\right\}\right\vert=o_p\left(n^{-\frac{1}{2}}\right).
    \end{align}
    where $\hat{r}^{os}_{\infty}(d_i;\xi,\theta)$ is defined by \eqref{theorem.unknown ordinary smooth under H_0 roshat} and \eqref{theorem.unknown ordinary smooth under H_0 rplus}. With Assumption \ref{ass.O'} which implies $\mathbb{E}[\int_{\Pi}\hat{r}^{os}_{\infty}(Y,W;\xi,\theta_0)^2d\xi]<\infty$, 
    \begin{align}\label{theoremproof.unknown ordinary smooth under H_0 conv}
        \frac{1}{\sqrt{n}}\sum_{i=1}^n \left\{\hat{r}^{os}_{\infty}(d_i;\cdot,\theta_0) - \mathbb{E}\left[\hat{r}^{os}_{\infty}(d_i;\cdot,\theta_0)\right]\right\} \Longrightarrow \hat{S}_{\infty}^{os}(\cdot,\theta_0)
    \end{align}
    holds by Theorem $3.9$ of \cite{chen1998central}. Combining \eqref{theorem.unknown ordinary smooth under H_0 main term}, \eqref{theorem.unknown ordinary smooth under H_0 main term decomp} and \eqref{theoremproof.unknown ordinary smooth under H_0 conv}, we finish our proof of \eqref{result of main term unknown ass O H0}.
\end{proof}

\begin{proof}[Proof of Theorem \ref{theorem.unknown supersmooth under H_0}]
    Now we consider the supersmooth case under the null hypothesis \eqref{hyp.null1}. By similar arguments to the proof of Theorem \ref{theorem.supersmooth under H_0}, we follow the same line as Theorem \ref{theorem.unknown ordinary smooth under H_0} except for using Assumption \ref{ass.S'} instead of Assumption \ref{ass.O'}. Consequently,
    \begin{align}\label{theorem.unknown supersmooth under H_0 main term decomp}
        &\sup_{\xi\in\Pi}\left\vert \hat{S}_{n}^{pro}(\xi,\theta_0)-\frac{1}{n}\sum_{i=1}^n \left\{\hat{r}^{ss}_{\infty}(d_i;\xi,\theta_0) - \mathbb{E}\left[\hat{r}^{ss}_{\infty}(d_i;\xi,\theta_0)\right]\right\}\right\vert=o_p\left(n^{-\frac{1}{2}}\right),
    \end{align}
    where $\hat{r}^{ss}_{\infty}(d_i;\xi,\theta)$ is defined by \eqref{theorem.unknown supersmooth under H_0 rss}. Meanwhile,
    \begin{align}\label{theoremproof.unknown supersmooth under H_0 conv}
        \frac{1}{\sqrt{n}}\sum_{i=1}^n \left\{\hat{r}^{ss}_{\infty}(d_i;\cdot,\theta_0) - \mathbb{E}\left[\hat{r}^{ss}_{\infty}(d_i;\cdot,\theta_0)\right]\right\} \Longrightarrow \hat{S}_{\infty}^{ss}(\cdot,\theta_0)
    \end{align}
    holds by $\mathbb{E}[\int_{\Pi}\hat{r}^{ss}_{\infty}(Y,W;\xi,\theta_0)^2d\xi]<\infty$ which is guaranteed by Assumption \ref{ass.S'}. Thus, \eqref{result of main term unknown ass S H0} follows by combining \eqref{theorem.unknown ordinary smooth under H_0 main term}, \eqref{theorem.unknown supersmooth under H_0 main term decomp} and \eqref{theoremproof.unknown supersmooth under H_0 conv}.
\end{proof}

\begin{proof}[Proof of Theorem \ref{theorem.unknown under H_1n}]
    Similar to proof of \eqref{theorem.known under H_1n}, we decompose $Y_i$ as $Y_{i0} + n^{-1/2}\Delta(X_i)$, and our conclusion under the null hypothesis can be applied to $Y_{i0}$. Therefore, we can decompose,
    \begin{align*}
        &\hat{S}_{n}^{pro}(\xi,\hat{\theta}_n) = \hat{S}_{n0}^{pro}(\xi,\hat{\theta}_n) + \hat{S}_{\Delta,n}^{pro}(\xi,\hat{\theta}_n),
    \end{align*}
    where $\hat{S}_{n0}^{pro}(\xi,\hat{\theta}_n)$ represents the statistics replacing $Y_{i}$ with $Y_{i0}$ and,
    \begin{align*}
        &\sqrt{n} \hat{S}_{\Delta,n}^{pro}(\xi,\hat{\theta}_n) = \frac{1}{n}\sum_{i=1}^n\int\Delta(X_i)\hat{\mathcal{K}}_b\left(\frac{x-W_i}{b}\right)\hat{\mathcal{P}}_n(x;\xi,\hat{\theta}_n)\,dx\\
        = & \hat{S}_{\Delta,n}(\xi,\hat{\theta}_n) - \hat{M}_{\Delta,n}^\top(\hat{\theta}_n)\hat{\Delta}^{-1}_n(\hat{\theta}_n)\hat{G}_n(\xi,\hat{\theta}_n).
    \end{align*}
    Thus, 
    \begin{align*}
        &\sup_{\xi\in\Pi}\left\vert \sqrt n \hat{S}_{\Delta,n}^{pro}(\xi,\theta_0) - \mu_\Delta(\xi,\theta_0)\right\vert = o_p\left(1\right)
    \end{align*}
    follows by using $h(x,\theta) = \Delta(x)$ for $\hat{S}_{\Delta,n}(\cdot,\theta)$, while $h(x,\theta) = \Delta(x)[\partial g(x;\theta)/\partial\theta]$ and $\xi=0$ for $M_{\Delta,n}(\theta)$ in Lemma \ref{lemma.ULLN of main term unknown}, where $\theta$ values between $\theta_0$ and $\hat{\theta}_n$. Together with our conclusions in Theorems \ref{theorem.ordinary smooth under H_0} and \ref{theorem.supersmooth under H_0}, conclusions in Theorem \ref{theorem.unknown under H_1n} hold.
\end{proof}

\begin{proof}[Proof of Theorem \ref{theorem.unknown alternative}]
    By similar arguments to the proof of Theorem \ref{theorem.known alternative}, we decompose
    \begin{align*}
        &\hat{S}_{n}^{pro}(\xi,\hat{\theta}_n) = \hat{S}_{n0}^{pro}(\xi,\hat{\theta}_n) + \hat{S}_{n1}^{pro}(\xi,\hat{\theta}_n),
    \end{align*}
    where $\hat{S}_{n0}^{pro}(\xi,\hat{\theta}_n)$ is defined in Theorem \ref{theorem.unknown under H_1n}, and
    \begin{align*}
        &\hat{S}_{n1}^{pro}(\xi,\hat{\theta}_n) = \frac{1}{n}\sum_{i=1}^n\int \left(m(X_i) - g(X_i;\theta^\ast)\right)\hat{\mathcal{K}}_b\left(\frac{x-W_i}{b}\right)\hat{\mathcal{P}}_n(x;\xi,\hat{\theta}_n)\,dx.
    \end{align*}
    With Lemma \ref{lemma.ULLN of main term unknown}, following the same line as the proof of Theorem \ref{theorem.unknown under H_1n}, our conclusions hold.
\end{proof}

\begin{proof}[Proof of Theorem \ref{theorem.boot known}]
    Under the null hypothesis, we start by decomposing
    \begin{align}\label{Decompose of boot}
        &S_{n}^{pro,\ast}(\xi,\hat{\theta}_n)  = S_{n}^\ast(\xi,\hat{\theta}_n) - M_{n}^{\ast\top}(\hat{\theta}_n)\Delta^{-1}_n(\hat{\theta}_n)G_n(\xi,\hat{\theta}_n).
    \end{align}
    With Lemma \ref{lemma.projG}--\ref{lemma.proj main term unknown}, we obtain
    \begin{align*}
        &\sup_{\xi\in\Pi}\left\vert S_{n}^{pro,\ast}(\xi,\hat{\theta}_n)-S^{pro,\ast}_{n}(\xi,\theta_0)\right\vert\\
        =&o_p\left(\left\vert M_n^\ast(\theta_0)\right\vert+\left\vert\kappa^\ast(\theta_0)\right\vert+\left\vert\Delta_n^\ast(\theta_0)\right\vert\right)+o_p\left(\sup_{\xi\in\Pi}\left\vert G_n(\xi,\theta_0)\right\vert\right)+ O_p\left(\left\vert \hat{\theta}_n-\theta_0\right\vert^2\right)\\
        =&O_p\left(\left\vert \hat{\theta}_n-\theta_0\right\vert^2\right)+o_p\left(n^{-\frac{1}{2}}\right),
    \end{align*}
    where the last equation follows from Lemma \ref{lemma.convergence of main term known}. Consequently,
    \begin{align*}
        &\sup_{\xi\in\Pi}\left\vert S_{n}^{pro,\ast}(\xi,\hat{\theta}_n) - S_{n}^{pro,\ast}(\xi,\theta_0)\right\vert = o_p\left(n^{-\frac{1}{2}}\right) + O_p\left(\left\vert \hat{\theta}_n-\theta_0\right\vert^2\right).
    \end{align*}
    It is worth noting that results for the statistics mentioned in Theorem \ref{theorem.ordinary smooth under H_0} and \ref{theorem.supersmooth under H_0} still hold for the bootstrap versions because of the unit variance of the multiplier $V_i$. Consequently, $\sqrt n S_{n}^{pro,\ast}(\cdot,\hat\theta_n)\overset{\ast}{\Longrightarrow} S_{\infty}^{os}(\cdot,\theta_0)$ holds for the ordinary smooth case and $\sqrt n S_{n}^{pro,\ast}(\cdot,\hat\theta_n)\overset{\ast}{\Longrightarrow} S_{\infty}^{ss}(\cdot,\theta_0)$ for the supersmooth case and thus \eqref{result of main term ass O boot} and \eqref{result of main term ass S boot} hold. 

    Under the alternative hypothesis, by similar arguments to our proof of Theorem \ref{theorem.known under H_1n}, we obtain the following decomposition,
    \begin{align*}
        &S_{n}^{pro,\ast}(\xi,\hat{\theta}_n) = S_{n0}^{pro,\ast}(\xi,\hat{\theta}_n) + S_{\Delta,n}^{pro,\ast}(\xi,\hat{\theta}_n),
    \end{align*}
    where $Y_i=Y_{i0} + n^{-1/2}\Delta(X_i)$, $S_{n0}^{pro,\ast}(\xi,\hat{\theta}_n)$ represents $S_{n}^{pro,\ast}(\xi,\hat{\theta}_n)$ replacing $Y_{i}$ with $Y_{i0}$ and,
    \begin{align*}
        &\sqrt{n} S_{\Delta,n}^{pro,\ast}(\xi,\hat{\theta}_n) = \frac{1}{n}\sum_{i=1}^nV_i\int\Delta(X_i)\mathcal{K}_b\left(\frac{x-W_i}{b}\right)\mathcal{P}_n(x;\xi,\hat{\theta}_n)\,dx.
    \end{align*}
    In contrast to the earlier conclusion, deterministic shift term $S_{\Delta,n}^{pro,\ast}(\cdot,\hat{\theta}_n)$ is asymptotically negligible with Lemma \ref{lemma.ULLN of main term}. Consequently, the bootstrap version of the test statistics converges to the same limiting process under both $H_0$ and $H_{1n}$, which naturally leads to the conclusion.
\end{proof}

\begin{proof}[Proof of Theorem \ref{theorem.boot unknown}]
    By similar arguments to the proof of Theorem \ref{theorem.boot known},
    \begin{align*}
        &\hat{S}_{n}^{pro,\ast}(\xi,\hat{\theta}_n)  = \hat{S}_{n}^\ast(\xi,\hat{\theta}_n) - \hat{M}_{n}^{\ast\top}(\hat{\theta}_n)\hat{\Delta}^{-1}_n(\hat{\theta}_n)\hat{G}_n(\xi,\hat{\theta}_n).
    \end{align*}
    Subsequently, we analyze $\hat{S}_{n}^\ast(\cdot,\hat{\theta}_n)$, $\hat{M}_{n}^{\ast}(\hat{\theta}_n)$, $\hat{\Delta}^{-1}_n(\hat{\theta}_n)$ and $\hat{G}_n(\cdot,\hat{\theta}_n)$ using Lemmas \ref{lemma.projG}--\ref{lemma.proj main term unknown}. Together with $|\hat{M}_n^\ast(\theta_0)| = O_p(n^{-1/2})$, $|\hat{\kappa}^\ast(\theta_0)| = O_p(n^{-1/2})$, $\sup_{\xi\in\Pi}|\hat{G}_n^\ast(\xi,\theta_0)| = O_p(n^{-1/2})$, and $|\hat{\Delta}_n^\ast(\theta_0)| = O_p(n^{-1/2})$ implied by Lemma \ref{lemma.convergence of main term unknown}, we obtain
    \begin{align*}
        &\sup_{\xi\in\Pi}\left\vert \hat{S}_{n}^{pro,\ast}(\xi,\hat{\theta}_n) - \hat{S}_{n}^{pro,\ast}(\xi,\theta_0)\right\vert = o_p\left(n^{-\frac{1}{2}}\right) + O_p\left(\left\vert \hat{\theta}_n-\theta_0\right\vert^2\right).
    \end{align*}
    Given the unit variance property of $V_i$, $\sqrt n \hat{S}_{n}^{pro,\ast}(\cdot,\hat\theta_n)\overset{\ast}{\Longrightarrow}\hat{S}_{\infty}^{os}(\cdot,\theta_0)$ holds for the ordinary smooth case and $\sqrt n \hat{S}_{n}^{pro,\ast}(\cdot,\hat\theta_n)\overset{\ast}{\Longrightarrow} \hat{S}_{\infty}^{ss}(\cdot,\theta_0)$ for the supersmooth case. Under the local alternative hypothesis, we decompose the statistics as done in the proof of Theorem \ref{theorem.boot known} and follow the same lines. Thus, the validity of the bootstrap procedure can still be established.
\end{proof}

\section{Lemmas and proofs}\label{sec.AppendixC}

\begin{lemma}\label{lemma.convergence of main term known}
    Suppose that Assumption \ref{ass.D} holds, together with either Assumption \ref{ass.O} for the ordinary smooth case or Assumption \ref{ass.S} for the supersmooth case,
    \begin{align}\label{lemma.convergence of main term known eq1}
        &\sup_{\xi\in\Pi}\left\vert
        \begin{aligned}
            \frac{1}{n}\sum_{i=1}^n \int \left(Y_i-g(x;\theta_0)\right)h(x;\theta_0)\mathcal{K}_b\left(\frac{x-W_i}{b}\right)\ee^{\ii x\xi}\,dx\\
            -\int \mathbb{E}\left[\left(Y-g(X;\theta_0)\right)K_b\left(\frac{x-X}{b}\right)\right]h(x;\theta_0)\ee^{\ii x\xi}\,dx
        \end{aligned}
        \right\vert=O_p\left(n^{-\frac{1}{2}}\right).
    \end{align}
    For the bootstrap version,
    \begin{align}\label{lemma.convergence of main term known eq2}
        &\sup_{\xi\in\Pi}\left\vert\frac{1}{n}\sum\limits_{i=1}^n V_i\int \left(Y_i-g(x;\theta_0)\right)h(x;\theta_0)\mathcal{K}_b\left(\frac{x-W_i}{b}\right)\ee^{\ii x\xi}\,dx\right\vert = O_p\left(n^{-\frac{1}{2}}\right).
    \end{align}
\end{lemma}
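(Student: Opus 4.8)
The plan is to view the left-hand side of \eqref{lemma.convergence of main term known eq1} as a centered sample average of i.i.d.\ summands and to bound it by an empirical-process maximal inequality; the decisive step is a uniform-in-$\xi$ bound on the variance of the summands, showing that the deconvolution smoothing contributes no extra variance at first order. \emph{Step 1 (reduction to a centered average).} Write $Z_i(\xi)=\int(Y_i-g(x;\theta_0))h(x;\theta_0)\mathcal{K}_b((x-W_i)/b)\ee^{\ii x\xi}\,dx$. The defining property of the deconvolution kernel is that convolving $a\mapsto\mathcal{K}_b(a/b)$ with $f_\epsilon$ reproduces $a\mapsto K_b(a/b)$; concretely $\int\mathcal{K}_b((x-z-e)/b)f_\epsilon(e)\,de=K_b((x-z)/b)$, which follows from the definition \eqref{conv} together with $\int\ee^{\ii te/b}f_\epsilon(e)\,de=f_\epsilon^{ft}(t/b)$. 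Combined with $\epsilon\ci(Y,X)$ from Assumption \ref{ass.D}(iii) and Fubini, this gives $\mathbb{E}[Z_i(\xi)]=\int\mathbb{E}[(Y-g(X;\theta_0))K_b((x-X)/b)]h(x;\theta_0)\ee^{\ii x\xi}\,dx$, so the quantity in \eqref{lemma.convergence of main term known eq1} is exactly $\sup_{\xi\in\Pi}\bigl|n^{-1}\sum_{i=1}^n(Z_i(\xi)-\mathbb{E}Z_i(\xi))\bigr|$.

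\emph{Step 2 (the variance bound, the crux).} After the substitution $t=bs$, \eqref{conv} gives $\mathcal{K}_b((x-W_i)/b)=\frac{1}{2\pi}\int\ee^{-\ii s(x-W_i)}K^{ft}(bs)/f_\epsilon^{ft}(s)\,ds$, hence by Fubini $Z_i(\xi)=\frac{1}{2\pi}\int \frac{K^{ft}(bs)}{f_\epsilon^{ft}(s)}\ee^{\ii sW_i}[Y_ih^{ft}(\xi-s)-(gh)^{ft}(\xi-s)]\,ds$. In the ordinary smooth case, substitute the polynomial form of $1/f_\epsilon^{ft}$ from Assumption \ref{ass.O}(ii), re-expand each monomial $s^l$ about $s=\xi$ (which produces the coefficients $c_l^{os}(\xi)$), and let $b\to0$ so that $K^{ft}(bs)\to K^{ft}(0)=1$: the leading term is precisely $r^{os}_{h,\infty}(Y_i,W_i;\xi,\theta_0)$ of Assumption \ref{ass.O}(v). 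The remainder equals $\frac{1}{2\pi}\int(K^{ft}(bs)-1)\frac{1}{f_\epsilon^{ft}(s)}[Y_ih^{ft}(\xi-s)-(gh)^{ft}(\xi-s)]\,ds$, and I bound its second moment using (a) the Fourier-tail decay $|h^{ft}(u)|,|(gh)^{ft}(u)|\lesssim(1+|u|)^{-(p+1)}$, which follows from the $p$-fold differentiability with integrable derivatives together with the Lipschitz conditions (with integrable Lipschitz functions) on the $p$-th derivatives in Assumption \ref{ass.O}(i); (b) the compact support of $K^{ft}$ from Assumption \ref{ass.O}(iii), which confines the integral to $|s|\lesssim b^{-1}$ and, via the vanishing-moment conditions, gives $|K^{ft}(bs)-1|\lesssim(b|s|)^p$ near the origin; and (c) $p>\alpha$. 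Splitting the integral at $|s|\sim b^{-1}$ yields $\mathbb{E}|Z_i(\xi)-r^{os}_{h,\infty}(Y_i,W_i;\xi,\theta_0)|^2\lesssim b^{2(p-\alpha)}\,\mathbb{E}(1+|Y|)^2=o(1)$, uniformly in $\xi\in\Pi$ since all bounds are translation-invariant in $\xi$. With $\mathbb{E}[\sup_{\xi\in\Pi}|r^{os}_{h,\infty}|^2]<\infty$ from Assumption \ref{ass.O}(v), we conclude $\sup_{\xi\in\Pi}\mathrm{Var}(Z_i(\xi))=O(1)$.

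\emph{Step 3 (uniform rate, bootstrap, supersmooth case).} Differentiating under the integral and using the same tail decay, $\xi\mapsto Z_i(\xi)$ is Lipschitz on the compact set $\Pi$ with an envelope of finite second moment, so the class $\{(y,w)\mapsto Z(\xi):\xi\in\Pi\}$ has polynomial bracketing numbers and is Donsker; a standard maximal inequality (e.g.\ \cite{van1996weak}) gives $\mathbb{E}\sup_{\xi\in\Pi}\bigl|n^{-1}\sum_i(Z_i(\xi)-\mathbb{E}Z_i(\xi))\bigr|=O(n^{-1/2})$, which is \eqref{lemma.convergence of main term known eq1}. For \eqref{lemma.convergence of main term known eq2}, decompose $n^{-1}\sum_iV_iZ_i(\xi)=n^{-1}\sum_iV_i(Z_i(\xi)-\mathbb{E}Z_i(\xi))+\bigl(n^{-1}\sum_iV_i\bigr)\mathbb{E}Z_i(\xi)$: the second term is $O_p(n^{-1/2})$ uniformly since $\sup_\xi|\mathbb{E}Z_i(\xi)|<\infty$ and $n^{-1}\sum_iV_i=O_p(n^{-1/2})$, while the first is a multiplier empirical process over the Donsker class above, hence $O_p(n^{-1/2})$ uniformly by the conditional multiplier inequality (Section 2.9 of \cite{van1996weak}), using $\mathbb{E}V_i=0$, $\mathbb{E}V_i^2=1$, and $V_i\ci$ data. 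The supersmooth case is entirely analogous with Assumption \ref{ass.S} replacing Assumption \ref{ass.O}: the polynomial expansion of $1/f_\epsilon^{ft}$ is replaced by the power series $\ee^{\mu s^2}=\sum_j(\mu^j/j!)s^{2j}$, whose re-expansion about $s=\xi$ yields the coefficients $c_l^{ss}(\xi)$ and the leading term $r^{ss}_{h,\infty}$ of Assumption \ref{ass.S}(v); the infinite-order kernel of Assumption \ref{ass.S}(iii) makes $|K^{ft}(bs)-1|$ vanish faster than any power of $b|s|$, the infinite smoothness in Assumption \ref{ass.S}(i) controls the tails, and Assumption \ref{ass.S}(v) supplies the finite limiting variance, so the $L^2$ remainder is again negligible uniformly in $\xi$.

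The main obstacle is the uniform-in-$\xi$ $L^2$ approximation $Z_i(\xi)\approx r^{os}_{h,\infty}$ (resp.\ $r^{ss}_{h,\infty}$), i.e.\ showing that the deconvolution step adds no first-order variance: this requires carefully balancing the polynomial (resp.\ super-polynomial) growth of $1/f_\epsilon^{ft}$ over the effective range $|s|\lesssim b^{-1}$ against the Fourier-tail decay of $h^{ft}$ and $(gh)^{ft}$ guaranteed by the (Lipschitz) smoothness assumptions, and doing so uniformly over $\xi\in\Pi$. In the supersmooth case the additional bookkeeping needed to make sense of and bound the infinite series defining $c_l^{ss}(\xi)$ is the delicate point, which is exactly why Assumptions \ref{ass.O}(v) and \ref{ass.S}(v) are imposed as primitive conditions on the limiting objects.
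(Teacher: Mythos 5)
Your Step 1 contains a genuine gap: the claimed identity $\mathbb{E}[Z_i(\xi)]=\int\mathbb{E}[(Y-g(X;\theta_0))K_b((x-X)/b)]h(x;\theta_0)\ee^{\ii x\xi}\,dx$ is false as stated. The deconvolution property $\int\mathcal{K}_b((x-z-e)/b)f_\epsilon(e)\,de=K_b((x-z)/b)$ indeed gives
$\mathbb{E}[Z_i(\xi)]=\int\mathbb{E}\left[\left(Y-g(x;\theta_0)\right)K_b\left(\frac{x-X}{b}\right)\right]h(x;\theta_0)\ee^{\ii x\xi}\,dx$,
with $g$ evaluated at the dummy variable $x$, not at $X$. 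The difference from the lemma's centering term is the kernel-smoothing bias $\int\mathbb{E}[(g(X;\theta_0)-g(x;\theta_0))K_b((x-X)/b)]h(x;\theta_0)\ee^{\ii x\xi}\,dx$, which is not zero in general: in the ordinary smooth case it is of exact order $b^p$ (this is where the paper Taylor-expands $J_{1,n}$ and $J_{2,n}$, uses the Lipschitz conditions of Assumption \ref{ass.O}(i) and the order-$p$ kernel, and then invokes the undersmoothing condition $nb^{2p}\to 0$ of Assumption \ref{ass.O}(iv) to make it $o(n^{-1/2})$); in the supersmooth case it vanishes identically because all moments of the infinite-order kernel vanish. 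Your proof never addresses this term and never uses Assumption \ref{ass.O}(iv) for it, so as written the reduction "the quantity in \eqref{lemma.convergence of main term known eq1} is exactly a centered average" is incorrect, even though the final conclusion survives once the bias is bounded.

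Apart from this, your treatment of the stochastic part is a legitimate Fourier-domain variant of the paper's argument: the paper changes variables $x=W_i+bx$, Taylor-expands $(Y_i-g)h$ around $W_i$ and uses the exact identities for $\int x^l\mathcal{K}_\epsilon(x)\ee^{\ii bx\xi}\,dx$ (its Lemma \ref{lemma.power of decon}) to isolate the same leading term $r^{os}_{h,\infty}$ (resp.\ $r^{ss}_{h,\infty}$), then applies a functional CLT under Assumption \ref{ass.O}(v)/\ref{ass.S}(v), whereas you expand $1/f_\epsilon^{ft}$ directly in the frequency domain and invoke a Donsker/maximal-inequality argument for the whole process $Z_i(\cdot)$; the two routes are essentially equivalent in content. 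Two smaller points to tighten: your Lipschitz-in-$\xi$/bracketing claim implicitly needs integrability of $xh(x)$ (or should instead be run through the decomposition into the leading term plus the $O(b^{p-\alpha})$ remainder), and in the supersmooth case the interchange of the series $\sum_j\mu^js^{2j}/j!$ with the $s$-integral deserves the justification the paper supplies via its exact moment identities and Assumption \ref{ass.S}(v).
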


\begin{proof}[Proof of Lemma \ref{lemma.convergence of main term known}]
    Define $d_i = (Y_i,W_i)$ for $i=1,2,\cdots,n$. We first consider the ordinary smooth case and show the properties of the bias term as follows,
    \begin{align*}
        &\mathbb{E}\left[\frac{1}{n}\sum_{i=1}^n \int \left(Y_i-g(x;\theta_0)\right)h(x;\theta_0)\mathcal{K}_b\left(\frac{x-W_i}{b}\right)\ee^{\ii x\xi}\,dx\right]\\
        = &\mathbb{E}\left[\int \left(Y-g(X;\theta_0)\right)h(x;\theta_0)\mathcal{K}_b\left(\frac{x-W}{b}\right)\ee^{\ii x\xi}\,dx\right]\\
        & + \int h(x;\theta_0)\left[J_{1,n}(x)-g(x;\theta_0)J_{2,n}(x)\right]\ee^{\ii x\xi}\,dx,
    \end{align*}
    where,
    \begin{align}\label{Def of exp}
    &J_{1,n}(x) = \mathbb{E}\left[g(X;\theta_0)\mathcal{K}_b\left(\frac{x-W}{b}\right)\right],\,\,\,  J_{2,n}(x) = \mathbb{E}\left[\mathcal{K}_b\left(\frac{x-W}{b}\right)\right].
    \end{align}
    Under Assumption \ref{ass.O}, decomposing $J_{1,n}$ as
    \begin{align*}
        J_{1,n}(x) =& \frac{1}{b}\mathbb{E}\left[g(X;\theta_0)K\left(\frac{x-X}{b}\right)\right]=\int \frac{1}{b}g(y;\theta_0)K\left(\frac{x-y}{b}\right)f_X(y)\,dy\\
        =& \int K(y)g(x-by;\theta_0)f_X(x-by)\,dy\\
        =& g(x;\theta_0)f_X(x) + \frac{b^p}{p!}\int \left[gf_X\right]^{(p)}(\tilde{y})K(y)y^p\,dy = g(x;\theta_0)f_X(x)+b^p\Delta_1(x),
    \end{align*}
    where $\tilde{y}\in [x-by,x]$ and $[gf_X]^{(p)}$ represents the $p$-times derivative of $g(x,\theta_0)f_X(x)$ with respect to $x$. $\Delta_1(x)$ is a new notation we define here to represent the remainder. By similar arguments,
    \begin{align}\label{lemmaproof.convergence of main term known J2n}
        &J_{2,n}(x) = f_X(x)+b^p\Delta_2(x).
    \end{align}
    Define $\mu_K^p = \int k(u)\left\vert u\right\vert^p du$ and with the Lipschitz continuity mentioned in Assumption \ref{ass.O}. We have,
    \begin{align*}
        &\vert \Delta_1(x)\vert \leq \frac{1}{p!}\left\{ \left[gf_X\right]^{(p)}(x)\mu_K^p+ bL_{[gf_X]^{(p)}}(x)\mu_K^{p+1}\right\},\\
        &\vert \Delta_2(x)\vert \leq \frac{1}{p!}\left\{ f_X^{(p)}(x)\mu_K^p+ bL_{f_X^{(p)}}(x)\mu_K^{p+1}\right\}.
    \end{align*}
    Thus, with the integral properties and the undersmooth assumption in Assumption \ref{ass.O},
    \begin{align}\label{lemmaproof.convergence of main term known bias eq1}
        &\sup_{\xi\in\Pi}\left\vert\mathbb{E}\left[\int \left(g(X;\theta_0)-g(x;\theta_0)\right)h(x;\theta_0)\mathcal{K}_b\left(\frac{x-W}{b}\right)\ee^{\ii x\xi}\,dx\right]\right\vert = O\left(b^p\right) = o\left(n^{-\frac{1}{2}}\right).
    \end{align}
    Hence,
    \begin{align}\label{Cov of Mean under Ass O}
        &\sup_{\xi\in\Pi}\left\vert
        \begin{aligned}
            \mathbb{E}\left[\int \left(Y-g(x;\theta_0)\right)h(x;\theta_0)\mathcal{K}_b\left(\frac{x-W}{b}\right)\ee^{\ii x\xi}\,dx\right]\\
            -\int \mathbb{E}\left[\left(Y-g(X;\theta_0)\right)K_b\left(\frac{x-X}{b}\right)\right]h(x;\theta_0)\ee^{\ii x\xi}\,dx
        \end{aligned}
        \right\vert=o_p\left(n^{-\frac{1}{2}}\right).
    \end{align}
    For the supersmooth case, we notice
    \begin{align*}
        &J_{2,n}(x) = \mathbb{E}\left[\mathcal{K}_b\left(\frac{x-W}{b}\right)\right] =  \frac{1}{b}\mathbb{E}\left[K\left(\frac{x-X}{b}\right)\right]\\
        =& \int \frac{1}{b}K\left(\frac{x-y}{b}\right)f_X(y)\,dy = \int K(y)f_X(x-by)\,dy = f_X(x),
    \end{align*}
    and by the same line, $J_{1,n}(x) = g(x;\theta_0)f_X(x)$. Consequently, \eqref{Cov of Mean under Ass O} still follows by
    \begin{align*}
        &\mathbb{E}\left[\int \left(g(X;\theta_0)-g(x;\theta_0)\right)h(x;\theta_0)\mathcal{K}_b\left(\frac{x-W}{b}\right)e^{ix\xi}\,dx\right] = 0.
    \end{align*}

    For the main term of our statistics, under Assumption \ref{ass.O} for the ordinary smooth case where we have the $p$-th order differentiable $g,f$ and $h$,
    \begin{align}\label{lemmaproof.convergence of main term known decompose of statistics}
        &\frac{1}{n}\sum_{i=1}^n\int \left(Y_i-g(x;\theta_0)\right)h(x;\theta_0)\mathcal{K}_b\left(\frac{x-W_i}{b}\right)\ee^{\ii x\xi}\,dx \notag\\
        = & \frac{1}{n}\sum_{i=1}^n \ee^{\ii W_i\xi}\int \left(Y_i-g(W_i+bx;\theta_0)\right)h(W_i+bx;\theta_0)\mathcal{K}_\epsilon(x)\ee^{\ii bx\xi}\,dx\notag\\
        = & \frac{1}{n}\sum_{i=1}^n r_{1,n}(d_i; \xi) + \frac{1}{n}\sum_{i=1}^n t_{1,n}(d_i; \xi), 
    \end{align}
    where
    \begin{align*}
        &r_{1,n}(d_i; \xi) = \ee^{\ii W_i\xi}\sum_{l=0}^{p-1}\left[\left(Y_i-g(W_i;\theta_0)\right)h(W_i;\theta_0)\right]^{(l)}\frac{b^l}{l!}\int x^l e^{\ii b\xi x}\mathcal{K}_\epsilon(x)\,dx,\\
        &t_{1,n}(d_i; \xi) =  \ee^{\ii W_i\xi}\frac{b^p}{p!}\int \left[\left(Y_i-g(\tilde{W}_i;\theta_0)\right)h(\tilde{W}_i;\theta_0)\right]^{(p)}x^p \ee^{\ii bx \xi}\mathcal{K}_\epsilon(x)\,dx,
    \end{align*}
    and $\tilde{W}_i$ values between $W_i$ and $W_i+bx$. We show that the variance of residual term $t_{1,n}$ is negligible, i.e.,
    \begin{align}\label{lemmaproof.convergence of main term known negligible residual}
        &\sup_{\xi \in \Pi}\left\vert \frac{1}{n}\sum_{i=1}^n \left\{ t_{1,n}(d_i;\xi) - \mathbb{E}\left[t_{1,n}(d_i;\xi)\right]\right\}\right\vert = o_p\left(n^{-\frac{1}{2}}\right),
    \end{align}
    which follows by
    \begin{align*}
        &Var\left(\sup_{\xi \in \Pi}\left\vert \frac{1}{\sqrt{n}}\sum_{i=1}^n  t_{1,n}(d_i;\xi)\right\vert\right) \leq \mathbb{E}\left[\sup_{\xi \in \Pi}t_{1,n}(d_i;\xi)^2\right] \\
        = & O\left(b^{2p} \mathbb{E}\left\{\int \left\vert\left[\left(Y-g(\tilde{W};\theta_0)\right)h(\tilde{W};\theta_0)\right]^{(p)}\right\vert\times\left\vert x^p\mathcal{K}_\epsilon(x) \right\vert\,dx\right\}^2\right)\\
        \leq & O\left(b^{2p} \mathbb{E}\left\{\int \left\vert\left[\left(Y-g(W;\theta_0)\right)h(W;\theta_0)\right]^{(p)}\right\vert\times\left\vert x^p\mathcal{K}_\epsilon(x) \right\vert\,dx\right\}^2\right)\\
        & + O\left(b^{2p+2} \mathbb{E}\left\{\int m(Y,W) \times\left\vert x^{p+1}\mathcal{K}_\epsilon(x) \right\vert\,dx\right\}^2\right)\\
        & + O\left(b^{2p+1} \mathbb{E}\left\{\int \left\vert\left[\left(Y-g(W;\theta_0)\right)h(W,\theta_0)\right]^{(p)}\right\vert \times m(Y,W) \times\left\vert x^{p+1}\mathcal{K}_\epsilon(x) \right\vert\,dx\right\}\right)\\
        = &O\left(b^{2(p-\alpha)}\right) = o\left(1\right),
    \end{align*}
    where $m(Y,W)$ represents $\vert YL_{h^{(p)}}(W)\vert+\vert L_{[gh]^{(p)}}(W)\vert$ and the existence of the second moment of $m(Y,W)$ as assumed in Assumption \ref{ass.D} and \ref{ass.O}. Equation in the last line follows by $\sup_{0\leq l \leq p+1}\int \vert x^l\mathcal{K}(x)\vert \,dx = O(b^{-\alpha})$ mentioned in Lemma 4 of \cite{dong2022nonparametric}. For the main term $r_{1,n}$, given \eqref{Power of decon without ME Assm O} in Lemma \ref{lemma.power of decon},
    \begin{align*}
        &Var\left(\sup_{\xi \in \Pi}\vert \frac{1}{\sqrt{n}}\sum_{i=1}^n  \left[r_{1,n}(d_i;\xi) - r_{1,\infty}(d_i;\xi)\right]\vert\right) \\
        \leq & \mathbb{E}\left[\sup_{\xi \in \Pi}\left[r_{1,n}(d_i;\xi)-r_{1,\infty}(d_i;\xi)\right]^2\right] = O\left(b^2\right) = o\left(1\right),
    \end{align*}
    where $r_{1,\infty}(d_i;\xi)= \sum_{l=0}^{\alpha}c_l^{os}(\xi)\left[(Y_i-g(W_i;\theta_0))h(W_i;\theta_0)\right]^{(l)}\ee^{\ii W_i\xi}$. Therefore,
    \begin{align}\label{lemmaproof.convergence of main term known negligible main and residual}
        &\sup_{\xi \in \Pi}\left\vert \frac{1}{n}\sum_{i=1}^n \left\{ r_{1,n}(d_i;\xi) - r_{1,\infty}(d_i;\xi) - \mathbb{E}\left[r_{1,n}(d_i;\xi)- r_{1,\infty}(d_i;\xi)\right]\right\}\right\vert = o_p\left(n^{-\frac{1}{2}}\right).
    \end{align}
    Noting that Assumption \ref{ass.O}(v) implies $\mathbb{E}\left[\int_\Pi r_{1,\infty}(d_i;\xi)^2d\xi\right]<\infty$, we obtain,
    \begin{align}\label{lemmaproof.convergence of main term known main}
        &\sup_{\xi\in\Pi}\left\vert\frac{1}{n}\sum_{i=1}^n  \left\{r_{1,\infty}(d_i;\xi) - \mathbb{E}\left[r_{1,\infty}(d_i;\xi)\right]\right\}\right\vert = O_p\left(n^{-\frac{1}{2}}\right).
    \end{align}
    by Theorem 3.9 of \cite{chen1998central}. Combining \eqref{Cov of Mean under Ass O}, \eqref{lemmaproof.convergence of main term known decompose of statistics}, \eqref{lemmaproof.convergence of main term known negligible main and residual}, and \eqref{lemmaproof.convergence of main term known main}, \eqref{lemma.convergence of main term known eq1} follows for the ordinary smooth case. 
    
    For the supersmooth case, we still have the expansion of the main terms,
    \begin{align*}
        &\frac{1}{n}\sum_{i=1}^n\int \left(Y_i-g(x;\theta_0)\right)h(x;\theta_0)\mathcal{K}_b\left(\frac{x-W_i}{b}\right)\ee^{\ii x\xi}\,dx \\
        = & \frac{1}{n}\sum_{i=1}^n \ee^{\ii W_i\xi}\int \left(Y_i-g(W_i+bx;\theta_0)\right)h(W_i+bx;\theta_0)\mathcal{K}_\epsilon(x)\ee^{\ii bx\xi}\,dx\\
        = & \frac{1}{n}\sum_{i=1}^n\ee^{\ii W_i\xi}\sum_{l=0}^{\infty}\left[\left(Y_i-g(W_i;\theta_0)\right)h(W_i;\theta_0)\right]^{(l)}\frac{b^l}{l!}\int x^l e^{ib\xi x}\mathcal{K}_\epsilon(x)\,dx \\
        = & \frac{1}{n}\sum_{i=1}^n r_{2,n}(d_i; \xi) = \frac{1}{n}\sum_{i=1}^n r_{2,\infty}(d_i; \xi), 
    \end{align*}
    where the last equation follows by \eqref{Power of decon without ME Assm S} and definition,
    \begin{align*}
        r_{2,\infty}(d_i;\xi) = \sum_{l=0}^{\infty}c_l^{ss}(\xi)\left[\left(Y_i-g(W_i;\theta_0)\right)h(W_i;\theta_0)\right]^{(l)}\ee^{\ii W_i\xi}.
    \end{align*}
    Noting that Assumption \ref{ass.S}(v) implies $\mathbb{E}\left[\int_\Pi r_{2,\infty}(d_i;\xi)^2d\xi\right]<\infty$, by similar arguments to ordinary smooth case,
    \begin{align}\label{lemmaproof.convergence of main term known main super}
        &\sup_{\xi\in\Pi}\left\vert\frac{1}{n}\sum_{i=1}^n  \left\{r_{2,\infty}(d_i;\xi) - \mathbb{E}\left[r_{2,\infty}(d_i;\xi)\right]\right\}\right\vert = O_p\left(n^{-\frac{1}{2}}\right).
    \end{align}
    Combining \eqref{Cov of Mean under Ass O} and \eqref{lemmaproof.convergence of main term known main super}, we have finished our proof of \eqref{lemma.convergence of main term known eq1} for the supersmooth case. For bootstrap version, notice the independence between $(Y,X,W)$ and $V$, together with unit variance of $V$, we claim our proofs above still hold except
    \begin{align*}
        \int \mathbb{E}\left[V\left(Y-g(X;\theta_0)\right)K_b\left(\frac{x-X}{b}\right)\right]h(x;\theta_0)\ee^{\ii x\xi}\,dx = 0,
    \end{align*}
    which implies
    \begin{align}\label{Cov of Mean under Ass O boot}
        &\sup_{\xi\in\Pi}\left\vert
        \mathbb{E}\left[\int V\left(Y-g(x;\theta_0)\right)h(x;\theta_0)\mathcal{K}_b\left(\frac{x-W}{b}\right)\ee^{\ii x\xi}\,dx\right]
        \right\vert=o_p\left(n^{-\frac{1}{2}}\right).
    \end{align}
    Thus, \eqref{lemma.convergence of main term known eq2} holds.
\end{proof}

\begin{lemma}\label{lemma.convergence of main term unknown}
    Suppose that Assumption \ref{ass.D} and \ref{ass.D'} hold, along with either Assumption \ref{ass.O} and \ref{ass.O'} for the ordinary smooth case or Assumption \ref{ass.S} and \ref{ass.S'} for the supersmooth case,
    \begin{align}\label{lemma.convergence of main term unknown eq1}
        &\sup_{\xi\in\Pi}\left\vert
        \begin{aligned}
            &\frac{1}{n}\sum_{i=1}^n \int \left(Y_i-g(x;\theta_0)\right)h(x;\theta_0)\hat{\mathcal{K}}_b\left(\frac{x-W_i}{b}\right)\ee^{\ii x\xi}\,dx\\
            &-\int \mathbb{E}\left[\left(Y-g(X;\theta_0)\right)K_b\left(\frac{x-X}{b}\right)\right]h(x;\theta_0)\ee^{\ii x\xi}\,dx\\
            &-\int \mathbb{E}\left[\left(Y-g(X;\theta_0)\right)K_{b,2}\left(\frac{x-X}{b}\right)\right]h(x;\theta_0)\ee^{\ii x\xi}\,dx
        \end{aligned}
        \right\vert=O_p\left(n^{-\frac{1}{2}}\right),
    \end{align}
    and 
    \begin{align}\label{lemma.convergence of main term unknown eq2}
        &\sup_{\xi\in\Pi}\left\vert\frac{1}{n}\sum_{i=1}^n V_i\int \left(Y_i-g(x;\theta_0)\right)h(x;\theta_0)\hat{\mathcal{K}}_b\left(\frac{x-W_i}{b}\right)\ee^{\ii x\xi}\,dx\right\vert=O_p\left(n^{-\frac{1}{2}}\right).
    \end{align}
\end{lemma}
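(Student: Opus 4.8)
The plan is to base everything on the second-order (Fr\'echet) expansion of the estimated deconvolution kernel recorded in \eqref{Decompose of est of decon}, namely $\hat{\mathcal K}_b=\mathcal K_b+\mathcal K_{b,1}+\mathcal K_{b,2}$, and to split the empirical process on the left of \eqref{lemma.convergence of main term unknown eq1} into the three pieces $A_n^{(0)}(\xi),A_n^{(1)}(\xi),A_n^{(2)}(\xi)$ obtained by replacing $\hat{\mathcal K}_b$ with $\mathcal K_b$, $\mathcal K_{b,1}$, $\mathcal K_{b,2}$, respectively. The oracle piece $A_n^{(0)}(\xi)$ is handled at once by Lemma \ref{lemma.convergence of main term known}: it equals the first subtracted term in \eqref{lemma.convergence of main term unknown eq1} plus a remainder that is $O_p(n^{-1/2})$ uniformly in $\xi$, and supplies the $h$-dependent $r^{os}_\infty$-type part of the limit (its supersmooth analogue in the other case). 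So the work is concentrated in $A_n^{(1)}$ and $A_n^{(2)}$, which together must reproduce the second subtracted term (a deterministic bias) plus a uniformly $O_p(n^{-1/2})$ stochastic piece carrying the extra term $r^\epsilon_\infty$ of \eqref{theorem.unknown ordinary smooth under H_0 rplus}.

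For $A_n^{(1)}(\xi)$ I would first note that $\mathcal K_{b,1}$ is linear in $\psi_1$, and that by \eqref{Decompose of est of decon part1} together with the symmetry of $f_\epsilon$ and the independence structure of Assumption \ref{ass.D'}, one has $\psi_1(t)=\tfrac1n\sum_{j=1}^n\Pi_{\epsilon}^{(j)}(t)$, where $\Pi_\epsilon^{(j)}$ is the quantity $\Pi_\epsilon$ of Assumption \ref{ass.O'} evaluated at $(W_j,W_j^r)$ — which depends only on $\epsilon_j-\epsilon_j^r$ and hence is independent of $\{(Y_i,X_i)\}$ — with $\mathbb E[\Pi_\epsilon^{(j)}(t)]=0$. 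Hence $A_n^{(1)}(\xi)$ is a $\xi$-indexed two-sample $V$-statistic in the main sample $\{(Y_i,W_i)\}$ and the contrast sample $\{\epsilon_j-\epsilon_j^r\}$ whose off-diagonal part is mean-zero. I would then Taylor-expand $g(x;\theta_0)$ and $h(x;\theta_0)$ about $W_i$ exactly as in the proof of Lemma \ref{lemma.convergence of main term known}, bound the resulting deconvolution power integrals $\int|x^l\mathcal K_{\epsilon,1}(x)|\,dx$ by the argument behind Lemma~4 of \cite{dong2022nonparametric} applied to the modified symbol $K^{\text{ft}}(t)\psi_1(t/b)$, and take the H\'ajek projection of the $V$-statistic onto the two samples. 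The projection onto the contrast sample produces $\tfrac1n\sum_{j}\{r^{\epsilon}_{h,\infty}(\cdot;\xi,\theta_0)-\mathbb E[\cdots]\}$ evaluated at $(W_j,W_j^r)$, with $r^{\epsilon}_{h,\infty}$ as in Assumption \ref{ass.O'}(ii), which, together with the projection onto the main sample, is $O_p(n^{-1/2})$ uniformly in $\xi$ by Theorem~3.9 of \cite{chen1998central} thanks to the square-integrability in Assumption \ref{ass.O'}(ii); the diagonal terms of the $V$-statistic and the mean of $A_n^{(2)}$ together assemble into the claimed $K_{b,2}$ bias term, while the Taylor remainder and the completely degenerate part of the $V$-statistic are $o_p(n^{-1/2})$ because their uniform-in-$\xi$ second moments carry the deconvolution amplification of order $b^{-(5\alpha+3)}$, which is killed by the bandwidth rate $nb^{10\alpha+6}\log(\tfrac1b)^{-4}\to\infty$ of Assumption \ref{ass.O'}(i).

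For $A_n^{(2)}(\xi)$ the multiplier $\psi_2$ in \eqref{Decompose of est of decon part2} is quadratic in $\hat f_\epsilon^{\text{ft}}-f_\epsilon^{\text{ft}}$, hence $O_p(n^{-1})$ pointwise on the compact support of $K^{\text{ft}}$; after removing its mean (which feeds the $K_{b,2}$ bias), the same second-moment bookkeeping with an extra $n^{-1}$ shows that $A_n^{(2)}(\xi)$ minus its bias is $o_p(n^{-1/2})$ uniformly, again under Assumption \ref{ass.O'}(i), where the factors $|f_\epsilon^{\text{ft}}(t/b)|^{-1}$ and $|\hat f_\epsilon^{\text{ft}}(t/b)|^{-1}$ are controlled through the uniform convergence rate of $\hat f_\epsilon^{\text{ft}}$ (cf. \cite{kurisu2022uniform}) and the moment condition in Assumption \ref{ass.D'}(ii). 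Summing the three pieces gives \eqref{lemma.convergence of main term unknown eq1} in the ordinary smooth case; the supersmooth case is identical with Assumptions \ref{ass.S} and \ref{ass.S'} replacing \ref{ass.O} and \ref{ass.O'}, infinite-order expansions of $g,h$, $r^{ss}_\infty$ and $r^\epsilon_\infty$ in place of $r^{os}_\infty$, and the exponential power bound $\ee^{c\mu(1+b^{-1})^2}$ in place of $b^{-\alpha}$, the rate $n\ee^{-6\mu(1+b^{-1})^2}\log(\tfrac1b)^{-2}\to\infty$ of Assumption \ref{ass.S'}(i) doing the work. For the bootstrap identity \eqref{lemma.convergence of main term unknown eq2}, attaching the i.i.d., mean-zero, unit-variance multipliers $V_i$ (independent of the data) annihilates every deterministic bias term, since each such term has conditional mean zero given the sample, so only the stochastic $O_p(n^{-1/2})$ rate needs to be re-established; that argument is verbatim the one above with $Y_i-g(x;\theta_0)$ replaced by $V_i(Y_i-g(x;\theta_0))$, using $\mathbb E[V_i]=0$, $\mathbb E[V_i^2]=1$, and independence across $i$.

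The main obstacle I expect is the uniform-in-$\xi$ control of $A_n^{(1)}(\xi)$: it is a $\xi$-indexed, partly degenerate two-sample $V$-statistic whose kernel involves deconvolution power integrals that diverge as $b\to0$, so one needs a maximal inequality sharp enough to play that divergence off against the strengthened bandwidth rate in Assumption \ref{ass.O'}(i) (resp. \ref{ass.S'}(i)), while simultaneously isolating the non-negligible limit contribution $r^\epsilon_\infty$ in \eqref{theorem.unknown ordinary smooth under H_0 rplus} and keeping the H\'ajek projection and the Taylor-remainder bounds uniform in $\xi$ rather than merely pointwise.
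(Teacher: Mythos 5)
Your proposal follows essentially the same route as the paper: the decomposition $\hat{\mathcal K}_b=\mathcal K_b+\mathcal K_{b,1}+\mathcal K_{b,2}$ from \eqref{Decompose of est of decon}, Lemma \ref{lemma.convergence of main term known} for the oracle part, a H\'ajek-projection/U-statistic argument in the mean-zero contrasts $\Pi_{\epsilon,j}$ for the $\mathcal K_{b,1}$ part (yielding the $r^{\epsilon}_{h,\infty}$ contribution at rate $O_p(n^{-1/2})$), the mean of the $\mathcal K_{b,2}$ part supplying the $K_{b,2}$ bias term, the strengthened bandwidth conditions of Assumptions \ref{ass.O'}(i)/\ref{ass.S'}(i) absorbing the deconvolution blow-up in the degenerate and Taylor-remainder pieces, and the mean-zero multipliers annihilating the bias for \eqref{lemma.convergence of main term unknown eq2}. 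The only bookkeeping slip is minor: the diagonal ($i=j$) terms of the $\mathcal K_{b,1}$ V-statistic do not feed the $K_{b,2}$ bias --- in the paper they are shown to be $o_p\left(n^{-\frac{1}{2}}\right)$ outright --- but this does not affect the validity of the argument.
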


\begin{proof}[Proof of Lemma \ref{lemma.convergence of main term unknown}]
    In the absence of information about distribution of measurement error, we start be decomposing $\hat{S}_n(\xi,\theta_0) - S_n(\xi,\theta_0)= S_{n,1}(\xi,\theta_0)+S_{n,2}(\xi,\theta_0)$, where
    \begin{align}\label{Decompose of main term under unknown ME}
        &S_{n,1}(\xi,\theta_0) = \frac{1}{n}\sum_{i=1}^n \int \left(Y_i-g(x;\theta_0)\right)h(x;\theta_0)\mathcal{K}_{b,1}\left(\frac{x-W_i}{b}\right)\ee^{\ii x\xi}\,dx,\notag\\
        &S_{n,2}(\xi,\theta_0) = \frac{1}{n}\sum_{i=1}^n \int \left(Y_i-g(x;\theta_0)\right)h(x;\theta_0)\mathcal{K}_{b,2}\left(\frac{x-W_i}{b}\right)\ee^{\ii x\xi}\,dx.
    \end{align}
    
    For the first term, we define $p(D_i,D_j;\xi)$ as
    \begin{align*}
        p(D_i,D_j;\xi) = &\frac{1}{2\pi b}\iint \left(Y_i-g(x;\theta_0)\right)h(x;\theta_0)\ee^{-\ii \frac{x-W_i}{b}t}\frac{K^{ft}(t)}{f^{ft}_\epsilon(\frac{t}{b})}\Pi_{\epsilon,j}(\frac{t}{b})\ee^{\ii x\xi}\,dx\,dt\\
        =&\frac{1}{2\pi}\iint \left(Y_i-g(x;\theta_0)\right)h(x;\theta_0)\ee^{-\ii (x-W_i)t}\frac{K^{ft}(bt)}{f^{ft}_\epsilon(t)}\Pi_{\epsilon,j}(t)\ee^{\ii x\xi}\,dx\,dt,
    \end{align*}
    where $D_i=(Y_i,W_i,W_i^r)$ for $i=1,2,\cdots,n$ and $\Pi_{\epsilon,j}$ is denoted as
    \begin{align}\label{part of psi}
        &\Pi_{\epsilon,j}\left(t\right) = \frac{\zeta_j(t)-\left[f_\epsilon^{ft}(t)\right]^2}{2\left[f_\epsilon^{ft}(t)\right]^2}.
    \end{align}
    Then, we can write $S_{n,1}(\xi,\theta_0) = S_{n,11}(\xi,\theta_0)+nS_{n,12}(\xi,\theta_0)/(n-1)$,
    \begin{align*}
        &S_{n,11}(\xi,\theta_0) = \frac{1}{n^2}\sum_{i=1}^np(D_i,D_i;\xi), \quad S_{n,12}(\xi,\theta_0) = \frac{1}{n(n-1)}\sum_{i\neq j}^n p(D_i,D_j;\xi).
    \end{align*}
    By similar arguments to the proof of Lemma \ref{lemma.power of tsf kernel}, we obtain
    \begin{align*}
        \mathbb{E}\left[\frac{1}{2\pi}\int\left\vert \int x^{l}\ee^{-\ii xt}\frac{K^{ft}(bt)}{f^{ft}_\epsilon(t)}\Pi_{\epsilon,j}(t)\,dt\right\vert dx\right]=o_p\left(n^{\frac{1}{2}}\right)\quad \text{for }l\leq p+1
    \end{align*}
    when $b^{-3\alpha-2}=o(n^{1/2})$ which holds under Assumption \ref{ass.O'}. For function $h(x;\theta_0)$ that satisfies the requirements of Assumption \ref{ass.O} and \ref{ass.O'}, we obtain
    \begin{align*}
        &\mathbb{E}\left[\sup_{\xi\in\Pi}\left\vert\frac{1}{2\pi}\iint h(x+W_i;\theta_0)\ee^{-\ii xt}\frac{K^{ft}(bt)}{f^{ft}_\epsilon(t)}\Pi_{\epsilon,j}(t)e^{\ii x\xi}\,dx\,dt\right\vert\right]\\
        \leq&\mathbb{E}\sup_{\xi\in\Pi}\left\{
        \begin{aligned}
            &\frac{1}{2\pi}\sum_{l=0}^{p-1}\frac{\left\vert h^{(l)}(W_i;\theta_0)\right\vert}{l!}\left\vert\iint x^l\ee^{-\ii xt}\frac{K^{ft}(bt)}{f^{ft}_\epsilon(t)}\Pi_{\epsilon,j}(t)e^{\ii x\xi}\,dx\,dt\right\vert\\
            &+\frac{1}{2\pi p!}\left\vert\iint h^{(p)}(\bar{W};\theta_0)x^l\ee^{-\ii xt}\frac{K^{ft}(bt)}{f^{ft}_\epsilon(t)}\Pi_{\epsilon,j}(t)e^{\ii x\xi}\,dx\,dt\right\vert\\
        \end{aligned}
        \right\}\\
        \leq&\mathbb{E}\left\{
        \begin{aligned}
            &\frac{1}{2\pi}\sum_{l=0}^p\frac{\left\vert h^{(l)}(W_i;\theta_0)\right\vert}{l!}\int\left\vert \int x^l\ee^{-\ii xt}\frac{K^{ft}(bt)}{f^{ft}_\epsilon(t)}\Pi_{\epsilon,j}(t)\,dt\right\vert\,dx \\
            &+ \frac{1}{2\pi p!} \left\vert m_h(W_i)\right\vert\int\left\vert \int x^{p+1}\ee^{-\ii xt}\frac{K^{ft}(bt)}{f^{ft}_\epsilon(t)}\Pi_{\epsilon,j}(t)\,dt\right\vert\,dx
        \end{aligned}
        \right\}=o_p\left(n^{\frac{1}{2}}\right),
    \end{align*}
    where $\bar{W}$ lies between $W_i$ and $W_i+x$. By similar arguments, the function $h$ can be replaced by $gh$. Consequently, $\mathbb{E}[\sup_{\xi\in\Pi}\vert p(D_i,D_j;\xi)\vert]=o_p(n^{1/2})$ follows by
    \begin{align*}
        &\mathbb{E}\left[\sup_{\xi\in\Pi}\left\vert p(D_i,D_j;\xi)\right\vert\right]\\
        &\leq \left\{\mathbb{E}\left[\sup_{\xi\in\Pi}\left\vert \frac{1}{2\pi}\iint h(x;\theta_0)\ee^{-\ii (x-W_i)t}\frac{K^{ft}(bt)}{f^{ft}_\epsilon(t)}\Pi_{\epsilon,j}(t)e^{\ii x\xi}\,dx\,dt\right\vert\right]^2\right\}^{\frac{1}{2}}\sqrt{\mathbb{E}Y_i^2}\\
        & + \mathbb{E}\left[\sup_{\xi\in\Pi}\left\vert \frac{1}{2\pi}\iint g(x;\theta)h(x;\theta_0)\ee^{-\ii (x-W_i)t}\frac{K^{ft}(bt)}{f^{ft}_\epsilon(t)}\Pi_{\epsilon,j}(t)e^{\ii x\xi}\,dx\,dt\right\vert \right].
    \end{align*}
    Thus, $\sup_{\xi\in\Pi}\vert S_{n,11}(\xi,\theta_0)\vert=o_p(n^{-1/2})$ holds. By similar arguments, $\mathbb{E}[\sup_{\xi\in\Pi} p^2(D_i,D_i;\xi)]=o_p(n)$ follows by $b^{-6\alpha-4}=o(n)$ which holds under Assumption \ref{ass.O'}.  Notice that $S_{n,12}(\cdot,\theta_0)$ is a second-order U-process with symmetric kernel $2q(D_i,D_j,\xi) = p(D_i,D_j,\xi)+ p(D_j,D_i,\xi)$ and H\'{a}jek projection
    \begin{align}\label{lemmaproof.convergence of main term unknown hajek proj}
        \hat{S}_{n,12}(\xi,\theta_0) = \mathbb{E}\left[p_1(D_i,\xi)\right]+\frac{2}{n}\sum_{i=1}^n\left\{p_1(D_i,\xi)-\mathbb{E}\left[p_1(D_i,\xi)\right]\right\}.
    \end{align}
    We note that
    \begin{align}\label{lemmaproof.convergence of main term unknown first term}
        &2p_1(D_i,\xi) = \mathbb{E}\left[2q(D_i,D_j,\xi)\vert D_i\right]\notag\\
        =&\frac{1}{2\pi}\iint \mathbb{E}\left[\left(Y-g(X;\theta_0)\right)\ee^{\ii Wt}\right]h(x;\theta_0)\frac{K^{ft}(bt)}{f^{ft}_\epsilon(t)}\Pi_{\epsilon,i}(t)\ee^{\ii x(\xi-t)}\,dx\,dt\notag\\
        &+\frac{1}{2\pi}\iint \mathbb{E}\left[\left(g(X;\theta_0)-g(x;\theta_0)\right)\ee^{\ii Xt}\right]h(x;\theta_0)K^{ft}(bt)\Pi_{\epsilon,i}(t)\ee^{\ii x(\xi-t)}\,dx\,dt,\notag\\
        =&\frac{1}{2\pi}\int\mathbb{E}\left[\left(Y-g(X;\theta_0)\right)\ee^{\ii Xt}\right]h^{ft}(\xi-t)K^{ft}(bt)\Pi_{\epsilon,i}(t)\,dt\notag\\
        &+\frac{1}{2\pi}\int \left[(gf_X)^{ft}(t)h^{ft}(\xi-t) - f_X^{ft}(t)(gh)^{ft}(\xi-t)\right]K^{ft}(bt)\Pi_{\epsilon,i}(t)\,dt,
    \end{align}
    where $(gf_X)^{ft}(t)$ represents $\int g(x;\theta_0)f_X(x)\ee^{\ii xt}\,dx$ and $f_X(x)$ is density of $X$. Consequently,
    \begin{align}\label{lemmaproof.convergence of main term unknown hajek proj and U}
        \sup_{\xi\in\Pi}\left\vert S_{n,12}(\xi,\theta_0) - \hat{S}_{n,12}(\xi,\theta_0)\right\vert = o_p\left(n^{-\frac{1}{2}}\right).
    \end{align}
    Furthermore, Assumption \ref{ass.O} provides that $K^{ft}(bt)$ goes to $1$, together with Assumption \ref{ass.O'}, we obtain
    \begin{align}\label{lemmaproof.convergence of main term unknown hajek and limit}
        \sup_{\xi\in\Pi}\left\vert p_1(D_i,\xi)- \frac{1}{4\pi}\int \left[
        \begin{aligned}
            &(gf_X)^{ft}(t)h^{ft}(\xi-t)- f_X^{ft}(t)(gh)^{ft}(\xi-t)\\
            &+\mathbb{E}\left[\left(Y-g(X;\theta_0)\right)\ee^{\ii Xt}\right]h^{ft}(\xi-t)
        \end{aligned}
         \right]\Pi_{\epsilon,i}(t)\,dt\right\vert=o_p\left(1\right).
    \end{align}
    by using the dominant convergence theorem. Combining \eqref{lemmaproof.convergence of main term unknown hajek and limit} and Assumption \ref{ass.O'}, we obtain
    \begin{align}\label{lemmaproof.convergence of main term unknown hajek convergence}
        \sup_{\xi\in\Pi}\left\vert\frac{1}{n}\sum_{i=1}^n\left\{p_1(D_i,\xi)-\mathbb{E}\left[p_1(D_i,\xi)\right]\right\}\right\vert=O_p\left(n^{-\frac{1}{2}}\right).
    \end{align}
    Hence, we claim that $\sup_{\xi\in\Pi}\vert S_{n,12}(\xi,\theta_0)\vert=O_p(n^{-1/2})$ by combining \eqref{lemmaproof.convergence of main term unknown hajek proj}, \eqref{lemmaproof.convergence of main term unknown hajek proj and U}, \eqref{lemmaproof.convergence of main term unknown hajek convergence} and $\mathbb{E}[p_1(D,\xi)]=0$. Along with $\sup_{\xi\in\Pi}\vert S_{n,11}(\xi,\theta_0)\vert=o_p(n^{-1/2})$, we obtain
    \begin{align}\label{lemmaproof.convergence of main term unknown Sn1}
        &\sup_{\xi\in\Pi}\left\vert S_{n,1}(\xi,\theta_0)-\frac{1}{2\pi n}\sum_{i=1}^n\int \left[
        \begin{aligned}
            &(gf_X)^{ft}(t)h^{ft}(\xi-t)\\
            &- f_X^{ft}(t)(gh)^{ft}(\xi-t)\\
            &+\mathbb{E}\left[\left(Y-g(X;\theta_0)\right)\ee^{\ii Xt}\right]h^{ft}(\xi-t)
        \end{aligned}
         \right]\Pi_{\epsilon,i}(t)\,dt\right\vert=o_p\left(n^{-\frac{1}{2}}\right).
    \end{align}

    For the second term, we define $H(Y,x;\theta_0) = (Y-g(x;\theta_0))h(x;\theta_0)$ and rewrite $S_{n,2}(\xi,\theta_0)$ as
    \begin{align}\label{lemmaproof.convergence of main term unknown Rn123}
        S_{n,2}(\xi,\theta_0) =& \frac{1}{n}\sum_{i=1}^n \ee^{\ii W_i\xi} \int H(Y_i, bx+W_i;\theta_0)\mathcal{K}_{\epsilon,2}(x)\ee^{\ii bx\xi}\,dx\notag\\
        =&R_{n,1}(\xi,\theta_0) + R_{n,2}(\xi,\theta_0) + R_{n,3}(\xi,\theta_0),
    \end{align}
    by Taylor expansion, where
    \begin{align*}
        &R_{n,1}(\xi,\theta_0) = \sum_{l=0}^{p-1}\frac{b^l}{l!}\left\{
        \begin{aligned}
             &\frac{1}{n}\sum_{i=1}^n\ee^{\ii W_i\xi} H^{(l)}(Y_i,W_i;\theta_0)\\
             &-\mathbb{E}\left[\ee^{\ii W\xi} H^{(l)}(Y,W;\theta_0)\right]
        \end{aligned}
        \right\}\int x^l\mathcal{K}_{\epsilon,2}(x)\ee^{\ii bx\xi}\,dx,\\
        &R_{n,2}(\xi,\theta_0) = \frac{b^p}{p!}\frac{1}{n}\sum_{i=1}^n\int\left\{
        \begin{aligned}
            &\ee^{\ii W_i\xi}H^{(p)}(Y_i,\bar{W}_i;\theta_0)\\
            -&\mathbb{E}\left[H^{(p)}(Y,\bar{W};\theta_0)\ee^{\ii W\xi}\right]
        \end{aligned}
        \right\}x^p\mathcal{K}_{\epsilon,2}(x)\ee^{\ii bx\xi}\,dx,\\
        &R_{n,3}(\xi,\theta_0) = \int \mathbb{E}\left[H(Y,x;\theta_0)\mathcal{K}_{b,2}\left(\frac{x-W}{b}\right)\right] \ee^{\ii x\xi}\,dx.
    \end{align*}
    Notice that Assumption \ref{ass.O} implies $\mathbb{E}[\int_{\Pi}\vert \ee^{\ii W\xi}H^{(l)}(Y,W;\theta_0)\vert^2\,d\xi]<\infty$,
    \begin{align}\label{lemmaproof.convergence of main term unknown Rn1 main term}
        &\sup_{\xi\in\Pi}\left\vert\frac{1}{n}\sum_{i=1}^n\ee^{\ii W_i\xi} H^{(l)}(Y_i,W_i;\theta_0) - \mathbb{E}\left[\ee^{\ii W\xi} H^{(l)}(Y,W;\theta_0)\right]\right\vert = O_p\left(n^{-\frac{1}{2}}\right)\quad \text{for }l<p
    \end{align}
    follows by Theorem 3.9 of \cite{chen1998central}. Together with $\int \vert \mathcal{K}_{\epsilon,2}(x)(bx)^l\vert\,dx=o_p(1)$ mentioned in Lemma \ref{lemma.power of decon}, we claim that
    \begin{align}\label{lemmaproof.convergence of main term unknown Rn1}
        \sup_{\xi\in\Pi}\left\vert R_{n,1}(\xi,\theta_0)\right\vert = o_p\left(n^{-\frac{1}{2}}\right).
    \end{align}
    For $R_{n,2}(\xi,\theta_0)$ in decomposition, using the Lipschitz continuity mentioned in Assumption \ref{ass.O},
    \begin{align*}
        \sup_{\xi\in\Pi}\vert R_{n,1}(\xi,\theta_0) \vert \leq&  \frac{1}{p!}\left[\frac{1}{n}\sum_{i=1}^n \left\vert H^{(p)}(Y_i,W_i;\theta_0)\right\vert+\mathbb{E}\left\vert H^{(p)}(Y,W;\theta_0)\right\vert\right]  \int \left\vert (bx)^p\mathcal{K}_{\epsilon,2}(x)\right\vert\,dx \\
        &+ \frac{1}{p!}\left[\frac{1}{n}\sum_{i=1}^n \left\vert L_{h^{(p)}}(Y_i,W_i) \right\vert+\mathbb{E}\left\vert L_{h^{(p)}}(Y,W) \right\vert\right] \int \left\vert (bx)^{p+1}\mathcal{K}_{\epsilon,2}(x)\right\vert\,dx,
    \end{align*}
    where $m_{H.p}(Y,W) = \vert Ym_{h}(W)\vert+\vert m_{gh}(W)\vert$. Notice that Assumption \ref{ass.O'} implies $\mathbb{E}\vert H^{(p)}(Y,W;\theta_0)\vert<\infty$ and $\mathbb{E}\vert L_{h^{(p)}}(Y_i,W_i) \vert<\infty$, along with $\int \vert (bx)^{p+1}\mathcal{K}_{\epsilon,2}(x)\vert\,dx = o_p(n^{-1/2})$ mentioned in Lemma \ref{lemma.power of decon}, we obtain 
    \begin{align}\label{lemmaproof.convergence of main term unknown Rn2}
        \sup_{\xi\in\Pi}\left\vert R_{n,2}(\xi,\theta_0)\right\vert = o_p\left(n^{-\frac{1}{2}}\right).
    \end{align}
    Subsequently, rewriting $R_{n,3}(\xi,\theta_0)$ as
    \begin{align*}
        &R_{n,3}(\xi,\theta_0) =  \int \mathbb{E}\left[H(Y,X+bx;\theta_0)\ee^{\ii (X+bx)\xi}\right] \mathbb{E}\left[K_{\epsilon,2}(x)\right]\,dx\\
        = &\int\mathbb{E}\left[\left(Y-g(X;\theta_0)\right)h(X+bx;\theta_0)\ee^{\ii (X+bx)\xi}\right] \mathbb{E}\left[K_{\epsilon,2}(x)\right]\,dx\\
        &+\int \mathbb{E}\left[\left(g(X;\theta_0)-g(X+bx;\theta_0)\right)h(X+bx;\theta_0)\ee^{\ii (X+bx)\xi}\right] \mathbb{E}\left[K_{\epsilon,2}(x)\right]\,dx\\
        = &\int \mathbb{E}\left[\left(Y-g(X;\theta_0)\right)K_{b,2}\left(\frac{x-X}{b}\right)\right]h(x;\theta_0)\ee^{\ii x\xi}\,dx\\
        &+\iint \left[\left(g(y;\theta_0)-g(x;\theta_0)\right)h(x;\theta_0)\right]\ee^{\ii x\xi} \mathbb{E}\left[K_{b,2}\left(\frac{x-y}{b}\right)\right]f_X(y)\,dx\,dy.
    \end{align*}
    Let
    \begin{align*}
        &R_{n,31}(\xi,\theta_0) = \frac{1}{2\pi b}\int h(x;\theta_0)\ee^{\ii x\xi} \left[\int g(y;\theta_0)K_{\epsilon,2}\left(\frac{x-y}{b}\right)f_X(y)\,dy\right] \,dx,\\
        &R_{n,32}(\xi,\theta_0) = \frac{1}{2\pi b}\int h(x;\theta_0)g(x;\theta_0)\ee^{\ii x\xi} \left[\int K_{\epsilon,2}\left(\frac{x-y}{b}\right)f_X(y)\,dy\right] \,dx.
    \end{align*}
    By similar arguments to the proof of Lemma \ref{lemma.convergence of main term known}, 
    \begin{align*}
        &\int g(y;\theta_0)K_{\epsilon,2}\left(\frac{x-y}{b}\right)f_X(y)\,dy\\
        = & \sum_{l=0}^{p-1}\frac{b^l}{l!}(gf_X)^{(l)}(x;\theta_0) \int y^lK_{\epsilon,2}(y)\,dy + \frac{b^p}{p!}\int (gf_X)^{(p)}(\tilde{x};\theta_0) y^pK_{\epsilon,2}(y)\,dy\\
        = &\frac{b^p}{p!}\int (gf_X)^{(p)}(\tilde{x};\theta_0) y^pK_{\epsilon,2}(y)\,dy
    \end{align*}
    holds using Taylor expansion, where $\tilde{x}$ is between $x-by$ and $x$. We note that the second equation follows by $\int y^lK_{\epsilon,2}(y)\,dy = 0$ for $l<p$, which is mentioned in Lemma \ref{lemma.power of tsf kernel}. Under the Lipschitz continuity mentioned in Assumption \ref{ass.O}, 
    \begin{align*}
        &\left\vert \int g(y;\theta_0)K_{\epsilon,2}\left(\frac{x-y}{b}\right)f_X(y)\,dy \right\vert \\
        \leq & \frac{b^p}{p!}\left\vert (gf_X)^{(p)}(x;\theta_0) \right\vert \int \left\vert y^pK_{\epsilon,2}(y) \right\vert \,dy + \frac{b^{p+1}}{p!}\left\vert L_{[gf_X]^{(p)}}(x) \right\vert \int \left\vert y^{p+1}K_{\epsilon,2}(y) \right\vert \,dy.
    \end{align*}
    Subsequently, we claim $\sup_{\xi\in\Pi}\vert R_{n,31}(\xi,\theta_0)\vert = o_p(n^{-1/2})$ which follows by conclusions in Lemma \ref{lemma.power of tsf kernel}.  By a similar arguments, $\sup_{\xi\in\Pi}\vert R_{n,32}(\xi,\theta_0)\vert = o_p(n^{-1/2})$. Consequently, 
    \begin{align}\label{lemmaproof.convergence of main term unknown Rn3}
        \sup_{\xi\in\Pi}\left\vert R_{n,3}(\xi,\theta_0)-\int \mathbb{E}\left[\left(Y-g(X;\theta_0)\right)K_{b,2}\left(\frac{x-X}{b}\right)\right]h(x;\theta_0)\ee^{\ii x\xi}\,dx\right\vert = o_p\left(n^{-\frac{1}{2}}\right).
    \end{align}
    Thus, \eqref{lemma.convergence of main term unknown eq1} holds for the ordinary smooth case by combining \eqref{lemma.convergence of main term known eq1}, \eqref{Decompose of main term under unknown ME}, \eqref{lemmaproof.convergence of main term unknown Sn1}, \eqref{lemmaproof.convergence of main term unknown Rn123}, \eqref{lemmaproof.convergence of main term unknown Rn1}, \eqref{lemmaproof.convergence of main term unknown Rn2} and \eqref{lemmaproof.convergence of main term unknown Rn3}.

    For the supersmooth case, \eqref{lemma.convergence of main term known eq1} and \eqref{Decompose of main term under unknown ME} still hold. For the first term of decomposition, under Assumption \ref{ass.S'}, following the proof of Lemma \ref{lemma.power of tsf kernel},
    \begin{align*}
        &\mathbb{E}\left[\sup_{\xi\in\Pi}\left\vert\frac{1}{2\pi}\iint h(x+W_i;\theta_0)\ee^{-\ii xt}\frac{K^{ft}(bt)}{f^{ft}_\epsilon(t)}\Pi_{\epsilon,j}(t)e^{\ii x\xi}\,dx\,dt\right\vert\right]\\
        \leq&
        \frac{1}{2\pi}\sum_{l=0}^{\infty}\frac{1}{l!}\mathbb{E}\left\vert h^{(l)}(W_i;\theta_0)\right\vert\mathbb{E}\sup_{\xi\in\Pi}\left\vert\iint x^l\ee^{-\ii xt}\frac{K^{ft}(bt)}{f^{ft}_\epsilon(t)}\Pi_{\epsilon,j}(t)e^{\ii x\xi}\,dx\,dt\right\vert\\
        =&O_p\left(\ee^{-3\mu(1+b^{-1})^2}\right)=o_p\left(n^{\frac{1}{2}}\right),
    \end{align*}
     where the last equation follows by Assumption \ref{ass.S'}. Thus, by similar arguments to ordinary smooth case, $\sup_{\xi\in\Pi}\vert S_{n,11}(\xi,\theta_0)\vert=o_p(n^{-1/2})$ and $\mathbb{E}[\sup_{\xi\in\Pi} p^2(D_i,D_i;\xi)]=o_p(n)$ hold. Subsequently, \eqref{lemmaproof.convergence of main term unknown Sn1} follows by \eqref{lemmaproof.convergence of main term unknown hajek proj}--\eqref{lemmaproof.convergence of main term unknown hajek convergence}. For the second term of decomposition, we decompose
     \begin{align}\label{lemmaproof.convergence of main term unknown Rn12 super}
        S_{n,2}(\xi,\theta_0) =&R^{ss}_{n,1}(\xi,\theta_0) + R^{ss}_{n,2}(\xi,\theta_0),
    \end{align}
    where
    \begin{align*}
        &R^{ss}_{n,1}(\xi,\theta_0) = \sum_{l=0}^{\infty}\frac{b^l}{l!}\left\{
        \begin{aligned}
             &\frac{1}{n}\sum_{i=1}^n\ee^{\ii W_i\xi} H^{(l)}(Y_i,W_i;\theta_0)\\
             &-\mathbb{E}\left[\ee^{\ii W\xi} H^{(l)}(Y,W;\theta_0)\right]
        \end{aligned}
        \right\}\int x^l\mathcal{K}_{\epsilon,2}(x)\ee^{\ii bx\xi}\,dx,\\
        &R^{ss}_{n,2}(\xi,\theta_0) = \int \mathbb{E}\left[H(Y,x;\theta_0)\mathcal{K}_{b,2}\left(\frac{x-W}{b}\right)\right] \ee^{\ii x\xi}\,dx.
    \end{align*}
    Under Assumption \ref{ass.S'}, \eqref{lemmaproof.convergence of main term unknown Rn1 main term} still holds and therefore,
    \begin{align}\label{lemmaproof.convergence of main term unknown Rn1 super}
        \sup_{\xi\in\Pi}\left\vert R^{ss}_{n,1}(\xi,\theta_0)\right\vert = o_p\left(n^{-\frac{1}{2}}\right)
    \end{align}
    follows by conclusion \eqref{Power of super kernel} mentioned in Lemma \ref{lemma.power of decon}. We notice that $R^{ss}_{n,2}(\xi,\theta_0) = R_{n,31}(\xi,\theta_0)+R_{n,32}(\xi,\theta_0)$ still holds, together with
    \begin{align*}
        &\int g(y;\theta_0)K_{\epsilon,2}\left(\frac{x-y}{b}\right)f_X(y)\,dy= \sum_{l=0}^{\infty}\frac{b^l}{l!}(gf_X)^{(l)}(x;\theta_0) \int y^lK_{\epsilon,2}(y)\,dy=0
    \end{align*}
    and 
    \begin{align*}
        &\int K_{\epsilon,2}\left(\frac{x-y}{b}\right)f_X(y)\,dy= \sum_{l=0}^{\infty}\frac{b^l}{l!}(f_X)^{(l)}(x;\theta_0) \int y^lK_{\epsilon,2}(y)\,dy=0,
    \end{align*}
    we claim 
    \begin{align}\label{lemmaproof.convergence of main term unknown Sn2}
         \sup_{\xi\in\Pi}\left\vert S_{n,2}(\xi,\theta_0)-\int \mathbb{E}\left[\left(Y-g(X;\theta_0)\right)K_{b,2}\left(\frac{x-X}{b}\right)\right]h(x;\theta_0)\ee^{\ii x\xi}\,dx\right\vert = o_p\left(n^{-\frac{1}{2}}\right).
    \end{align}
    Consequently, \eqref{lemma.convergence of main term unknown eq1} follows by \eqref{lemmaproof.convergence of main term unknown Sn1} and \eqref{lemmaproof.convergence of main term unknown Sn2}. For the bootstrap version, \eqref{lemma.convergence of main term unknown eq2} holds by similar arguments except for
    \begin{align*}
        \mathbb{E}\left[V\left(Y-g(X;\theta_0)\right)K_{b,2}\left(\frac{x-X}{b}\right)\right] = 0 \text{ and }\mathbb{E}\left[V\left(Y-g(X;\theta_0)\right)K_{b}\left(\frac{x-X}{b}\right)\right]=0
    \end{align*}
    which follows from the mean zero and independence properties of $V$. Given the similarity between the bootstrap version \eqref{lemma.convergence of main term unknown eq2} and equation \eqref{lemma.convergence of main term unknown eq1}, the bootstrap proofs are omitted hereafter unless stated otherwise.
\end{proof}

\begin{lemma}\label{lemma.ULLN of main term}
    Suppose that Assumption \ref{ass.D} holds, together with either Assumption \ref{ass.O} for the ordinary smooth case or Assumption \ref{ass.S} for the supersmooth case,
    \begin{align}\label{lemma.ULLN of main term eq1}
        &\sup_{\xi\in\Pi}\left\vert\frac{1}{n}\sum\limits_{i=1}^n \int h(x;\theta)\mathcal{K}_b\left(\frac{x-W_i}{b}\right)\ee^{\ii x\xi}\,dx-\mathbb{E}\left[h(X;\theta)\ee^{\ii X\xi}\right]\right\vert=O_p\left(n^{-\frac{1}{2}}\right)
    \end{align}
    holds when $\theta$ values in a neighborhood of $\theta_0$. For the bootstrap version,
    \begin{align}\label{lemma.ULLN of main term eq2}
        &\sup_{\xi\in\Pi}\left\vert\frac{1}{n}\sum\limits_{i=1}^nV_i \int h(x;\theta)\mathcal{K}_b\left(\frac{x-W_i}{b}\right)\ee^{\ii x\xi}\,dx\right\vert=O_p\left(n^{-\frac{1}{2}}\right).
    \end{align}
\end{lemma}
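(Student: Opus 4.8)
The plan is to treat Lemma \ref{lemma.ULLN of main term} as a specialization of Lemma \ref{lemma.convergence of main term known}: the summand here has exactly the same deconvolution-kernel structure $\int(\,\cdot\,)\mathcal{K}_b((x-W_i)/b)\ee^{\ii x\xi}\,dx$, only with the random mark $(Y_i-g(x;\theta_0))h(x;\theta_0)$ replaced by the deterministic function $h(x;\theta)$. So I would reuse the same three-way split of the centered average into (a) a deterministic bias term, (b) a Taylor-remainder term whose sup-norm variance over $\xi\in\Pi$ is negligible, and (c) a leading empirical-process term to which a uniform CLT applies.

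For the bias, I would use the conditional deconvolution identity $\mathbb{E}_\epsilon[\mathcal{K}_b((x-W)/b)\mid X]=K_b(x-X)$, so that $\mathbb{E}[\int h(x;\theta)\mathcal{K}_b((x-W)/b)\ee^{\ii x\xi}\,dx]=\int f_X(y)\int K(u)h(y+bu;\theta)\ee^{\ii(y+bu)\xi}\,du\,dy$. Taylor-expanding $h(y+bu;\theta)\ee^{\ii bu\xi}$ in $bu$ and invoking the moment conditions on $K$ in Assumption \ref{ass.O}(iii), which annihilate orders $1,\dots,p-1$, this equals $\mathbb{E}[h(X;\theta)\ee^{\ii X\xi}]+O(b^p)$ uniformly in $\xi\in\Pi$, and $O(b^p)=o(n^{-1/2})$ follows from the undersmoothing condition $nb^{2p}\to0$ in Assumption \ref{ass.O}(iv). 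In the supersmooth case the infinite-order kernel of Assumption \ref{ass.S}(iii) makes every moment vanish, so the bias is exactly $\mathbb{E}[h(X;\theta)\ee^{\ii X\xi}]$ and the trivial requirement $b\to0$ in Assumption \ref{ass.S}(iv) suffices.

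For the stochastic part, I would change variables $x\mapsto W_i+bx$ to write the $i$-th summand as $\ee^{\ii W_i\xi}\int h(W_i+bx;\theta)\mathcal{K}_\epsilon(x)\ee^{\ii bx\xi}\,dx$, Taylor-expand $h(W_i+bx;\theta)$ to order $p$ (resp. to $\infty$ in the supersmooth case), and evaluate $\int x^l\ee^{\ii bx\xi}\mathcal{K}_\epsilon(x)\,dx$ using the power-of-deconvolution-kernel bounds in Lemma \ref{lemma.power of decon} together with $\sup_{0\le l\le p+1}\int|x^l\mathcal{K}_\epsilon(x)|\,dx=O(b^{-\alpha})$ (Lemma 4 of \cite{dong2022nonparametric}). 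This isolates the leading term $\sum_{l=0}^{\alpha}c_l^{os}(\xi)h^{(l)}(W_i;\theta)\ee^{\ii W_i\xi}$ (resp. the convergent series $\sum_{l\ge0}c_l^{ss}(\xi)h^{(l)}(W_i;\theta)\ee^{\ii W_i\xi}$), with the Taylor remainder controlled via the Lipschitz bound on $h^{(p)}$ in Assumption \ref{ass.O}(i): its sup-norm second moment is of order $b^{2(p-\alpha)}$ times an integrable envelope, hence $o(1)$ since $p>\alpha$. Since $h$ is a partial derivative of $g$ of order at most three, the finiteness of $\mathbb{E}\sup_{\xi\in\Pi}|\sum_l c_l(\xi)h^{(l)}(W;\theta)|^2$ is the analogue of Assumption \ref{ass.O}(v)/\ref{ass.S}(v), so the centered leading average converges at rate $n^{-1/2}$ uniformly in $\xi$ by Theorem 3.9 of \cite{chen1998central}; combining with the bias bound gives \eqref{lemma.ULLN of main term eq1}. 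The bootstrap version \eqref{lemma.ULLN of main term eq2} is then immediate, since the multipliers $V_i$ are mean-zero and independent of the data, so the bias term drops out while the variance bounds are unchanged by the unit variance of $V_i$.

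I expect the main obstacle to be exactly the uniform control of the Taylor remainder in step (c): the prefactor $b^p$ must beat the $b^{-\alpha}$ blow-up of $\int|x^p\mathcal{K}_\epsilon(x)|\,dx$, which is precisely where $p>\alpha$ and the undersmoothing $nb^{2p}\to0$ enter, and the Lipschitz hypotheses on $h^{(p)}$ and $[gh]^{(p)}$ are needed to pass from the derivative evaluated at $\tilde W_i$ to an integrable envelope at $W_i$. A secondary point is the $\theta$-uniformity over a neighborhood of $\theta_0$: because $h$ is a $\theta$-derivative of $g$ up to order three, one more $\theta$-derivative is available, so $h$ and its $x$-derivatives are Lipschitz in $\theta$ on the neighborhood and the pointwise-in-$\theta$ argument extends by a routine compactness argument; equivalently, it suffices to run the above with $\theta$ fixed at any deterministic point near $\theta_0$, which is all that the applications in Theorems \ref{theorem.known under H_1n} and \ref{theorem.known alternative} require.
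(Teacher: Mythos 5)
Your proposal is correct and follows essentially the same route as the paper: split off the bias via the deconvolution identity $\mathbb{E}[\mathcal{K}_b((x-W)/b)\mid X]=K_b((x-X)/b)$ (exactly zero bias for the supersmooth infinite-order kernel, $O(b^p)=o(n^{-1/2})$ under $nb^{2p}\to0$ for the ordinary smooth case), and control the centered average by the same Taylor-expansion/deconvolution-kernel-moment argument and uniform CLT used in Lemma \ref{lemma.convergence of main term known}, with the mean-zero multipliers handling the bootstrap statement. The only cosmetic difference is that you Taylor-expand $h(y+bu;\theta)\ee^{\ii(y+bu)\xi}$ directly while the paper expands $J_{2,n}(x)=f_X(x)+b^p\Delta_2(x)$ and then integrates against $h(x;\theta)\ee^{\ii x\xi}$; these are equivalent computations.
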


\begin{proof}[Proof of Lemma \ref{lemma.ULLN of main term}]
    By similar arguments to the proof of Lemma \ref{lemma.convergence of main term known}, under Assumption \ref{ass.O} or \ref{ass.S},
    \begin{align}\label{lemmaproof.ULLN of main term main-E}
        &\sup_{\xi\in\Pi}\left\vert
        \begin{aligned}
             &\frac{1}{n}\sum\limits_{i=1}^n \int h(x;\theta)\mathcal{K}_b\left(\frac{x-W_i}{b}\right)\ee^{\ii x\xi}\,dx\\
             &-\mathbb{E}\left[ \int h(x;\theta)\mathcal{K}_b\left(\frac{x-W}{b}\right)\ee^{\ii x\xi}\,dx\right]
        \end{aligned}
       \right\vert=O_p\left(n^{-\frac{1}{2}}\right).
    \end{align}
    For ordinary smooth case, under Assumption \ref{ass.O},
    \begin{align}\label{lemmaproof.ULLN of main term exp ord}
        \sup_{\xi\in\Pi}\left\vert\mathbb{E}\left[ \int h(x;\theta)\mathcal{K}_b\left(\frac{x-W}{b}\right)\ee^{\ii x\xi}\,dx\right]-\mathbb{E}\left[h(X;\theta)\ee^{\ii X\xi}\right]\right\vert = O\left(n^{-\frac{1}{2}}\right)
    \end{align}
    follows by \eqref{lemmaproof.convergence of main term known J2n}. And for supersmooth case, under Assumption \ref{ass.S},
    \begin{align}\label{lemmaproof.ULLN of main term exp sup}
        \mathbb{E}\left[ \int h(x;\theta)\mathcal{K}_b\left(\frac{x-W}{b}\right)\ee^{\ii x\xi}\,dx\right]=\mathbb{E}\left[h(X;\theta)\ee^{\ii X\xi}\right]. 
    \end{align}
    Thus, \eqref{lemma.ULLN of main term eq1} holds.
\end{proof}

\begin{lemma}\label{lemma.ULLN of main term unknown}
    Suppose that Assumption \ref{ass.D} and \ref{ass.D'} hold, along with either Assumption \ref{ass.O} and \ref{ass.O'} for the ordinary smooth case or Assumption \ref{ass.S} and \ref{ass.S'} for the supersmooth case,
    \begin{align}\label{lemma.ULLN of main term unknown eq1}
        &\sup_{\xi\in\Pi}\left\vert\frac{1}{n}\sum\limits_{i=1}^n \int h(x;\theta)\hat{\mathcal{K}}_b\left(\frac{x-W_i}{b}\right)\ee^{\ii x\xi}\,dx-\mathbb{E}\left[h(X;\theta)\ee^{\ii X\xi}\right]\right\vert=O_p\left(n^{-\frac{1}{2}}\right)
    \end{align}
    holds when $\theta$ values in a neighborhood of $\theta_0$. For the bootstrap version,
    \begin{align}\label{lemma.ULLN of main term unknown eq2}
        &\sup_{\xi\in\Pi}\left\vert\frac{1}{n}\sum\limits_{i=1}^nV_i \int h(x;\theta)\hat{\mathcal{K}}_b\left(\frac{x-W_i}{b}\right)\ee^{\ii x\xi}\,dx\right\vert=O_p\left(n^{-\frac{1}{2}}\right).
    \end{align}
\end{lemma}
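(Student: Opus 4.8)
The plan is to treat Lemma \ref{lemma.ULLN of main term unknown} as the $G_n$/$\Delta_n$-type analogue of Lemma \ref{lemma.convergence of main term unknown}: the integrand here carries no residual factor $(Y_i-g(x;\theta))$, so only the rate $O_p(n^{-1/2})$ — not an exact limiting process — needs to be tracked, which makes the argument strictly simpler than that of Lemma \ref{lemma.convergence of main term unknown}. First I would use the decomposition \eqref{Decompose of est of decon}, $\hat{\mathcal{K}}_\epsilon=\mathcal{K}_\epsilon+\mathcal{K}_{\epsilon,1}+\mathcal{K}_{\epsilon,2}$, to split the left-hand side of \eqref{lemma.ULLN of main term unknown eq1} into $A_n(\xi,\theta)+B_n(\xi,\theta)+C_n(\xi,\theta)$, where $A_n$, $B_n$, $C_n$ are the averages built from $\mathcal{K}_b$, $\mathcal{K}_{b,1}$, $\mathcal{K}_{b,2}$ respectively. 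For $A_n$, Lemma \ref{lemma.ULLN of main term} gives at once $\sup_{\xi\in\Pi}|A_n(\xi,\theta)-\mathbb{E}[h(X;\theta)\ee^{\ii X\xi}]|=O_p(n^{-1/2})$, uniformly for $\theta$ in a neighbourhood of $\theta_0$. It then remains to show $\sup_{\xi\in\Pi}|B_n(\xi,\theta)|=O_p(n^{-1/2})$ and $\sup_{\xi\in\Pi}|C_n(\xi,\theta)|=o_p(n^{-1/2})$, after which the three bounds combine to give \eqref{lemma.ULLN of main term unknown eq1}.

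For $B_n$ I would mimic the treatment of $S_{n,1}$ in the proof of Lemma \ref{lemma.convergence of main term unknown}. Writing $\mathcal{K}_{b,1}$ through its Fourier representation with the multiplier $\psi_1$ of \eqref{Decompose of est of decon part1} exhibits $B_n$ as a diagonal piece $n^{-2}\sum_i\tilde p(D_i,D_i;\xi)$ plus an off-diagonal second-order $U$-statistic $\{n(n-1)\}^{-1}\sum_{i\neq j}\tilde p(D_i,D_j;\xi)$ (up to the usual $n/(n-1)$ factor), where $\tilde p$ is obtained from the kernel $p$ of that proof by deleting the factor $(Y-g)$ and $D_i=(W_i,W_i^r)$ (the $Y$-coordinate now being superfluous). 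The diagonal piece is $o_p(n^{-1/2})$ via the moment bound $\mathbb{E}[\sup_{\xi\in\Pi}\tilde p^2(D_i,D_i;\xi)]=o_p(n)$, which holds under the strengthened bandwidth condition of Assumption \ref{ass.O'}(i) (respectively \ref{ass.S'}(i)) together with the smoothness and integrability of $h$ from Assumption \ref{ass.O} (resp. \ref{ass.S}); the off-diagonal piece equals its H\'ajek projection up to $o_p(n^{-1/2})$, and that projection is a centred sample average that is $O_p(n^{-1/2})$ uniformly in $\xi$ by Theorem 3.9 of \cite{chen1998central}, whose moment hypothesis reduces here to the square-integrability of an $r^{\epsilon}_{h,\infty}$-type limiting kernel, guaranteed by Assumption \ref{ass.O'}(ii) (resp. \ref{ass.S'}(ii)). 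Hence $\sup_{\xi\in\Pi}|B_n(\xi,\theta)|=O_p(n^{-1/2})$.

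For $C_n$ I would follow the $R_{n,1}$--$R_{n,3}$ decomposition used for $S_{n,2}$ in the proof of Lemma \ref{lemma.convergence of main term unknown}: Taylor-expand $h(W_i+bx;\theta)$, bound the low-order terms using $\int y^lK_{\epsilon,2}(y)\,dy=0$ for $l<p$ (from Lemma \ref{lemma.power of tsf kernel}) and the $p$-th order remainder using the Lipschitz bound on $h^{(p)}$ together with $\int|(bx)^{p+1}\mathcal{K}_{\epsilon,2}(x)|\,dx=o_p(n^{-1/2})$ (from Lemma \ref{lemma.power of decon}), and treat the expectation part the same way; in the supersmooth case the infinite-order vanishing moments of $K_{\epsilon,2}$ make this cleaner still. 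Because $\psi_2$ is quadratic in $[\hat f_\epsilon^{\text{ft}}]^2-[f_\epsilon^{\text{ft}}]^2$ (see \eqref{Decompose of est of decon part2}), $C_n$ is of strictly smaller order than the $\mathcal{K}_{b,1}$ contribution under the bandwidth conditions of Assumptions \ref{ass.O'} and \ref{ass.S'}, and — since there is no residual factor — no surviving bias term of the kind retained in Lemma \ref{lemma.convergence of main term unknown} appears, the relevant conditional expectation being absorbed by the vanishing moments of $K_{\epsilon,2}$; thus $\sup_{\xi\in\Pi}|C_n(\xi,\theta)|=o_p(n^{-1/2})$. The bootstrap version \eqref{lemma.ULLN of main term unknown eq2} then follows along identical lines, the zero-mean and independence properties of the multipliers $V_i$ removing every expectation/bias term while preserving all rates. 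I expect the genuine obstacle to be the $B_n$ step: the degeneracy bookkeeping of the $U$-process, the uniformity over $\xi\in\Pi$, and checking that the error from estimating $f_\epsilon^{\text{ft}}$ propagated through $\psi_1$ (and $\psi_2$) is negligible precisely under the bandwidth conditions imposed in Assumptions \ref{ass.O'} and \ref{ass.S'}.
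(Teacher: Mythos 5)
Your proposal is correct and follows essentially the same route as the paper: the paper's own proof simply invokes ``similar arguments to the proof of Lemma \ref{lemma.convergence of main term unknown}'' (i.e.\ the decomposition $\hat{\mathcal{K}}_\epsilon=\mathcal{K}_\epsilon+\mathcal{K}_{\epsilon,1}+\mathcal{K}_{\epsilon,2}$, the U-statistic/H\'ajek projection treatment of the $\mathcal{K}_{\epsilon,1}$ part, and the vanishing moments of $K_{\epsilon,2}$, cf.\ the remark after Lemma \ref{lemma.projG unknown} that $\int K_{\epsilon,2}=\psi_2(0)=0$ kills the extra bias) and then concludes via the bias results \eqref{lemmaproof.ULLN of main term exp ord}--\eqref{lemmaproof.ULLN of main term exp sup}, which is exactly the structure you spell out. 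Your write-up just makes explicit the steps the paper leaves implicit, so there is nothing further to flag.
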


\begin{proof}[Proof of Lemma \ref{lemma.ULLN of main term unknown}]
    By similar arguments to the proof of Lemma \ref{lemma.convergence of main term unknown}, under Assumption \ref{ass.O} and \ref{ass.O'} for ordinary smooth case or \ref{ass.S} and \ref{ass.S'} for supersmooth case,
    \begin{align}\label{lemmaproof.ULLN of main term unknown main-E}
        &\sup_{\xi\in\Pi}\left\vert
        \begin{aligned}
             &\frac{1}{n}\sum\limits_{i=1}^n \int h(x;\theta)\hat{\mathcal{K}}_b\left(\frac{x-W_i}{b}\right)\ee^{\ii x\xi}\,dx\\
             &-\int \mathbb{E}\left[K_{b}\left(\frac{x-X}{b}\right)\right]h(x;\theta)\ee^{\ii x\xi}\,dx
        \end{aligned}
       \right\vert=O_p\left(n^{-\frac{1}{2}}\right).
    \end{align}
    Notice that we have proved \eqref{lemmaproof.ULLN of main term exp ord} and \eqref{lemmaproof.ULLN of main term exp sup}, \eqref{lemma.ULLN of main term unknown eq1} follows.
\end{proof}

\begin{lemma}\label{lemma.projG}
    Suppose that Assumption \ref{ass.D} holds, together with either Assumption \ref{ass.O} for the ordinary smooth case or Assumption \ref{ass.S} for the supersmooth case,
    \begin{align}\label{lemma.projG eq1}
        &\sup_{\xi\in\Pi}\left\vert G_n(\xi,\hat{\theta}_n)-G(\xi,\theta_0)\right\vert=O_p\left(\left\vert \hat{\theta}_n-\theta_0\right\vert\right)+O_p\left(n^{-\frac{1}{2}}\right).
    \end{align}
    For the bootstrap version,
    \begin{align}\label{lemma.projG eq2}
        &\sup_{\xi\in\Pi}\left\vert G_n^\ast(\xi,\theta_0)\right\vert = O_p\left(n^{-\frac{1}{2}}\right).
    \end{align}
\end{lemma}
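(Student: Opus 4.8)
The plan is to reduce both assertions to the uniform law of large numbers already established in Lemma \ref{lemma.ULLN of main term}, by first isolating the effect of the random argument $\hat\theta_n$ from the statistical fluctuation at the fixed value $\theta_0$. Writing $\dot g=(\dot g_1,\dots,\dot g_d)^\top$ componentwise, I would start from the decomposition
\begin{align*}
G_n(\xi,\hat\theta_n) - G(\xi,\theta_0) = \big[G_n(\xi,\hat\theta_n) - G_n(\xi,\theta_0)\big] + \big[G_n(\xi,\theta_0) - G(\xi,\theta_0)\big],
\end{align*}
and show that, uniformly in $\xi\in\Pi$, the first bracket is $O_p(|\hat\theta_n-\theta_0|)$ and the second is $O_p(n^{-1/2})$, which is exactly \eqref{lemma.projG eq1}.

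For the second bracket, note that $G_n(\xi,\theta_0)-G(\xi,\theta_0)$ is precisely of the form appearing on the left-hand side of \eqref{lemma.ULLN of main term eq1} with $h(x;\theta)$ chosen to be $\dot g_j(x;\theta)$; this choice is admissible because every first-order partial of $g$ with respect to $\theta$ is among the functions $h$ regulated in Assumption \ref{ass.O}(i) (resp.\ \ref{ass.S}(i)). Applying Lemma \ref{lemma.ULLN of main term} at $\theta=\theta_0$ for each component then yields $\sup_{\xi\in\Pi}|G_n(\xi,\theta_0)-G(\xi,\theta_0)|=O_p(n^{-1/2})$.

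For the first bracket I would use a first-order Taylor expansion of $\dot g(x;\cdot)$ about $\theta_0$, namely $\dot g(x;\hat\theta_n)-\dot g(x;\theta_0)=\ddot g(x;\tilde\theta)(\hat\theta_n-\theta_0)$ for some $\tilde\theta$ on the segment joining $\theta_0$ and $\hat\theta_n$, so that
\begin{align*}
G_n(\xi,\hat\theta_n) - G_n(\xi,\theta_0) = \left[\frac1n\sum_{i=1}^n \int \ddot g(x;\tilde\theta)\, \mathcal{K}_b\!\left(\frac{x-W_i}{b}\right)\ee^{\ii x\xi}\,dx\right](\hat\theta_n-\theta_0).
\end{align*}
Since $\hat\theta_n-\theta_0=O_p(n^{\delta-1/2})=o_p(1)$ by Assumption \ref{ass.D}(ii), the point $\tilde\theta$ eventually lies in a fixed compact neighborhood of $\theta_0$, and the second-order partials $\ddot g(x;\cdot)$ are again among the $h$'s whose boundedness, integrability, and Lipschitz constants in Assumption \ref{ass.O}(i) (resp.\ \ref{ass.S}(i)) are imposed uniformly over that segment. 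Hence the proof of Lemma \ref{lemma.ULLN of main term} applies with $\theta=\tilde\theta$, giving $\sup_{\xi\in\Pi}|\frac1n\sum_i\int \ddot g(x;\tilde\theta)\mathcal{K}_b(\frac{x-W_i}{b})\ee^{\ii x\xi}\,dx - \mathbb{E}[\ddot g(X;\tilde\theta)\ee^{\ii X\xi}]|=O_p(n^{-1/2})$, and since $\mathbb{E}|\ddot g(X;\tilde\theta)|\le\sup_\theta\|\ddot g(\cdot;\theta)\|_\infty<\infty$ the bracketed average is $O_p(1)$ uniformly in $\xi$; multiplying by $\hat\theta_n-\theta_0$ gives the claimed $O_p(|\hat\theta_n-\theta_0|)$. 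Finally, the bootstrap bound \eqref{lemma.projG eq2} is immediate: $G_n^\ast(\xi,\theta_0)$ is exactly the left-hand side of \eqref{lemma.ULLN of main term eq2} with $h(x;\theta)=\dot g_j(x;\theta)$ and $\theta=\theta_0$, so applying that part of Lemma \ref{lemma.ULLN of main term} componentwise (using the mean-zero, unit-variance, independent multipliers) yields $\sup_{\xi\in\Pi}|G_n^\ast(\xi,\theta_0)|=O_p(n^{-1/2})$.

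The only step requiring care is the control of the Taylor remainder at the \emph{random} argument $\tilde\theta$; I expect this to be the main (if mild) obstacle. It is handled because the regularity conditions in Assumption \ref{ass.O}(i)/\ref{ass.S}(i) are stated uniformly over the segment between $\theta_0$ and $\hat\theta_n$, which lets Lemma \ref{lemma.ULLN of main term} be invoked with $\theta=\tilde\theta$ instead of a fixed parameter value; the remainder of the argument is a routine bookkeeping of already-available estimates.
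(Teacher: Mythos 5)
Your proposal is correct and follows essentially the same route as the paper: the same split into $[G_n(\xi,\hat\theta_n)-G_n(\xi,\theta_0)]+[G_n(\xi,\theta_0)-G(\xi,\theta_0)]$, a mean-value/Taylor expansion of $\dot g$ at an intermediate $\tilde\theta$ handled via Lemma \ref{lemma.ULLN of main term} with $h=\partial^2 g/\partial\theta_j\partial\theta_k$, and the bootstrap bound \eqref{lemma.projG eq2} read off from \eqref{lemma.ULLN of main term eq2} with $h=\dot g_j$. The only cosmetic difference is that for the fluctuation term $G_n(\xi,\theta_0)-G(\xi,\theta_0)$ you invoke Lemma \ref{lemma.ULLN of main term} (arguably the more directly applicable citation) whereas the paper points to Lemma \ref{lemma.convergence of main term known}; the substance of the argument is the same.
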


\begin{proof}[Proof of Lemma \ref{lemma.projG}]
    We start by decomposing
    \begin{align*}
        &G_n(\xi,\hat{\theta}_n) - G_n(\xi,\theta_0) = \frac{1}{n}\sum\limits_{i=1}^n \int \left[\frac{\partial g(x;\hat{\theta}_n)}{\partial\theta} - \frac{\partial g(x;\theta_0)}{\partial\theta}\right]\mathcal{K}_b\left(\frac{x-W_i}{b}\right)\ee^{\ii x\xi}\,dx\\
        & = \left[\frac{1}{n}\sum_{i=1}^n \int \frac{\partial^2 g(x;\tilde{\theta}_1)}{\partial \theta\partial\theta^\top}\mathcal{K}_b\left(\frac{x-W_i}{b}\right)\ee^{\ii x\xi}\,dx\right]\left(\hat{\theta}_n-\theta_0\right),
    \end{align*}
    where $\tilde{\theta}_1$ is a value between $\theta_0$ and $\hat{\theta}_n$. Consequently,
    \begin{align}\label{lemmaproof.projG Gtheta-Gtheta0}
        &\sup_{\xi\in\Pi}\left\vert G_n(\xi,\hat{\theta}_n) -G_n(\xi,\theta_0)\right\vert = O_p\left(\left\vert \hat{\theta}_n-\theta_0\right\vert\right)
    \end{align}
    and 
    \begin{align}\label{lemmaproof.projG Gtheta0-G0}
        \sup_{\xi\in\Pi}\left\vert G_n(\xi,\theta_0) - G(\xi,\theta_0)\right\vert = O_p\left(n^{-\frac{1}{2}}\right)
    \end{align}
    follow by designate $\partial g^2(x;\theta)/\partial\theta_j\partial\theta_k$ and $\partial g(x;\theta)/\partial\theta_j$ as $h(x;\theta)$ in Lemma \ref{lemma.ULLN of main term} and Lemma \ref{lemma.convergence of main term known} separately, where $j,k$ value from $1$ to $d$. Thus, \eqref{lemma.projG eq1} follows by combining \eqref{lemmaproof.projG Gtheta-Gtheta0} and \eqref{lemmaproof.projG Gtheta0-G0}.
\end{proof}

\begin{lemma}\label{lemma.projG unknown}
    Suppose that Assumption \ref{ass.D} and \ref{ass.D'} hold, along with either Assumption \ref{ass.O} and \ref{ass.O'} for the ordinary smooth case or Assumption \ref{ass.S} and \ref{ass.S'} for the supersmooth case,
    \begin{align}\label{lemma.projG unknown eq1}
        &\sup_{\xi\in\Pi}\left\vert \hat{G}_n(\xi,\hat{\theta}_n)-G(\xi,\theta_0)\right\vert=O_p\left(\left\vert \hat{\theta}_n-\theta_0\right\vert\right)+O_p\left(n^{-\frac{1}{2}}\right)
    \end{align}
    and 
    \begin{align}\label{lemma.projG unknown eq2}
        &\sup_{\xi\in\Pi}\left\vert \hat{G}_n^\ast(\xi,\theta_0)\right\vert=O_p\left(n^{-\frac{1}{2}}\right).
    \end{align}
\end{lemma}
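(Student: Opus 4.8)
The plan is to follow the proof of Lemma \ref{lemma.projG} essentially line by line, replacing the deconvolution kernel $\mathcal{K}_b$ by its estimated counterpart $\hat{\mathcal{K}}_b$ throughout and invoking the unknown-measurement-error analogue Lemma \ref{lemma.ULLN of main term unknown} wherever that proof used Lemma \ref{lemma.ULLN of main term}. First I would write $\hat{G}_n(\xi,\hat{\theta}_n) - G(\xi,\theta_0) = \bigl[\hat{G}_n(\xi,\hat{\theta}_n) - \hat{G}_n(\xi,\theta_0)\bigr] + \bigl[\hat{G}_n(\xi,\theta_0) - G(\xi,\theta_0)\bigr]$ and bound the two brackets separately, uniformly in $\xi \in \Pi$.

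For the second bracket, since $\hat{G}_n(\xi,\theta_0) = n^{-1}\sum_{i=1}^n \int \bigl(\partial g(x;\theta_0)/\partial\theta\bigr)\hat{\mathcal{K}}_b\bigl((x-W_i)/b\bigr)\ee^{\ii x\xi}\,dx$ has limit $G(\xi,\theta_0) = \mathbb{E}\bigl[\bigl(\partial g(X;\theta_0)/\partial\theta\bigr)\ee^{\ii X\xi}\bigr]$, applying Lemma \ref{lemma.ULLN of main term unknown} componentwise with $h(x;\theta) = \partial g(x;\theta)/\partial\theta_j$, $j = 1,\dots,d$, and $\theta = \theta_0$ yields $\sup_{\xi\in\Pi}\bigl|\hat{G}_n(\xi,\theta_0) - G(\xi,\theta_0)\bigr| = O_p(n^{-1/2})$. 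For the first bracket, a first-order Taylor expansion in $\theta$ gives
\[
\hat{G}_n(\xi,\hat{\theta}_n) - \hat{G}_n(\xi,\theta_0) = \left[\frac{1}{n}\sum_{i=1}^n\int\frac{\partial^2 g(x;\tilde{\theta})}{\partial\theta\,\partial\theta^\top}\hat{\mathcal{K}}_b\!\left(\frac{x-W_i}{b}\right)\ee^{\ii x\xi}\,dx\right]\bigl(\hat{\theta}_n - \theta_0\bigr),
\]
with $\tilde{\theta}$ lying between $\theta_0$ and $\hat{\theta}_n$. By Assumption \ref{ass.D}(ii), $\hat{\theta}_n - \theta_0 = o_p(1)$, so with probability tending to one $\tilde{\theta}$ lies in a fixed neighborhood of $\theta_0$; on that event Lemma \ref{lemma.ULLN of main term unknown} applied with $h(x;\theta) = \partial^2 g(x;\theta)/\partial\theta_j\,\partial\theta_k$ — admissible because the smoothness and Lipschitz conditions imposed on the generic function $h$ in Assumptions \ref{ass.O}(i) and \ref{ass.S}(i) cover all partial derivatives of $g$ up to order three — shows the bracketed $d\times d$ matrix equals $\mathbb{E}\bigl[\bigl(\partial^2 g(X;\tilde{\theta})/\partial\theta\,\partial\theta^\top\bigr)\ee^{\ii X\xi}\bigr] + O_p(n^{-1/2})$ uniformly in $\xi$, hence is $O_p(1)$. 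Therefore $\sup_{\xi\in\Pi}\bigl|\hat{G}_n(\xi,\hat{\theta}_n) - \hat{G}_n(\xi,\theta_0)\bigr| = O_p\bigl(|\hat{\theta}_n - \theta_0|\bigr)$, and summing the two brackets gives \eqref{lemma.projG unknown eq1}.

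For the bootstrap version \eqref{lemma.projG unknown eq2}, note that $\hat{G}_n^\ast(\xi,\theta_0) = n^{-1}\sum_{i=1}^n V_i\int\bigl(\partial g(x;\theta_0)/\partial\theta\bigr)\hat{\mathcal{K}}_b\bigl((x-W_i)/b\bigr)\ee^{\ii x\xi}\,dx$, so $\sup_{\xi\in\Pi}\bigl|\hat{G}_n^\ast(\xi,\theta_0)\bigr| = O_p(n^{-1/2})$ follows immediately from the bootstrap bound \eqref{lemma.ULLN of main term unknown eq2} of Lemma \ref{lemma.ULLN of main term unknown} with $h(x;\theta) = \partial g(x;\theta)/\partial\theta_j$. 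The only delicate point is the uniform-in-$\theta$ handling of the second-derivative average in the mean-value step, but this is not a genuinely new obstacle: it is dispatched by the fact that Lemma \ref{lemma.ULLN of main term unknown} is stated for $\theta$ ranging over a neighborhood of $\theta_0$ together with the consistency of $\hat{\theta}_n$, and all the additional work stemming from the estimation error in $\hat{f}_\epsilon^{ft}$ has already been absorbed into Lemmas \ref{lemma.convergence of main term unknown} and \ref{lemma.ULLN of main term unknown}.
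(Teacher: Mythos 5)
Your proposal is correct and follows essentially the same route as the paper, which simply carries the decomposition and mean-value expansion from the known-error Lemma \ref{lemma.projG} over to the unknown-error setting by substituting Lemma \ref{lemma.ULLN of main term unknown} (and its bootstrap bound) for the known-error auxiliary lemmas, with the effect of estimating $f_\epsilon^{ft}$ already absorbed there (the paper additionally notes $\int K_{\epsilon,2}(t)\,dt=\psi_2(0)=0$, so the kernel estimation does not perturb the bias of $\hat{G}_n(\xi,\theta_0)$). Your handling of the random intermediate point $\tilde{\theta}$ via consistency of $\hat{\theta}_n$ and the neighborhood statement of Lemma \ref{lemma.ULLN of main term unknown} is exactly how the paper's argument is meant to be read.
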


\begin{proof}[Proof of Lemma \ref{lemma.projG unknown}]
    The step of proof is identical to that of Lemma \ref{lemma.projG} except for using Lemma \ref{lemma.convergence of main term unknown} and Lemma \ref{lemma.ULLN of main term unknown} instead of Lemma \ref{lemma.convergence of main term known} and Lemma \ref{lemma.ULLN of main term} and is therefore omitted. It is noted that $\int K_{\epsilon,2}(t)\,dt=\psi_2(0)=0$, implying that the estimation of the deconvolution kernel does not influence the bias of the proposed $\hat{G}_n(\xi,\theta_0)$. Since similar arguments apply to the proofs of subsequent lemmas, we omit the detailed steps unless otherwise noted.
\end{proof}

\begin{lemma}\label{lemma.projdelta known}
    Suppose that Assumption \ref{ass.D} holds, together with either Assumption \ref{ass.O} for the ordinary smooth case or Assumption \ref{ass.S} for the supersmooth case,
    \begin{align}\label{lemma.projdelta known eq1}
        &\left\vert \Delta^{-1}_n(\hat{\theta}_n) - \Delta^{-1}(\theta_0)\right\vert = O_p\left(\left\vert \hat{\theta}_n-\theta_0\right\vert\right)+O_p\left(n^{-\frac{1}{2}}\right).
    \end{align}
    For the bootstrap version,
    \begin{align}\label{lemma.projdelta known eq2}
        &\left\vert \Delta_n^\ast(\theta_0)\right\vert = O\left(n^{-\frac{1}{2}}\right).
    \end{align}
\end{lemma}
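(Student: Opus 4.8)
The plan is to mirror, almost verbatim, the argument already used for $G_n$ in Lemma~\ref{lemma.projG}, with the vector-valued average replaced by the matrix-valued one defining $\Delta_n$. The starting observation is that, entrywise, $\Delta_n(\theta)=\frac1n\sum_{i=1}^n\int \dot g(x;\theta)\dot g^\top(x;\theta)\,\mathcal{K}_b\!\left(\tfrac{x-W_i}{b}\right)\,dx$ is exactly of the form controlled by Lemma~\ref{lemma.ULLN of main term} with the choice $h(x;\theta)=\dot g(x;\theta)\dot g^\top(x;\theta)$ (taken component by component) and $\xi=0$, and that $\Delta(\theta_0)=\mathbb{E}[\dot g(X;\theta_0)\dot g^\top(X;\theta_0)]=\mathbb{E}[h(X;\theta_0)]$ is the matching population limit; likewise $\Delta_n^\ast(\theta_0)$ is the corresponding multiplier average covered by the bootstrap part \eqref{lemma.ULLN of main term eq2} of that lemma.

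First I would split $\Delta_n(\hat\theta_n)-\Delta(\theta_0)=\big[\Delta_n(\hat\theta_n)-\Delta_n(\theta_0)\big]+\big[\Delta_n(\theta_0)-\Delta(\theta_0)\big]$. For the first bracket, a component-wise mean value expansion in $\theta$ writes it as $\big[\frac1n\sum_{i=1}^n\int \partial_{\theta_l}\!\big(\dot g\dot g^\top\big)(x;\tilde\theta)\,\mathcal{K}_b(\tfrac{x-W_i}{b})\,dx\big](\hat\theta_n-\theta_0)$ for some $\tilde\theta$ on the segment joining $\theta_0$ and $\hat\theta_n$; since $\hat\theta_n-\theta_0=O_p(n^{\delta-1/2})=o_p(1)$ by Assumption~\ref{ass.D}(ii), eventually $\tilde\theta$ lies in the neighborhood of $\theta_0$ on which Lemma~\ref{lemma.ULLN of main term} holds, so the bracketed average is $O_p(1)$ and the first bracket is $O_p(|\hat\theta_n-\theta_0|)$. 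For the second bracket, Lemma~\ref{lemma.ULLN of main term} with $h=\dot g\dot g^\top$, $\xi=0$, gives $|\Delta_n(\theta_0)-\Delta(\theta_0)|=O_p(n^{-1/2})$ directly. Hence $|\Delta_n(\hat\theta_n)-\Delta(\theta_0)|=O_p(|\hat\theta_n-\theta_0|)+O_p(n^{-1/2})=o_p(1)$.

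It then remains to pass to the inverses. As $\Delta^{-1}(\theta_0)$ appears in the projection \eqref{projection distribution}, $\Delta(\theta_0)$ is (maintained to be) nonsingular; combined with $\Delta_n(\hat\theta_n)\stackrel{p}{\longrightarrow}\Delta(\theta_0)$ and the continuous mapping theorem, $\Delta_n(\hat\theta_n)$ is invertible with probability tending to one and $\Delta_n^{-1}(\hat\theta_n)=O_p(1)$. The resolvent identity $\Delta_n^{-1}(\hat\theta_n)-\Delta^{-1}(\theta_0)=-\Delta^{-1}(\theta_0)\big[\Delta_n(\hat\theta_n)-\Delta(\theta_0)\big]\Delta_n^{-1}(\hat\theta_n)$ together with submultiplicativity of the Frobenius norm yields $|\Delta_n^{-1}(\hat\theta_n)-\Delta^{-1}(\theta_0)|\le|\Delta^{-1}(\theta_0)|\,|\Delta_n(\hat\theta_n)-\Delta(\theta_0)|\,|\Delta_n^{-1}(\hat\theta_n)|=O_p(|\hat\theta_n-\theta_0|)+O_p(n^{-1/2})$, which is \eqref{lemma.projdelta known eq1}. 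For the bootstrap claim \eqref{lemma.projdelta known eq2}, the conditional mean of $\Delta_n^\ast(\theta_0)$ given the original sample vanishes because the multipliers $V_i$ have zero mean and are independent of the data, so $\Delta_n^\ast(\theta_0)$ is a centered multiplier average of precisely the type bounded in \eqref{lemma.ULLN of main term eq2} with $h=\dot g\dot g^\top$, $\xi=0$; this delivers $|\Delta_n^\ast(\theta_0)|=O_p(n^{-1/2})$.

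The only genuine point of care is that Lemma~\ref{lemma.ULLN of main term} is being invoked with $h=\dot g\dot g^\top$, a \emph{product} of first-order $\theta$-derivatives of $g$ rather than a single partial derivative of $g$; one must check this product still satisfies the smoothness and Lipschitz hypotheses imposed on the generic $h$ in Assumption~\ref{ass.O}(i) (respectively Assumption~\ref{ass.S}(i)). This is exactly where the boundedness of the derivatives in Assumption~\ref{ass.O}(i) is used: by the Leibniz rule, the $p$-th $x$-derivative of a product of $C^p$ functions whose derivatives up to order $p$ are bounded, integrable, and Lipschitz is again bounded, integrable, and Lipschitz, so the admissible class of $h$'s is closed under products and the cited lemmas apply with no change. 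Everything else is a routine repetition of the $G_n$ argument, so I do not expect any serious obstacle beyond this bookkeeping.
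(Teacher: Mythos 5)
Your proposal is correct and follows essentially the same route as the paper: a mean-value expansion of $\Delta_n(\hat\theta_n)-\Delta_n(\theta_0)$, the uniform-convergence lemma (with $h$ taken to be the relevant products of $\theta$-derivatives of $g$, $\xi=0$) for $\Delta_n(\theta_0)-\Delta(\theta_0)$, the identity $\Delta_n^{-1}(\hat\theta_n)-\Delta^{-1}(\theta_0)=-\Delta^{-1}(\theta_0)\bigl(\Delta_n(\hat\theta_n)-\Delta(\theta_0)\bigr)\Delta_n^{-1}(\hat\theta_n)$ with nonsingularity of $\Delta(\theta_0)$ for the inverses, and the zero-mean multiplier property plus the bootstrap part of the same lemma for $\Delta_n^\ast(\theta_0)$. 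Your explicit check that the admissible class of $h$'s is closed under products (and your argument for $\Delta_n^{-1}(\hat\theta_n)=O_p(1)$) is careful bookkeeping that the paper leaves implicit, but it does not change the argument.
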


\begin{proof}[Proof of Lemma \ref{lemma.projdelta known}]
    Following along the same lines as Lemma \ref{lemma.projG}, we designate $(\partial g^2(x;\theta)/\partial\theta_j\partial\theta_k)(\partial g(x;\theta)/\partial\theta_l)$ and $\partial g^2(x;\theta)/\partial\theta_j\partial\theta_k$ as $h(x;\theta)$ in Lemma \ref{lemma.ULLN of main term unknown} and Lemma \ref{lemma.convergence of main term unknown} separately, where $j,k,l$ value from $1$ to $d$. Thus,
    \begin{align*}
        &\left\vert\Delta_n(\hat{\theta}_n) - \Delta(\theta_0)\right\vert = O_p\left(\left\vert \hat{\theta}_n-\theta_0\right\vert\right)+O_p\left(n^{-\frac{1}{2}}\right).
    \end{align*}
    We claim that \eqref{lemma.projdelta known eq1} holds by the boundedness of $\Delta(\theta_0)$ mentioned in Assumption \ref{ass.O} and
    \begin{align}\label{transfer of delta}
        &\Delta^{-1}_n(\hat{\theta}_n) - \Delta^{-1}(\theta_0) = -\Delta^{-1}(\theta_0)\left(\Delta_n(\hat{\theta}_n) - \Delta(\theta_0)\right)\Delta^{-1}_n(\hat{\theta}_n).
    \end{align} 
\end{proof}

\begin{lemma}\label{lemma.projdelta unknown}
    Suppose that Assumption \ref{ass.D} and \ref{ass.D'} hold, along with either Assumption \ref{ass.O} and \ref{ass.O'} for the ordinary smooth case or Assumption \ref{ass.S} and \ref{ass.S'} for the supersmooth case,
    \begin{align}\label{lemma.projdelta unknown eq1}
        &\left\vert \hat{\Delta}^{-1}_n(\hat{\theta}_n) - \Delta^{-1}(\theta_0)\right\vert = O_p\left(\left\vert \hat{\theta}_n-\theta_0\right\vert\right)+O_p\left(n^{-\frac{1}{2}}\right)
    \end{align}
    and 
    \begin{align}\label{lemma.projdelta unknown eq2}
        &\left\vert\hat{\Delta}_n^\ast(\theta_0)\right\vert = O\left(n^{-\frac{1}{2}}\right).
    \end{align}
\end{lemma}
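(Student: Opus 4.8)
The plan is to mirror the proof of Lemma \ref{lemma.projdelta known} line for line, replacing the deconvolution kernel $\mathcal{K}_b$ by its estimated counterpart $\hat{\mathcal{K}}_b$ and, accordingly, invoking Lemma \ref{lemma.convergence of main term unknown} and Lemma \ref{lemma.ULLN of main term unknown} in place of Lemma \ref{lemma.convergence of main term known} and Lemma \ref{lemma.ULLN of main term} --- the very same substitution already used to prove Lemma \ref{lemma.projG unknown}. First I would decompose $\hat{\Delta}_n(\hat{\theta}_n)-\Delta(\theta_0)=[\hat{\Delta}_n(\hat{\theta}_n)-\hat{\Delta}_n(\theta_0)]+[\hat{\Delta}_n(\theta_0)-\Delta(\theta_0)]$. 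For the first bracket, an entrywise mean value expansion in $\theta$ expresses the $(j,k)$ entry as $(\hat{\theta}_n-\theta_0)^\top$ times $\frac{1}{n}\sum_{i=1}^n\int[\,\partial(\dot{g}_j\dot{g}_k)/\partial\theta\,](x;\tilde{\theta})\,\hat{\mathcal{K}}_b(\tfrac{x-W_i}{b})\,dx$ with $\tilde{\theta}$ between $\theta_0$ and $\hat{\theta}_n$; Lemma \ref{lemma.ULLN of main term unknown} (with $\xi=0$ and $h$ equal to the relevant first $\theta$-derivative of the product $\dot{g}\,\dot{g}^\top$, which is a finite sum of products of partial $\theta$-derivatives of $g$ of order at most three and hence satisfies the smoothness, integrability and Lipschitz requirements imposed on $h(\cdot;\theta)$ in Assumptions \ref{ass.O}(i)/\ref{ass.O'} in the ordinary smooth case and \ref{ass.S}(i)/\ref{ass.S'} in the supersmooth case) shows this average converges to a finite limit at rate $O_p(n^{-1/2})$, hence is $O_p(1)$, so the first bracket is $O_p(|\hat{\theta}_n-\theta_0|)$. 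For the second bracket, Lemma \ref{lemma.ULLN of main term unknown} applied at $\theta=\theta_0$, $\xi=0$, with $h(x;\theta_0)=\dot{g}_j(x;\theta_0)\dot{g}_k(x;\theta_0)$, gives $|\hat{\Delta}_n(\theta_0)-\Delta(\theta_0)|=O_p(n^{-1/2})$. Combining the two brackets yields $|\hat{\Delta}_n(\hat{\theta}_n)-\Delta(\theta_0)|=O_p(|\hat{\theta}_n-\theta_0|)+O_p(n^{-1/2})$.

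To pass to the inverses, I would use the analogue of the identity \eqref{transfer of delta}, namely $\hat{\Delta}^{-1}_n(\hat{\theta}_n)-\Delta^{-1}(\theta_0)=-\Delta^{-1}(\theta_0)(\hat{\Delta}_n(\hat{\theta}_n)-\Delta(\theta_0))\hat{\Delta}^{-1}_n(\hat{\theta}_n)$, together with the invertibility and boundedness of $\Delta(\theta_0)$ from Assumption \ref{ass.O} (resp. Assumption \ref{ass.S}) and the fact that $\hat{\Delta}_n(\hat{\theta}_n)\stackrel{p}{\longrightarrow}\Delta(\theta_0)$, so that $\hat{\Delta}^{-1}_n(\hat{\theta}_n)$ is bounded in probability; this gives \eqref{lemma.projdelta unknown eq1}. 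For the bootstrap statement \eqref{lemma.projdelta unknown eq2}, I would write the $(j,k)$ entry of $\hat{\Delta}^\ast_n(\theta_0)$ as $\frac{1}{n}\sum_{i=1}^n V_i\int \dot{g}_j(x;\theta_0)\dot{g}_k(x;\theta_0)\hat{\mathcal{K}}_b(\tfrac{x-W_i}{b})\,dx$ and apply the bootstrap part \eqref{lemma.ULLN of main term unknown eq2} of Lemma \ref{lemma.ULLN of main term unknown} with $\xi=0$; the zero-mean and independence of the multipliers $V_i$ remove the centering term and deliver the claimed $O_p(n^{-1/2})$.

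Since all the genuine analysis has been carried out once and for all in Lemmas \ref{lemma.convergence of main term unknown} and \ref{lemma.ULLN of main term unknown}, there is no substantial obstacle here. The only points meriting a remark --- and they are inherited, not re-proved --- are the two ``bias-neutrality'' facts for the estimated kernel, namely that $\int K_{\epsilon,2}(t)\,dt=\psi_2(0)=0$ (so that replacing $\mathcal{K}_b$ by $\hat{\mathcal{K}}_b$ does not perturb the bias of the centered average) and that the $\mathcal{K}_{\epsilon,1}$ contribution is a degenerate second-order $U$-process whose H\'{a}jek projection has mean zero, both established already in the proofs of Lemma \ref{lemma.convergence of main term unknown} and Lemma \ref{lemma.projG unknown}; and the bookkeeping that every $h$ used above lies in the admissible class of Assumptions \ref{ass.O}(i)/\ref{ass.O'} (or \ref{ass.S}(i)/\ref{ass.S'}). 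Consequently, exactly as for Lemma \ref{lemma.projG unknown}, the write-up reduces to: the proof is identical to that of Lemma \ref{lemma.projdelta known}, with Lemmas \ref{lemma.convergence of main term unknown} and \ref{lemma.ULLN of main term unknown} replacing Lemmas \ref{lemma.convergence of main term known} and \ref{lemma.ULLN of main term}, and is therefore omitted.
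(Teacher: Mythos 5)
Your proposal is correct and follows exactly the route the paper takes: the paper's proof of this lemma simply refers back to the argument for Lemma \ref{lemma.projdelta known} (decomposition, mean-value expansion, Lemmas \ref{lemma.convergence of main term unknown} and \ref{lemma.ULLN of main term unknown} in place of their known-error counterparts, the inverse identity \eqref{transfer of delta}, and the remark that $\int K_{\epsilon,2}(t)\,dt=\psi_2(0)=0$ so the estimated kernel does not perturb the bias), which is precisely the substitution you spell out.
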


\begin{proof}[Proof of Lemma \ref{lemma.projdelta unknown}]
    Relevant explanations can be found in the proof of Lemma \ref{lemma.projG unknown}.
\end{proof}

\begin{lemma}\label{lemma.main term}
    Suppose that Assumption \ref{ass.D} holds, together with either Assumption \ref{ass.O} for the ordinary smooth case or Assumption \ref{ass.S} for the supersmooth case,
    \begin{align}\label{lemma.main term eq1}
        &\sup_{\xi\in\Pi}\left\vert S_{n}(\xi,\hat{\theta}_n)- S_{n}(\xi,\theta_0)+\left(\hat{\theta}_n-\theta_0\right)^\top G_n(\xi,\theta_0)\right\vert=O_p\left(\left\vert \hat{\theta}_n-\theta_0\right\vert^2\right).
    \end{align}
    For the bootstrap version,
    \begin{align}\label{lemma.main term eq2}
        &\sup_{\xi\in\Pi}\left\vert S^\ast_{n}(\xi,\hat{\theta}_n)- S^\ast_{n}(\xi,\theta_0)+\left(\hat{\theta}_n-\theta_0\right)^\top G^\ast_n(\xi,\theta_0)\right\vert=O_p\left(\left\vert \hat{\theta}_n-\theta_0\right\vert^2\right).
    \end{align}
\end{lemma}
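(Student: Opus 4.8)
The plan is to exploit the fact that $S_n(\xi,\theta)$ depends on $\theta$ only through $g(x;\theta)$, so that \eqref{lemma.main term eq1} is nothing but a second-order Taylor expansion of $g(x;\cdot)$ carried under the integral and the sample average, with the quadratic remainder controlled uniformly in $\xi$. First I would write
\[
S_n(\xi,\hat\theta_n) - S_n(\xi,\theta_0) = -\frac{1}{n}\sum_{i=1}^n\int\big[g(x;\hat\theta_n) - g(x;\theta_0)\big]\mathcal{K}_b\Big(\tfrac{x-W_i}{b}\Big)\ee^{\ii x\xi}\,dx,
\]
and expand, via the integral form of Taylor's theorem,
\[
g(x;\hat\theta_n) - g(x;\theta_0) = \dot g^\top(x;\theta_0)(\hat\theta_n-\theta_0) + \int_0^1(1-s)(\hat\theta_n-\theta_0)^\top\ddot g\big(x;\theta_0+s(\hat\theta_n-\theta_0)\big)(\hat\theta_n-\theta_0)\,ds,
\]
where $\ddot g = \partial^2 g/\partial\theta\partial\theta^\top$. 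Interchanging the finite sum and the integral, the linear term contributes exactly $-(\hat\theta_n-\theta_0)^\top G_n(\xi,\theta_0)$ by the definition of $G_n$ in Appendix \ref{sec.AppendixA}; moving it to the left-hand side is precisely what produces the quantity bounded in \eqref{lemma.main term eq1}.

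It then remains to bound the quadratic remainder by $O_p(|\hat\theta_n-\theta_0|^2)$ uniformly in $\xi\in\Pi$. Using $|\ee^{\ii x\xi}| = 1$ and factoring out the scalar $|\hat\theta_n-\theta_0|^2$, the remainder is at most $\tfrac12|\hat\theta_n-\theta_0|^2\,\sup_{s\in[0,1]}\big|\mathcal{G}_n\big(\xi;\theta_0+s(\hat\theta_n-\theta_0)\big)\big|$ up to a dimensional constant, where $\mathcal{G}_n(\xi;\theta)$ is the $d\times d$ matrix with $(j,k)$ entry $\frac{1}{n}\sum_i\int \partial^2 g(x;\theta)/\partial\theta_j\partial\theta_k\,\mathcal{K}_b(\tfrac{x-W_i}{b})\ee^{\ii x\xi}\,dx$. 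Since $\hat\theta_n\to\theta_0$ by Assumption \ref{ass.D}(ii), for $n$ large the point $\theta_0+s(\hat\theta_n-\theta_0)$ stays in a fixed neighborhood of $\theta_0$ uniformly in $s\in[0,1]$; moreover each second-order partial derivative of $g$ is an admissible choice of $h(\cdot;\theta)$ in the sense of Assumptions \ref{ass.O}(i) and \ref{ass.S}(i). Hence Lemma \ref{lemma.ULLN of main term}, applied with $h = \partial^2 g/\partial\theta_j\partial\theta_k$, gives $\sup_{\xi\in\Pi}\big|\mathcal{G}_n(\xi;\theta) - \mathbb{E}[\partial^2 g(X;\theta)/\partial\theta_j\partial\theta_k\,\ee^{\ii X\xi}]\big| = O_p(n^{-1/2})$, while the boundedness and integrability conditions on $h$ in those assumptions make the limiting expectation bounded uniformly over $\xi$ and over $\theta$ near $\theta_0$. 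Combining, $\sup_{s\in[0,1]}\sup_{\xi\in\Pi}|\mathcal{G}_n(\xi;\theta_0+s(\hat\theta_n-\theta_0))| = O_p(1)$, which yields \eqref{lemma.main term eq1}.

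For the bootstrap counterpart \eqref{lemma.main term eq2} I would run the identical argument with the multipliers $V_i$ inserted: the linear term becomes $-(\hat\theta_n-\theta_0)^\top G_n^\ast(\xi,\theta_0)$, and the remainder is governed by the multiplier version $\frac{1}{n}\sum_i V_i\int \partial^2 g(x;\theta)/\partial\theta_j\partial\theta_k\,\mathcal{K}_b(\tfrac{x-W_i}{b})\ee^{\ii x\xi}\,dx$, which is $O_p(n^{-1/2}) = O_p(1)$ uniformly in $\xi$ and in $\theta$ near $\theta_0$ by \eqref{lemma.ULLN of main term eq2} of Lemma \ref{lemma.ULLN of main term}, using the zero-mean, unit-variance, and independence properties of the $V_i$.

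The step I expect to be the main obstacle is making the $O_p(1)$ control of $\mathcal{G}_n$ genuinely \emph{uniform} over the random line segment joining $\theta_0$ and $\hat\theta_n$ (equivalently, over $s\in[0,1]$) rather than at a single fixed $\theta$: Lemma \ref{lemma.ULLN of main term} is phrased ``when $\theta$ values in a neighborhood of $\theta_0$'', and one must either read it as a locally uniform statement or supplement it by arguing, via the Lipschitz-in-$x$ and $C^3$-in-$\theta$ smoothness of $g$ granted by Assumptions \ref{ass.O}(i)/\ref{ass.S}(i), that $\sup_{\xi\in\Pi}|\mathcal{G}_n(\xi;\theta) - \mathcal{G}_n(\xi;\theta_0)|$ is itself $o_p(1)$ on a shrinking neighborhood of $\theta_0$. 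Everything else is routine; in particular no undersmoothing or additional bandwidth restriction is needed here, since the $\theta$-expansion interacts with the deconvolution kernel only through the trivial bound $|\ee^{\ii x\xi}| = 1$.
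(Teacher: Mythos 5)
Your proposal matches the paper's own proof in all essentials: the same decomposition of $S_n(\xi,\hat\theta_n)-S_n(\xi,\theta_0)$, a second-order Taylor expansion of $g(x;\cdot)$ whose linear part is exactly $-(\hat\theta_n-\theta_0)^\top G_n(\xi,\theta_0)$, and control of the quadratic remainder by applying Lemma \ref{lemma.ULLN of main term} with the second-order partials of $g$ playing the role of $h$ (the paper uses the Lagrange mean-value form with an intermediate $\tilde\theta_2$ rather than your integral remainder, which is immaterial). The uniformity-over-the-segment issue you flag is treated no more carefully in the paper — it simply invokes the ULLN lemma at the random intermediate point, relying on Assumptions \ref{ass.O}(i)/\ref{ass.S}(i) being stated for $\tilde\theta$ between $\theta_0$ and $\hat\theta_n$ — so your argument is, if anything, slightly more explicit on that point.
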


\begin{proof}[Proof of Lemma \ref{lemma.main term}]
    Similar to all lemmas we have proved before, we decompose $S_n(\xi,\hat{\theta}_n)$ as,
    \begin{align*}
        S_n(\xi,\hat{\theta}_n) - S_n(\xi,\theta_0) = &-\frac{1}{n}\sum\limits_{i=1}^n \int\left[g(x;\hat{\theta}_n) - g(x;\theta_0)\right]\mathcal{K}_b\left(\frac{x-W_i}{b}\right)\ee^{\ii x\xi}\,dx\\
        = &-\left(\hat{\theta}_n-\theta_0\right)^\top \left[\frac{1}{n}\sum\limits_{i=1}^n \int\frac{\partial g(x;\theta_0)}{\partial \theta}\mathcal{K}_b\left(\frac{x-W_i}{b}\right)\ee^{\ii x\xi}dx \right] \\
        & - \left(\hat{\theta}_n-\theta_0\right)^\top \left[\frac{1}{n}\sum_{i=1}^n \int \frac{\partial^2 g(x;\tilde{\theta}_2)}{\partial \theta\partial\theta^\top}\mathcal{K}_b\left(\frac{x-W_i}{b}\right)\ee^{\ii x\xi}\,dx\right] \left(\hat{\theta}_n-\theta_0\right),
    \end{align*}
    where $\tilde{\theta}_2$ lies between $\theta_0$ and $\hat{\theta}_n$. Thus, \eqref{lemma.main term eq1} holds by utilization of Lemma \ref{lemma.ULLN of main term unknown} via similar arguments to the proof of \eqref{lemma.projG eq1} and the definition of $G_n(\xi,\theta_0)$.
\end{proof}

\begin{lemma}\label{lemma.main term unknown}
    Suppose that Assumption \ref{ass.D} and \ref{ass.D'} hold, along with either Assumption \ref{ass.O} and \ref{ass.O'} for the ordinary smooth case or Assumption \ref{ass.S} and \ref{ass.S'} for the supersmooth case,
    \begin{align}\label{lemma.main term unknown eq1}
        &\sup_{\xi\in\Pi}\left\vert \hat{S}_{n}(\xi,\hat{\theta}_n)- \hat{S}_{n}(\xi,\theta_0)+\left(\hat{\theta}_n-\theta_0\right)^\top \hat{G}_n(\xi,\theta_0)\right\vert=O_p\left(\left\vert \hat{\theta}_n-\theta_0\right\vert^2\right)
    \end{align}
    and 
    \begin{align}\label{lemma.main term unknown eq2}
        &\sup_{\xi\in\Pi}\left\vert \hat{S}^\ast_{n}(\xi,\hat{\theta}_n)- \hat{S}^\ast_{n}(\xi,\theta_0)+\left(\hat{\theta}_n-\theta_0\right)^\top \hat{G}^\ast_n(\xi,\theta_0)\right\vert=O_p\left(\left\vert \hat{\theta}_n-\theta_0\right\vert^2\right).
    \end{align}
\end{lemma}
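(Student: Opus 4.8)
The plan is to follow the proof of Lemma \ref{lemma.main term} essentially line for line, replacing the deconvolution kernel $\mathcal{K}_b$ by its estimated counterpart $\hat{\mathcal{K}}_b$ and $G_n$ by $\hat{G}_n$, and absorbing the additional uncertainty coming from $\hat{f}_\epsilon^{ft}$ by invoking Lemma \ref{lemma.ULLN of main term unknown} wherever the known-distribution argument uses Lemma \ref{lemma.ULLN of main term}. First I would Taylor-expand $g(x;\hat\theta_n)-g(x;\theta_0)$ in $\theta$ to second order around $\theta_0$, which gives
\begin{align*}
\hat{S}_n(\xi,\hat\theta_n)-\hat{S}_n(\xi,\theta_0) = &-\left(\hat\theta_n-\theta_0\right)^\top\left[\frac{1}{n}\sum_{i=1}^n\int\frac{\partial g(x;\theta_0)}{\partial\theta}\hat{\mathcal{K}}_b\left(\frac{x-W_i}{b}\right)\ee^{\ii x\xi}\,dx\right]\\
&-\left(\hat\theta_n-\theta_0\right)^\top\left[\frac{1}{n}\sum_{i=1}^n\int\frac{\partial^2 g(x;\tilde\theta_2)}{\partial\theta\partial\theta^\top}\hat{\mathcal{K}}_b\left(\frac{x-W_i}{b}\right)\ee^{\ii x\xi}\,dx\right]\left(\hat\theta_n-\theta_0\right),
\end{align*}
where $\tilde\theta_2$ lies on the segment joining $\theta_0$ and $\hat\theta_n$. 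By the definition of $\hat{G}_n(\xi,\theta_0)$ in Appendix \ref{sec.AppendixA}, the first bracket is precisely $\hat{G}_n(\xi,\theta_0)$, so the claim \eqref{lemma.main term unknown eq1} reduces to showing that the quadratic remainder is $O_p(|\hat\theta_n-\theta_0|^2)$ uniformly in $\xi\in\Pi$, that is, that the Hessian bracket is $O_p(1)$ uniformly in $\xi$.

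To bound that bracket, I would designate each entry $\partial^2 g(x;\theta)/\partial\theta_j\partial\theta_k$, $j,k=1,\dots,d$, as the generic function $h(x;\theta)$ appearing in Assumptions \ref{ass.O} and \ref{ass.O'} (respectively \ref{ass.S} and \ref{ass.S'} in the supersmooth case) and apply Lemma \ref{lemma.ULLN of main term unknown}, which yields
\begin{align*}
\sup_{\xi\in\Pi}\left\vert\frac{1}{n}\sum_{i=1}^n\int h(x;\theta)\hat{\mathcal{K}}_b\left(\frac{x-W_i}{b}\right)\ee^{\ii x\xi}\,dx-\mathbb{E}\left[h(X;\theta)\ee^{\ii X\xi}\right]\right\vert=O_p\left(n^{-\frac{1}{2}}\right)
\end{align*}
uniformly for $\theta$ in a neighborhood of $\theta_0$. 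Since $\hat\theta_n-\theta_0=O_p(n^{\delta-1/2})=o_p(1)$ under Assumption \ref{ass.D}(ii), the intermediate value $\tilde\theta_2$ falls in such a neighborhood with probability tending to one, and $\mathbb{E}[h(X;\tilde\theta_2)\ee^{\ii X\xi}]=O(1)$ by the boundedness of the $\theta$-derivatives of $g$ imposed in Assumption \ref{ass.O}(i)/\ref{ass.S}(i); hence the Hessian bracket is $O_p(1)$ uniformly in $\xi$, which establishes \eqref{lemma.main term unknown eq1}. For the bootstrap version \eqref{lemma.main term unknown eq2} I would run the identical decomposition with the multipliers $V_i$ inserted: the first bracket becomes $\hat{G}_n^\ast(\xi,\theta_0)$, while the Hessian bracket is now even $O_p(n^{-1/2})$ by the bootstrap part \eqref{lemma.ULLN of main term unknown eq2} of Lemma \ref{lemma.ULLN of main term unknown}, the mean-zero property of $V_i$ removing the would-be $O(1)$ term.

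The main difficulty is not conceptual but a matter of verifying that the estimation of $f_\epsilon^{ft}$ does not spoil the expansion; this is already packaged inside Lemma \ref{lemma.ULLN of main term unknown}, whose proof uses the decomposition $\hat{\mathcal{K}}_\epsilon=\mathcal{K}_\epsilon+\mathcal{K}_{\epsilon,1}+\mathcal{K}_{\epsilon,2}$ from \eqref{Decompose of est of decon} together with the stronger bandwidth conditions in Assumptions \ref{ass.O'}(i)/\ref{ass.S'}(i), and it crucially relies on $\int K_{\epsilon,2}(t)\,dt=\psi_2(0)=0$ so that the correction terms introduce no non-negligible bias. Consequently no argument beyond Lemma \ref{lemma.ULLN of main term unknown} is needed here, and the only remaining technical subtlety, namely replacing the random $\tilde\theta_2$ by a statement uniform over a shrinking neighborhood of $\theta_0$, is standard under the smoothness assumptions and is handled exactly as in the proof of Lemma \ref{lemma.projG}.
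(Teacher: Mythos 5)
Your proposal is correct and matches the paper's intended argument: the paper proves this lemma simply by pointing to the proof of Lemma \ref{lemma.main term} (the same second-order Taylor expansion in $\theta$, with the first bracket identified as $\hat{G}_n(\xi,\theta_0)$, resp.\ $\hat{G}_n^\ast(\xi,\theta_0)$) with Lemma \ref{lemma.ULLN of main term unknown} substituted for Lemma \ref{lemma.ULLN of main term} to control the Hessian bracket uniformly in $\xi$, which is exactly what you do. Your additional remarks on the intermediate value $\tilde\theta_2$ and on the bootstrap Hessian term being $O_p(n^{-1/2})$ are consistent with (indeed slightly more explicit than) the paper's treatment.
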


\begin{proof}[Proof of Lemma \ref{lemma.main term unknown}]
    See the discussion in the proof of Lemma \ref{lemma.projG unknown}.
\end{proof}

\begin{lemma}\label{lemma.proj main term known}
    Suppose that Assumption \ref{ass.D} holds, together with either Assumption \ref{ass.O} for the ordinary smooth case or Assumption \ref{ass.S} for the supersmooth case,
    \begin{align}\label{lemma.proj main term known eq1}
        &\sup_{\xi\in\Pi}\left\vert M_n(\hat\theta_n)- M_n(\theta_0)+\left[\Delta_n(\theta_0)-\kappa_n(\theta_0)\right]\left(\hat{\theta}_n-\theta_0\right)\right\vert=O_p\left(\left\vert\hat{\theta}_n-\theta_0\right\vert^2\right),
    \end{align}
    where
    \begin{align}\label{lemma.proj main term known kappa}
        &\kappa_n(\theta_0) = \frac{1}{n}\sum\limits_{i=1}^n \int (Y_i-g(x;\theta_0))\frac{\partial^2 g(x;\theta_0)}{\partial \theta\partial \theta^\top} \mathcal{K}_b\left(\frac{x-W_i}{b}\right) dx.
    \end{align}
    For the bootstrap version,
    \begin{align}\label{lemma.proj main term known eq2}
        &\sup_{\xi\in\Pi}\left\vert M_n^\ast(\hat\theta_n)- M_n^\ast(\theta_0)+\left[\Delta_n^\ast(\theta_0)-\kappa_n^\ast(\theta_0)\right]\left(\hat{\theta}_n-\theta_0\right)\right\vert=O_p\left(\left\vert\hat{\theta}_n-\theta_0\right\vert^2\right).
    \end{align}
\end{lemma}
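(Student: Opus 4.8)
The plan is to read the claim as a second-order Taylor expansion of the $d$-vector-valued map $\theta\mapsto M_n(\theta)$ about $\theta_0$, exactly parallel to the proof of Lemma~\ref{lemma.main term} for $S_n(\xi,\cdot)$. A preliminary remark is that $M_n(\theta)$ does not depend on $\xi$, so the supremum over $\xi\in\Pi$ is vacuous and the statement reduces to a purely finite-dimensional expansion. I would write the $j$-th component of $M_n(\theta)$ as $\frac1n\sum_{i=1}^n\int\bigl(Y_i-g(x;\theta)\bigr)\,\partial_{\theta_j}g(x;\theta)\,\mathcal{K}_b\!\bigl(\tfrac{x-W_i}{b}\bigr)\,dx$ and differentiate under the integral sign in $\theta$; this is legitimate because Assumption~\ref{ass.O}(i) (resp.\ Assumption~\ref{ass.S}(i)) makes $g$ and all its $\theta$-partials up to order three bounded and integrable in $x$, with bounds uniform for $\theta$ in a neighbourhood of $\theta_0$.

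The first step is to identify the Jacobian. Differentiating, the $(j,k)$ entry of $\partial M_n(\theta)/\partial\theta^\top$ is $\frac1n\sum_i\int\bigl[-\partial_{\theta_k}g\,\partial_{\theta_j}g+(Y_i-g)\,\partial^2_{\theta_j\theta_k}g\bigr]\mathcal{K}_b\,dx$, so that $\partial M_n(\theta)/\partial\theta^\top=-\Delta_n(\theta)+\kappa_n(\theta)$ with $\Delta_n$ as in Appendix~\ref{sec.AppendixA} and $\kappa_n$ as in \eqref{lemma.proj main term known kappa} (using symmetry of $\Delta_n$). A component-wise Taylor expansion with Lagrange remainder then yields $M_n(\hat\theta_n)-M_n(\theta_0)=\bigl[-\Delta_n(\theta_0)+\kappa_n(\theta_0)\bigr](\hat\theta_n-\theta_0)+R_n$, where the quadratic remainder $R_n$ is built from the second $\theta$-derivatives of the integrand evaluated at an intermediate point $\tilde\theta_3$ between $\theta_0$ and $\hat\theta_n$. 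Each block of the corresponding Hessian array is of the form $\frac1n\sum_i\int\psi(x;\theta)\,\mathcal{K}_b\!\bigl(\tfrac{x-W_i}{b}\bigr)\,dx$, with $\psi(\cdot;\theta)$ either a finite sum of products of $\theta$-partials of $g$ of total order at most three, or a term of the form $(Y_i-g(x;\theta))\,\partial^3_\theta g(x;\theta)$.

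Next I would show $R_n=O_p(|\hat\theta_n-\theta_0|^2)$ by proving that every such Hessian block is $O_p(1)$ uniformly for $\theta$ near $\theta_0$. The pure product-of-derivatives blocks are covered by Lemma~\ref{lemma.ULLN of main term}, taking $h(\cdot;\theta)$ there to be the relevant product (which inherits the boundedness and Lipschitz hypotheses of Assumption~\ref{ass.O}(i)/\ref{ass.S}(i)); the block carrying the factor $(Y_i-g)$ is covered by Lemma~\ref{lemma.convergence of main term known} with $h=\partial^3_\theta g$ and $\xi=0$. Both lemmas deliver $O_p(n^{-1/2})$ fluctuations around a bounded mean, hence $O_p(1)$. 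Since $\hat\theta_n$ is consistent under Assumption~\ref{ass.D}(ii), the intermediate value $\tilde\theta_3$ lies in the relevant neighbourhood with probability approaching one. \textbf{The main obstacle} is precisely that $R_n$ is evaluated at the \emph{random} point $\tilde\theta_3$: one must upgrade Lemmas~\ref{lemma.convergence of main term known} and \ref{lemma.ULLN of main term} to statements that hold uniformly over a neighbourhood of $\theta_0$, which is available because the envelopes $\sup_{\theta\in N}|\psi(x;\theta)|$ are again bounded and integrable in $x$ by Assumption~\ref{ass.O}(i), and the deconvolution kernel enters only through the bounded, $b$-controlled quantities that already appear inside those lemmas after the substitution $x=W_i+bu$; alternatively one argues at the deterministic $\theta_0$ and transfers using the Lipschitz-in-$\theta$ bounds.

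For the bootstrap version \eqref{lemma.proj main term known eq2}, the identical Taylor expansion applies with the multiplier $V_i$ inserted inside each average, $\Delta_n^\ast(\theta_0)$ and $\kappa_n^\ast(\theta_0)$ being the $V$-weighted analogues; the corresponding Hessian array is $O_p(1)$ by the bootstrap counterparts \eqref{lemma.convergence of main term known eq2} and \eqref{lemma.ULLN of main term eq2}, using that $\{V_i\}$ is independent of the data with unit variance. This completes the plan.
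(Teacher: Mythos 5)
Your proposal is correct and follows essentially the same route as the paper: a second-order Taylor expansion of $M_n(\cdot)$ about $\theta_0$ with Lagrange remainder, identifying the Jacobian as $-\Delta_n(\theta_0)+\kappa_n(\theta_0)$, and bounding the Hessian blocks at the intermediate point by the uniform-in-$\theta$ laws of large numbers (Lemmas \ref{lemma.convergence of main term known} and \ref{lemma.ULLN of main term}, whose statements and Assumption \ref{ass.O}(i) already cover $\theta$ between $\theta_0$ and $\hat\theta_n$, resolving the "random intermediate point" issue you flag). The only cosmetic difference is that the paper routes the $(Y_i-g)\,\partial^3_\theta g$ block through the ULLN lemma with $h$ taken as $g\,\partial^3_\theta g$ and $\partial^3_\theta g$ rather than through Lemma \ref{lemma.convergence of main term known} at $\xi=0$, which changes nothing substantive.
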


\begin{proof}[Proof of Lemma \ref{lemma.proj main term known}]
    Decomposing $M_n(\hat\theta_n)$ as 
    \begin{align*}
        &M_{n}(\hat{\theta}_n) - M_{n}(\theta_0)  \\
        = &\frac{1}{n}\sum\limits_{i=1}^n\int \left[\left(Y_i-g(x;\hat{\theta}_n)\right)\frac{\partial g(x;\hat{\theta}_n)}{\partial \theta} - \left(Y_i-g(x;\theta_0)\right)\frac{\partial g(x;\theta_0)}{\partial \theta} \right] \mathcal{K}_b\left(\frac{x-W_i}{b}\right)dx\\
        = & -\left[ \frac{1}{n}\sum\limits_{i=1}^n\int \frac{\partial g(x;\theta_0)}{\partial \theta}\left(\frac{\partial g(x;\theta_0)}{\partial \theta}\right)^\top\mathcal{K}_b\left(\frac{x-W_i}{b}\right)dx \right]\left(\hat{\theta}_n-\theta_0\right)\\
        & + \left[ \frac{1}{n}\sum\limits_{i=1}^n \int \left(Y_i-g(x;\theta_0)\right)\frac{\partial^2 g(x;\theta_0)}{\partial \theta\partial \theta^\top} \mathcal{K}_b\left(\frac{x-W_i}{b}\right) dx  \right]\left(\hat{\theta}_n-\theta_0\right)\\
        & + \left(\hat{\theta}_n-\theta_0\right)^\top \left\{ \frac{1}{n}\sum\limits_{i=1}^n \int \frac{\partial^2 \left[\left(Y_i-g(x;\tilde{\theta}_3)\right)\frac{\partial g(x;\tilde{\theta}_3)}{\partial \theta}\right] }{\partial \theta\partial \theta^\top}\mathcal{K}_b\left(\frac{x-W_i}{b}\right)dx \right\}\left(\hat{\theta}_n-\theta_0\right).
    \end{align*}
    Thus, \eqref{lemma.proj main term known eq1} follows by Lemma \ref{lemma.ULLN of main term unknown}, where we designate $(\partial^3g(x;\theta)/\partial\theta_j\partial\theta_k\partial\theta_l) g(x;\theta)$, $(\partial g^2(x;\theta)/\partial\theta_j\partial\theta_k)(\partial g(x;\theta)/\partial\theta_l)$ and $\partial^3g(x;\theta)/\partial\theta_j\partial\theta_k\partial\theta_l$ as $h(x;\theta)$ in Lemma \ref{lemma.ULLN of main term unknown}, where $j,k,l$ value from $1$ to $d$.
\end{proof}

\begin{lemma}\label{lemma.proj main term unknown}
    Suppose that Assumption \ref{ass.D} and \ref{ass.D'} hold, along with either Assumption \ref{ass.O} and \ref{ass.O'} for the ordinary smooth case or Assumption \ref{ass.S} and \ref{ass.S'} for the supersmooth case,
    \begin{align}\label{lemma.proj main term param unknown eq1}
        &\sup_{\xi\in\Pi}\left\vert\hat M_n(\hat\theta_n)-\hat M_n(\theta_0)+\left[\hat{\Delta}_n(\theta_0)-\hat{\kappa}_n(\theta_0)\right]\left(\hat{\theta}_n-\theta_0\right)\right\vert=O_p\left(\left\vert\hat{\theta}_n-\theta_0\right\vert^2\right)
    \end{align}
    and 
    \begin{align}\label{lemma.proj main term param unknown eq2}
        &\sup_{\xi\in\Pi}\left\vert\hat M_n^\ast(\hat\theta_n)-\hat M_n^\ast(\theta_0)+\left[\hat{\Delta}_n^\ast(\theta_0)-\hat{\kappa}_n^\ast(\theta_0)\right]\left(\hat{\theta}_n-\theta_0\right)\right\vert=O_p\left(\left\vert\hat{\theta}_n-\theta_0\right\vert^2\right).
    \end{align}
\end{lemma}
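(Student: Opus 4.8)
The plan is to mimic the proof of Lemma~\ref{lemma.proj main term known} verbatim, with the estimated deconvolution kernel $\hat{\mathcal{K}}_b$ in place of $\mathcal{K}_b$ and with Lemma~\ref{lemma.ULLN of main term unknown} in place of Lemma~\ref{lemma.ULLN of main term}, exactly as flagged in the discussion following Lemma~\ref{lemma.projG unknown}. First I would Taylor-expand the integrand $(Y_i-g(x;\theta))\,\partial g(x;\theta)/\partial\theta$ in $\theta$ around $\theta_0$ to second order, obtaining
\[
\hat M_n(\hat\theta_n)-\hat M_n(\theta_0)= -\hat{\Delta}_n(\theta_0)\bigl(\hat\theta_n-\theta_0\bigr)+\hat{\kappa}_n(\theta_0)\bigl(\hat\theta_n-\theta_0\bigr)+R_n,
\]
where $\hat{\Delta}_n(\theta_0)=n^{-1}\sum_i\int (\partial g/\partial\theta)(\partial g/\partial\theta)^\top\hat{\mathcal{K}}_b((x-W_i)/b)\,dx$, $\hat{\kappa}_n(\theta_0)=n^{-1}\sum_i\int (Y_i-g(x;\theta_0))(\partial^2 g/\partial\theta\partial\theta^\top)\hat{\mathcal{K}}_b((x-W_i)/b)\,dx$, and the remainder $R_n$ is a quadratic form in $\hat\theta_n-\theta_0$ whose coefficient matrix is $n^{-1}\sum_i\int [\partial^2\{(Y_i-g(x;\tilde\theta))\partial g(x;\tilde\theta)/\partial\theta\}/\partial\theta\partial\theta^\top]\hat{\mathcal{K}}_b((x-W_i)/b)\,dx$ for some $\tilde\theta$ between $\theta_0$ and $\hat\theta_n$. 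Since none of these terms depends on $\xi$, the supremum over $\Pi$ is vacuous and only serves notational uniformity.

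The next step is to bound $R_n$. Each entry of the Hessian of $(Y-g)\,\partial g/\partial\theta$ is a finite sum of terms of the form $Y\cdot(\partial^3 g/\partial\theta_j\partial\theta_k\partial\theta_l)$, $(\partial^2 g/\partial\theta_j\partial\theta_k)(\partial g/\partial\theta_l)$, and $g\cdot(\partial^3 g/\partial\theta_j\partial\theta_k\partial\theta_l)$; these are precisely the functions $h(x;\theta)$ appearing in Lemma~\ref{lemma.ULLN of main term unknown}, and they are bounded, integrable, and satisfy the Lipschitz/smoothness requirements imposed in Assumption~\ref{ass.O}(i) for the ordinary smooth case and Assumption~\ref{ass.S}(i) for the supersmooth case. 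Applying Lemma~\ref{lemma.ULLN of main term unknown} to each, uniformly over $\tilde\theta$ in a neighbourhood of $\theta_0$, yields $n^{-1}\sum_i\int h(x;\tilde\theta)\hat{\mathcal{K}}_b((x-W_i)/b)\,dx=O_p(1)$, hence $R_n=O_p(|\hat\theta_n-\theta_0|^2)$ and \eqref{lemma.proj main term param unknown eq1} follows. For the bootstrap counterpart \eqref{lemma.proj main term param unknown eq2} I would run the identical expansion with the multipliers $V_i$ inserted; because $\{V_i\}$ is i.i.d.\ with mean zero, unit variance, and independent of the data, the bootstrap half of Lemma~\ref{lemma.ULLN of main term unknown} again delivers the $O_p(1)$ bounds on the Hessian averages, so the quadratic remainder is $O_p(|\hat\theta_n-\theta_0|^2)$ as well.

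The one point demanding care — and the only genuine obstacle — is ensuring that replacing $\mathcal{K}_b$ by the estimated kernel $\hat{\mathcal{K}}_b$ does not inflate these averages, in particular does not introduce a non-negligible bias. This is exactly what Lemma~\ref{lemma.ULLN of main term unknown} already absorbs: via the decomposition $\hat{\mathcal{K}}_\epsilon=\mathcal{K}_\epsilon+\mathcal{K}_{\epsilon,1}+\mathcal{K}_{\epsilon,2}$ in \eqref{Decompose of est of decon}, the first-order perturbation $\mathcal{K}_{\epsilon,1}$ is controlled by the moment and bandwidth conditions in Assumptions~\ref{ass.O'}/\ref{ass.S'}, while the second-order perturbation $\mathcal{K}_{\epsilon,2}$ contributes nothing to the bias thanks to the identity $\int K_{\epsilon,2}(t)\,dt=\psi_2(0)=0$ (the same mechanism used in the proof of Lemma~\ref{lemma.projG unknown}). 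Consequently, once Lemma~\ref{lemma.ULLN of main term unknown} is invoked the entire argument is routine and fully parallel to Lemma~\ref{lemma.proj main term known}, so the detailed computations may be omitted.
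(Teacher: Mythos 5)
Your proposal is correct and follows essentially the same route as the paper: the paper proves this lemma by pointing back to the argument for Lemma \ref{lemma.proj main term known} (the same second-order Taylor expansion in $\theta$, with the Hessian-average coefficient bounded via the uniform law in Lemma \ref{lemma.ULLN of main term unknown}) and to the remark after Lemma \ref{lemma.projG unknown} that $\int K_{\epsilon,2}(t)\,dt=\psi_2(0)=0$ so the estimated deconvolution kernel adds no bias. Your identification of the relevant $h$-functions, the treatment of the bootstrap version via the mean-zero, unit-variance multipliers, and the observation that the $\sup_{\xi\in\Pi}$ is vacuous all match the paper's (omitted) argument.
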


\begin{proof}[Proof of Lemma \ref{lemma.proj main term unknown}]
    Relevant explanations can be found in the proof of Lemma \ref{lemma.projG unknown}.
\end{proof}

\begin{lemma}\label{lemma.power of tsf kernel}
Under Assumptions \ref{ass.D} and \ref{ass.D'}, with addition of Assumption \ref{ass.O} or Assumption \ref{ass.S}, we have,
\begin{align} \label{Power of new kernel with ME}
    &\int b^lx^lK_{\epsilon,2}(x)\,dx=0,\,\,\, l < p.
\end{align}
Meanwhile, for the ordinary smooth case, with the addition of Assumption \ref{ass.O'},
\begin{align} \label{Power of abs new kernel with ME}
    &\int \vert b^lx^lK_{\epsilon,2}(x)\vert\,dx=o_p\left(n^{-\frac{1}{2}}\right),\,\,\, l = p, p+1.
\end{align}
\end{lemma}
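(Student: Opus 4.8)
The plan is to work entirely on the Fourier side, using that $K_{\epsilon,2}$ is by construction the inverse Fourier transform of the compactly supported function $t\mapsto K^{ft}(t)\psi_2(t/b)$ and that, by \eqref{Decompose of est of decon part2}, one may write $\psi_2(t)=\bigl(\hat f_\epsilon^{ft}(t)-f_\epsilon^{ft}(t)\bigr)^2 R(t)$ with $R(t)=\bigl(\hat f_\epsilon^{ft}(t)+2f_\epsilon^{ft}(t)\bigr)/\bigl(2\hat f_\epsilon^{ft}(t)[f_\epsilon^{ft}(t)]^2\bigr)$, which is smooth and bounded away from zero on a fixed neighbourhood of the origin ($R(0)=3/2$). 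For the moment identities \eqref{Power of new kernel with ME}, since $\mathcal{F}(x^l K_{\epsilon,2})(t)=(-\ii)^l\frac{d^l}{dt^l}\bigl[K^{ft}(t)\psi_2(t/b)\bigr]$ and $\int x^l K_{\epsilon,2}(x)\,dx=\mathcal{F}(x^l K_{\epsilon,2})(0)$, I would expand the $l$-th derivative at the origin by Leibniz's rule. The kernel moment conditions in Assumption \ref{ass.O}(iii) (or \ref{ass.S}(iii)) give $K^{ft}(0)=1$ and $(K^{ft})^{(k)}(0)=\ii^k\int u^kK(u)\,du=0$ for $1\le k\le p-1$, so for $l<p$ all cross terms vanish and $\int b^l x^l K_{\epsilon,2}(x)\,dx=(-\ii)^l\psi_2^{(l)}(0)$; it then suffices to show that $\psi_2$ vanishes to the required order at $0$. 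By Assumption \ref{ass.D'}(ii) ($f_\epsilon$ symmetric, $\epsilon^r\overset{d}{=}\epsilon$), both $f_\epsilon^{ft}$ and $\hat f_\epsilon^{ft}(t)=\bigl|n^{-1}\sum_j\cos\bigl(t(W_j-W_j^r)\bigr)\bigr|^{1/2}$ are even functions of $t$ equal to $1$ at $t=0$; hence $\hat f_\epsilon^{ft}-f_\epsilon^{ft}$ vanishes at $0$ together with its odd-order derivatives, and the squared factor in $\psi_2$ together with the smoothness of $R$ near $0$ then forces $\psi_2^{(l)}(0)=0$ in the required range $l<p$.

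For the $L^1$ bound \eqref{Power of abs new kernel with ME}, I would reduce it to sup-norm bounds on Fourier transforms via $\int|g|\le\pi\sup_x(1+x^2)|g(x)|$ and $(1+x^2)|x^lK_{\epsilon,2}(x)|\le\tfrac1{2\pi}\bigl(\|\mathcal{F}(x^lK_{\epsilon,2})\|_1+\|\mathcal{F}(x^{l+2}K_{\epsilon,2})\|_1\bigr)$, so that it is enough to bound $\int\bigl|\tfrac{d^m}{dt^m}[K^{ft}(t)\psi_2(t/b)]\bigr|\,dt$ for $m\le p+3$. Expanding again by Leibniz, substituting $s=t/b$, and using that each $(K^{ft})^{(k)}$ is bounded with fixed compact support, this reduces to bounding $\int_{|s|\le C/b}|\psi_2^{(j)}(s)|\,ds$ for $j\le p+3$. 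These I would control from the ordinary-smoothness of $f_\epsilon^{ft}$ (Assumption \ref{ass.O}(ii)), which gives polynomial-in-$|s|$ bounds of order $\alpha$ on $1/f_\epsilon^{ft}$ and on the derivatives of the reciprocals and products entering $R$, together with a Delaigle--Meister / Kurisu-type uniform bound on $\sup_{|s|\le C/b}|\hat f_\epsilon^{ft}(s)-f_\epsilon^{ft}(s)|$ and its derivatives (for which the moment hypothesis $\mathbb{E}|\epsilon|^{(p+1)(2+\zeta)}<\infty$ in Assumption \ref{ass.D'}(ii) is exactly what is required); the squared deviation factor supplies the decisive stochastic smallness. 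Assembling the pieces bounds $\int|b^lx^lK_{\epsilon,2}(x)|\,dx$ by $b^l$ times a fixed negative power of $b$ times powers of $\sup|\hat f_\epsilon^{ft}-f_\epsilon^{ft}|$, and the bandwidth condition $nb^{10\alpha+6}\log(1/b)^{-4}\to\infty$ of Assumption \ref{ass.O'}(i) is calibrated to make this $o_p(n^{-1/2})$ for $l=p,p+1$.

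The main obstacle is this last step: producing uniform bounds, over the expanding interval $[-C/b,C/b]$, for $\psi_2$ and its first $p+3$ derivatives that are sharp enough to survive being multiplied by $b^l$, which requires carefully tracking the interplay between the order $\alpha$ of the polynomial blow-up of $1/f_\epsilon^{ft}$, the number of derivatives taken, and the stochastic sup-norm size of $\hat f_\epsilon^{ft}-f_\epsilon^{ft}$ and of its derivatives, and then matching the resulting exponents against the precise constant $10\alpha+6$ in Assumption \ref{ass.O'}(i). A secondary technical nuisance is the absolute value in $\hat f_\epsilon^{ft}$: it is harmless on the neighbourhood of $0$ used in the first part, but on the bulk of $[-C/b,C/b]$, where $n^{-1}\sum_j\cos(t(W_j-W_j^r))$ may change sign, it is handled---as in the cited works---by controlling $\psi_2$ and its derivatives in $L^1$ rather than pointwise.
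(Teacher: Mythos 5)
Your overall route is the same as the paper's. For \eqref{Power of new kernel with ME} you use $\int (bx)^l K_{\epsilon,2}(x)\,dx=(-\ii)^l\bigl[K^{ft}(b\eta)\psi_2(\eta)\bigr]^{(l)}\big\vert_{\eta=0}$, kill the cross terms with the kernel moment conditions, and reduce the claim to $\psi_2^{(l)}(0)=0$; for \eqref{Power of abs new kernel with ME} you convert the $L^1$ bound into sup-norm bounds on $x^lK_{\epsilon,2}$ and $x^{l+2}K_{\epsilon,2}$ obtained from Fourier-side derivatives of $K^{ft}(t)\psi_2(t/b)$ together with uniform-in-$t$ bounds on the characteristic-function estimation error, which is exactly the paper's scheme (it bounds $\vert b^{l-1}x^lK_{\epsilon,2}(x)\vert$ and $\vert b^{l-1}x^{l+2}K_{\epsilon,2}(x)\vert$ by $b^{-1}$, respectively $b^{-3}$, times $\sup_{\vert t\vert\le b^{-1}}\vert[K^{ft}(bt)\psi_2(t)]^{(m)}\vert$ and integrates using boundedness plus $O(x^{-2})$ decay, i.e.\ your $(1+x^2)$ weighting).

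There are, however, two genuine gaps. First, in part one your parity argument ($\hat f_\epsilon^{ft}-f_\epsilon^{ft}$ even and vanishing at the origin, then squared) only yields $\psi_2^{(l)}(0)=0$ for $l\le 3$: since $(\hat f_\epsilon^{ft}-f_\epsilon^{ft})(t)=ct^2+O(t^4)$ with a generically nonzero (merely $O_p(n^{-1/2})$) coefficient $c$, one has $\psi_2^{(4)}(0)\neq0$ in general, so the deduction ``forces $\psi_2^{(l)}(0)=0$ in the required range $l<p$'' does not follow for $4\le l<p$. (The paper's own proof is no more explicit---it records only $\psi_2(0)=0$---so you are not behind the paper here, but as a standalone argument this step is unproved.) Second, in part two you defer precisely the step that constitutes the proof: the uniform bounds on $\psi_2$ and its first $p+3$ derivatives over $\vert t\vert\le Cb^{-1}$, with explicit powers of $b$ and $n$ matched against Assumption \ref{ass.O'}, are announced as ``the main obstacle'' rather than carried out. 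The paper executes this by routing the stochastic smallness through $[\hat f_\epsilon^{ft}]^2-[f_\epsilon^{ft}]^2$ (an empirical cosine average minus its mean), to which the Kurisu-type uniform derivative bound $O_p(n^{-1/2}\log(1/b))$ applies directly, yielding $\sup_{\vert t\vert\le b^{-1}}\vert\psi_2^{(l)}(t)\vert=O_p\bigl(n^{-1}b^{-4\alpha}\log(1/b)^2\bigr)$ and hence $o_p(n^{-1/2})$ after multiplying by $b^{-3}$ under Assumption \ref{ass.O'}(i). Your plan to bound derivatives of $\hat f_\epsilon^{ft}-f_\epsilon^{ft}$ itself runs into the non-differentiability of $\vert\cdot\vert^{1/2}$ at zeros of the empirical cosine average on the bulk of $[-Cb^{-1},Cb^{-1}]$, and ``controlling $\psi_2$ in $L^1$ rather than pointwise'' is not a worked-out substitute; what is actually needed (and missing from both your sketch and, implicitly, the paper's) is the restriction to the event on which the empirical cosine average stays above, say, $\tfrac12\inf_{\vert t\vert\le b^{-1}}[f_\epsilon^{ft}(t)]^2$, an event of probability tending to one under the bandwidth condition, after which the square root can be differentiated safely and the exponent bookkeeping completed.
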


\begin{proof}[Proof of Lemma \ref{lemma.power of tsf kernel}]
    First, we have the following equation using the properties of the Fourier transform,  
    \begin{align*}
        &\int \ee^{\ii bx\eta}(bx)^lK_{\epsilon,2}(x)\,dx = (-\ii)^l\left[K^{\text{ft}}(b\eta)\psi_2(\eta)\right]^{(l)}.
    \end{align*}
    Note that \eqref{Decompose of est of decon part2} implies $\psi_2(0) = 0$, Assumption \ref{ass.O}({\romannumeral3}) or Assumption \ref{ass.S}({\romannumeral3}) implies $[K^{ft}(t)]^{(l)}\vert_{t=0} = (\ii)^l\int x^l K(x)\,dx = 0$ for $1\leq l<p$ and $K^{ft}(0) = \int K(x)\,dx = 1$, we have the following result for $l<p$ by setting $\eta = 0$ and thus finish the proof of \eqref{Power of new kernel with ME},
    \begin{align*}
        &\int (bx)^lK_{\epsilon,2}(x)\,dx = (-\ii)^l\left[\psi_2(\eta)\right]^{(l)}\vert_{\eta = 0} = 0.
    \end{align*}   
    
    For the ordinary smooth case, the difference between $l-$th order derivative of $[\hat{f}_\epsilon^{\text{ft}}(t)]^2$ and $l-$th order derivative of $[f_\epsilon^{\text{ft}}(t)]^{2}$ is $O(n^{-1/2}\log(1/b))$, which follows by Assumption \ref{ass.D'}({\romannumeral2}), see Remark $7$ of \cite{kurisu2022uniform} by assigning $d = 1$, $k=l$ and $T_n=1/b$. Subsequently, the order of $\psi^{(l)}_2(t)$ can be given referring to the expression for $f^{ft}_\epsilon$ in Assumption \ref{ass.O}({\romannumeral2}),
    \begin{align*}
        &\sup_{\vert t \vert \leq b^{-1}} \left\vert \psi^{(l)}_2(t) \right\vert = O_p\left(\frac{\log(\frac{1}{b})^2}{n(\inf_{\vert t \vert \leq b^{-1}}\vert f_\epsilon^{\text{ft}}(t) \vert)^4}\right) = O_p\left(n^{-1}b^{-4\alpha}\log(\frac{1}{b})^2\right).
    \end{align*}
    By similar arguments, for the supersmooth case,
    \begin{align*}
        &\sup_{\vert t \vert \leq b^{-1}} \left\vert \psi^{(l)}_2(t) \right\vert = O_p\left(n^{-1}b^{-l}\ee^{4\mu b^{-2}}\log(\frac{1}{b})^2\right).
    \end{align*}
    Note that under Assumption \ref{ass.O'},
    \begin{align*}
        \left\vert b^{l-1}x^{l+2}K_{\epsilon,2}(x)\right\vert & = \left\vert\frac{1}{2\pi b^2}\int \ee^{-\ii t(bx)}\left[K^{ft}(bt)\psi_2(t)\right]^{(l+2)}\,dt\right\vert\\
        &\leq \frac{c_0}{\pi b^3}\sup_{\left\vert t\right\vert\leq b^{-1}}\left\vert\left[K^{ft}(bt)\psi_2(t)\right]^{(l+2)}\right\vert = O_p\left(n^{-1}b^{-(4\alpha+3)}\log(\frac{1}{b})^2\right) = o_p\left(n^{-\frac{1}{2}}\right)
    \end{align*}
    implies $n^{1/2}\vert b^{l-1}x^{l}K_{\epsilon,2}(x)\vert$ decays at a rate of $O(1/x^2)$ as $x$ tends to infinity under Assumption \ref{ass.O'}. In addition, $n^{1/2}\vert b^{l-1}x^{l}K_{\epsilon,2}(x)\vert$ is bounded as $n$ goes to infinity which follows by
    \begin{align*}
        \left\vert b^{l-1}x^lK_{\epsilon,2}(x)\right\vert & = \left\vert\frac{1}{2\pi}\int \ee^{-\ii t(bx)}\left[K^{ft}(bt)\psi_2(t)\right]^{(l)}\,dt\right\vert\\
        &\leq \frac{c_0}{\pi b}\sup_{\left\vert t\right\vert\leq b^{-1}}\left\vert\left[K^{ft}(bt)\psi_2(t)\right]^{(l)}\right\vert = O_p\left(n^{-1}b^{-(4\alpha+1)}\log(\frac{1}{b})^2\right)= o_p\left(n^{-\frac{1}{2}}\right).
    \end{align*}
    Consequently, \eqref{Power of abs new kernel with ME} holds for the ordinary smooth case. 
\end{proof}

\begin{lemma}\label{lemma.power of decon}
Under Assumption \ref{ass.O},
\begin{align} \label{Power of decon without ME Assm O}
    \int \mathcal{K}_\epsilon(x)x^l\ee^{\ii bx\xi}\,dx=I_{\{l\leq\alpha\}}c_l^{os}(\xi)l!b^{-l}+\ii b\sum_{h=0}^\alpha\frac{c_h^{os}(-1)^h\xi^{h+1}}{(h+1)!}\int K^{(h)}(x)x^l\tilde{x}_h^{h+1}\,dx,
\end{align}
holds for $l=0,1,\cdots,p$, where $\bar{x}_h\in(0,x)$ for $h=1,\cdots,\alpha$.

Under Assumption \ref{ass.S},
\begin{align} \label{Power of decon without ME Assm S}
    \int \mathcal{K}_\epsilon(x)x^l\ee^{\ii bx\xi}\,dx=c_l^{ss}(\xi)l!b^{-l}, \qquad \text{for $l\geq0$}.
\end{align}

Suppose that Assumptions \ref{ass.D} and \ref{ass.D'} hold, together with Assumption \ref{ass.O},
\begin{align} \label{Power of decon with ME}
    \sup\limits_{\xi\in\Pi}\left\vert\int (bx)^l\mathcal{K}_{\epsilon,2}(x)\ee^{\ii bx\xi}\,dx\right\vert=o_p\left(n^{-\frac{1}{2}}\right)\quad \text{for }l<p.
\end{align}
For the ordinary smooth case, with the addition of Assumption \ref{ass.O'},
\begin{align} \label{Power of abs decon with ME}
    \int \left\vert (bx)^l\mathcal{K}_{\epsilon,2}(x)\right\vert \,dx=o_p\left(n^{-\frac{1}{2}}\right), \quad \text{for }l=p,p+1.
\end{align}
For the supersmooth case, under Assumption \ref{ass.S} and \ref{ass.S'},
\begin{align}\label{Power of super kernel}
    \sum_{l=0}^\infty\sup\limits_{\xi\in\Pi}\left\vert\int (bx)^l\mathcal{K}_{\epsilon,2}(x)\ee^{\ii bx\xi}\,dx\right\vert=o_p\left(1\right).
\end{align}
\end{lemma}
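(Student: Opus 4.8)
The plan is to reduce every assertion to a computation with the Fourier symbols of the (estimated) deconvolution kernels, using the conventions of Appendix \ref{sec.AppendixA}. If $g(x)=\frac{1}{2\pi}\int\ee^{-\ii tx}\phi(t)\,dt$ with $\phi$ supported on the compact support of $K^{ft}$, then $\int x^l g(x)\ee^{\ii b\xi x}\,dx=(-\ii)^l\phi^{(l)}(b\xi)$; applied to $\mathcal{K}_\epsilon$ (symbol $K^{ft}(t)/f_\epsilon^{ft}(t/b)$) and to $\mathcal{K}_{\epsilon,2}$ (symbol $K^{ft}(t)\psi_2(t/b)/f_\epsilon^{ft}(t/b)$, cf.\ \eqref{Decompose of est of decon part2}), this turns each integral in the lemma into the $l$-th derivative of a symbol evaluated at $s=b\xi$. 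I would also record the closed forms coming from $\frac{1}{2\pi}\int\ee^{-\ii tx}t^jK^{ft}(t)\,dt=\ii^jK^{(j)}(x)$, namely $\mathcal{K}_\epsilon=\sum_{j=0}^\alpha c_j^{os}(\ii/b)^jK^{(j)}$ in the ordinary smooth case (since $1/f_\epsilon^{ft}(t/b)=\sum_{j\le\alpha}c_j^{os}(t/b)^j$) and $\mathcal{K}_\epsilon=\sum_{j\ge0}\frac{(-\mu)^j}{j!\,b^{2j}}K^{(2j)}$ in the supersmooth case, which makes the bookkeeping transparent.

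For \eqref{Power of decon without ME Assm O} and \eqref{Power of decon without ME Assm S} I would substitute the closed form, integrate by parts $j$ (resp.\ $2j$) times to move the derivatives off $K$, expand $x^l\ee^{\ii b\xi x}$ by Leibniz, and use the moment conditions in Assumption \ref{ass.O}(iii) (resp.\ Assumption \ref{ass.S}(iii), under which every positive-order moment of $K$ vanishes and $K^{ft}$ is flat at the origin, so $\int K(x)x^m\ee^{\ii b\xi x}\,dx=I_{\{m=0\}}$ for $b$ small). The terms in which all $l$ derivatives fall on $x^l$ combine, after the simplification $\ii^{2h-l}=(-1)^h(-\ii)^l$ cancels the sign produced by the integration by parts, to the leading coefficient $I_{\{l\le\alpha\}}c_l^{os}(\xi)l!b^{-l}$ (resp.\ $c_l^{ss}(\xi)l!b^{-l}$, the series converging because $\sum_l|c_l^{ss}(\xi)|\le\sum_j\frac{\mu^j}{j!}(1+|\xi|)^{2j}=\ee^{\mu(1+|\xi|)^2}<\infty$). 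In the ordinary smooth case the remaining terms each carry one extra factor $b$ from the first-order Taylor remainder of $\ee^{\ii b\xi x}$ with Lagrange point $\tilde{x}_h\in(0,x)$, which is precisely the stated $O(b)$ correction; in the supersmooth case the flatness of $K^{ft}$ annihilates those terms and the identity is exact.

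For \eqref{Power of decon with ME}, \eqref{Power of abs decon with ME} and \eqref{Power of super kernel} the new ingredient is the estimation-error factor $\psi_2$, whose derivatives were controlled in the proof of Lemma \ref{lemma.power of tsf kernel} (via Remark 7 of \cite{kurisu2022uniform}): $\sup_{|t|\le b^{-1}}|\psi_2^{(k)}(t)|=O_p(n^{-1}b^{-4\alpha}\log(1/b)^2)$ in the ordinary smooth case and $O_p(n^{-1}b^{-k}\ee^{4\mu b^{-2}}\log(1/b)^2)$ in the supersmooth case. Differentiating the symbol $K^{ft}(s)\psi_2(s/b)/f_\epsilon^{ft}(s/b)$ $l$ times by Leibniz, one balances the powers of $b^{-1}$ produced by the chain rule (from $\psi_2(s/b)$ and from the polynomial $1/f_\epsilon^{ft}(s/b)$) against the prefactor $b^l$; evaluated at $s=b\xi$, where $1/f_\epsilon^{ft}$ and its derivatives are $O(1)$, this gives a pointwise bound $O_p(n^{-1}b^{-\kappa}\log(1/b)^2)$ with $\kappa$ a fixed multiple of $\alpha$, which is $o_p(n^{-1/2})$ under Assumption \ref{ass.O'}(i); this proves \eqref{Power of decon with ME}. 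For \eqref{Power of abs decon with ME} a genuine $L^1$ bound is needed: split the integral at $|x|=1$, bound $|x^l\mathcal{K}_{\epsilon,2}(x)|$ on $\{|x|\le1\}$ by the supremum of the $l$-th symbol derivative over the compact support of $K^{ft}$, and on $\{|x|>1\}$ write $|x^l\mathcal{K}_{\epsilon,2}(x)|=|x^{l+2}\mathcal{K}_{\epsilon,2}(x)|/x^2$, moving two further powers onto the symbol to get an $x^{-2}$ tail; the resulting bound is again $o_p(n^{-1/2})$ under Assumption \ref{ass.O'}(i). For \eqref{Power of super kernel} I would use the flat-top property to replace $K^{ft}$ by $1$ near $s=b\xi$, so that the symbol becomes $\ee^{\mu(s/b)^2}\psi_2(s/b)$, expand the Gaussian factor into its series, and bound the resulting sum over $l$ by a binomial/exponential identity of the same type as above, with the uniformly small $\psi_2$-factor dominated by Assumption \ref{ass.S'}(i), $n\ee^{-6\mu(1+b^{-1})^2}\log(1/b)^{-2}\to\infty$, which outpaces the $\ee^{4\mu b^{-2}}$ growth in the $\psi_2$ bound.

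The hard part will be \eqref{Power of super kernel}: in contrast to the finite, polynomial symbol of the ordinary smooth case, here one must control an infinite series in $l$ of increasingly singular, Hermite-type kernel moments (each carrying a $b^{-l}$ factor) uniformly over $\xi\in\Pi$, against an estimation factor that is exponentially small but itself $b$-dependent; arranging the two exponential rates ($\ee^{4\mu b^{-2}}$ from the $\psi_2$ bound versus $\ee^{-6\mu(1+b^{-1})^2}$ from Assumption \ref{ass.S'}(i)) so as to leave room, while keeping the $l$-sum summable via the compact (flat-top) support of $K^{ft}$, is the delicate step. A minor point common to all parts is that the $O_p$/$o_p$ statements must be uniform in $\xi$; this is automatic since every bound is drawn from suprema of symbol derivatives that do not depend on $\xi$.
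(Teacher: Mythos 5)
Your proposal is correct and follows essentially the same route as the paper: every display is reduced to derivatives of the Fourier symbol of $\mathcal{K}_\epsilon$ or $\mathcal{K}_{\epsilon,2}$ evaluated at $b\xi$, the estimation error is controlled through the bounds on $\psi_2^{(k)}$ from \cite{kurisu2022uniform} exactly as in Lemma \ref{lemma.power of tsf kernel}, the $L^1$ bound in \eqref{Power of abs decon with ME} is obtained by shifting two extra powers of $x$ onto the symbol to produce an $x^{-2}$ tail, and \eqref{Power of super kernel} rests on the same comparison of the $\ee^{4\mu b^{-2}}$ growth against Assumption \ref{ass.S'}(i). The only substantive difference is that for \eqref{Power of decon without ME Assm O} and \eqref{Power of decon without ME Assm S} the paper simply cites \cite{dong2022nonparametric}, whereas you sketch the derivation (closed form $\mathcal{K}_\epsilon=\sum_j c_j^{os}(\ii/b)^jK^{(j)}$, integration by parts, moment conditions, Lagrange remainder); your sketch is consistent with the stated formulas, including the sign bookkeeping that produces $c_l^{os}(\xi)l!\,b^{-l}$.
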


\begin{proof}[Proof of Lemma \ref{lemma.power of decon}]
    The proofs of \eqref{Power of decon without ME Assm O} and \eqref{Power of decon without ME Assm S} are provided in \cite{dong2022nonparametric}, and the proof of \eqref{Power of decon with ME} follows along the same lines as \eqref{Power of new kernel with ME},
    \begin{align*}
        &\int \mathcal{K}_{\epsilon,2}(x)(bx)^l\ee^{\ii bx\xi}\,dx= \left[\frac{K^{\text{ft}}(b\xi)}{f_\epsilon^{\text{ft}}(\xi)}\psi_2(\xi)\right]^{(l)}.
    \end{align*}
    Similar to the proof in Lemma \ref{lemma.power of tsf kernel}, we have the order of $[\psi_2(t)]^{(l)}$ using the conclusion in \cite{kurisu2022uniform} which also provides
    \begin{align*}
        \left\vert b^{l-1}x^{l+2}\mathcal{K}_{\epsilon,2}(x)\right\vert =& \left\vert\frac{1}{2\pi b^2}\int \ee^{-\ii bxt}\left[\frac{K^{ft}(bt)}{f^{ft}_\epsilon(t)}\psi_2(t)\right]^{(l+2)}\,dt\right\vert\\
        \leq &\frac{c_0}{\pi b^3}\sup_{\left\vert t\right\vert\leq b^{-1}}\left\vert\left[\frac{K^{ft}(bt)}{f^{ft}_\epsilon(t)}\psi_2(t)\right]^{(l+2)}\right\vert= O_p\left(n^{-1}b^{-(5\alpha+3)}\log(\frac{1}{b})^2\right) = o_p\left(n^{-\frac{1}{2}}\right)
    \end{align*}
    under Assumption \ref{ass.O'} and thus \eqref{Power of abs decon with ME} follows. Subsequently, for the supersmooth case,
    \begin{align*}
        \sum_{l=0}^\infty\sup\limits_{\xi\in\Pi}\left\vert\int (bx)^l\mathcal{K}_{\epsilon,2}(x)\ee^{\ii bx\xi}\,dx\right\vert = &\sum_{l=0}^\infty\sup\limits_{\xi\in\Pi}\left\vert \left[\frac{K^{\text{ft}}(b\xi)}{f_\epsilon^{\text{ft}}(\xi)}\psi_2(\xi)\right]^{(l)}\right\vert\\
        \leq & O_p\left(n^{-1}\ee^{-5\mu(1+b^{-1})^2}\log(\frac{1}{b})^2\right)=o_p(1),
    \end{align*}
    where the last equation is assured by Assumption \ref{ass.S'}.
\end{proof}
\end{document}